\keywords{Proof theory, prime graphs, cut elimination, deep inference, splitting, analyticity}
\def\pgfsys@hboxsynced#1{%
	{%
		\pgfsys@beginscope%
		\setbox\pgf@hbox=\hbox{%
			\hskip\pgf@pt@x%
			\raise\pgf@pt@y\hbox{%
				\pgf@pt@x=0pt%
				\pgf@pt@y=0pt%
				\pgflowlevelsynccm%
				\pgfsys@hbox#1}%
			\hss%
		}%
		\wd\pgf@hbox=0pt%
		\ht\pgf@hbox=0pt%
		\dp\pgf@hbox=0pt%
		\box\pgf@hbox%
		\pgfsys@endscope%
	}%
}
\newcommand{\isom}[1][]{\simeq_{#1}}
\def\L{\mathsf{L}}
\def\sysS{\mathsf{S}}
\newcommand{\proves}[2][]{\mathord{\vdash_{#1}\,}{#2}}
\def\lun{\mathord{\circ}}
\def\lbox{\Box}
\def\limp{\multimap}
\def\lcoimp{\multimapinv}
\newcommand{\rcomp}[1]{\mathop{\circ_{#1}}}
\newcommand{\fcomp}{\mathop{\circ}}
\renewcommand{\emptyset}{\varnothing}
\def\unit{\emptyset}
\def\GS{\mathsf{GS}}
\def\SGS{\mathsf{SGS}}
\def\SGSd{\mathsf{SGS}\mathord{\downarrow}}
\def\SGSu{\mathsf{SGS}\mathord{\uparrow}}
\def\defn#1{\emph{\bfseries\boldmath #1}}
\newcommand{\fequiv}{\equiv}
\newcommand{\connn}[1]{\llparenthesis#1\rrparenthesis}
\newcommand{\Connn}[1]{\begin{pmatrix}\!\begin{vmatrix}#1\end{vmatrix}\!\end{pmatrix}}  
\newcommand{\coons}[2]{[#1]_{#2}}														
\newcommand{\Coons}[2]{\!\begin{bmatrix}#1\end{bmatrix}_{\!#2}}  						
\newcommand{\coonso}[1]{[\cdot]_{#1}}
\newcommand{\quand}{\quad\mbox{and}\quad}
\newcommand{\qquand}{\qquad\mbox{and}\qquad}
\newcommand{\quor}{\quad\mbox{or}\quad}
\newcommand{\qquor}{\qquad\mbox{or}\qquad}
\newcommand{\quiff}{\quad\iff\quad}
\newcommand{\qomma}{\;,\;}
\newcommand{\quadfs}{\rlap{\quad.}}
\def\idr{\dirule}
\def\iur{\uirule}
\def\aidr{\dairule}
\def\aiur{\uairule}
\def\ssdr{\dssrule}
\def\ssur{\ussrule}
\def\pdr{\dprule}
\def\pur{\uprule}
\def\gdr{\dgrule}
\def\gur{\ugrule}
\def\rr{\mathsf{r}}
\def\ods#1#2#3#4{\od{\odd{\odh{#1}}{#2}{#3}{#4}}}
\def\mod#1{\od{#1}}
\newcommand{\nvls}[1]{#1}
\def\set#1{\{#1\}}
\def\multiset#1{\{\mkern-5mu|#1|\mkern-5mu\}}
\def\NP{{\bf NP}}
\newcommand{\intset}[2]{\set{#1,\dots, #2}}
\def\tuple#1{\langle#1\rangle}
\def\cneg#1{{#1^\bot}}
\def\cnegg#1{{#1^{\bot\bot}}}
\def\cnegl#1{#1\rlap{$^\bot$}}
\def\lpar{\vlbin{\parr}}
\def\ltens{\vlbin{\otimes}}
\def\axrule{\mathsf {ax}}
\def\cutr{\mathsf {cut}}
\def\tensrule{\ltens}
\def\parrule{\lpar}
\def\airule{\mathsf{ai}}
\def\dairule{\airule\mathord{\downarrow}}
\def\uairule{\airule\mathord{\uparrow}}
\def\irule{\mathsf{i}}
\def\dirule{\irule\mathord{\downarrow}}
\def\uirule{\irule\mathord{\uparrow}}
\def\grule{\mathsf{g}}
\def\dgrule{\grule\mathord{\downarrow}}
\def\ugrule{\grule\mathord{\uparrow}}
\def\swir{\mathsf{sw}}
\def\ssrule{\mathsf{s\mkern-1mu.\mkern-1mu sw}}
\def\ussrule{\ssrule\mathord{\uparrow}}
\def\dssrule{\ssrule\mathord{\downarrow}}
\def\prule{\mathsf{p}}
\def\dprule{\prule\mathord{\downarrow}}
\def\uprule{\prule\mathord{\uparrow}}
\def\MLL{\mathsf{MLL}}
\def\mixr{\mathsf{mix}}
\def\MLLm{\mathsf{MLL^\circ}}
\def\gA{A}
\def\gB{B}
\def\gC{C}
\def\gG{G}
\def\gH{H}
\def\gK{K}
\def\gL{L}
\def\gM{M}
\def\gN{N}
\def\gP{P}
\def\gQ{Q}
\def\gX{X}
\def\cN{\mathcal N}
\def\cM{\mathcal M}
\newcommand{\context}[1]{C[#1]}
\newcommand{\vertices}[1][]{V_{#1}}
\newcommand{\gEdges}[1][]{E_{#1}}
\newcommand{\vB}{\vertices[B]}
\newcommand{\vC}{\vertices[C]}
\newcommand{\vG}{\vertices[G]}
\newcommand{\eG}{\gEdges[G]}
\newcommand{\enG}{\gEdges[\cneg G]}
\newcommand{\vH}{\vertices[H]}
\newcommand{\eH}{\gEdges[H]}
\newcommand{\enH}{\gEdges[\cneg H]}
\newcommand{\vHi}{\vertices[H_i]}
\newcommand{\eHi}{\gEdges[H_i]}
\newcommand{\vHj}{\vertices[H_j]}
\newcommand{\vM}{\vertices[M]}
\newcommand{\eM}{\gEdges[M]}
\newcommand{\vN}{\vertices[N]}
\newcommand{\vGp}{\vertices[G']}
\newcommand{\eGp}{\gEdges[G']}
\newcommand{\vP}{\vertices[P]}
\newcommand{\vQ}{\vertices[Q]}
\newcommand{\cA}{\mathcal{A}}
\newcommand{\cV}{\mathcal{V}}
\newcommand{\dD}{\mathcal{D}}
\newcommand{\tdD}{\tilde{\mathcal{D}}}
\newcommand{\graphof}[1]{\llbracket#1\rrbracket}
\newcommand{\sizeof}[1]{|#1|}
\newcommand{\labelof}[2][]{\ell_{#1}(#2)}
\newcommand{\lG}[1]{\labelof[G]{#1}}
\newcommand{\lnG}[1]{\labelof[\cneg G]{#1}}
\def\cna{\cneg a}
\def\cnP{\cneg P}
\def\graphcolor{red}
\def\gdualraphcolor{violet}
\def\interactioncolor{blue}
\def\dircolor{green}
\tikzstyle{graphstyle}=[>=stealth,overlay,remember picture,thin, draw=\graphcolor,fill=\graphcolor,opacity=1]  
\tikzstyle{complgraphstyle}=[densely dotted,>=stealth,overlay,remember picture, draw=\gdualraphcolor,fill=\gdualraphcolor,opacity=1]  
\tikzstyle{interactionstyle}=[>=stealth,overlay,remember picture,very thick, draw=\interactioncolor,opacity=1]  
\tikzstyle{dirgraphstyle}=[>=stealth,overlay,remember picture,thin, draw=\dircolor,fill=\dircolor,opacity=1]
\tikzstyle{edgestyle}=[>=stealth,overlay,remember picture,thin, opacity=1]
\newcommand{\va}[1]{\flowvertex{a}{a#1}}
\newcommand{\vna}[1]{\flowvertex{\cnegl a}{na#1}}
\newcommand{\vb}[1]{\flowvertex{b}{b#1}}
\newcommand{\vnb}[1]{\flowvertex{\cnegl b}{nb#1}}
\newcommand{\vc}[1]{\flowvertex{c}{c#1}}
\newcommand{\vnc}[1]{\flowvertex{\cnegl c}{nc#1}}
\newcommand{\vd}[1]{\flowvertex{d}{d#1}}
\newcommand{\vnd}[1]{\flowvertex{\cnegl d}{nd#1}}
\newcommand{\ve}[1]{\flowvertex{e}{e#1}}
\newcommand{\vne}[1]{\flowvertex{\cnegl e}{ne#1}}
\newcommand{\vf}[1]{\flowvertex{f}{f#1}}
\newcommand{\vnf}[1]{\flowvertex{\cnegl f}{nf#1}}
\newcommand{\vg}[1]{\flowvertex{g}{g#1}}
\newcommand{\vng}[1]{\flowvertex{\cnegl g}{ng#1}}
\newcommand{\vh}[1]{\flowvertex{h}{h#1}}
\newcommand{\vnh}[1]{\flowvertex{\cnegl h}{nh#1}}
\newcommand{\vu}[1]{\flowvertex{\unit}{u#1}}
\newcommand{\vm}[1]{\flowvertex{\bullet}{m#1}}
\newcommand{\vbul}[1]{\flowvertex{\bullet}{n#1}}
\newcommand{\flowvertex}[2]{\mathord{%
    \tikz[remember picture,baseline=(#2\nodecode.base)]%
    \node[inner sep=0pt](#2\nodecode){$#1\strut$};}}
\newcommand{\edge}[2]{%
  \tikz[graphstyle]%
  \draw[-] (#1\nodecode) -- (#2\nodecode);
}
\newcommand{\edges}[1]{
  \foreach \aaa/\bbb in {#1} {\edge{\aaa}{\bbb}}
}
\newcommand{\bentedge}[3]{%
	\tikz[graphstyle]%
	\draw[-] (#1\nodecode) to [bend left = #3]  (#2\nodecode);
}
\newcommand{\bentedges}[1]{
	\foreach \aaa/\bbb/\ccc in {#1} {\bentedge{\aaa}{\bbb}{\ccc}}
}
\newcommand{\blackedge}[2]{%
	\tikz[>=stealth,overlay,remember picture,thick,opacity=1]%
	\draw[-] (#1\nodecode) -- (#2\nodecode);
}
\newcommand{\blackedges}[1]{
	\foreach \aaa/\bbb in {#1} {\blackedge{\aaa}{\bbb}}
}
\newcommand{\blackbendedge}[2]{%
	\tikz[>=stealth,overlay,remember picture,thick,opacity=1]%
	\draw (#1\nodecode) to [bend left] (#2\nodecode);
}
\newcommand{\blackbendedges}[1]{
	\foreach \aaa/\bbb in {#1} {\blackbendedge{\aaa}{\bbb}}
}
\newcommand{\joinedges}[2]{
  \foreach \aaa in {#1} \foreach \bbb in {#2} {\edge{\aaa}{\bbb}}
}
\newcommand{\intedge}[2]{%
  \tikz[interactionstyle]%
  \draw (#1\nodecode) -- (#2\nodecode);
}
\newcommand{\intedges}[1]{
  \foreach \aaa/\bbb in {#1} {\intedge{\aaa}{\bbb}}
}
\newcommand{\modulevertex}[2]{
	\mathord{%
		\tikz[remember picture,baseline=(#2\nodecode.base)]%
		\node[draw,rounded corners,inner sep=2pt](#2\nodecode){\;$#1\strut$\;};
	}
}
\newcommand{\graphvertex}[2]{
		\tikz[baseline=(#2\nodecode.base)]%
		\node[draw,densely dotted, rounded corners,inner sep=2pt](#2\nodecode){$#1\strut$};
}
\newcommand{\nodevertex}[2]{
	\mathord{%
		\tikz[remember picture,baseline=(#2\nodecode.base)]%
		\node[inner sep=1pt](#2\nodecode){$#1\strut$};
	}
}
\newcommand{\vmodule}[2]{\modulevertex{#2}{M#1}} 
\newcommand{\vgraph}[2]{\graphvertex{#2}{M#1}} 
\newcommand{\vnode}[2]{\nodevertex{#2}{M#1}} 
\def\nodecode{}
\def\gsize#1{|\mkern-2.75mu|#1|\mkern-2.75mu|}
\newcommand{\gless}[2]{\gsize{#1} < \gsize{#2}}
\theoremstyle{plain}
\newtheorem{lemma}[thm]{Lemma}
\def\gcon{\mathsf{G_4}}
\def\gconp{\mathsf{G_4'}}
\def\gconpp{\mathsf{G_4''}}
\def\ngcon{\mathsf{\cneg{G_4}}}
\def\Zcon{\mathsf Z}
\def\Ncon{\mathsf N}
\def\Ccon{\mathsf C}
\def\partition#1{\langle #1\rangle}
\def\block#1{[#1]}
\def\pfour{\mathsf{P_4}}
\def\lwith{\mathbin{\&}}
\def\lplus{\mathbin{\oplus}}
\def\partof#1{\mathbb P_{#1}}
\def\partofn{\partof n}
\def\MLLgen{\MLL^\circ_{\mathcal C}}
\def\Pfour{\mathsf P_4}
\begin{document}

\title[An Analytic Propositional Proof System on Graphs]{An Analytic Propositional Proof System on Graphs}
\titlecomment{{\lsuper*}This is an extended version of a paper published at LICS 2020~\cite{{Acclavio2020}}.}

\author[M.~Acclavio]{Matteo Acclavio}
\address{
Dipartimento di Matematica e Fisica,
Università Roma Tre,
Roma,
Italy
}
\urladdr{
	matteoacclavio.com/Math.html
}

\author[R.~Horne]{Ross Horne}	
\address{
Computer Science,
University of Luxembourg,
Esch-sur-Alzette,
Luxembourg
\\ 
}
\email{\texttt{ross.horne@uni.lu}} 

\author[L.~Stra\ss burger]{Lutz Stra\ss burger}
\address{
 Inria, Equipe Partout,
 Ecole Polytechnique,
 LIX UMR~7161,
 France
}
\urladdr{
	www.lix.polytechnique.fr/Labo/Lutz.Strassburger/
}




\begin{abstract}
  In this paper we present a proof system that operates on graphs
  instead of formulas.  Starting from the well-known relationship
  between formulas and cographs, we drop the cograph-conditions and
  look at arbitrary (undirected) graphs. This means that we lose the
  tree structure of the formulas corresponding to the
  cographs, and we can no longer use standard proof theoretical methods
  that depend on that tree structure. In order to overcome this
  difficulty, we use a modular decomposition of graphs and some
  techniques from deep inference where inference rules do not rely on
  the main connective of a formula. For our proof system we show the
  admissibility of cut and a generalisation of the splitting
  property. Finally, we show that our system is a conservative
  extension of multiplicative linear logic with mix, and we argue that our graphs form a notion of generalised connective.
\end{abstract}

\maketitle


\setcounter{tocdepth}{1}

\begingroup
\vspace*{-0.5cm}
\let\clearpage\relax
\tableofcontents
\endgroup

\section{Introduction}
\label{sec:intro}

The notion of formula is central to all applications of logic and
proof theory in computer science, ranging from the formal verification
of software, where a formula describes a property that the program
should satisfy, to logic programming, where a formula represents a
program~\cite{miller:uniform, Kobayashi1993}, and functional programming, where a
formula represents a type~\cite{howard:80}. Proof theoretical methods
are also employed in concurrency theory, where a formula can represent a
process whose behaviours may be extracted from a proof of the formula~%
\cite{miller:pi,bruscoli:02,FAGES200114,deng_simmons_cervesato_2016,
	OLARTE201746,NIGAM201735,horne:19, Horne2019b,Horne2020}.
This \emph{formulas-as-processes} paradigm
is not as well-investigated as the \emph{formulas-as-properties},
\emph{formulas-as-programs} and \emph{formulas-as-types} paradigms
mentioned before. In our opinion, a reason for this is that
the notion of formula reaches its limitations when it comes to
describing processes as they are studied in concurrency theory.

For example, Guglielmi's $\textsf{BV}$~\cite{gug:SIS} and Retor\'e's \emph{pomset
  logic}~\cite{Retore1997}
are proof systems which extend linear logic with a notion of
sequential composition and can model series-parallel orders.\footnote{In his PhD-thesis~\cite{retore:phd}, Retor\'e considers all partial ordered multisets, but in later versions~\cite{retore:21} only series-parallel orders are considered to maintain the correspondence to formulas.}$^,$\footnote{It has long been believed that $\textsf{BV}$ and pomset
  logic are the same, but recently it has been shown that this is not the case~\cite{tito:lutz:csl22,SIS-III}.} 
However,
series-parallel orders cannot express some ubiquitous patterns of
\emph{causal dependencies} such as the logical time constraints on producer-consumer
queues~\cite{Lodaya2000}, which are within the scope of
pomsets~\cite{Pratt1986}, event structures~\cite{Nielsen1985}, and
Petri nets~\cite{Petri1976}.  The essence of this problem is already
visible when we consider \textit{symmetric dependencies}, such as
separation, which happens to be the dual
concept to concurrency in the \emph{formulas-as-processes}
paradigm.

Let us use some simple examples to explain the problem. Suppose we are
in a situation where two processes $A$ and $B$ can communicate with
each other, written as $A\lpar B$, or are separated from each
other, written as $A\ltens B$, such that no communication is
possible. Now assume we have four atomic processes $a$, $b$, $c$, and
$d$, from which we form the two processes $P=(a\ltens b)\lpar(c\ltens
d)$ and $Q=(a\lpar c)\ltens(b\lpar d)$. Both are perfectly fine
formulas of multiplicative linear logic ($\MLL$)~\cite{girard:87}. 
In
$P$, we have that $a$ is separated from $b$ but can communicate with
$c$ and $d$. Similarly, $d$ can communicate with $a$ and $b$ but is
separated from $c$, and so on. On the other hand, in $Q$, $a$ can only
communicate with $c$ and is separated from the other two, and $d$
can only communicate with $b$, and is separated from the other
two. We can visualise this situation via graphs where $a$, $b$, $c$,
and $d$ are the vertices, and we draw an edge between two vertices if
they are separated, and no edge if they can communicate. Then $P$
and $Q$ correspond to the two graphs shown below.
\begin{equation}
  \label{eq:ex1}
  \begin{array}{c@{\qquad}c}
    P=(a\ltens b)\lpar(c\ltens d)
    &
    Q=(a\lpar c)\ltens(b\lpar d)
    \\[1ex]
    \begin{array}{c@{\quad\;\;}c}
    \vb1 & \vd1 \\
    \\[-1ex]
    \va1 &  \vc1
    \end{array}
    \edges{a1/b1,c1/d1}
    &
    \begin{array}{c@{\quad\;\;}c}
    \vb2 & \vd2 \\
    \\[-1ex]
    \va2 &  \vc2
    \end{array}
    \edges{a2/b2,a2/d2,c2/b2,c2/d2}
  \end{array}
\end{equation}
It should also be possible to describe a situation
where $a$ is separated from $b$, and $b$ is separated from $c$,
and $c$ is separated from $d$, but $a$ can communicate with $c$ and
$d$, and $b$ can communicate with $d$, as indicated by the graph
below.
\begin{equation}
  \label{eq:ex2}
  \begin{array}{c@{\quad\;\;}c}
    \vb3 & \vd3 \\
    \\[-1ex]
    \va3 &  \vc3
  \end{array}
  \edges{a3/b3,b3/c3,c3/d3}
\end{equation}
An example of this behaviour could arise in the setting of concurrent processes, where four processes $a$, $b$, $c$, and $d$ 
 satisfy information flow constraints such that
$c$ and $a$ can communicate;
$a$ and $d$ can communicate; and
$d$ and $b$ can communicate,
and no further communications are possible.
This models an intransitive information flow,
since we have two processes, $a$ and $d$, which can communicate with each other, and respectively with $c$ and $b$;
yet the same processes must also ensure that no information flows between $c$ and $b$.
However, the graph~\eqref{eq:ex2} cannot be described by a formula in the way illustrated for 
the two graphs in~\eqref{eq:ex1}.

This means that the tools of
proof theory, which have been developed over the course of the last
century and which were very successful for the
\emph{formulas-as-properties}, \emph{formulas-as-programs}, and
\emph{formulas-as-types} paradigms, cannot be used for the
\emph{formulas-as-processes} paradigm unless we forbid situations as
in~\eqref{eq:ex2} above. This seems to be a very strong
and unnatural restriction (that is, it is an \textit{a posteri} restriction imposed by the use of formulas, with no \textit{a priory} justification stemming from process modelling problems). The purpose of this paper is to propose
a way to change this unsatisfactory situation.

We will present a proof system, called $\GS$ (for \emph{graphical
 proof system}), whose objects of reason are not formulas but graphs,
giving the example in~\eqref{eq:ex2} the same status as the examples in~\eqref{eq:ex1}.
In a less informal way, one could say that standard
proof systems work on cographs (which are the class of graphs that
correspond to formulas as in~\eqref{eq:ex1}~\cite{duffin:65}), and our proof systems works
on arbitrary graphs.  In order for this to make sense, our proof
system should obey the following basic properties:

\begin{enumerate}
\item\label{intro:prop1}
 \emph{Consistency}: There are graphs that are not provable.
 In particular, if only a finite number of graphs is provable (or not provable) then the proof system would not be interesting.
\item\label{intro:prop2}
 \emph{Transitivity}: The proof
  system should come with an implication that is transitive, i.e., if
  we can prove that $A$ implies $B$ and that $B$ implies $C$, then we
  should also be able to prove that $A$ implies $C$.
\item\label{intro:prop3}
 \emph{Analyticity}: As we no longer have formulas, we cannot ask
  that every formula that occurs in a proof is a subformula of its
  conclusion. However, we can investigate a graph theoretical version of this idea, and we can ask that in a proof search situation, there
  is always only a finite number of ways 
  to apply an
  inference rule.
\item\label{intro:prop4}
 \emph{Minimality}: We want to make as few assumptions as
  possible, 
  so that the theory we develop is as general as possible.
\end{enumerate}

Properties~\ref{intro:prop1}-\ref{intro:prop3} are standard for any proof system, and they are usually
proved using cut elimination. In that respect our paper is no
different. We introduce a notion of cut and show its
admissibility for $\GS$. Then Properties 1-3 are immediate consequences.

Property~\ref{intro:prop4} is of a more subjective nature. In our case, we only make
the following two basic assumptions:

\begin{enumerate}[(i)]
	\item 
	For any graph $A$, we should be able to prove that $A$
	  implies $A$. 
	  This assumption is almost impossible to argue against, so can be expected for any logic.
	  
	\item
	If a graph $A$ is provable, then the graph $G=C\coons{A}{}$ 
		is  also provable, provided that
	  $C\coonso{}$ is a provable context.\footnote{Formally, the notation $G=C\coons{A}{}$ means that
	    $A$ is a module of $G$, and $C\coonso{}$ is the graph obtained
	    from $G$ by removing all vertices belonging to $A$. We give the
	    formal definition in Section~\ref{sec:modules}.} This can be compared to the
	  \emph{necessitation rule} of modal logic, which says that if $A$ is
	  provable then so is $\lbox A$, except that in our case the $\lbox$
	  is replaced by the provable graph context $C\coonso{}$.
\end{enumerate}

All other properties of the system $\GS$ follow from the need to
obtain admissibility of cut. This means that this paper does not
present some random system, but follows the underlying principles of
proof theory.
For a more detailed philosophical presentation of these principles, we refer the reader to Appendix~\ref{sec:phil}.

We also target the desirable property of \emph{conservativity}. 
This means that there should be a well-known logic~$\L$
  (based on formulas) such that the restriction of our proof system to
  those graphs that correspond to formulas proves exactly the
  theorems of~$\L$.
This cannot be an assumption used to design a logical system, since it would create circularity (to specify a logic we need a logic); conservativity is more so a cultural sanity condition to check that we have not invented an esoteric logic.
In our case, conservativity will follow from cut admissibility, and 
the logic $\L$ is multiplicative linear logic with mix ($\MLLm$)~\cite{girard:87,bellin:mix,fleury:retore:94}.

\medskip

Let us now summarise how this paper is organised:
In
Section~\ref{sec:formulas}, we give preliminaries on cographs, which
form the class of graphs that correspond to formulas as
in~\eqref{eq:ex1}. Then, in Section~\ref{sec:modules} we give some
preliminaries on modules and prime graphs, which are needed for our
move away from cographs, so that in Section~\ref{sec:system}, we can
present our proof system, which uses the notation of open
deduction~\cite{gug:gun:par:2010} and follows the principles of deep
inference~\cite{gug:str:01,brunnler:tiu:01,gug:SIS}.  To our
knowledge, this is the first proof system that is not tied to
formulas/cographs but handles arbitrary (undirected) graphs instead.
In Section~\ref{sec:properties} we show some properties of our system,
and 
Sections~\ref{sec:splitting} and~\ref{sec:upfrag}  
are dedicated to cut elimination, which is the basis for showing properties \eqref{intro:prop1}, \eqref{intro:prop2}, and \eqref{intro:prop3}, mentioned above.
We also explain the technology we must develop in order to be able
to prove cut elimination for our proof system. 
The interesting point
is that, not only do we go beyond methods developed for the sequent
calculus, but we also go beyond methods developed for deep inference
on formulas.  In particular, we require entirely new statements of the
tools called \textit{splitting} and \textit{context reduction}, and
furthermore their proofs are inter-dependent, whereas normally context
reduction follows from splitting~\cite{SIS-V,gug:tub:split}.

Then, in
Section~\ref{sec:MLL}, we not only show that our system is a conservative
extension of $\MLLm$, we also show a form of analyticity for our system. 
Finally, in Section~\ref{sec:generalised}, we
show how our work is related to the work on generalised
connectives 
introduced in~\cite{girard:87:b,danos:regnier:89}.
We end this paper with a discussion of related work in
Section~\ref{sec:relatedWork}
and a
conclusion in Section~\ref{sec:conclusion}.

\medskip

Compared to the conference version~\cite{Acclavio2020} of this paper, there are the following three major additions:
\begin{itemize}
	
\item We give detailed proofs of the \emph{Splitting Lemma} and the
  \emph{Context Reduction Lemma} (in Section~\ref{sec:splitting} and Appendix~\ref{sec:splittingproofs}),
  which are crucial for the cut elimination proof. In fact, we also
  completely reorganised the proofs with respect to the technical
  appendix of~\cite{Acclavio2020}\footnote{That appendix is available
    at~\url{https://hal.inria.fr/hal-02560105}.}.  For proving these
  lemmas, we could not rely on the general method that has been
  proposed by Aler Tubella in her PhD~\cite{tubella:phd}.
\item We present a notion of analyticity for proof systems on graphs and
  show that our system is analytic in that respect (in
  Section~\ref{sec:MLL}).
\item We show that general graphs with $n$ vertices can be seen as
  generalised $n$-ary connectives (in Section~\ref{sec:generalised}),
  and we compare this notion with the existing notion of generalised
  (multiplicative)
  connective~\cite{girard:87:b,danos:regnier:89,mai:19,acc:mai:20}.

\item We include in Appendix~\ref{sec:phil} a more detailed discussion of
  how our proof system can be considered satisfactory with respect to 
  the Properties~\eqref{intro:prop1}--\eqref{intro:prop4} mentioned above.

\end{itemize}

\medskip

Finally, let us argue that logics are not designed but discovered. They typically follow logical principles where design parameters are limited.
For example, we will see that we do not get to chose whether or not the following implications hold:
\begin{equation}
    \label{eq:introex}
\nvdash
\begin{array}{c@{\quad\;\;}c}
	\vb1 & \vd1 
	\\\\[-1ex]
	\va1 & \vc1 
\end{array}
\edges{a1/b1,b1/c1,c1/d1}
\multimap
\begin{array}{c@{\quad\;\;}c}
	\vb1 & \vd1 
	\\\\[-1ex]
	\va1 & \vc1 
\end{array}
\edges{a1/b1,c1/d1}
\qquad\qquad\quad
\vdash
\begin{array}{c@{\quad\;\;}c}
	\vb1 & \vd1 
	\\\\[-1ex]
	\va1 & \vc1 
\end{array}
\edges{a1/b1,b1/c1,c1/d1}
\multimap
\begin{array}{c@{\quad\;\;}c}
	\vb1 & \vd1 
	\\\\[-1ex]
	\va1 & \vc1 
\end{array}
\edges{b1/c1,c1/d1}
\end{equation}
There is no pre-existing semantics or proof system we can refer to at this point.
Nonetheless, from the above discussed principles we can argue, that in a logic on graphs, the former implication in~\eqref{eq:introex} cannot hold while the latter must hold. 
Over the course of this paper, we explore the design of proof systems on graphs based on logical principles, which enables us to confidently state such facts.


\section{From Formulas to Graphs}\label{sec:formulas}

In this preliminary section we recall the basic textbook definitions for graphs and formulas, and their correspondence via cographs.

\begin{defi}\label{def:graph}
  A \defn{(simple, undirected) graph} $\gG$ is a pair
  $\tuple{\vG,\eG}$ where $\vG$ is a set of vertices and $\eG$ is a
  set of two-element subsets of $\vG$. We omit the index $\gG$ when it
  is clear from the context. For $v,w\in\vG$ we write $vw$ as an abbreviation
  for $\set{v,w}$. A graph $G$ is \defn{finite} if its vertex set
  $\vG$ is finite. Let $L$ be a set and $\gG$ be a graph. We say that
  $\gG$ is \defn{$L$-labelled} (or just \defn{labelled} if
  $L$ is clear from the context) if every vertex in $\vG$ is associated
  with an element of $L$, called its \defn{label}. We write
  $\labelof[\gG]{v}$ to denote the label of the vertex $v$ in $\gG$. A
  graph $\gG'$ is a \defn{subgraph} of a graph $\gG$, denoted as
  $\gG'\subseteq\gG$ iff $\vGp\subseteq\vG$ and $\eGp\subseteq\eG$. We
  say that $\gG'$ is an \defn{induced subgraph} of $\gG$ if $\gG'$ is
  a subgraph of $\gG$ and for all $v,w\in\vGp$, if $vw\in\eG$ then
  $vw\in\eGp$. For a graph $G$ we write $\sizeof{\vG}$ for its number of vertices and $\sizeof{\eG}$ for its number of edges.
\end{defi}

In the following, we will just say \emph{graph} to mean a finite,
undirected, labelled graph, where the labels come from the set $\cA$
of atoms which is the (disjoint) union of a countable set of
propositional variables $\cV=\set{a,b,c,\ldots}$ and their duals
$\cneg\cV=\set{\cneg a,\cneg b,\cneg c,\ldots}$.

Since we are mainly interested in how vertices are labelled, but not so much in the identity of the underlying vertex,
we heavily rely on the notion of graph isomorphism.

\begin{defi}\label{def:iso}
  Two graphs $\gG$ and $\gG'$ are \defn{isomorphic} if there exists a
  bijection $f \colon \vG \rightarrow \vGp$ such that for all
  $v,u\in\vG$ we have $vu \in \eG$ iff $f(v)f(u) \in \eGp$ and
  $\labelof[\gG]{v}=\labelof[\gG']{f(v)}$. We denote this as
  $G\isom[f]G'$, or simply as $G\isom G'$ if $f$ is clear from the context
  or not relevant.
\end{defi}

In the following, we will, in diagrams, forget the identity of the
underlying vertices, showing only the label, as in the examples in the
introduction.

In the rest of this section we recall the characterisation of those
graphs that correspond to formulas. For simplicity, we restrict
ourselves to only two connectives, and for reasons that will become
clear later, we use the $\lpar$ (\defn{par}) and $\ltens$
(\defn{tensor}) of linear logic~\cite{girard:87}. More precisely,
\defn{formulas} are generated by the grammar
\begin{equation}
  \label{eq:gram}
  \phi,\psi\Coloneqq \lun\mid a\mid\cneg a\mid \phi\lpar\psi\mid \phi\ltens\psi
\end{equation}
where $\lun$ is the \defn{unit}, and $a$ can stand for any
propositional variable in~$\cV$. As usual, we can define the negation
of formulas inductively by letting
$\cnegg a=a$ for all $a\in\cV$, and by
using the De Morgan duality between $\lpar$
and $\ltens$: $\cneg{(\phi\lpar\psi)}=\cneg\phi\ltens\cneg\psi$ and
$\cneg{(\phi\ltens\psi)}=\cneg\phi\lpar\cneg\psi$; the unit is
self-dual: $\cneg \lun=\lun$.

On formulas we
define the following structural equivalence relation:
\begin{equation}
  \label{eq:fequiv}
    \begin{array}{r@{~}l@{\hskip4em}r@{~}l}
	\phi\lpar(\psi\lpar\xi)\fequiv&(\phi\lpar\psi)\lpar\xi 	&
	\phi\ltens(\psi\ltens\xi)\fequiv&(\phi\ltens\psi)\ltens\xi \\
    	\phi\lpar\psi\fequiv&\psi\lpar\phi				&
    	\phi\ltens\psi\fequiv&\psi\ltens\phi 			\\
    	\phi\lpar\lun\fequiv&\phi 					&
    	\phi\ltens\lun\fequiv&\phi 					
    \end{array}
\end{equation}

In order to translate formulas to graphs, we define the following two
operations on graphs:

\begin{defi}\label{def:par}
  Let $\gG=\tuple{\vG,\eG}$ and $\gH=\tuple{\vH,\eH}$ be graphs. We define the \defn{par} of $\gG$ and $\gH$ to be their disjoint union and the \defn{tensor} to be their join, i.e.:
  \begin{eqnarray*}
    \gG\lpar\gH&=&\tuple{\vG\uplus\vH,\eG\uplus\eH}\\
    \gG\ltens\gH&=&\tuple{\vG\uplus\vH,\eG\uplus\eH\uplus\set{vw\mid v\in\vG,w\in\vH}}
  \end{eqnarray*}
\end{defi}

These operations can be visualised as follows:
\begin{equation}\label{eq:TensPar}
	\begin{array}{c@{\qquad\qquad\qquad\;}c}
		\gG\lpar\gH &\gG\ltens\gH 
		\\
		\vgraph{1}{\begin{array}{cc}&\vm1 \\ \gG&\vdots \\ &\vm2 \end{array}}
		\qquad
		\vgraph{2}{\begin{array}{cc}\vm3 \\ \vdots&\gH \\ \vm4 \end{array}}
		&
		\vgraph{1}{\begin{array}{cc}&\vm1 \\ \gG&\vdots \\ &\vm2 \end{array}}
		\qquad
		\vgraph{2}{\begin{array}{cc}\vm3 \\ \vdots&\gH \\ \vm4 \end{array}}
		\joinedges{m1,m2}{m4,m3}
	\end{array}
\end{equation}

For a formula $\phi$, we can now define its associated
graph~$\graphof{\phi}$ inductively as follows:
$\graphof\lun=\emptyset$ the empty graph; $\graphof{a}=a$ a 
single-vertex graph whose vertex is labelled by $a$ (by a slight abuse
of notation, we denote that graph also by $a$); similarly
$\graphof{\cneg a}=\cneg a$; finally we define
$\graphof{\phi\lpar\psi}=\graphof{\phi}\lpar\graphof{\psi}$ and
$\graphof{\phi\ltens\psi}=\graphof{\phi}\ltens\graphof{\psi}$.

\begin{thm}\label{thm:cograph:AC} 
	 For any two formulas, $\phi\fequiv\psi$ iff\/ $\graphof{\phi}\isom\graphof{\psi}$.
\end{thm}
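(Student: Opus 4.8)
The statement is an equivalence, and I would prove the two directions separately, with the forward (soundness) direction being routine and the backward (completeness) direction carrying the real content.

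For the forward direction, the plan is to exploit that $\fequiv$ is, by definition, the smallest congruence on formulas with respect to $\lpar$ and $\ltens$ containing the equations in~\eqref{eq:fequiv}. It therefore suffices to check that $\graphof{\cdot}$ sends each such equation to an isomorphism and respects the two connectives. The compatibility $\graphof{\phi\lpar\psi}=\graphof\phi\lpar\graphof\psi$ and $\graphof{\phi\ltens\psi}=\graphof\phi\ltens\graphof\psi$ holds by definition, and $\isom$ is itself a congruence for the two graph operations of Definition~\ref{def:par}: given bijections witnessing $G\isom G'$ and $H\isom H'$, their disjoint union witnesses both $G\lpar H\isom G'\lpar H'$ and $G\ltens H\isom G'\ltens H'$, the extra join edges being present on both sides. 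The only remaining task is to verify that disjoint union and join are associative, commutative, and have the empty graph as unit up to isomorphism, which is immediate. A straightforward induction on the derivation of $\phi\fequiv\psi$ then yields $\graphof\phi\isom\graphof\psi$.

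For the converse I would first dispose of the unit. A short induction on formula structure using the unit laws shows that a formula is atom-free iff it is $\fequiv\lun$, and hence that every formula is $\fequiv$ either to $\lun$ (precisely when its associated graph is empty) or to a unit-free formula with the same associated graph. This lets me assume $\phi$ and $\psi$ are unit-free and argue by induction on $|\vG|$ where $G=\graphof\phi\isom\graphof\psi$; the empty- and single-vertex cases are immediate. For the step $|\vG|\ge2$ I would flatten the top of each formula by associativity, so that $\phi$ is $\fequiv$ to a $\lpar$ of at least two factors each of which is an atom or a $\ltens$, or dually to such a $\ltens$; since $|\vG|\ge2$, $\phi$ is not an atom, so one case holds. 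The key observation is that the first form has a \emph{disconnected} graph (a disjoint union of at least two nonempty graphs) while the second has a \emph{connected} graph (a join of at least two nonempty graphs), so the top connective is forced by whether $G$ is connected. As $G\isom G'$ agree on connectivity, $\phi$ and $\psi$ share a top connective, and I may assume it is $\lpar$, the $\ltens$ case being symmetric via De Morgan duality (complementing a graph and dualising its labels swaps disjoint union with join and sends $\cneg\phi$ to the complement of $\graphof\phi$, and $\fequiv$ is closed under negation since the equations in~\eqref{eq:fequiv} are self-dual). In the $\lpar$ case every factor is non-$\lpar$ at its top and hence has a connected graph, so the factors of $\phi$ are exactly the connected components of $G$, and likewise for $\psi$; the isomorphism $G\isom G'$ restricts to a label-preserving bijection between these components. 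Each component has strictly fewer than $|\vG|$ vertices, so the induction hypothesis gives $\fequiv$ between matched factors, and reassembling via commutativity and associativity of $\lpar$ yields $\phi\fequiv\psi$.

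The main obstacle is entirely in the backward direction, and within it the only genuinely graph-theoretic input is that the top-level connective of a unit-free formula is recoverable from the connectivity of its graph, together with the uniqueness of the connected-component decomposition. This is exactly the place where the cograph structure, and the classical characterisation of cographs~\cite{duffin:65}, is used, so I expect it to be the step requiring the most care; everything else is bookkeeping with the equations~\eqref{eq:fequiv} and the elementary facts that a nontrivial disjoint union is disconnected while a nontrivial join is connected.
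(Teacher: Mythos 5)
Your proof is correct, and it is essentially the paper's proof: the paper disposes of this theorem with the single line ``By a straightforward induction,'' and your argument is precisely the standard induction that phrase refers to --- the forward direction by induction on the derivation of $\phi\fequiv\psi$, and the backward direction by induction on the number of vertices, using the fact that a nontrivial disjoint union is disconnected while a nontrivial join is connected to recover the top-level connective (with the De Morgan/self-duality reduction handling the $\ltens$ case). You have simply supplied, correctly, the details the paper leaves implicit.
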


\begin{proof}
  By a straightforward induction.
\end{proof}

\begin{defi}
  A graph is \defn{$\pfour$-free} 
  iff it does not have an induced subgraph of the shape
\begin{equation}
  \label{eq:ex3}
  \vm1 \qquad \vm2 \qquad \vm3 \qquad \vm4
  \edges{m1/m2,m2/m3,m3/m4}
\end{equation}
\end{defi}

The following result is classical and its proof can be found, e.g., in~\cite{moh:89} or \cite{gug:SIS}.
\begin{thm}[\cite{duffin:65}]\label{thm:cograph}
  Let $\gG$ be a graph. Then there is a formula $\phi$ with $\graphof\phi\isom G$ iff $\gG$ is $\pfour$-free.
\end{thm}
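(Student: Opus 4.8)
The easy direction is showing that if $\graphof\phi \isom G$ for some formula $\phi$, then $G$ is $\pfour$-free. I would proceed by structural induction on $\phi$, showing that the class of cographs (graphs arising as $\graphof\phi$) is closed under the operations that build formulas and contains no induced $\pfour$. The base cases ($\lun$, $a$, $\cneg a$) give the empty graph and single vertices, which are trivially $\pfour$-free. For the inductive step, I would argue that if $G_1$ and $G_2$ are $\pfour$-free, then so are $G_1 \lpar G_2$ (disjoint union) and $G_1 \ltens G_2$ (join). The key observation here is that any induced $\pfour$ must have vertices from both $G_1$ and $G_2$ (since each part is $\pfour$-free by induction), but the path structure of $\pfour$ (where the two endpoints are nonadjacent and the middle edge connects two vertices each adjacent to exactly one endpoint) is incompatible with the uniform edge pattern between parts: in a disjoint union no cross-edges exist, and in a join all cross-edges exist, and neither is consistent with a $\pfour$ split across the two parts.

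**The hard direction, which I expect to be the main obstacle, is showing that every $\pfour$-free graph arises from a formula.** The natural approach is to prove the contrapositive of a decomposition statement: any $\pfour$-free graph $G$ with at least two vertices is either disconnected (hence $G = G_1 \lpar G_2$ for proper induced subgraphs) or its complement is disconnected (hence $G = G_1 \ltens G_2$). Given such a decomposition, I would recurse on $G_1$ and $G_2$, which are strictly smaller $\pfour$-free induced subgraphs, and combine the resulting formulas using $\lpar$ or $\ltens$ according to Definition~\ref{def:par}. The real work lies in the structural claim that a connected $\pfour$-free graph must have a disconnected complement. This is the classical Seinsche/cograph decomposition theorem, and its proof typically goes by showing that if both $G$ and its complement $\cneg G$ were connected (with $G$ having at least two vertices), one can extract an induced $\pfour$: one picks a spanning structure and traces a shortest path witnessing connectivity while exploiting connectivity of the complement to find the forbidden induced path.

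**I would structure the recursion as an induction on $\sizeof{\vG}$.** The base case $\sizeof{\vG} \leq 1$ is handled directly by $\lun$, $a$, or $\cneg a$. For the inductive step I apply the decomposition to write $G$ as a $\lpar$ or $\ltens$ of smaller $\pfour$-free graphs, obtain formulas $\phi_1, \phi_2$ for the parts by the induction hypothesis, and set $\phi = \phi_1 \lpar \phi_2$ or $\phi = \phi_1 \ltens \phi_2$; the definition of $\graphof{-}$ then gives $\graphof\phi \isom G$. The main subtlety to be careful about is that \emph{both} $G$ and $\cneg G$ being connected is the genuinely dangerous case, and ruling it out (by producing the induced $\pfour$) is where all the combinatorial effort sits. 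Since the excerpt defers to the classical references~\cite{moh:89} and~\cite{gug:SIS} for precisely this step, I would likewise invoke the decomposition lemma as the crux and present the formula reconstruction as the routine surrounding induction.
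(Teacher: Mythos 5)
Your proposal is correct, and it reconstructs exactly the classical argument that the paper itself defers to: the paper gives no proof of Theorem~\ref{thm:cograph}, only the citation to \cite{duffin:65} with pointers to \cite{moh:89} and \cite{gug:SIS}, and the standard proof in those sources is precisely your two directions --- closure of $\pfour$-free graphs under disjoint union and join for the easy implication, and Seinsche's decomposition (every $\pfour$-free graph on at least two vertices is disconnected or has disconnected complement) driving an induction on $\sizeof{\vG}$ for the converse. Your only vague spot, the sketch of how an induced $\pfour$ is extracted when both $\gG$ and $\cneg\gG$ are connected, is unproblematic here since you, like the paper, explicitly invoke that decomposition lemma as a known classical result rather than claiming to prove it.
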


The graphs characterised by Theorem~\ref{thm:cograph}
are called \defn{cographs}, because they are the smallest class of
graphs containing all single-vertex graphs and being closed under
complement and disjoint union.

Because of Theorem~\ref{thm:cograph}, one can think of standard proof
systems as \emph{cograph proof systems}. 
Since in this paper we want to move from cographs to general graphs, 
we need to investigate how much of the tree structure of formulas 
(which makes cographs so interesting for proof theory~\cite{retore:03,hughes:pws,str:FSCD17})
can be recovered for general graphs.


\section{Modules and Prime Graphs}\label{sec:modules}

One of the consequences of the tree structure of formulas is the notion of \emph{subformula}, which is induced by the notion of subtree. This is lost if we move from cographs to general graphs. However, in a cograph $\gG$, a \emph{module} of $\gG$ corresponds to a subformula of the formula $\phi$ given by Theorem~\ref{thm:cograph}. The notion of module also exists for general graphs, and in our proof systems on graphs, modules will play a similar role as subformulas play in ordinary proof systems on formulas.   

For this reason, we recall here some standard results on graph modules and the modular decomposition of graphs~\cite{moh:rad,ehr:har:roz:theory,cou:del:modular,hab:paul:survey}.

\begin{defi}\label{def:module}
  Let $\gG$ be a graph. A \defn{module} of $\gG$ is an induced
  subgraph $\gM=\tuple{\vM,\eM}$ of $\gG$ such that for all
  $v\in\vG\setminus\vM$ and all $x,y\in \vM$ we have $vx\in\eG$ iff
  $vy\in\eG$.
\end{defi}

\begin{nota}\label{not:mudule}
	Let $\gM$ be a module of a graph $\gG$.
	Since each vertex in $\vM$ has the same relation with all vertices outside $\gM$, that is, with all vertices in $\vG\setminus\vM$, we introduce the following notation when drawing graphs
	\begin{equation}\label{eq:module-dia}
	{\small
		{
			{\begin{array}{ccc}
				\vmodule{2}{
					\gM
					\begin{array}{c}
						\vm1
						\\
						\vdots
						\\
						\vm2
					\end{array}	
				}
				&
				\begin{array}{c}
					\overbrace{\bullet \quad\cdots\quad \bullet}^{\mbox{vertices not connected to }\gM}
					\\
					\\
					\\
					\underbrace{\vm5 \quad\cdots\quad \vm6}_{\mbox{vertices connected to }\gM}
				\end{array}
			\end{array}}
		}
		\edges{m1/m5,m1/m6,m2/m5}
		\tikz[graphstyle]\draw[-] (m2\nodecode) to [bend right=10pt] (m6\nodecode);
		\hskip-1.5em=\hskip1.5em
		{
			\begin{array}{ccc}
				\vmodule{2}{
					\gM
					\begin{array}{c}
						\vm1
						\\
						\vdots
						\\
						\vm2
					\end{array}	
				}
				&
				\begin{array}{c}
					\overbrace{\bullet \quad\cdots\quad \bullet}^{\mbox{vertices not connected to }\gM}
					\\
					\\
					\\
					\underbrace{\vm5 \quad\cdots\quad \vm6}_{\mbox{vertices connected to }\gM}
				\end{array}
			\end{array}
		}
		\edges{M2/m5,M2/m6}
	}
	\end{equation}
	which allows us to reduce the number of drawn edges and to increase readability. 
	Using this notation the definition of the tensor operator in~\eqref{eq:TensPar} can be written as folllows: 
	$$
		\vmodule{1}{\begin{array}{cc}&\vm1 \\ \gG&\vdots \\ &\vm2 \end{array}}
		\qquad
		\vmodule{2}{\begin{array}{cc}\vm3 \\ \vdots&\gH \\ \vm4 \end{array}}
	\edges{M1/M2}
	\quad = \quad
		\vmodule{1}{\begin{array}{cc}&\vm1 \\ \gG&\vdots \\ &\vm2 \end{array}}
		\qquad
		\vmodule{2}{\begin{array}{cc}\vm3 \\ \vdots&\gH \\ \vm4 \end{array}}
	\joinedges{m1,m2}{m4,m3}
	$$
\end{nota}

\begin{nota}\label{not:context}
	Let $\gM$ be a module of a graph $\gG$,
	and
	let $\gC$ be the graph obtained from $\gG$ 
	by removing all vertices in $\gM$ (including incident edges).
	If 
	$R\subseteq\vC$ is the set of vertices that are in $\gG$ connected to a
	vertex in $\vM$ (and hence to all vertices in $\gM$ by modularity),
	then we write $\gG=\gC\coons{\gM}{R}$
	and we say that $\gC\coonso R$ is the \defn{context} of $\gM$ in $\gG$.  
	
	Alternatively, we can consider a context
	$\gC\coonso{R}$ as a graph with a distinguished vertex $\coonso{}$
	such that 
	$R$ is the set of neighbours of $\coonso{}$.
	In this case $\gC\coons{\gM}R$ can be defined as 
	the graph obtained from $\gC\coonso{}R$ by substituting the vertex $\coonso{}$
	with the graph $\gM$ and adding 
	all the edges from each vertex in $\vM$ to each vertex in $R$.
\end{nota}

\begin{exa}
  Consider the graph context on the left below with the hole denoted by an empty $\vmodule1{~}$. 
  If we substitute the two-vertex graph $a \lpar \cneg{a}$ for the hole, then we obtain the graph in the middle below:
\begin{equation*}
	\scalebox{.85}{$
	C\coonso{\set{b,c,\cneg c}}
        =\!
	\begin{array}{c@{\quad\;\;}c@{\quad\;\;}c}
		\vmodule1{\quad} &  \vnb1 \\
		\\[-1ex]
		\vb1 & \vmodule{2}{c \quad \cneg c}
	\end{array}
	\edges{M1/b1,M1/M2,nb1/M2}
	\quad
	C\coons{a \lpar \cneg{a}}{\set{b,c,\cneg c}}
	= 
	\begin{array}{cc@{\quad\;\;}c}
		\vmodule{1}{a \quad \cneg a}  & \vnb1 \\
		\\[-1ex]
		\vb1 & \vmodule{2}{c \quad \cneg c}
	\end{array}
	\edges{M1/b1,M1/M2,nb1/M2}
	\quad
	C\coons{\emptyset}{\set{b,c,\cneg c}}
	=
	\!\begin{array}{c@{\quad\;\;}c@{\quad\;\;}c}
		&  & \vnb1 \\
		\\[-1ex]
		\vb1 & \vc1 &  \vnc1
	\end{array}
	\edges{nb1/nc1,nb1/c1}
        $}
\end{equation*}
Finally, if we replace the hole with the empty graph we obtain the graph on the right above.
\end{exa}

\begin{lemma}\label{lem:module}
  Let $\gG$ be a graph and $\gM,\gN$ be modules of $\gG$. Then
  \begin{enumerate}
  \item $\gM\cap\gN$ is a module of $\gG$;
  \item if $\gM\cap\gN\neq\unit$, then $\gM\cup\gN$ is a module of $\gG$; and
  \item if $\gN\not\subseteq\gM$ then $\gM\setminus\gN$ is a module of $\gG$.
  \end{enumerate}
\end{lemma}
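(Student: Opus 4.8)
The plan is to work throughout from the following restatement of Definition~\ref{def:module}: a vertex set $S\subseteq\vG$ induces a module of $\gG$ precisely when every vertex $v\in\vG\setminus S$ is adjacent to either \emph{all} or \emph{none} of the vertices of $S$. Since each of the three sets $\vM\cap\vN$, $\vM\cup\vN$ and $\vM\setminus\vN$ induces, by construction, an induced subgraph of $\gG$, the only thing I need to verify in every case is this uniformity condition; I will do so by fixing an external vertex $v$ and two vertices $x,y$ of the candidate module and showing $vx\in\eG$ iff $vy\in\eG$.

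For~(1), I would fix $v\in\vG\setminus(\vM\cap\vN)$ and $x,y\in\vM\cap\vN$. Then $v\notin\vM$ or $v\notin\vN$; in the first case $x,y\in\vM$ and modularity of $\gM$ gives $vx\in\eG$ iff $vy\in\eG$, and the second case is symmetric using $\gN$. For~(2), I would fix $v\in\vG\setminus(\vM\cup\vN)$, so $v\notin\vM$ and $v\notin\vN$, together with $x,y\in\vM\cup\vN$. Here the idea is to use a common vertex as a bridge: choosing any $z\in\vM\cap\vN$ (which exists by the hypothesis $\gM\cap\gN\neq\unit$), I would show $vx\in\eG$ iff $vz\in\eG$ by applying modularity of $\gM$ when $x\in\vM$ and of $\gN$ when $x\in\vN$ (in each case both the relevant vertices lie in the module and $v$ lies outside it). The same argument gives $vy\in\eG$ iff $vz\in\eG$, and chaining the two equivalences closes the case.

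Part~(3) is where I expect the real difficulty, and it is exactly the step that uses the hypothesis $\gN\not\subseteq\gM$. I would fix $v\in\vG\setminus(\vM\setminus\vN)$ and $x,y\in\vM\setminus\vN$, so that $x,y\in\vM$ and $x,y\notin\vN$. The condition $v\notin\vM\setminus\vN$ splits into $v\notin\vM$ or $v\in\vN$. If $v\notin\vM$, then modularity of $\gM$ applied to $x,y\in\vM$ immediately yields $vx\in\eG$ iff $vy\in\eG$. The delicate case is $v\in\vN$: now $v$ lies \emph{inside} $\gN$ while $x,y$ lie outside it, so modularity of $\gN$ does not directly relate the edges $vx$ and $vy$, and no bridge vertex for $\gM$ is available from the data fixed so far. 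This is precisely where I would invoke $\gN\not\subseteq\gM$ to select a vertex $z\in\vN\setminus\vM$. Since $x,y\in\vG\setminus\vN$ while $v,z\in\vN$, modularity of $\gN$ gives $vx\in\eG$ iff $zx\in\eG$ and $vy\in\eG$ iff $zy\in\eG$; and since $z\in\vG\setminus\vM$ while $x,y\in\vM$, modularity of $\gM$ gives $zx\in\eG$ iff $zy\in\eG$. Chaining these three equivalences yields $vx\in\eG$ iff $vy\in\eG$, completing the case. (If $\vM\setminus\vN=\unit$ the claim holds vacuously.) The main obstacle is thus conceptual rather than computational: recognising that in the crossing case $v\in\vN$, $x,y\notin\vN$ the two modularity conditions must be composed through an auxiliary vertex, and that $\gN\not\subseteq\gM$ is exactly the hypothesis guaranteeing such a vertex exists.
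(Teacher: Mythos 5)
Your proposal is correct and follows essentially the same route as the paper: part~(2) is handled via a bridge vertex $z\in\vM\cap\vN$, and part~(3) via a vertex $z\in\vN\setminus\vM$ with the identical chain $vx\in\eG$ iff $zx\in\eG$ iff $zy\in\eG$ iff $vy\in\eG$. The only cosmetic differences are that you bridge every pair in~(2) uniformly through $z$ where the paper first dispatches the case of $x,y$ in a common module, and your case split $v\notin\vM$ or $v\in\vN$ in~(3) subsumes the paper's $v\in\vM\cap\vN$ case without affecting the argument.
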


\begin{proof}
  The first statement follows immediately from the definition. For the
  second one, let $\gL=\gM\cap\gN\neq\unit$, and let
  $v\in\gG\setminus(\vM\cup\vN)$ and $x,y\in \vM\cup\vN$. If $x,y$ are
  both in $\gM$ or both in $\gN$, then we have immediately $vx\in\eG$
  iff $vy\in\eG$. So, let $x\in\vM$ and $y\in\vN$, and let
  $z\in\gL$. We have $vx\in\eG$ iff $vz\in\eG$ iff
  $vy\in\eG$. Finally, for the last statement, let
  $x,y\in\vM\setminus\vN$ and let
  $v\in\vG\setminus(\vM\setminus\vN)$. If $v\notin\vM$, we immediately
  have $vx\in\eG$ iff $vy\in\eG$. So, let $v\in\vM$, and therefore
  $v\in\vM\cap\vN$. Let $z\in\vN\setminus\vM$. Then $vx\in\eG$ iff
  $zx\in\eG$ iff $zy\in\eG$ iff $vy\in\eG$.
\end{proof}

\begin{defi}
  Let $\gG$ be a graph. A module $\gM$ in $\gG$ is \defn{maximal}\/ if
  $\gM\neq\gG$ and
  for all modules $\gM'\neq \gG$ of $\gG$ 
  we have that
  $\gM\subseteq\gM'$ implies $\gM=\gM'$.
\end{defi}

\begin{defi}
  A module $\gM$ of a graph $\gG$ is \defn{trivial}\/ iff either
  $\vM=\emptyset$ or $\vM$ is a singleton or $\vM=\vG$. A graph $\gG$
  is \defn{prime} iff $\sizeof\vG\ge2$ and all modules of $\gG$ are trivial.
\end{defi}

\begin{defi}\label{def:graphCompisitionVia}
  Let $\gG$ be a graph with $n$ vertices $\vG=\set{v_1,\ldots,v_n}$
  and let $H_1,\ldots,H_n$ be $n$ graphs. We define the
  \defn{composition of $H_1,\ldots,H_n$ via $\gG$}, denoted as
  $G\connn{H_1,\ldots,H_n}$, by replacing each vertex $v_i$ of $\gG$
  by the graph $\gH_i$; and there is an edge between two vertices $x$
  and $y$ if either $x$ and $y$ are in the same $\gH_i$ and
  $xy\in\eHi$ or $x\in\vHi$ and $y\in\vHj$ for $i\neq j$ and
  $v_iv_j\in\eG$. Formally,
  $G\connn{H_1,\ldots,H_n}=\tuple{V^\ast,E^\ast}$ with
  \begin{equation*}
		V^\ast=\biguplus_{1\le i\le n}\vHi
		\quad\quand\quad
		E^\ast=\left(\biguplus_{1\le i\le n}\eHi\right)\uplus\left\{xy\mid x\in \vHi,y\in\vHj,v_iv_j\in\eG\right\}
\end{equation*}
\end{defi}

This concept allows us to decompose graphs into prime graphs (via
Lemma~\ref{lem:newhope} below) and recover a tree structure for an
arbitrary graph.\footnote{This also allows us to consider prime graphs as generalised {non-decomposable}
  $n$-ary connectives. We will come back to this point in Section~\ref{sec:generalised}.}
  The two operations $\lpar$ and $\ltens$, defined
in Definition~\ref{def:par} are then represented by the following two prime
graphs:
\begin{equation}
  \label{eq:par-tens}
  \lpar\colon\quad \vbul1\qquad\vbul2
  \qquand
  \ltens\colon\quad \vbul3\qquad\vbul4
  \edges{n3/n4}
  \hskip3em
\end{equation}
We have $\lpar\connn{G,H}=G\lpar H$ and $\ltens\connn{G,H}=G\ltens
H$.

\begin{lemma}[\cite{gallai:67}]
\label{lem:newhope}
  For every non-empty graph $\gG$ we have exactly one of the following four cases:
  \begin{enumerate}[(i)]
  \item $\gG$ is a singleton graph.
  \item $\gG=\gA\lpar\gB$ for some $\gA$, $\gB$ with $\gA\neq\unit\neq\gB$.
  \item $\gG=\gA\ltens\gB$ for some $\gA$, $\gB$ with $\gA\neq\unit\neq\gB$.
  \item $\gG=\gP\connn{\gA_1,\ldots,\gA_n}$ for some prime graph $\gP$
    with $n=\sizeof{\vP}\ge 4$ and $\gA_i\neq\unit$ for every $i\in\set{1,\ldots,n}$.
  \end{enumerate}
\end{lemma}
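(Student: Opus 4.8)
The plan is to reduce the whole statement to two connectivity questions about $G$ and its \emph{complement} $\overline{G}$ (the graph on $\vG$ whose edges are exactly the non-edges of $G$), and then to handle the hardest case via modular decomposition. First I observe, using Definition~\ref{def:par}, that $G=A\lpar B$ with $A\neq\unit\neq B$ is possible exactly when $G$ is disconnected, and $G=A\ltens B$ with $A\neq\unit\neq B$ is possible exactly when $\overline{G}$ is disconnected. So the four cases correspond to: $\sizeof{\vG}=1$; $G$ disconnected; $\overline{G}$ disconnected; and $G$, $\overline{G}$ both connected. For exhaustiveness I would invoke the classical fact that $G$ and $\overline{G}$ cannot both be disconnected (if $G$ has components $C_1,\dots,C_k$ with $k\ge 2$, then every vertex of $C_i$ is adjacent in $\overline{G}$ to every vertex of $C_j$ for $i\neq j$, so $\overline{G}$ is connected). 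Thus for $\sizeof{\vG}\ge 2$ at least one of (ii), (iii), (iv) applies.

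For the ``exactly one'' part I would argue mutual exclusivity from the same dichotomy. The singleton case is ruled out against the others by cardinality, cases (ii) and (iii) exclude each other by the fact just stated, and it remains to check that (iv) forces both $G$ and $\overline{G}$ to be connected (so that it excludes (i)--(iii)). For this I show that a prime graph $P$ with $n\ge 4$ vertices is connected and has connected complement: if $P$ were disconnected, a connected component with $\ge 2$ vertices would be a nontrivial module, so all components would be singletons, making $P$ edgeless on $\ge 4$ vertices, which again has nontrivial modules; the same argument applied to $\overline{P}$ (whose modules coincide with those of $P$) gives connectedness of $\overline{P}$. Since $G=P\connn{A_1,\dots,A_n}$ with $P$ and $\overline{P}$ connected and every $A_i\neq\unit$, connectedness lifts from $P$ to $G$ and from $\overline{P}$ to $\overline{G}$ along the composition of Definition~\ref{def:graphCompisitionVia}.

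The core of the argument is case (iv): given $G$, $\overline{G}$ both connected and $\sizeof{\vG}\ge2$, I would build the decomposition from the \emph{maximal proper modules} of $G$. The key step is that these are pairwise disjoint. Suppose two distinct maximal modules $M,N$ met; then $M\cup N$ is a module by Lemma~\ref{lem:module}(2), so by maximality $M\cup N=\vG$, and since both are proper they in fact properly overlap. By Lemma~\ref{lem:module} the three sets $M\setminus N$, $M\cap N$, $N\setminus M$ are then all modules and partition $\vG$. Using modularity of $M$ and $N$ one checks that the adjacency between $M\setminus N$ and $N\setminus M$ is uniform (all edges or none); in the ``all edges'' case every part is completely joined to the others, so $\overline{G}$ is disconnected, and in the ``no edges'' case $M\setminus N$ is disconnected from the rest, so $G$ is disconnected --- either way contradicting the hypothesis. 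Hence the maximal proper modules are disjoint, and since every vertex lies in one (each singleton $\set{v}$ is a proper module and extends to a maximal one by finiteness) they partition $\vG$ into classes $A_1,\dots,A_n$.

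Finally I define the quotient $P$ on $n$ vertices with $v_iv_j\in\gEdges[P]$ iff $G$ has an edge between $A_i$ and $A_j$ (well defined since both classes are modules), so that $G=P\connn{A_1,\dots,A_n}$ by Definition~\ref{def:graphCompisitionVia}. To see $P$ is prime, any nontrivial module $S$ of $P$ would lift to $\bigcup_{v_i\in S}A_i$, a proper module of $G$ strictly containing some maximal module $A_i$, contradicting maximality. For $n\ge 4$: $n\ge2$ since $P$ is prime, $n\neq 2$ because a $2$-vertex $P$ is $\lpar$ or $\ltens$ (cf.~\eqref{eq:par-tens}) and would make $G$ or $\overline{G}$ disconnected, and $n\neq 3$ by the easily checked fact that no graph on three vertices is prime. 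I expect the disjointness of maximal modules --- specifically the uniform-adjacency case analysis on $M\setminus N$, $M\cap N$, $N\setminus M$ that produces the disconnection --- to be the main obstacle; everything else is bookkeeping on top of Lemma~\ref{lem:module} and Definition~\ref{def:graphCompisitionVia}.
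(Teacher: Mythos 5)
Your proof is correct, and it runs on the same engine as the paper's --- maximal proper modules, the overlap analysis via Lemma~\ref{lem:module}, and primality of the quotient --- but it is organized along a genuinely different axis. The paper never mentions connectivity: it case-splits directly on whether the maximal modules of $\gG$ are pairwise disjoint. In the disjoint case it forms the quotient $\gP$ (prime by maximality) and reads off case (iv) when $\sizeof{\vP}\ge 4$ and cases (ii)/(iii) when $\sizeof{\vP}=2$, using, as you do, that there are no prime graphs on three vertices; in the overlapping case, instead of deriving a contradiction, it exploits your uniform-adjacency analysis of $M\setminus N$, $M\cap N$, $N\setminus M$ \emph{constructively}, concluding $\gG=N\ltens L\ltens K$ or $\gG=N\lpar L\lpar K$ and thus landing directly in case (ii) or (iii). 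Your reformulation via the dichotomy ``$\gG$ or its complement is disconnected, but never both'' buys two things the paper leaves implicit: an explicit proof of the mutual-exclusivity half of the statement (the paper's proof only establishes exhaustiveness), resting on your observation that a prime graph on at least four vertices and its complement are both connected and that connectivity lifts through $\gP\connn{\gA_1,\ldots,\gA_n}$; and a self-contained reason why $n=2$ cannot occur in the both-connected case. The cost is the extra bookkeeping of lifting connectivity through the composition, which the paper's constructive treatment of the overlap avoids. One step you should spell out when writing this up: in the overlap case, the per-vertex uniformity of each $x\in M\setminus N$ toward \emph{all} of $N$ (not merely toward $N\setminus M$) is what lets you conclude, in the ``no edges'' subcase, that $M\setminus N$ is isolated from the entire rest of the graph, and symmetrically it is what propagates the ``all edges'' subcase to all three blocks; this is exactly the chain the paper compresses into its final ``therefore $\gG=N\ltens L\ltens K$ or $\gG=N\lpar L\lpar K$''.
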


\begin{proof}
  Let $\gG$ be given. If $\sizeof\gG=1$, we are in case (i). Now
  assume $\sizeof\gG>1$, and let $\gM_1,\ldots,\gM_n$ be the maximal
  modules of $\gG$. Note that $n\geq 2$. Now we have two cases:
  \begin{itemize}
  \item For all $i,j\in\set{1,\ldots,n}$ with $i\neq j$ we have
    $\gM_i\cap \gM_j=\unit$.  Since every vertex of $\gG$ forms a
    module, every vertex must be part of a maximal module. Hence
    $\vG=\vertices[\gM_1]\cup\cdots\cup\vertices[\gM_n]$. Therefore
    there is a graph $\gP$ such that
    $\gG=\gP\connn{M_1,\ldots,M_n}$. Since all $\gM_i$ are maximal in
    $\gG$, we can conclude that $\gP$ is prime. If $\sizeof{\vP}\ge 4$
    we are in case (iv). If $\sizeof{\vP}<4$ we are either in case
    (ii) or (iii), as the two graphs in~\eqref{eq:par-tens} are only
    two prime graphs with $\sizeof{\vP}=2$, and there are no prime
    graphs with $\sizeof{\vP}=3$.
  \item We have some $i\neq j$ with $\gM_i\cap \gM_j\neq\unit$. Let
    $\gL=\gM_i\cap \gM_j$ and $\gN=\gM_i\setminus\gM_j$ and
    $\gK=\gM_j\setminus\gM_i$. By Lemma~\ref{lem:module}, $L$, $N$,
    $K$, and $\gM_i\cup\gM_j$ are all modules of $\gG$. Since $\gM_i$
    and $\gM_j$ are maximal, it follows that $\gG=\gM_i\cup\gM_j$, and
    therefore $\gG=N\ltens L\ltens K$ or $G=N\lpar L\lpar K$. \qedhere
  \end{itemize}
\end{proof}

\begin{exa}\label{ex:modDec}
	Consider the following graph
	\begin{equation}
          \label{eq:spaghetti}
		\begin{array}{ccccc}
			{\va1 \quad \vb1 \quad \vc1 \quad \vd1 }
			&
			{\vna1 \quad \vnb1 \quad \vnc1 \quad \vnd1}
			\\[1em]
			\vf1
			\qquad
			\vg1
			&
			\vnf1
			\qquad
			\vng1
		\end{array}
		\joinedges{a1,b1,c1,d1}{f1,nf1,g1,ng1}
		\joinedges{na1,nb1,nc1,nd1}{nf1,ng1}
		\edges{f1/g1}
		\edges{a1/b1,b1/c1,c1/d1}
		\edges{na1/nb1,nb1/nc1,nc1/nd1}
	\end{equation}
	Its representation using the modular notation introduced in~(\ref{eq:module-dia}) is shown on the left below and its modular decomposition tree
	is shown on the right below:
	\begin{equation}
          \label{eq:modular}
		\begin{array}{@{\hskip-.5em}cccc}
			\vmodule2{\va1 \quad \vb1 \quad \vc1 \quad \vd1}&\vmodule4{\vna1 \quad \vnb1 \quad \vnc1 \quad \vnd1 ~}
			\\[1em]
			\vmodule1{\vf1\qquad\vg1}
			&
			\vmodule3{\vnf1\qquad\vng1 ~ }
		\end{array}
		\edges{M1/M2,M2/M3,M3/M4}
		\edges{a1/b1,b1/c1,c1/d1}
		\edges{na1/nb1,nb1/nc1,nc1/nd1}
		\edges{nf1/ng1}
		\hskip2em
		\begin{array}{ccccc}
			&&\vnode0{\mathsf P_4}
			\\
			\vnode1{\lpar}& \vnode2{\mathsf P_4}&& \vnode3{\ltens}& \vnode4{\mathsf P_4}
			\\[1em]
			\vf1 \quad \vg1& \va1\quad \vb1\quad\vc1\quad \vd1 && \vnf1 \quad \vng1 & \vna1\quad \vnb1\quad \vnc1\quad\vnd1
		\end{array}
		\blackedges{M0/M1,M0/M2,M0/M3,M0/M4}
		\blackedges{M1/f1,M1/g1,M2/a1,M2/b1,M2/c1,M2/d1,M3/nf1,M3/ng1,M4/na1,M4/nb1,M4/nc1,M4/nd1}
	\end{equation}
\def\Pfour{\mathsf P_4}
The modular decomposition tree can also be written in ``formula style'' as
\begin{equation}
  \label{eq:modform}
  \Pfour\connn{f\lpar g,\Pfour\connn{a,b,c,d},\cneg f\ltens\cneg g,\Pfour\connn{\cneg a,\cneg b,\cneg c,\cneg d}}
\end{equation}
In the rest of the paper we will use all four representations
in~\eqref{eq:spaghetti} and~\eqref{eq:modular} and~\eqref{eq:modform}
interchangeably.
\end{exa}

\section{The Proof System}\label{sec:system}

To define a proof system on graphs, we need a notion of implication on graphs. To do
so, we first introduce a notion of negation on graphs.

\begin{defi}\label{def:dual}
  For a graph $\gG=\tuple{\vG,\eG}$, we define its \defn{dual}
  $\cneg\gG=\tuple{\vG,\enG}$ to have the same set of vertices, and an
  edge $vw\in\enG$ iff $vw\notin\eG$ (and $v\neq w$). The label of a vertex $v$ in
  $\cneg\gG$ is the dual of the label of that vertex in $\gG$, i.e.,
  $\lnG v = \cneg{\lG v}$. For any two graphs $\gG$ and $\gH$, the
  \defn{implication} $\gG\limp\gH$ is defined to be the graph
  $\cneg\gG\lpar\gH$.
\end{defi}

\begin{exa}
  To give an example, consider the graph $\gG$ on the left below
  \begin{equation}
    \label{exa:neg}
    \gG\colon 
    \begin{array}{c@{\quad}c@{\quad}c}
      &\va1 \\
      \va2 &&\vc1\\
      \\[-1ex]
      \vb1 &&\vna1
    \end{array}
    \edges{a1/a2,a1/c1,c1/na1,d1/d1, b1/c1,b1/na1}
    \hskip4em
    \cneg \gG \colon
    \begin{array}{c@{\quad}c@{\quad}c}
      &\vna1 \\
      \vna2 &&\vnc1\\
      \\[-1ex]
      \vnb1 &&\va1
    \end{array}
    \edges{na1/nb1,na2/nb1,na1/a1, na2/nc1,na2/a1}
  \end{equation}
  Its negation $\cneg\gG$ is shown on the right above.
\end{exa}

\begin{obs}\label{obs:deMorgan}
The dual graph construction defines the standard De
Morgan dualities relating conjunction and disjunction, i.e., for every
formula $\phi$, we have
$\graphof{\cneg\phi}\isom\cneg{\graphof\phi}$. De Morgan
dualities extended to prime graphs as
$\cneg{\gP\connn{M_1,\ldots,M_n}} \isom
\cneg{\gP}\connn{\cneg{M_1},\ldots,\cneg{M_n}}$, where $\cneg{\gP}$ is
the dual graph of $\gP$. Furthermore, $\cneg{\gP}$ is prime if and
only if $\gP$ is prime. Thus each pair of prime graphs $\gP$ and
$\cneg{\gP}$ defines a pair of connectives that are De Morgan duals to
each other.
\end{obs}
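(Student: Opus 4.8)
The statement collects four assertions, and my plan is to prove each directly from the definitions; nothing beyond Definitions~\ref{def:par}, \ref{def:dual}, and~\ref{def:graphCompisitionVia} is required. I would begin with the two graph-level De Morgan identities
\begin{equation*}
\cneg{\gG\lpar\gH}=\cneg\gG\ltens\cneg\gH
\qquand
\cneg{\gG\ltens\gH}=\cneg\gG\lpar\cneg\gH\quadfs
\end{equation*}
Both follow by inspecting edges: $\gG\lpar\gH$ and $\gG\ltens\gH$ both carry the vertex set $\vG\uplus\vH$, so by Definition~\ref{def:dual} complementation acts independently within $\vG$, within $\vH$, and across the two blocks. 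Inside each block it exchanges $\eG$ for $\gEdges[\cneg G]$ (and $\eH$ for $\gEdges[\cneg H]$), while across blocks it exchanges ``no edge'' for ``edge'', which is exactly what distinguishes $\lpar$ from $\ltens$ in Definition~\ref{def:par}; as labels are dualised on both sides, the identities hold on the nose.

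With these in hand, the first assertion $\graphof{\cneg\phi}\isom\cneg{\graphof\phi}$ is a routine structural induction on $\phi$. The base cases $\phi=\lun$, $\phi=a$, $\phi=\cneg a$ are immediate, since the empty and single-vertex graphs are self-dual up to the label dualisation built into Definition~\ref{def:dual}; the inductive cases for $\lpar$ and $\ltens$ combine the formula-level De Morgan laws of Section~\ref{sec:formulas} with the two graph identities just established. The second assertion is the same computation performed once at the level of the general composition: reading the edges of $\cneg{\gP\connn{M_1,\ldots,M_n}}$ off Definitions~\ref{def:graphCompisitionVia} and~\ref{def:dual}, an edge survives inside a block $M_i$ exactly when it lies in $\gEdges[\cneg{M_i}]$, and across blocks $i\neq j$ exactly when $v_iv_j\notin\eP$, i.e.\ when $v_iv_j\in\gEdges[\cneg P]$; this is precisely the edge set of $\cneg P\connn{\cneg{M_1},\ldots,\cneg{M_n}}$. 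This step needs no primeness of $\gP$ and specialises to the two identities above through $\lpar\connn{\gG,\gH}=\gG\lpar\gH$ and $\ltens\connn{\gG,\gH}=\gG\ltens\gH$.

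For the third assertion I would first record that duality preserves modules: a set $\gM$ satisfies the module condition of Definition~\ref{def:module} in $\gG$ iff it does in $\cneg\gG$, because the defining biconditional $vx\in\eG\iff vy\in\eG$ is invariant under complementing both memberships, hence equivalent to $vx\in\gEdges[\cneg G]\iff vy\in\gEdges[\cneg G]$. Since $\gG$ and $\cneg\gG$ share their vertex set, they share their trivial modules and $\sizeof\vG=\sizeof{\vertices[\cneg G]}$, so $\gG$ is prime iff $\cneg\gG$ is; applying this to $\gP$ gives the claim. The final sentence is then a corollary: by the second assertion the composition operations induced by $\gP$ and $\cneg P$ send dualised arguments to dual outputs, and by the third $\cneg P$ is again prime, so $(\gP,\cneg P)$ is a genuine pair of De Morgan dual connectives. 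None of these steps presents a real obstacle; the only point needing care is the bookkeeping in the composition case, where the block-internal complement $\gEdges[\cneg{M_i}]$ must be kept separate from the cross-block complement governed by $\cneg P$.
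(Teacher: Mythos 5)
Your proof is correct, and it is precisely the routine definition-chasing the paper intends: the statement appears only as an unproved Observation, because each claim follows directly from Definitions~\ref{def:par}, \ref{def:dual} and~\ref{def:graphCompisitionVia}. Your verifications --- the edge-wise computation giving $\cneg{\gG\connn{M_1,\ldots,M_n}}\isom\cneg{\gG}\connn{\cneg{M_1},\ldots,\cneg{M_n}}$ for an \emph{arbitrary} graph $\gG$ (where you rightly note that primeness plays no role), the invariance of the module condition of Definition~\ref{def:module} under complementation, which yields primeness of $\cneg{\gP}$ since $\gG$ and $\cneg\gG$ share vertex sets and hence trivial modules, and the structural induction for $\graphof{\cneg\phi}\isom\cneg{\graphof\phi}$ using the two graph-level De Morgan identities --- are exactly the arguments the paper implicitly relies on, so there is nothing to add.
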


We will now develop our proof system based on the above notion of
negation as graph duality. The requirements mentioned in the
introduction entail that:
\begin{enumerate}[(i)]
\item\label{l:id} for any isomorphic graphs $\gG$ and $\gH$, the graph
$\gG\limp\gH$ should be provable;
\item if $\gG\neq\emptyset$ then
$\gG$ and $\cneg\gG$ should not be both provable;
\item the implication $\limp$ should be transitive, i.e., if $\gG\limp\gH$, and
  $\gH\limp\gK$ are provable then so should be $\gG\limp\gK$;
\item the implication $\limp$ should be closed under context,
i.e., if $\gG\limp\gH$ is provable and
$\gC\coonso R$ is an arbitrary context, then
$\gC\coons{\gG}R\limp\gC\coons{\gH}R$ should be provable;
\item\label{l:ssu} if $\gA$ and $\gC$ are provable graphs, and $R\subseteq\vC$,
  then the graph $\gC\coons\gA R$ should also be provable.
\end{enumerate}

As our chosen notation suggests, our proof system will be an extension of multiplicative linear logic (MLL). Note that this is not our starting point, but a consequence of the conditions above. Since we have a common unit for $\lpar$ and $\ltens$, namely the empty graph, a form of \emph{mix} (i.e., $A\ltens B\limp A\lpar B$) will be derivable in our system. Indeed, it turns out that our proof system is a conservative extension of MLL with mix~\cite{girard:87,bellin:mix,fleury:retore:94}. This will be discussed in detail in Section~\ref{sec:MLL}.

\begin{exa}
As an example, consider the following three graphs:
\begin{equation}\label{eq:exa1}
  \gA_1\colon
  \begin{array}{c@{\quad\;\;}c@{\quad\;\;}c@{\quad\;\;}c}
    \vna1 & \va1  \\
    \\[-1ex]
    \vb1 & \vnb1 
  \end{array}
  \edges{na1/b1,a1/nb1,na1/nb1,a1/b1}
  \quad
  \gA_2\colon
  \begin{array}{c@{\quad\;\;}c@{\quad\;\;}c@{\quad\;\;}c}
    \vna2 & \va2  \\
    \\[-1ex]
    \vb2 & \vnb2 
  \end{array}
  \edges{na2/b2,a2/nb2,na2/nb2}
  \quad
  \gA_3\colon
  \begin{array}{c@{\quad\;\;}c@{\quad\;\;}c@{\quad\;\;}c}
    \vna3 & \va3  \\
    \\[-1ex]
    \vb3 & \vnb3 
  \end{array}
  \edges{na3/b3,a3/nb3}
\end{equation}
The graph $\gA_1$ on the left should clearly be provable, as it
corresponds to the formula $(\cneg a\lpar a)\ltens(b\lpar\cneg b)$,
which is provable in $\MLL$. 
The graph $\gA_3$ on the right should not be
provable, as it corresponds to the formula $(\cneg a\ltens
b)\lpar(a\ltens\cneg b)$, which is not provable in $\MLL$. But what about
the graph $\gA_2$ in the middle? It does not correspond to a formula,
and therefore we cannot resort to $\MLL$. Nonetheless, we can make the
following observations. If $\gA_2$ were provable, then so would be the
graph $\gA_4$ shown below:
\begin{equation}
  \label{eq:exa4}
  \gA_4\colon
  \begin{array}{c@{\quad\;\;}c@{\quad\;\;}c@{\quad\;\;}c}
    \vna4 & \va4  \\
    \\[-1ex]
    \va5 & \vna5 
  \end{array}
  \edges{na4/a5,a4/na5,na4/na5}
\end{equation}
as it is obtained from $\gA_2$ by a simple substitution, 
i.e., replacing $b$ with $a$. 
However,
$\cneg{\gA_4}\isom\gA_4$, and therefore $\cneg{\gA_4}$ and $\gA_4$ would
both be provable, which would be a contradiction and should be ruled
out. Hence, $\gA_2$ should not be provable.
It also follows that $\gA_1 \multimap \gA_2$ cannot hold, as
otherwise we would be able to use $\gA_1$ and modus ponens to
establish that $\gA_2$ is provable, which cannot hold as we just
observed. By applying a similar argument, we conclude that $\gA_2 \multimap \gA_3$ 
does not
hold either.
\end{exa}

This example also shows that that implication is not simply subset inclusion of
edges. However, in our minimal logic on graphs that we present here, the converse does hold: we will see later that whenever we
  have that $\gG\limp\gH$ is provable and $\vG=\vH$ then $\eH\subseteq\eG$.\footnote{But this
  observation is not true in general for all logics that might be designed on graphs. For example
  in the extension of Boolean logic, defined in~\cite{CDW:ext-bool},
  is is not necessarily true that implication preserves edges.}

For presenting the inference system we use a \emph{deep inference}
formalism~\cite{gug:str:01,gug:SIS}, which allows rewriting inside an
arbitrary context and admits a rather flexible composition of
derivations. In our presentation we will follow the notation of
\emph{open deduction}, introduced in~\cite{gug:gun:par:2010}.

Let us start with the following two inference rules
\begin{equation}\label{eq:id-ssu} 
  \vlinf{\idr}{}{\cneg\gA\lpar\gA}{\unit}
  \qquad\qquad
  \vlinf{\ssur}{~~S\subseteq\vB,~S\neq\vB}{\gB\coons{\gA}S}{\gB\ltens\gA}
\end{equation}
which are induced by the two Points~\ref{l:id} and~\ref{l:ssu}
above, and which are called \defn{identity down} and \defn{super switch up}, respectively.
The $\idr$ says that for arbitrary graphs $\gC$ and $\gA$ and any
$R\subseteq\vC$, if $\gC$ is provable, then so is the graph
$\gC\coons{\gA\lpar\cneg\gA}R$. Similarly, the rule $\ssur$ says that
whenever $\gC\coons{\gB\ltens\gA}R$ is provable, then so is
$\gC\coons{\gB\coons{\gA}S}R$ for any three graphs $\gA$, $\gB$, $\gC$
and any $R\subseteq\vC$ and $S\subseteq\vB$. The condition $S\neq\vB$
is there to avoid a trivial rule instance, as
$\gB\coons{\gA}S=\gB\ltens\gA$ if $S=\vB$.  Practically, the rule
$\ssur$ only removes some of the edges between the modules $A$ and $B$ in
the graph $\gC\coons{\gB\ltens\gA}R$. But it is important to observe
that we cannot simply remove arbitrary edges. We always have to ensure
that $A$ remains a module in the resulting graph. More precisely, in $\gB\ltens\gA$, both $\gB$ and $\gA$ are modules, but in $\gB\coons{\gA}S$ only $\gA$ is a module, but $\gB$ is not.

As the name ``deep inference'' suggests, we want to apply the inference rules inside any context $\gC\coonso R$, and not only at the top level in the modular decomposition tree. For this, we define now how we build derivations from inference rules.

\begin{defi}\label{def:derivation}
  An \defn{inference system} $\sysS$ is a set of inference rules. We
  define the set of \defn{derivations} in $\sysS$ inductively below,
  and we denote a derivation $\dD$ in $\sysS$ with premise $\gG$ and
  conclusion $\gH$, as follows:
  \begin{equation*}
    \od{\odd{\odh{\gG}}{\dD}{\gH}{\sysS}}
  \end{equation*}
  \begin{enumerate}
  	
  \item Every graph $\gG$ is a derivation (also denoted by $\gG$) with
    premise $\gG$ and conclusion $\gG$.

  \item   	
    If $\gG$ is a graph with $n$ vertices and
    $\dD_1,\dots,\dD_n$ are derivations with premise $\gG_i$ and conclusion
    $\gH_i$ for each $i\in\intset1n$,
    then $\gG\connn{\dD_1,\dots, \dD_n}$ is a derivation with
    premise $\gG\connn{\gG_1,\dots, \gG_n}$
    and conclusion $\gG\connn{\gH_1,\dots, \gH_n}$
    denoted as
    \begin{equation*}
    	\strut
    	\od{\odh{
    			\gG\Connn{
    				\od{\odd{\odh{\gG_1}}{\dD_1}{\gH_1}{\sysS}}
	    			,\dots,
	    			\od{\odd{\odh{\gG_n}}{\dD_n}{\gH_n}{\sysS}}}
    			}
    		}
    \end{equation*}
    respectively. If 
    $\gG=\lpar$ or $\gG=\ltens$
    we may respectively write 
    \begin{equation*}
    	\strut\qquad
    	\od{\odh{\od{\odd{\odh{\gG_1}}{\dD_1}{\gH_1}{\sysS}}
    			\lpar
    			\od{\odd{\odh{\gG_2}}{\dD_2}{\gH_2}{\sysS}}}}
    	\qquor
    	\od{\odh{\od{\odd{\odh{\gG_1}}{\dD_1}{\gH_1}{\sysS}}
    			\ltens
    			{\od{\odd{\odh{\gG_2}}{\dD_2}{\gH_2}{\sysS}}}}}
    \end{equation*}

  \item If $\dD_1$ is a derivation with premise $\gG_1$ and conclusion
    $\gH_1$, and $\dD_2$ is a derivation with premise $\gG_2$ and
    conclusion $\gH_2$, and
    $$
    \vlinf{\rr}{}{\gG_2}{\gH_1}
    $$ is an instance of an inference rule $\rr$, then
    $\dD_2\rcomp{\rr}\dD_1$ is a derivation with premise 
    {$\gG_1$} and
    conclusion $\gH_2$, denoted as
    \begin{equation*}
      \odn{
        \ods{\gG_1}{\dD_1}{\gH_1}{\sysS}
      }{
        \rr
      }{
        \ods{\gG_2}{\dD_2}{\gH_2}{\sysS}
      }{}
      \qquor
      \od{
        \odd{\odd{\odh{G_1}}{\dD_1}{\odn{H_1}{\rr}{G_2}{}
          }{\sysS}}{
              \dD_2}{H_2}{\sysS}}
      \qquor
      \od{\odd{\odi{\odd{\odh{G_1}}{
              \dD_1}{H_1}{\sysS}}{
            \rr}{G_2}{}}{
          \dD_2}{H_2}{\sysS}}
    \end{equation*}
    If $\gH_1\isom[f]\gG_2$ we can compose $\dD_1$ and $\dD_2$ directly to
    $\dD_2\fcomp\dD_1$, denoted as
    \begin{equation}\label{eq:iso}
      \quad\;
      \od{\odo
        {\odh{\ods{\gG_1}{\dD_1}{\gH_1}{\sysS}}}
        {f}
        {\ods{\gG_2}{\dD_2}{\gH_2}{\sysS}}
        {}}
      \;\quor\;
      \od{\odo
        {\odh{\ods{\gG_1}{\dD_1}{\gH_1}{\sysS}}}
        {\isom}
        {\ods{\gG_2}{\dD_2}{\gH_2}{\sysS}}
        {}}
     \;\quor\;
      \od{\odo
        {\odh{\ods{\gG_1}{\dD_1}{\gH_1}{\sysS}}}
        {}
        {\ods{\gG_2}{\dD_2}{\gH_2}{\sysS}}
        {}}
     \;\quor\;
    \od{\odd{\odo{\odd{\odh{G_1}}{
    				\dD_1}{H_1}{\sysS}}{
    			\rr}{G_2}{}}{
    		\dD_2}{H_2}{\sysS}}       
     \;\quor\;    	
    	      \od{
    		\odd{\odd{\odh{G_1}}{\dD_1}{\od{\odo{\odh{H_1}}{\rr}{G_2}{}}
    			}{\sysS}}{
    			\dD_2}{H_2}{\sysS}}    
    \end{equation}
    If $f$ is clear from context we will simply write
    $\dD_2\fcomp\dD_1$ as
    \begin{equation}\label{eq:iso-}
      \hskip5em
      \od{\odd{\odd{\odh{\gG_1}}
          {\dD_1}{\gH_1}{\sysS}}
        {\dD_2}{\gH_2}{\sysS}}
      \qquor
      \od{\odd{\odd{\odh{\gG_1}}
          {\dD_1}{\gG_2}{\sysS}}
        {\dD_2}{\gH_2}{\sysS}}
      \quad
    \end{equation}
    where we only write $\gH_1$ or $\gG_2$ in the middle, omitting the isomorphism step,
    in order to ease readability. However, even if $f$ is not written, we
      always assume it is part of the derivation and explicitely
      given.
  \end{enumerate}
  A \defn{proof} in $\sysS$ is a derivation in $\sysS$ whose premise
  is~$\unit$. A graph $\gG$ is \defn{provable} in $\sysS$ iff there is
  a proof in $\sysS$ with conclusion $\gG$. We denote this as
  $\proves[\sysS]{\gG}$ (or simply as $\proves{\gG}$ if $\sysS$ is
  clear from the context). The \defn{length} of a derivation $\dD$, denoted
  by $\sizeof\dD$, is the number of inference rule instances in $\dD$.
\end{defi}

\begin{rem}\label{rem:context}
  If we have a derivation $\dD$ from $A$ to $B$, and a context
  $\gG\coons{\cdot}R$, then we also have a derivation from
  $\gG\coons{\gA}R$ to $\gG\coons{\gB}R$. We can write this derivation
  as
  \begin{equation*}
    \ods{\gG\coons{\gA}R}{\gG\coons{\dD}R}{\gG\coons{\gB}R}{}
    \quor
    \od{\odh{
        \gG\Coons{\ods{\gA}{\dD}{\gB}{}}R}}
  \end{equation*}
  If $\dD$ consists of a single inference rule, we sometimes write
  (with a slight abuse of notation)
  \begin{equation}\label{eq:rule-context}
    \od{\odh{
        \gG\Coons{\od{\odi{\odh{\gA}}{\rr}{\gB}{}}}R}}
    \qquad\mbox{as}\qquad
    \od{\odi{\odh{\gG\Coons{\gA}R}}{\rr}{\gG\Coons{\gB}R}{}}
  \end{equation}
\end{rem}

\begin{rem}
  In order to ease readability of derivations, we will mostly omit
  the isomorphism steps shown in~\eqref{eq:iso}, and use the
  abbreviation~\eqref{eq:iso-}. This concerns also situations
  as shown below left, which are written as shown below right:
  \begin{equation}
    \label{eq:iso-rule}
    \od{\odi{\odo{\odi{\odh{A}
          }{\rr_1}{B}{}
        }{\isom}{B'}{}
      }{\rr_2}{C}{}}
    \qquad\leadsto\qquad
    \od{\odi{\odi{\odh{A}
        }{\rr_1}{B}{}
      }{\rr_2}{C}{}}
    \quor
    \od{\odi{\odi{\odh{A}
        }{\rr_1}{B'}{}
      }{\rr_2}{C}{}}
  \end{equation}
  In other words, in derivations, we consider graphs equal modulo
  isomorphisms. In particular, for the length $\sizeof\dD$ of a
  derivation, we count only the inference steps that are not
  isomorphisms. This abuse of notation is justified as these
  isomorphisms can be permuted with all inference rules that we
  discuss in this paper. Furthermore, when we draw the graphs as in
  \eqref{eq:exup} below, the isomorphisms are not visible.
  \end{rem}

\begin{exa}\label{ex:ssw}
  The following derivation is an example of a proof of length 2, using
  only $\idr$ and $\ssur$:
\begin{equation}\label{eq:exup}
\od{\odi{\odi{\odh{\unit}
    }{\idr}{
      \begin{array}{c@{\quad\;\;}c@{\quad\;\;}c@{\quad\;\;}c}
        \vna1 & \vnb1 & \va1 & \vb1 \\
        \\
        \vnc1 &\vnd1 & \vc1 &\vd1
      \end{array}
      \edges{a1/c1,b1/d1,a1/d1,na1/nb1,nb1/nc1,d1/c1}
    }{}
  }{\ssur}{
    \begin{array}{c@{\quad\;\;}c@{\quad\;\;}c@{\quad\;\;}c}
      \vna1 & \vnb1 & \va1 & \vb1 \\
      \\
      \vnc1 &\vnd1 & \vc1 &\vd1
    \end{array}
    \edges{a1/c1,b1/d1,a1/d1,na1/nb1,nb1/nc1}
  }{}}
\end{equation}
where the $\ssur$ instance moves the module $d$ in the context consisting of vertices labelled $a,b,c$.
  Let us emphasise that the conclusion of the derivation in~\eqref{eq:exup} is not a formula but a graph. 
It establishes that the following implication is provable:
\begin{equation}\label{eq:exup2}
\begin{array}{c@{\quad\;\;}c@{\quad\;\;}c@{\quad\;\;}c}
       \va1 & \vb1 \\
      \\[-1ex]
        \vc1 &\vd1
    \end{array}
\edges{a1/c1,b1/d1,a1/d1,c1/d1}
\!\!\limp\;
    \begin{array}{c@{\quad\;\;}c@{\quad\;\;}c@{\quad\;\;}c}
      \va1 & \vb1 \\
      \\[-1ex]
        \vc1 &\vd1
    \end{array}
    \edges{a1/c1,b1/d1,a1/d1}
\end{equation}
which is a fact beyond the scope of formulas.
\end{exa}

\begin{rem}\label{rem:DIseq}
  In \cite{gug:gun:par:2010} it is shown how a derivation on formulas
  in the style of open deduction (as we defined it in
  Definition~\ref{def:derivation}), can be translated into a derivation
  in the style of the calculus of
  structures~\cite{gug:SIS,gug:str:01}, which can be seen as a
  sequence of rewriting steps on formulas. The same also works on derivations for
  graphs. To give an example, consider the derivation on the left
  below:
  \begin{equation}
    \label{eq:od}
    \small
    \od{\odi{\odh{\unit}}{\aidr}{
        \od{
          \odi{
            \odh{\od{\odi{\odh{\unit}}{\aidr}{a\lpar \cneg a}{}}
              \ltens
              b}}{\swir}{[a;(\cneg a; b)]}{}
        }
        \lpar
        \od{
          \odi{
            \odh{
              \cneg b
              \ltens
              \od{\odi{\odh{\unit}}{\aidr}{c\lpar \cneg c}{}}
          }}{\swir}{[(\cneg b;c);\cneg c]}{}
      }}{}}
    \quad\leadsto\quad
    \od{
      \odi{
        \odi{
          \odi{
            \odi{
              \odi{\odh{\unit}}{\aidr}{b\lpar\cneg b}{}
            }{\aidr}{[([a;\cneg a];b);\cneg b]}{}
          }{\swir}{[a;(\cneg a;b);\cneg b]}{}
        }{\aidr}{[a;(\cneg a;b);(\cneg b;[c;\cneg c])]}{}
      }{\swir}{[a;(\cneg a;b);(\cneg b;c);\cneg c]}{}
    }
  \end{equation}
  which can also be written as shown on the right above. (Note that we
  make use of the notation
  in~\eqref{eq:rule-context}). In~\cite{gug:gun:par:2010}, the
  derivation on the left in~\eqref{eq:od} is called \defn{synchronal}
  and the one on the right in~\eqref{eq:od} is called \defn{sequential}. For every synchronal
  derivation there is at least one sequential variant (but it is
  not unique). In later parts of the paper we sometimes speak of the ``bottommost'' rule
  instance in a derivation. In these cases, we assume that we have fixed some sequential form.
  Observe that both derivations in~\eqref{eq:od} above make
  use of~\eqref{eq:iso-rule}.
\end{rem}

As in other deep inference systems, we can give for the rules in~\eqref{eq:id-ssu} their \emph{duals}. In general, if
$$
\vlinf{\rr}{}{\gH}{\gG}
$$
is an instance of a rule, then
$$
\vlinf{\cneg\rr}{}{\cneg\gG}{\cneg\gH}
$$
is an instance of the dual rule. The duals of the two rules in~\eqref{eq:id-ssu} are the following:
\begin{equation}\label{eq:iu-ssd}
  \vlinf{\iur}{}{\unit}{\gA\ltens\cneg\gA}
  \qquad\qquad
  \vlinf{\ssdr}{~~S\subseteq\vB,~S\neq\emptyset}{\gB\lpar\gA}{\gB\coons{\gA}S}
\end{equation}
called \defn{identity up}\/ (or \defn{cut}) and \defn{super switch
  down}, respectively. 
We have the side condition $S\neq\emptyset$ to avoid a triviality, as
$\gB\coons{\gA}S=\gB\lpar\gA$ if $S=\unit$. Dually to $\ssur$, the
$\ssdr$-rule removes all the edges between $A$ and $B$.

\begin{exa}
  The implication in~\eqref{eq:exup2} can also be proven using only only $\ssdr$ and $\idr$ instead of $\ssur$ and $\idr$, as the following proof of length 3 shows:
\begin{equation}\label{eg:proofA}
	\od{\odi{\odi{\odi{\odh{\unit}
			}{\idr}{
				\begin{array}{c@{\quad\;\;}c@{\quad\;\;}c@{\quad\;\;}c@{\quad\;\;}c}
					\vna1 & \vnb1 & \va1 & \vb1  & \\
					\\
					\vnc1 &  & \vc1 
				\end{array}
				\edges{a1/c1,na1/nb1,nb1/nc1}
			}{}
		}{\idr}{
			\begin{array}{c@{\quad\;\;}c@{\quad\;\;}c@{\quad\;\;}c@{\quad\;\;}c}
				\vna1 & \vnb1 & \va1 & \vb1  & \\
				\\
				\vnc1 &  & \vc1 & \vmodule{1}{d\quad \cneg d}
			\end{array}
			\edges{a1/c1,a1/M1,b1/M1,na1/nb1,nb1/nc1}
		}{}
	}{\ssdr}{
		\begin{array}{c@{\quad\;\;}c@{\quad\;\;}c@{\quad\;\;}c}
			\vna1 & \vnb1 & \va1 & \vb1 \\
			\\
			\vnc1 & \vnd1 & \vc1 &\vd1
		\end{array}
		\edges{a1/c1,a1/d1,b1/d1,na1/nb1,nb1/nc1}
	}{}}
\end{equation}
\end{exa}

\begin{defi}
  Let $\sysS$ be an inference system. We say that an inference rule
  $\rr$ is \defn{derivable} in $\sysS$ iff 
  \begin{equation*}
    \mbox{for every instance}
    \hskip.5em
    \vlinf{\rr}{}{\gH}{\gG}
    \quad
    \mbox{there is a derivation}
    \hskip.5em
    \smash{\od{\odd{\odh{\gG}}{\dD}{\gH}{\sysS}}}
    \;\;.
  \end{equation*}
  We say that $\rr$ is \defn{admissible} in $\sysS$ iff
  \begin{equation*}
    \mbox{for every instance}
    \hskip.5em
    \vlinf{\rr}{}{\gH}{\gG}
    \quad
    \mbox{we have that~~}
    \proves[\sysS]{\gG}
    \mbox{~~implies~~}
    \proves[\sysS]{\gH}
    \;\,.
  \end{equation*}
\end{defi}

If $\rr\in\sysS$ then $\rr$ is trivially derivable and admissible in $\sysS$.

Most deep inference systems in the literature (e.g.~\cite{gug:str:01,brunnler:tiu:01,gug:SIS,str:02,gug:str:02,ross:MAV1}) contain the \defn{switch} rule:\footnote{This rule has also been used in~\cite{retore:99} and~\cite{retore:03}, and is also known under the names \emph{weak distributivity}~\cite{blute:etal:96} or \emph{dissociativity}~\cite{dosen:petric:coherence-book}.}
\begin{equation}
  \label{eq:switch}
  \vlinf{\swir}{}{A\lpar(B\ltens C)}{(A\lpar B)\ltens C}
\end{equation}
One can immediately see that it is its own dual and is a special case of both
$\ssdr$ and $\ssur$. We therefore have the following:

\begin{lemma}\label{lem:switch}
  If in an inference system $\sysS$ one of the rules $\ssdr$ and
  $\ssur$ is derivable, then so is~$\swir$.
\end{lemma}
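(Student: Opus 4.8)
The plan is to verify the remark made just before the statement, namely that every instance of $\swir$ is at the same time an instance of $\ssur$ and an instance of $\ssdr$. Once this is established, the lemma is immediate from the definition of derivability: if, say, $\ssur$ is derivable in $\sysS$, then each of its instances admits a derivation in $\sysS$, and since each $\swir$ instance \emph{is} such an instance, with the same premise and conclusion, we obtain a derivation witnessing that $\swir$ is derivable. So the work reduces, in each of the two cases, to choosing the right context graph, module, and neighbour set $S$, and checking the side condition.

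First I would treat $\ssur$. Consider an arbitrary instance of $\swir$ with premise $(A\lpar B)\ltens C$ and conclusion $A\lpar(B\ltens C)$. I take the context graph of the $\ssur$-instance to be $A\lpar B$, the module to be $C$, and the neighbour set to be $S=\vB$. Then $S\subseteq\vertices[A]\uplus\vB$, and unfolding the definition of context substitution, the graph $(A\lpar B)\coons{C}{\vB}$ has $C$ joined to every vertex of $B$, disjoint from $A$, with $A$ and $B$ disjoint, which is exactly $A\lpar(B\ltens C)$; thus the $\ssur$-conclusion coincides with the $\swir$-conclusion, while its premise $(A\lpar B)\ltens C$ is the $\swir$-premise. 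The side condition $S\neq\vertices[A]\uplus\vB$ amounts to $A\neq\unit$. In the degenerate case $A=\unit$ the premise and conclusion of $\swir$ both equal $B\ltens C$ up to the unit laws, so the instance is realised by the empty (reflexivity) derivation. Hence every $\swir$ instance is either trivial or an $\ssur$ instance, and derivability of $\ssur$ yields derivability of $\swir$.

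Dually I would treat $\ssdr$. For the same instance I take the context graph to be $B\ltens C$, the module to be $A$, and the neighbour set $S=\vC$. Then $S\subseteq\vB\uplus\vC$, and $(B\ltens C)\coons{A}{\vC}$ has $A$ joined to every vertex of $C$, disjoint from $B$, while $B$ and $C$ remain joined; this is precisely $(A\lpar B)\ltens C$, matching the $\ssdr$-premise, and the $\ssdr$-conclusion $(B\ltens C)\lpar A$ equals $A\lpar(B\ltens C)$. The side condition $S\neq\unit$ amounts to $C\neq\unit$, and when $C=\unit$ the instance again degenerates to the identity on $A\lpar B$. So every $\swir$ instance is either trivial or an $\ssdr$ instance, and derivability of $\ssdr$ likewise yields derivability of $\swir$. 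There is no genuine obstacle in this argument beyond the routine unfolding of the context-substitution definition; the only point that needs attention is the empty-component case, where the natural choice of $S$ violates the side condition of the super switch rule, and which is dispatched by noting that $\swir$ then collapses to an isomorphism and is witnessed by the empty derivation.
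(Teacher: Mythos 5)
Your proposal is correct and matches the paper's own argument: the paper simply observes (just before the lemma) that $\swir$ is a special case of both $\ssdr$ and $\ssur$, which is exactly the instantiation you verify explicitly (context $A\lpar B$, module $C$, $S=\vertices[B]$ for $\ssur$; context $B\ltens C$, module $A$, $S=\vertices[C]$ for $\ssdr$). Your additional care with the degenerate cases where a side condition fails, dispatched by the trivial derivation since the $\swir$ instance then collapses to an isomorphism, is a sound and welcome elaboration of what the paper leaves implicit.
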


\begin{rem}
  In a standard deep inference system for formulas we
  also have the converse of Lemma~\ref{lem:switch}, i.e., if $\swir$
  is derivable, then so are $\ssur$ and $\ssdr$ (see,
  e.g.,~\cite{dissvonlutz}). However, in the case of
  arbitrary graphs this is no longer true, and the rules $\ssur$ and
  $\ssdr$ are strictly more powerful than~$\swir$.
\end{rem}

\begin{lemma}\label{lem:dual}
  Let $\sysS$ be an inference system. If the rules $\idr$ and
  $\iur$ and $\swir$ are derivable in $\sysS$, then for every
  rule $\rr$ that is derivable in $\sysS$, also its dual $\cneg\rr$ is
  derivable in $\sysS$.
\end{lemma}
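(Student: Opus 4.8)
The plan is to exploit the usual deep-inference duality argument: from a derivation witnessing derivability of $\rr$ I will build, by inserting an identity pair, applying the given derivation in context, and then cutting, a derivation witnessing derivability of $\cneg\rr$. First I note that by the definition of the dual rule, every instance of $\cneg\rr$ has the shape $\vlinf{\cneg\rr}{}{\cneg\gG}{\cneg\gH}$ and arises from an instance $\vlinf{\rr}{}{\gH}{\gG}$ of $\rr$ (negation on graphs is involutive, $\cneg{\cneg\gG}=\gG$, so this correspondence is a bijection on instances). Fixing such an instance, derivability of $\rr$ in $\sysS$ gives a derivation $\dD$ with premise $\gG$ and conclusion $\gH$. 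It therefore suffices to produce, from $\dD$, a derivation from $\cneg\gH$ to $\cneg\gG$.

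The key step is to stage the available rules so that the \emph{par}-introducing $\idr$ can be connected to the \emph{tensor}-consuming cut $\iur$ by a single $\swir$. Concretely I would build the following derivation, reading downward from premise to conclusion:
\[
\cneg\gH \;\isom\; \cneg\gH\ltens\unit
\;\xrightarrow{\idr}\; \cneg\gH\ltens(\cneg\gG\lpar\gG)
\;\xrightarrow{\dD}\; \cneg\gH\ltens(\cneg\gG\lpar\gH)
\;\isom\; (\cneg\gG\lpar\gH)\ltens\cneg\gH
\;\xrightarrow{\swir}\; \cneg\gG\lpar(\gH\ltens\cneg\gH)
\;\xrightarrow{\iur}\; \cneg\gG\lpar\unit
\;\isom\; \cneg\gG\quadfs
\]
Here the isomorphism steps are the unit laws for $\ltens$ and $\lpar$ and the commutativity of $\ltens$ (absorbed into the derivation as in~\eqref{eq:iso-rule}); the $\idr$ step introduces the identity pair $\cneg\gG\lpar\gG$ inside the tensor with $\cneg\gH$; the step labelled $\dD$ is the application of the derivation $\dD\colon\gG\to\gH$ to the displayed occurrence of $\gG$, which is legitimate because $\gG$ is a module of $\cneg\gH\ltens(\cneg\gG\lpar\gG)$ (so Remark~\ref{rem:context} applies, equivalently it is the open-deduction composition $\cneg\gH\ltens(\cneg\gG\lpar\dD)$); and the $\swir$ step is exactly the instance $(\cneg\gG\lpar\gH)\ltens\cneg\gH\to\cneg\gG\lpar(\gH\ltens\cneg\gH)$, which is what finally brings $\gH$ and $\cneg\gH$ together into the tensor consumed by $\iur$.

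It remains to observe that this is a derivation \emph{in $\sysS$}. The rules $\idr$, $\iur$, and $\swir$ are only assumed derivable in $\sysS$, so each occurrence above is replaced by its $\sysS$-derivation; since the set of derivations (Definition~\ref{def:derivation}) is closed under the par, tensor, and vertical compositions used in the chain, and $\dD$ is itself a derivation in $\sysS$, the whole construction is a derivation in $\sysS$ from $\cneg\gH$ to $\cneg\gG$. As this works for an arbitrary instance of $\cneg\rr$, the dual rule $\cneg\rr$ is derivable in $\sysS$. The only genuinely nonroutine part of the argument — and the reason all three of $\idr$, $\iur$, $\swir$ appear in the hypothesis — is the bridging move: placing the $\idr$-pair inside a tensor with $\cneg\gH$ so that a single switch suffices to expose $\gH\ltens\cneg\gH$ for the cut; everything else is unit/commutativity bookkeeping and the closure of derivations under composition.
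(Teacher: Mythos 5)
Your proof is correct and is essentially the paper's own argument: both insert the identity pair $\cneg\gG\lpar\gG$ via $\idr$ inside a tensor with $\cneg\gH$, use a single $\swir$ to expose a tensor that $\iur$ can consume, and conclude with the unit laws. The only difference is a trivial rule permutation --- you apply $\dD$ to the displayed $\gG$ \emph{before} the switch (cutting $\gH\ltens\cneg\gH$), whereas the paper switches first and applies $\dD$ inside the par context (cutting on $\gG\ltens\cneg\gH$ after rewriting $\gG$ to $\gH$) --- which does not change the substance.
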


\begin{proof}
  Suppose we have two graphs $\gG$ and $\gH$, and a derivation from
  $\gG$ to $\gH$ in $\sysS$. Then it suffices to show that we can
  construct a derivation from $\cneg\gH$ to $\cneg\gG$ in $\sysS$:
  \begin{equation*}
    \odn{\odn{\unit}
          {\idr}{\cneg\gG\lpar\gG}{}\ltens\cneg\gH}
      {\swir}
      {\cneg\gG\lpar\odn{
          \ods{\gG}{}{\gH}{\sysS}\ltens\cneg\gH}
        {\iur}{\unit}{}}
      {}
  \end{equation*}
  Note that $\unit\ltens\cneg\gH=\cneg\gH$ and $\cneg\gG\lpar\unit=\cneg\gG$.
\end{proof}

\begin{lemma}\label{lem:trans}
  If the rules $\iur$ and $\swir$ are admissible for an inference
  system $\sysS$, then $\limp$ is transitive, i.e., if
  $\proves[\sysS]{\gG\limp\gH}$ and $\proves[\sysS]{\gH\limp\gK}$ then
  $\proves[\sysS]{\gG\limp\gK}$.
\end{lemma}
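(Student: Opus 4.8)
The plan is to reduce transitivity to the admissibility of cut, in exactly the way the multiplicative cut is obtained from switch and identity in the cograph case (compare the derivation built in the proof of Lemma~\ref{lem:dual}, except that here we only have admissibility, not derivability, at our disposal). Recall that $\gG\limp\gH=\cneg\gG\lpar\gH$, so we are given a proof $\dD_1$ of $\cneg\gG\lpar\gH$ and a proof $\dD_2$ of $\cneg\gH\lpar\gK$, and we must exhibit a proof of $\cneg\gG\lpar\gK$.

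First I would merge the two hypotheses into a single proof. Using the tensor composition of derivations from Definition~\ref{def:derivation} (the case $\gG=\ltens$), the derivation $\dD_1\ltens\dD_2$ has premise $\unit\ltens\unit$ and conclusion $(\cneg\gG\lpar\gH)\ltens(\cneg\gH\lpar\gK)$; since $\unit\ltens\unit=\unit$, this is already a proof, so $\proves[\sysS]{(\cneg\gG\lpar\gH)\ltens(\cneg\gH\lpar\gK)}$. Next I would propagate provability along the following chain, where each graph is provable and is obtained from its predecessor by invoking the admissibility of the labelled rule (premise provable yields conclusion provable):
\[
(\cneg\gG\lpar\gH)\ltens(\cneg\gH\lpar\gK)
\;\rew{\swir}\;
\cneg\gG\lpar(\gH\ltens(\cneg\gH\lpar\gK))
\;\rew{\swir}\;
\cneg\gG\lpar(\gK\lpar(\cneg\gH\ltens\gH))
\;\rew{\iur}\;
\cneg\gG\lpar\gK .
\]
The first $\swir$ matches the top-level pattern $(A\lpar B)\ltens C$ with $A=\cneg\gG$, $B=\gH$, $C=\cneg\gH\lpar\gK$. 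The second $\swir$ acts on the subgraph $\gH\ltens(\cneg\gH\lpar\gK)=(\gK\lpar\cneg\gH)\ltens\gH$, producing $\gK\lpar(\cneg\gH\ltens\gH)$, and the $\iur$ step takes $\gA\ltens\cneg\gA$ to $\unit$ with $\gA=\cneg\gH$ (so that $\cneg\gA=\cnegg\gH=\gH$ and $\gA\ltens\cneg\gA=\cneg\gH\ltens\gH$); the final graph is $\cneg\gG\lpar(\gK\lpar\unit)=\cneg\gG\lpar\gK=\gG\limp\gK$, using the unit and associativity laws of $\fequiv$ under which graphs are considered equal.

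The main point requiring care is that, whereas the first switch is applied at the top level, the second switch and the cut are applied inside the par-context $\cneg\gG\lpar(\cdot)$ (respectively $\cneg\gG\lpar(\gK\lpar(\cdot))$): once the first switch has been performed the outermost connective is $\lpar$, so the remaining reductions cannot be carried out at the top of the graph. I would therefore make explicit that these in-context applications are themselves legitimate rule instances, as sanctioned by the conventions of Remark~\ref{rem:context} and~\eqref{eq:rule-context}, so that the admissibility hypotheses for $\swir$ and $\iur$ apply to them directly and the three propagation steps above chain together to give $\proves[\sysS]{\gG\limp\gK}$. Everything else is routine structural equivalence bookkeeping.
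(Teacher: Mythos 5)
Your proof is correct and takes essentially the same route as the paper's: the derivation displayed in~\eqref{eq:cut} is exactly your construction, tensoring the two given proofs, applying $\swir$ twice (the second time inside the par-context $\cneg\gG\lpar(\cdot)$), and then $\iur$ on $\gH$ against $\cneg\gH$, with only an immaterial commutativity difference in how the second switch and the cut redex are arranged. Your explicit step-by-step propagation of provability through admissible (rather than derivable) rule instances applied in context merely makes precise what the paper's derivation notation, via Remark~\ref{rem:context} and~\eqref{eq:rule-context}, leaves implicit.
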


\begin{proof}
  We can construct the following derivation
  \begin{equation}\label{eq:cut}
    \odn{
      \ods{\unit}{}{\cneg\gG\lpar\gH}{\sysS}
      \ltens
      \ods{\unit}{}{\cneg\gH\lpar\gK}{\sysS}
    }{\swir}{\cneg\gG\lpar
      \odn{(\gH;[\cneg\gH;\gK])}
          {\swir}
          {\odn{\gH\ltens\cneg\gH}{\iur}{\unit}{}\lpar\gK}
          {}
    }{}
  \end{equation}
  from $\unit$ to $\cneg\gG\lpar\gK$ in $\sysS$.
\end{proof}

Lemma~\ref{lem:trans} is the reason why $\iur$ is also called \emph{cut}.
In a well-designed deep inference system for formulas, the two rules
$\idr$ and $\iur$ can be restricted in a way that they are only
applicable to atoms, i.e., replaced by the following two rules that we
call \defn{atomic identity down} and \defn{atomic identity up},
respectively:
\begin{equation}
  \vlinf{\aidr}{}{\cneg a\lpar a}{\unit}
  \qquand
  \vlinf{\aiur}{}{\unit}{a\ltens\cneg a}
\end{equation}
We would like to achieve something similar for our proof system on
graphs. For this it is necessary to be able to decompose prime graphs
into atoms, but the two rules $\ssdr$ and $\ssur$ cannot do this, as
they are only able to move around modules in a graph. For this reason,
we add the following two rules to our system:
\begin{equation}
  \vlinf{\pdr}{~P\text{~prime,~}\sizeof{\vP}\ge 4,~M_1\neq\unit,\ldots,M_n\neq\unit}
        {P\connn{M_1,\ldots,M_n}\lpar\cneg P\connn{N_1,\ldots,N_n}}
        {\vls([M_1;N_1];\cdots;[M_n;N_n])}
\end{equation}
called \defn{prime down}, and
\begin{equation}
  \vlinf{\pur}{~P\text{~prime,~}\sizeof{\vP}\ge 4~M_1\neq\unit,\ldots,M_n\neq\unit}
        {\vls[(M_1;N_1);\cdots;(M_n;N_n)]}
        {P\connn{M_1,\ldots,M_n}\ltens\cneg P\connn{N_1,\ldots,N_n}}        
\end{equation}
called \defn{prime up}. In both cases, the side condition is that
$\gP$ needs to be a prime graph and has at least $4$
vertices. 
We also require that for all $i\in\set{1,\ldots,n}$ we have that $M_i$ is non-empty
in an application of $\pdr$ and $\pur$.\footnote{In the conference version~\cite{Acclavio2020} of this paper,
we had the weaker condition that for all $i\in\set{1,\ldots,n}$ at
least one of $M_i$ and $N_i$ is non-empty. The advantage of the
stronger condition that we use here is that we obtain a more
controlled proof system as the prime graphs involved in
a $\pdr$ can only be those that appear in the modular decomposition
of the graph.}  
The reason for the side conditions on the $\pdr$ is
not that the rules would become unsound otherwise. In fact, these
rules without side conditions are admissible in the general case. The
details will be discussed in the proof of Lemma~\ref{lem:g} and at the end of Section~\ref{sec:upfrag}.

\begin{exa}
  Below is a derivation of length 5 using the
  $\pdr$-rule, and proving that a prime graph implies itself. On the left, we use the ``formula style'' notation of the modular decomposition of graphs, as in~\eqref{eq:modform}, and on the right we use the 
  graphical notation, as in~\eqref{eq:spaghetti} and~\eqref{eq:modular}.
  \vadjust{\vskip-2ex}
\begin{equation*}
  \small
  \scalebox{.9}{$
	\odn{
		\odn{\unit}{\aidr}{\cneg{a} \lpar a}{}\ltens
		\odn{\unit}{\aidr}{\cneg{b} \lpar b}{}\ltens
		\odn{\unit}{\aidr}{\cneg{c} \lpar c}{}\ltens
		\odn{\unit}{\aidr}{\cneg{d} \lpar d}{}
	}{\pdr}{
		\cneg\Pfour\connn{\cneg a, \cneg b, \cneg c, \cneg d}
		\lpar 
		\Pfour\connn{ a,  b,  c,  d}
	}{}
        $}
\qquad
  \small
  \scalebox{.9}{$
	\od{\odi{
			\odi{
				\odh{\unit}
			}{4\times \aidr}{
				\begin{array}{c@{\hskip1.5em}c@{\qquad}c@{\hskip1.5em}c}
					\\
					\vmodule1{\begin{array}{c}a\\\cneg a\end{array}}&
                                        &&
					\vmodule4{\begin{array}{c}d\\\cneg d\end{array}}
                                        \\[-3ex]
                                        &
					\vmodule2{\begin{array}{c}b\\\cneg b\end{array}}&
					\vmodule3{\begin{array}{c}c\\\cneg c\end{array}}&
                                        \\[2.5ex]
				\end{array}
				\edges{M1/M2,M2/M3,M3/M4}
				\bentedges{M2/M4/20,M1/M3/20,M1/M4/20}
			}{}
		}{\pdr}{
			\begin{array}{c@{\quad\;\;}c@{\quad\;\;}c@{\quad\;\;}c}
				\vna1 & \vnb1 & \va1 & \vb1 \strut\\
				\\
				\vnc1 & \vnd1 & \vc1 &\vd1
			\end{array}
			\edges{a1/c1,a1/d1,b1/d1,na1/nb1,nb1/nc1,nc1/nd1}
	  }{}}
        $}
\end{equation*}
But let us emphasize, that in both cases it is the same derivation. We could also use a mixed notation, as shown below.
  \begin{equation*}\label{eg:proofN}
    \small
    \odn{
      (
      \odn{\unit}{\aidr}{[\cneg{a} ; a]}{};
      \odn{\unit}{\aidr}{[\cneg{b} ; b]}{};
      \odn{\unit}{\aidr}{[\cneg{c} ; c]}{};
      \odn{\unit}{\aidr}{[\cneg{d} ; d]}{}
      )
    }{\pdr}{
      \begin{array}{c@{\quad\;\;}c@{\quad\;\;}c@{\quad\;\;}c}
        \vna1 & \vnb1 & \va1 & \vb1 \\
        \\
        \vnc1 & \vnd1 & \vc1 &\vd1
      \end{array}
      \edges{a1/c1,a1/d1,b1/d1,na1/nb1,nb1/nc1,nc1/nd1}
    }{}
  \end{equation*}
\end{exa}

This completes the presentation of
our system, which is shown in Figure~\ref{fig:SGS}, and formally defined below.

\begin{figure*}[!t]
	$
\boxed{	
		\begin{array}{cc}
				&\SGS
			\\
			\;	\boxed{
						\begin{array}{cc}
								&\GS
							\\
								\vlinf{\aidr}{}{\cneg a\lpar a}{\unit}
								\qquad\qquad
								\vlinf{\ssdr}{
									~~S\subseteq\vB,~S\neq\emptyset,~\gA \neq \emptyset
								}{\gB\lpar\gA}{\gB\coons{\gA}S}
							\\[2.2em]
							\qquad
								\vlinf{\pdr}{
									P\text{~prime},~
									\sizeof\vP\ge4,~
									M_1\neq \unit,~\dots,~M_n\neq \unit
								} 
								{\cneg P\connn{M_1,\ldots,M_n}\lpar P\connn{N_1,\ldots,N_n}}
								{(M_1 \lpar N_1)\ltens \cdots \ltens(M_n \lpar N_n)}
							\\[2.2em]
					\end{array}
				}
			\\
			\\
			\begin{array}{cc}
					\vlinf{\aiur}{}{\unit}{a\ltens\cneg a}
					\qquad\qquad
					\vlinf{\ssur}{~~S\subseteq\vB,~S\neq\vB,~\gA \neq \emptyset}{\gB\coons{\gA}S}{\gB\ltens\gA}
				\\[2.2em]
					\vlinf{\pur}{
						P\text{~prime},~
						\sizeof\vP\ge4,~
						M_1\neq \unit,~\dots,~M_n\neq \unit 
					}
					{(M_1\ltens N_1)\lpar \cdots \lpar (M_n\ltens N_n)}
					{P\connn{M_1,\ldots,M_n}\ltens\cneg P\connn{N_1,\ldots,N_n}}
				\\[2.2em]
			\end{array}
	\end{array}
}
$
\medskip
  \caption{The inference rules for systems $\SGS$ 
    and $\GS$.
}
  \label{fig:SGS}
\end{figure*}

\begin{defi}\label{def:system}\sloppy
  We define \defn{system $\SGS$} to be the set
  $\set{\aidr,\ssdr,\pdr,\pur,\ssur,\aiur}$ of inference rules shown
  in Figure~\ref{fig:SGS}. The \defn{down-fragment}
  (resp.~\defn{up-fragment}) of $\SGS$ consists of the rules
  $\set{\aidr,\ssdr,\pdr}$ (resp.~$\set{\aiur,\ssur,\pur}$) and is
  denoted by $\SGSd$ (resp.~$\SGSu$). The down-fragment $\SGSd$ is
  also called \defn{system $\GS$}.
\end{defi}


\section{Properties of the Proof System}
\label{sec:properties}

The properties that we establish here are standard for deep inference systems (see, e.g., \cite{gug:str:01,esslli19}).
The first observation about $\SGS$ is that the general forms of the
identity rules $\idr$ and $\iur$ are derivable, as we will see in
Corollary~\ref{cor:identity} below. Before, we show an auxiliary lemma stating the corresponding result for the
prime rules, namely that their general form is also derivable, i.e., they
can be applied to any graph instead of only prime graphs.

\begin{lemma}\label{lem:g}
  Let $\gG$  and 
  $\gM_1,\dots,\gM_n, \gN_1, \dots, \gN_n$ be graphs with $\sizeof\vG=n$,
  such that for every $i\in\set{1,\ldots,n}$, 
  we have that $\gM_i=\unit$ implies $N_i=\unit$.
  Then there are derivations
  \begin{equation}\label{eq:gd}
    \ods{(\gM_1\lpar\gN_1)\ltens\cdots\ltens(\gM_n\lpar\gN_n)}
        {}{\gG\connn{\gM_1,\ldots,\gM_n}\lpar\cneg\gG\connn{\gN_1,\ldots,\gN_n}}{\SGSd}
  \end{equation}
  and dually
  \begin{equation}\label{eq:gu}
    \ods{\gG\connn{\gM_1,\ldots,\gM_n}\ltens\cneg\gG\connn{\gN_1,\ldots,\gN_n}}
        {}{(\gM_1\ltens\gN_1)\lpar\cdots\lpar(\gM_n\ltens\gN_n)}{\SGSu}
  \end{equation}
\end{lemma}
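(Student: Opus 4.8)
The plan is to establish the down-version \eqref{eq:gd} by induction on $n=\sizeof\vG$, and then to obtain the up-version \eqref{eq:gu} by duality. For the duality step I would apply \eqref{eq:gd} to the graphs $\cneg\gG$, $\cneg\gM_1,\ldots,\cneg\gM_n$ and $\cneg\gN_1,\ldots,\cneg\gN_n$ (the side condition survives since $\cneg\gM_i=\unit$ iff $\gM_i=\unit$), and then negate every graph, flip the derivation, and replace each rule by its dual, so that $\aidr,\ssdr,\pdr$ become $\aiur,\ssur,\pur$ and an $\SGSd$-derivation turns into an $\SGSu$-one. Using the De Morgan law for compositions from Observation~\ref{obs:deMorgan}, i.e. $\cneg{\gG\connn{X_1,\ldots,X_n}}\isom\cneg\gG\connn{\cneg X_1,\ldots,\cneg X_n}$, together with $\cneg{\cneg X}=X$, the premise of the dualised derivation becomes $\gG\connn{\gM_1,\ldots,\gM_n}\ltens\cneg\gG\connn{\gN_1,\ldots,\gN_n}$ and its conclusion becomes $(\gM_1\ltens\gN_1)\lpar\cdots\lpar(\gM_n\ltens\gN_n)$, which is exactly \eqref{eq:gu}.

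For the induction the cases $n=0$ and $n=1$ are immediate, since both sides of \eqref{eq:gd} are then $\unit$ and $\gM_1\lpar\gN_1$ respectively, so the identity derivation suffices. For $n\ge2$ I would first reduce to the situation where every $\gM_i\neq\unit$: if some $\gM_{i_0}=\unit$, then $\gN_{i_0}=\unit$ by hypothesis, and since composing at a vertex with the empty graph coincides with deleting that vertex (and $\cneg{(\gG\setminus v_{i_0})}=\cneg\gG\setminus v_{i_0}$), both the premise and the conclusion of \eqref{eq:gd} are literally unchanged when we delete $v_{i_0}$ from $\gG$ and drop the pair $(\gM_{i_0},\gN_{i_0})$; this strictly smaller instance is covered by the induction hypothesis.

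Assuming now that all $\gM_i\neq\unit$, I would apply Lemma~\ref{lem:newhope} to $\gG$. If $\gG=\gA\lpar\gB$, split the indices into the vertex sets of $\gA$ and $\gB$ and use $\gG\connn{\gM_1,\ldots,\gM_n}=A^M\lpar B^M$ and $\cneg\gG\connn{\gN_1,\ldots,\gN_n}=A^N\ltens B^N$, where $A^M,A^N$ (resp.\ $B^M,B^N$) denote the relevant compositions over $\gA$ (resp.\ $\gB$). The induction hypothesis for $\gA$ and $\gB$ gives derivations into $A^M\lpar A^N$ and $B^M\lpar B^N$; composing them with $\ltens$ and then applying the switch rule twice rearranges $(A^M\lpar A^N)\ltens(B^M\lpar B^N)$ into $(A^M\lpar B^M)\lpar(A^N\ltens B^N)$, the desired conclusion. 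The switch rule is available because it is derivable from $\ssdr\in\SGSd$ by Lemma~\ref{lem:switch}. The case $\gG=\gA\ltens\gB$ is symmetric, again using exactly two switches.

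The key case is $\gG=\gP\connn{\gA_1,\ldots,\gA_m}$ with $\gP$ prime and $m=\sizeof\vP\ge4$. Letting $I_1,\ldots,I_m$ be the partition of $\set{1,\ldots,n}$ into the blocks of the $\gA_j$, associativity of composition (a routine check on Definition~\ref{def:graphCompisitionVia}) gives $\gG\connn{\gM_1,\ldots,\gM_n}=\gP\connn{A_1^M,\ldots,A_m^M}$ and, via the De Morgan law, $\cneg\gG\connn{\gN_1,\ldots,\gN_n}=\cneg\gP\connn{A_1^N,\ldots,A_m^N}$, where $A_j^M$ and $A_j^N$ are the compositions of the sub-tuples over $\gA_j$ and $\cneg\gA_j$. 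Since $m\ge2$ each $\gA_j$ has fewer than $n$ vertices, so the induction hypothesis yields derivations of $A_j^M\lpar A_j^N$ from $\bigotimes_{i\in I_j}(\gM_i\lpar\gN_i)$; composing these $m$ derivations with $\ltens$ gives a derivation from the full premise to $\bigotimes_{j=1}^m(A_j^M\lpar A_j^N)$, and a single $\pdr$ step, taken with the prime graph $\cneg\gP$ (prime with $m\ge4$ vertices by Observation~\ref{obs:deMorgan}) and $M_j:=A_j^M$, $N_j:=A_j^N$, derives the conclusion $\gP\connn{A_1^M,\ldots,A_m^M}\lpar\cneg\gP\connn{A_1^N,\ldots,A_m^N}$. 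I expect the main obstacle to be exactly the side condition $A_j^M\neq\unit$ of this $\pdr$ step: this is where the earlier reduction pays off, since each $\gA_j$ is non-empty and every $\gM_i\neq\unit$, forcing every block $A_j^M$ to be non-empty as the rule requires. The remaining care is bookkeeping: verifying the associativity and De Morgan identities for iterated compositions, and checking that the index splittings behave correctly modulo the isomorphisms of Definition~\ref{def:derivation}.
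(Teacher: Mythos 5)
Your proposal is correct and follows essentially the same route as the paper's proof: induction on $\sizeof\vG$, a first reduction eliminating indices with $\gM_i=\unit$ (hence $\gN_i=\unit$) so that all $\gM_i\neq\unit$, a case split via Lemma~\ref{lem:newhope} using two switch/$\ssdr$ steps in the $\lpar$ and $\ltens$ cases and a single $\pdr$ in the prime case, whose side condition is secured exactly as in the paper by the non-emptiness of each block $\gA_j\connn{\gM_{j1},\ldots,\gM_{jh_j}}$. Your explicit dualisation of the $\SGSd$-derivation (negate, flip, replace each rule by its dual, applied to the instance for $\cneg\gG$, $\cneg\gM_i$, $\cneg\gN_i$) is just the spelled-out version of the paper's one-line ``can be constructed dually.''
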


\begin{proof}\sloppy
  For showing~\eqref{eq:gd}, we proceed by induction on $\sizeof\vG$.  First consider the
  case that there is an $i\in\set{1,\ldots,n}$ such that
  $\gN_i=\emptyset=\gM_i$. Without loss of generality, we can assume
  that $i=1$. Then there is a graph $\gH$ with $\sizeof\vH=n-1$ and
  $\gG\connn{\gM_1,\ldots,\gM_n}=\gH\connn{\gM_2,\ldots,\gM_n}$ and
  $\cneg\gG\connn{\gN_1,\ldots,\gN_n}=\cneg\gH\connn{\gN_2,\ldots,\gN_n}$,
  and therefore we have the derivation
  \begin{equation*}
    \mod{\odo{\odd{\odo{\odh{
            (\unit\lpar\unit)\ltens(\gM_2\lpar\gN_2)\ltens\cdots\ltens(\gM_n\lpar\gN_n)}}{
          \isom}{(\gM_2\lpar\gN_2)\ltens\cdots\ltens(\gM_n\lpar\gN_n)}{}}{
          \dD}{\gH\connn{\gM_2,\ldots,\gM_n}\lpar\cneg\gH\connn{\gN_2,\ldots,\gN_n}}{\SGSd}}{
        \isom}{\gG\connn{\unit,\gM_2,\ldots,\gM_n}\lpar\cneg\gG\connn{\unit,\gN_2,\ldots,\gN_n}}{}}
  \end{equation*}
  where $\dD$ exists by induction hypothesis. We can therefore now
  assume that $\gM_i\neq\unit$ for all $i\in\set{1,\ldots,n}$. 
  We now make a case analysis on $\gG$ using Lemma~\ref{lem:newhope}:
  \begin{enumerate}[(i)]
  \item If $\gG$ is a singleton graph, the statement holds trivially.
  \item If $\gG=\gA\ltens\gB$ then
    $\gG\connn{\gM_1,\ldots,\gM_n}=\gA\connn{\gM_1,\ldots,\gM_k}\ltens\gB\connn{\gM_{k+1},\ldots,\gM_n}$
    and
    $\cneg\gG\connn{\gN_1,\ldots,\gN_n}=\cneg\gA\connn{\gN_1,\ldots,\gN_k}\lpar\cneg\gB\connn{\gN_{k+1},\ldots,\gN_n}$
    for some $1\le k\le n$. We therefore have
    \begin{equation*}
    \scalebox{.95}{$\strut\qquad
        \nvls{\odn{
          \ods{(\gM_1\lpar\gN_1)\ltens\cdots\ltens(\gM_k\lpar\gN_k)}
              {\dD_1}{\gA\connn{\gM_1,\ldots,\gM_k}\lpar\cneg\gA\connn{\gN_1,\ldots,\gN_k}}{\SGSd}
              \ltens
              \ods{(\gM_{k+1}\lpar\gN_{k+1})\ltens\cdots\ltens(\gM_n\lpar\gN_n)}
                  {\dD_2}{\gB\connn{\gM_{k+1},\ldots,\gM_n}\lpar\cneg\gB\connn{\gN_{k+1},\ldots,\gN_n}}{\SGSd}
        }
            {\ssdr}{
              \odn{(\gA\connn{\gM_1,\ldots,\gM_k}\lpar\cneg\gA\connn{\gN_1,\ldots,\gN_k})\ltens\gB\connn{\gM_{k+1},\ldots,\gM_n}}{
                \ssdr}{
                (\gA\connn{\gM_1,\ldots,\gM_k}\ltens\gB\connn{\gM_{k+1},\ldots,\gM_n})\lpar\cneg\gA\connn{\gN_1,\ldots,\gN_k}}{}                
              \lpar\cneg\gB\connn{\gN_{k+1},\ldots,\gN_n}}{}
      }
    $}
   \end{equation*}
    where $\dD_1$ and $\dD_2$ exist by induction hypothesis. Note that it is possible that $\cneg\gA\connn{\gN_1,\ldots,\gN_k}=\unit$ or $\cneg\gB\connn{\gN_{k+1},\ldots,\gN_n}=\unit$ (or both). In that case one (or both) of the two instances of $\ssdr$ is vacuous.
  \item If $\gG=\gA\lpar\gB$, we proceed similarly.
  \item If $\gG=\gP\connn{\gA_1,\ldots,\gA_k}$ for $\gP$ prime and
    $k=\sizeof{\vP}\ge 4$ and $\gA_l\neq\unit$ for each
    $l\in\set{1,\ldots,k}$, then we have that
    $\gG\connn{\gM_1,\ldots,\gM_n}=\gP\connn{\gA_1\connn{\gM_{11},\ldots,\gM_{1h_1}},\ldots,\gA_k\connn{\gM_{k1},\ldots,\gM_{kh_k}}}$
    where $\set{\gM_1,\ldots,\gM_n}=\cM_1\cup\cdots\cup\cM_k$ and
    $\cM_l=\set{\gM_{l1},\ldots,\gM_{lh_l}}$ and
    $\cM_l\cap\cM_j=\emptyset$ for $l\neq j$. Similarly, we have
    $\cneg\gG\connn{\gN_1,\ldots,\gN_n}=\cneg\gP\connn{\cneg{\gA_1}\connn{\gN_{11},\ldots,\gN_{1h_1}},\ldots,\cneg{\gA_k}\connn{\gN_{k1},\ldots,\gN_{kh_k}}}$
    where $\set{\gN_1,\ldots,\gN_n}=\cN_1\cup\cdots\cup\cN_k$ and
    $\cN_l=\set{\gN_{l1},\ldots,\gN_{lh_l}}$ and
    $\cN_l\cap\cN_j=\emptyset$ for $l\neq j$.  Since $\gM_i\neq\unit$ for all $i\in\set{1,\ldots,n}$, we also have that 
	{$\gA_l\connn{\gM_{l1},\ldots,\gM_{lh_l}}\neq\unit$ for each $l\in\set{1,\ldots,k}$}%
	.
	Therefore, we have the following derivation
    \begin{equation*}
    \scalebox{.9}{$\strut\qquad%
        \nvls{\odn{
          \ods{(\gM_{11}\lpar\gN_{11})\ltens\cdots\ltens(\gM_{1h_1}\lpar\gN_{1h_1})}
              {\dD_1}{\cneg{\gA_1}\connn{\gM_{11},\ldots,\gM_{1h_1}}\lpar\gA_1\connn{\gN_{11},\ldots,\gN_{1h_1}}}{\SGSd}
              \ltens
              \cdots
              \ltens
              \ods{(\gM_{k1}\lpar\gN_{k1})\ltens\cdots\ltens(\gM_{kh_k}\lpar\gN_{kh_k})}
                  {\dD_k}{\cneg{\gA_k}\connn{\gM_{k1},\ldots,\gM_{kh_k}}\lpar\gA_k\connn{\gN_{k1},\ldots,\gN_{kh_k}}}{\SGSd}
        }
            {\pdr}{
              \gP\connn{\gA_1\connn{\gM_{11},\ldots,\gM_{1h_1}},\ldots,\gA_k\connn{\gM_{k1},\ldots,\gM_{kh_k}}}                
              \lpar
              \cneg\gP\connn{\cneg{\gA_1}\connn{\gN_{11},\ldots,\gN_{1h_1}},\ldots,\cneg{\gA_k}\connn{\gN_{k1},\ldots,\gN_{kh_k}}}
              }{}
        }
    $}
   \end{equation*}
    where $\dD_1,\ldots,\dD_k$ exist by induction hypothesis. 
  \end{enumerate}
  The derivation in~\eqref{eq:gu} can be constructed dually.
\end{proof}

\begin{cor}\label{cor:identity}
  The rule $\idr$ is derivable in $\SGSd$, and dually, $\iur$ is derivable in $\SGSu$.
\end{cor}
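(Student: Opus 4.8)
The plan is to reduce the general identity rule $\idr$ to its atomic instance $\aidr$ by feeding the single-vertex constituents of $\gA$ into Lemma~\ref{lem:g}, which already performs all the structural work of pushing a par-pairing through the modular decomposition of a graph. Concretely, I would decompose $\gA$ into one-vertex graphs and apply~\eqref{eq:gd}.

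So let $\gA$ be an arbitrary graph. If $\gA=\unit$, then $\cneg\gA\lpar\gA=\unit$ and the empty derivation suffices, so I may assume $\gA$ has $n\ge1$ vertices $v_1,\dots,v_n$, and for each $i$ I write $a_i$ for the one-vertex graph consisting of $v_i$ together with its label. By Definition~\ref{def:graphCompisitionVia}, composing these one-vertex graphs via $\gA$ (resp.\ via $\cneg\gA$) reproduces the graph we started from, i.e.\ $\gA=\gA\connn{a_1,\dots,a_n}$ and $\cneg\gA=\cneg\gA\connn{\cneg{a_1},\dots,\cneg{a_n}}$. I then apply Lemma~\ref{lem:g} with $\gG=\gA$, $\gM_i=a_i$ and $\gN_i=\cneg{a_i}$. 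The side condition ``$\gM_i=\unit$ implies $\gN_i=\unit$'' holds vacuously, since each $a_i$ is a nonempty one-vertex graph, so~\eqref{eq:gd} supplies a derivation in $\SGSd$
$$\ods{(a_1\lpar\cneg{a_1})\ltens\cdots\ltens(a_n\lpar\cneg{a_n})}{}{\gA\lpar\cneg\gA}{\SGSd}$$
where the conclusion is simplified to $\gA\lpar\cneg\gA$ using the two reconstruction identities above.

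It then remains to produce the premise from $\unit$. Since $\unit=\unit\ltens\cdots\ltens\unit$, I apply the atomic rule $\aidr$ once in each of the $n$ tensor slots, choosing in the $i$-th slot the atom so that the conclusion of that instance is $a_i\lpar\cneg{a_i}$ (up to the commutativity isomorphism of $\lpar$). This gives a derivation in $\SGSd$ from $\unit$ to $(a_1\lpar\cneg{a_1})\ltens\cdots\ltens(a_n\lpar\cneg{a_n})$. Composing the two derivations and using $\cneg\gA\lpar\gA\isom\gA\lpar\cneg\gA$ yields the desired proof of $\cneg\gA\lpar\gA$ from $\unit$, which establishes that $\idr$ is derivable in $\SGSd$.

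The dual statement is entirely symmetric: the same instantiation of~\eqref{eq:gu} gives a derivation in $\SGSu$ from $\gA\ltens\cneg\gA$ to $(a_1\ltens\cneg{a_1})\lpar\cdots\lpar(a_n\ltens\cneg{a_n})$, after which $n$ instances of $\aiur$ collapse each slot to $\unit$, producing $\unit$ and witnessing derivability of $\iur$ in $\SGSu$. I do not expect any real obstacle here, precisely because all the inductive case analysis on the modular decomposition has already been discharged inside Lemma~\ref{lem:g}; the only things to verify are the two reconstruction identities (immediate from Definition~\ref{def:graphCompisitionVia}) and the vacuous side condition, so the argument is a pure instantiation wrapped in a layer of $\aidr$ (resp.\ $\aiur$).
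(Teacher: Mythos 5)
Your proof is correct and follows essentially the same route as the paper's: both decompose the graph into its single-vertex modules $a_1,\dots,a_n$, invoke Lemma~\ref{lem:g} (noting the side condition holds because each $a_i$ is non-empty) to derive $\gA\lpar\cneg\gA$ from $(a_1\lpar\cneg{a_1})\ltens\cdots\ltens(a_n\lpar\cneg{a_n})$, and discharge the premise with $n$ instances of $\aidr$, with the dual case handled symmetrically. Your explicit treatment of the empty-graph case and of the commutativity isomorphism are minor additions of care, not a different argument.
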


\begin{proof}
  Each graph $\gG$, can be written as $\gG=\gG\connn{a_1, \dots, a_n}$ where $\vG=\set{a_1, \dots, a_n}$.
  Since ${a_1}, \dots, {a_n}$ are non-empty modules of $ \gG$ we can apply Lemma~\ref{lem:g}
  to obtain the following derivation of $\gG\lpar\cneg\gG$ from $\unit$ in $\SGSd$:
	  \begin{equation}\label{eq:gd1}
		\ods{
			\odn{\unit}{\aidr}{a_1\lpar\cneg{a_1}}{}
			\ltens\cdots\ltens
			\odn{\unit}{\aidr}{a_n\lpar\cneg{a_n}}{}}
		    {\dD}{\gG\connn{a_1,\ldots,a_n}\lpar\cneg\gG\connn{\cneg {a_1},\ldots,\cneg {a_n}}}{
                      \SGSd}
	  \end{equation}
          where $\dD$ exists by Lemma~\ref{lem:g}.
Dually, we can show that for any $\gG$ there is a derivation from $\gG\ltens\cneg\gG$ to $\unit$ in $\SGSu$. 	
\end{proof}

\begin{cor}\label{cor:multitensTOany}
	If $\gG,\gM_1, \dots, \gM_n$ are non-empty graphs, then there is a derivation 
\begin{equation*}
\ods{{\gM_1\ltens \cdots \ltens \gM_n}}{}{\gG\connn{\gM_1, \dots, \gM_n}}{\GS}
\end{equation*}
\end{cor}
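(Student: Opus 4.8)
The plan is to obtain this statement as an immediate specialisation of Lemma~\ref{lem:g}, rather than through a fresh induction. To see why a direct route is preferable, consider first the naive strategy of inducting on $\gG$ via Lemma~\ref{lem:newhope}. The singleton, $\ltens$, and $\lpar$ cases are easy: when $\gG=\gA\ltens\gB$ one simply tensors the two sub-derivations supplied by the induction hypothesis, and when $\gG=\gA\lpar\gB$ one does the same and then converts the top-level $\ltens$ into a $\lpar$ by a single instance of $\ssdr$ with $S=\vB$ (recall $\gB\coons{\gA}{\vB}=\gB\ltens\gA$, so this rule turns $\ltens$ into $\lpar$). The induction stalls precisely when $\gG$ is itself prime: recursing on the components $\gA_l$ does not reduce $n$, and the only rule capable of building a prime composition, namely $\pdr$, produces a $\lpar$ of \emph{two} compositions $\cneg\gP\connn{\cdots}\lpar\gP\connn{\cdots}$ rather than a single $\gP\connn{\cdots}$. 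Thus the prime case is the genuine obstacle for the naive induction.

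The key observation that sidesteps this obstacle is that Lemma~\ref{lem:g} already delivers exactly a $\gG$-shaped composition on the conclusion side of~\eqref{eq:gd}, paired with a companion $\cneg\gG$-composition that we can make vanish. Concretely, I would instantiate the derivation~\eqref{eq:gd} with $\gN_i=\unit$ for every $i\in\set{1,\dots,n}$. The hypothesis of Lemma~\ref{lem:g}, that $\gM_i=\unit$ implies $\gN_i=\unit$, holds vacuously here since each $\gM_i$ is assumed non-empty, so the lemma applies with no side conditions to verify.

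It then remains only to simplify the two endpoints. On the premise side, $\gM_i\lpar\gN_i=\gM_i\lpar\unit=\gM_i$, so the premise collapses to $\gM_1\ltens\cdots\ltens\gM_n$. On the conclusion side, $\cneg\gG\connn{\gN_1,\dots,\gN_n}=\cneg\gG\connn{\unit,\dots,\unit}$ is, by Definition~\ref{def:graphCompisitionVia}, the graph whose vertex set is a disjoint union of $n$ empty vertex sets, hence the empty graph $\unit$; therefore the conclusion is $\gG\connn{\gM_1,\dots,\gM_n}\lpar\unit=\gG\connn{\gM_1,\dots,\gM_n}$. Since $\SGSd=\GS$, the resulting derivation lives in $\GS$, which is exactly the claim. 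The only things to check are these two unit-simplifications, both routine from the definitions of $\lpar$, $\ltens$, and of composition via a graph — so the real work has already been done inside Lemma~\ref{lem:g}.
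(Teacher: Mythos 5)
Your proof is correct and matches the paper's own argument exactly: the paper proves Corollary~\ref{cor:multitensTOany} by the one-line observation that it ``follows from Lemma~\ref{lem:g}, using $\cneg\gG\connn{\unit,\ldots,\unit}$'', which is precisely your instantiation $\gN_i=\unit$ with the two routine unit-simplifications. Your preliminary discussion of why a naive induction stalls at the prime case is accurate motivation but not part of the needed argument.
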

\begin{proof}
  This follows from Lemma~\ref{lem:g}, using $\cneg\gG\connn{\unit,\ldots,\unit}$.
\end{proof}

Next, observe that Lemmas~\ref{lem:dual}
and~\ref{lem:trans} hold for system $\SGS$. In particular, we have that if $\proves[\SGS]{A\limp B}$ and $\proves[\SGS]{B\limp C}$ then $\proves[\SGS]{A\limp C}$ since $\iur\in\SGS$. The main result of this paper is that Lemma~\ref{lem:trans} also holds for $\GS$.
More precisely, we have the following theorem:

\begin{thm}[Cut Admissibility]\label{thm:cut}
  The rule $\iur$ is admissible for $\GS$.
\end{thm}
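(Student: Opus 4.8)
The standard strategy for showing that $\iur$ (cut) is admissible for $\GS=\SGSd$ is the following. The full system $\SGS$ has both up- and down-fragments, and by Corollary~\ref{cor:identity} together with Lemmas~\ref{lem:dual} and~\ref{lem:trans}, the up-rules are all derivable in $\SGS$. So the content of the theorem is that the up-fragment $\SGSu$ adds no provability power over $\GS=\SGSd$ alone; equivalently, every rule of $\SGSu$ is admissible for $\GS$. Since $\ssur$ is the dual of $\ssdr$ and $\pur$ the dual of $\pdr$, and since $\aiur$ is precisely $\iur$ restricted to atoms, it suffices to show that the three up-rules $\aiur$, $\ssur$, $\pur$ can each be eliminated from a proof in $\SGS$, pushing them upward until they disappear. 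I would first reduce the general cut $\iur$ to its atomic form $\aiur$ using the decomposition machinery of Lemma~\ref{lem:g} (so that $\gA\ltens\cneg\gA$ is rewritten into a $\ltens$/$\lpar$-combination of atomic cuts $a\ltens\cneg a$), and then concentrate on eliminating the topmost up-rule instance in a proof.

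\textbf{The heart of the argument is splitting.} The key step is to analyse a proof of a graph of the form $\gC\coons{a\ltens\cneg a}{R}$ (or more generally a proof containing an uppermost up-rule) and to show that the proof can be reorganised so that the cut is driven into the part of the proof that produced the two dual occurrences, where it meets a matching $\aidr$ and annihilates. For a formula-based deep inference system this is exactly what the \emph{Splitting Lemma} provides: if $A\ltens B$ (or a prime composition $\gP\connn{\dots}$) is provable, then the proof can be split into independent proofs of pieces that, recombined through context, reconstruct the whole. Here I would invoke the graph-theoretic Splitting Lemma and Context Reduction Lemma announced in the introduction (Section~\ref{sec:splitting}); given a proof of a context $\gC\coons{\gA}{R}$ where the top connective is a tensor or a prime graph, splitting yields separate proofs of the constituent modules, and context reduction lets me replace the surrounding context by a minimal one. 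Combining the split proof of $\gA$ with the dual structure introduced by the cut lets me match $\gA\ltens\cneg\gA$ against the identity that created it, so that the cut is replaced by a shorter derivation, giving a terminating induction.

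\textbf{The main obstacle is the splitting step itself, adapted to arbitrary graphs.} On cographs, splitting is driven by the tree structure of formulas, which no longer exists for general graphs; the paper explicitly warns that entirely new statements of splitting and context reduction are required and that — unusually — their proofs are mutually dependent rather than context reduction following from splitting. So the genuinely hard part is not the outer induction (which proceeds by a suitable measure such as the length of the proof together with the size of the cut graph, with a case analysis on $\gA$ via Lemma~\ref{lem:newhope}: singleton, $\lpar$, $\ltens$, and prime $\gP\connn{\dots}$), but rather establishing that a provable graph whose outermost structure is a tensor or a prime composition really does decompose into independently provable modules that can be recombined. In the prime case in particular, one must control which prime graphs can appear, using the strengthened side condition $M_i\neq\unit$ on $\pdr$ so that only primes from the modular decomposition arise. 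Once splitting and context reduction are available in the right graph-theoretic form, the elimination of $\aiur$, $\ssur$, and $\pur$ each becomes a finite case analysis, and the cut-free proof in $\GS$ is obtained by the induction collapsing all up-rule instances.
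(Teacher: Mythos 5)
Your proposal is correct and follows essentially the same route as the paper: Theorem~\ref{thm:cut} is reduced to the admissibility of the entire up-fragment (Theorem~\ref{thm:up}), proved rule-by-rule for $\aiur$, $\ssur$ and $\pur$ by combining Context Reduction (Lemma~\ref{lem:conred}) with the splitting lemmas of Section~\ref{sec:splitting}, and then concluding via the derivability of $\iur$ in $\SGSu$ (Corollary~\ref{cor:identity}). You even anticipate the one non-standard twist, namely that eliminating $\pur$ requires an additional induction on the size of the cut module (because case~\ref{prime:B} of prime splitting produces, via Lemma~\ref{lem:g}, residual $\pur$ instances on strictly smaller modules), which your proposed measure involving the size of the cut graph covers.
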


To prove this theorem, we will show that the whole up-fragment of $\SGS$ is admissible for $\GS$.

\begin{thm}\label{thm:up}
  The rules $\aiur$, $\ssur$, $\pur$ are admissible for $\GS$.
\end{thm}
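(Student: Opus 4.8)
The plan is to prove the stronger statement that every instance of an up-rule can be removed from an $\SGS$-proof, eliminating them one at a time. Given an $\SGS$-proof of a graph $\gG$, fix a sequential presentation (Remark~\ref{rem:DIseq}) and pick the \emph{topmost} instance of a rule $\rr\in\set{\aiur,\ssur,\pur}$. Everything above this instance uses only $\GS=\SGSd$, and since the proof begins at $\unit$, that upper part is a $\GS$-proof of the premise of $\rr$ (sitting in some context). First I would show that one can instead produce a $\GS$-proof of the \emph{conclusion} of $\rr$ in the same context; splicing it in place of the upper part together with the instance of $\rr$, and leaving the (untouched) part below $\rr$, yields an $\SGS$-proof of $\gG$ with one fewer up-rule instance. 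Inducting on the number of up-rule instances then reduces the theorem to the following \emph{core claim}: for every contextual instance of $\rr\in\set{\aiur,\ssur,\pur}$ with premise $\gG'$ and conclusion $\gH'$, if $\proves[\GS]{\gG'}$ then $\proves[\GS]{\gH'}$.

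To establish the core claim I would first invoke the Context Reduction Lemma of Section~\ref{sec:splitting} to dispose of the arbitrary surrounding context $C[\cdot]$: it produces a single graph $\gK$ with $\proves[\GS]{\gR\lpar\gK}$ (where $\gR$ is the redex premise) and guarantees that for \emph{any} $\gX$, $\proves[\GS]{\gX\lpar\gK}$ implies $\proves[\GS]{C[\gX]}$; so it suffices to show $\proves[\GS]{\gR'\lpar\gK}$ for the redex conclusion $\gR'$. In each case $\gR$ is a top-level tensor, so the Splitting Lemma applies and supplies, for each factor/module $\gA_i$ of $\gR$, a graph $\gK_i$ together with a $\GS$-derivation witnessing that $\gA_i$ is ``served by'' $\gK_i$, plus a $\GS$-derivation recombining the $\gK_i$ into $\gK$. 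For the model case $\aiur$, the redex is $a\ltens\cneg a$, and splitting yields $\gK_1,\gK_2$ with $\GS$-derivations $\cneg a\to\gK_1$ and $a\to\gK_2$ (its atomic case) and a $\GS$-derivation $\gK_1\lpar\gK_2\to\gK$. Starting from an $\aidr$-step $\unit\to\cneg a\lpar a$, running these two derivations in the two halves of the $\lpar$, and then the recombining derivation, gives $\proves[\GS]{\gK}$; since $\gR'=\unit$, this is exactly $\proves[\GS]{\gR'\lpar\gK}$.

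For $\pur$ the redex is $\gP\connn{\gM_1,\ldots,\gM_n}\ltens\cneg\gP\connn{\gN_1,\ldots,\gN_n}$, so splitting is applied first to the top tensor and then a second time along the prime $\gP$ and along $\cneg\gP$, decomposing each side module-by-module and producing component results $\proves[\GS]{\gM_i\lpar\gK_{1i}}$, $\proves[\GS]{\gN_i\lpar\gK_{2i}}$ and recombining $\GS$-derivations. The conclusion $(\gM_1\ltens\gN_1)\lpar\cdots\lpar(\gM_n\ltens\gN_n)$ is then reassembled \emph{inside $\GS$} using the identity $\idr$ (Corollary~\ref{cor:identity}) as a backbone: taking $T=(\gM_1\ltens\gN_1)\lpar\cdots\lpar(\gM_n\ltens\gN_n)$, one starts from $\proves[\GS]{\cneg T\lpar T}$ and transforms the $\cneg T$-half into $\gK$ by running the component derivations together with the recombining derivations and the generalised prime/identity derivations of Lemma~\ref{lem:g}; switches (available via Lemma~\ref{lem:switch}) and $\ssdr$ route the intermediate tensors. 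The case $\ssur$ is analogous, the only difference being that the reassembly must reinsert $\gA$ into $\gB$ at precisely the vertices of the connection set $S$, again via $\ssdr$.

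The hard part will be twofold. As flagged in the introduction, Splitting and Context Reduction are here mutually dependent, so neither may be taken for granted: their proofs must be carried out together by a single interleaved induction in Section~\ref{sec:splitting} before both can be invoked as above. More delicate still is the prime case of $\pur$: unlike a tensor, a prime $\gP$ with $\sizeof\vP\ge 4$ carries genuine internal structure, and the two independent splittings along $\gP$ and $\cneg\gP$ must yield components $\gK_{1i}$ and $\gK_{2i}$ that recombine \emph{coherently} through $\pdr$. Controlling this matching — ensuring the module decompositions on the two sides line up index-by-index so that the down-fragment rules can put them back together — is exactly where the classical formula-based splitting arguments fail and where the new, module-aware statement of splitting is required.
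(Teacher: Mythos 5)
Your overall skeleton---reduce to a per-rule core claim, eliminate instances topmost-first, dispose of the context via Context Reduction (Lemma~\ref{lem:conred}), split the top-level tensor (Lemma~\ref{lem:splitting:tens}), and reassemble in $\GS$---is exactly the paper's treatment of $\aiur$ (Theorem~\ref{thm:ai-up}, which additionally invokes Atomic Splitting, Lemma~\ref{lem:splitting:atom}, to locate the matching $\cneg a$ inside a provable context) and of $\ssur$ (Theorem~\ref{thm:ss-up}). The genuine gap is in your $\pur$ case. Your plan to split ``a second time along the prime $\gP$ \emph{and} along $\cneg\gP$, decomposing each side module-by-module'' is not something the lemmas provide: Splitting Prime (Lemma~\ref{lem:splitting:prime}) has a \emph{disjunctive} conclusion, and in its Case~(B) it does not decompose $\gP\connn{\gM_1,\ldots,\gM_n}$ module-by-module---it only splits off a single module $\gM_j$ against the residual $\gP\connn{\gM_1,\ldots,\unit,\ldots,\gM_n}$---and nothing guarantees that two independent splittings, one per side, yield coherently matched components. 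The paper avoids this entirely: it splits only the $\gP$-side and threads the resulting proofs of $\gK_i\lpar\gM_i$ \emph{into the modules} of the untouched proof of $\gL_{\cnP}\lpar\cneg\gP\connn{\gN_1,\ldots,\gN_n}$ (tensoring a proof into a module slot, then $\ssdr$), applying in Case~(B) a \emph{second} Context Reduction to isolate $\gN_j$. Your ``$\idr$-backbone'' reassembly is also flawed as stated: transforming the $\cneg T$-half into $\gK$ would require $\GS$-derivations \emph{from} $\cneg\gM_i$ \emph{to} $\gK_{1i}$, i.e.\ cancelling $\cneg\gM_i$ against $\gM_i$, which is precisely the cut being eliminated; splitting only supplies \emph{proofs} of $\gM_i\lpar\gK_{1i}$, and in $\GS$ (unlike $\SGS$) these cannot be converted into the needed derivations.

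More decisively, your sole induction measure---the number of up-rule instances---cannot carry the $\pur$ case. In the paper's Case~(B) the reassembly necessarily passes through the derivation $\dD^*$ given by Lemma~\ref{lem:g} in the direction $\gP\connn{\gM_1,\ldots,\unit,\ldots}\ltens\cneg\gP\connn{\gN_1,\ldots,\unit,\ldots}\to(\gM_1\ltens\gN_1)\lpar\cdots\lpar(\gM_n\ltens\gN_n)$, which is a derivation in $\SGSu$ and so \emph{introduces new} $\aiur$, $\ssur$ and $\pur$ instances; your count can grow, and your induction does not terminate. The paper resolves this with an inner induction on the size of the principal module $\gP\connn{\gM_1,\ldots,\gM_n}\ltens\cneg\gP\connn{\gN_1,\ldots,\gN_n}$: the freshly created $\pur$ instances are strictly smaller and fall to the induction hypothesis, while the fresh $\aiur$ and $\ssur$ instances are removed by the \emph{already proved} Theorems~\ref{thm:ai-up} and~\ref{thm:ss-up}---a dependency your uniform ``core claim'' obscures, since the three rules cannot be eliminated symmetrically. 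The proof of Theorem~\ref{thm:p-up} explicitly flags this inner induction as a departure from standard deep-inference cut elimination, where it is ``outsourced to the splitting lemma''; you correctly identify the prime case as the hard part, but without the Case~(A)/(B) analysis and this extra induction the elimination of $\pur$ is not established.
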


Then Theorem~\ref{thm:cut} follows immediately from
Theorem~\ref{thm:up} and the second statement in
Corollary~\ref{cor:identity}.

The following two sections are devoted to the proof of
Theorem~\ref{thm:up}. But before, let us finish this section by
exhibiting some immediate consequences of Theorem~\ref{thm:cut}.

\begin{cor}\label{cor:gs-sgs}
  For every graph $\gG$, we have $\proves[\SGS]\gG$ iff $\proves[\GS]\gG$. 
\end{cor}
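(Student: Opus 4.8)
The plan is to prove the two directions of the equivalence separately. The direction from $\GS$ to $\SGS$ is immediate: by Definition~\ref{def:system} we have $\GS=\SGSd\subseteq\SGS$, so any proof of $\gG$ in $\GS$ is literally already a proof of $\gG$ in $\SGS$, which gives $\proves[\GS]\gG\Rightarrow\proves[\SGS]\gG$.

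The substance lies in the converse, $\proves[\SGS]\gG\Rightarrow\proves[\GS]\gG$, which is where Theorem~\ref{thm:up} does the work. I would argue by induction on the number of instances of up-fragment rules ($\aiur$, $\ssur$, $\pur$) occurring in a given $\SGS$-proof $\dD$ of $\gG$. If this number is zero, then $\dD$ uses only rules of $\SGSd=\GS$ and we are done. Otherwise, I would first fix a sequential form of $\dD$ (possible by Remark~\ref{rem:DIseq}) and single out the topmost instance of an up-rule, say the one transforming a graph $\gK_1$ (its premise) into $\gK_2$ (its conclusion). By the choice of ``topmost'', the portion of $\dD$ lying above this instance uses only down-rules, and hence is a $\GS$-proof of $\gK_1$, i.e.\ $\proves[\GS]{\gK_1}$.

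At this point I would invoke the admissibility granted by Theorem~\ref{thm:up} to pass from $\proves[\GS]{\gK_1}$ to $\proves[\GS]{\gK_2}$. Splicing this new $\GS$-proof of $\gK_2$ in front of the remaining (unchanged) part of $\dD$ that leads from $\gK_2$ down to $\gG$ produces an $\SGS$-proof of $\gG$ with strictly fewer up-rule instances, so the induction hypothesis applies and yields $\proves[\GS]\gG$. The special case in which only $\iur$ is eliminated this way recovers the cut-admissibility statement of Theorem~\ref{thm:cut}.

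The step I expect to be delicate is exactly the application of Theorem~\ref{thm:up} at the topmost up-rule instance. That instance is in general applied deep inside a context $\gC\coonso R$, so that $\gK_1=\gC\coons{\gA}R$ and $\gK_2=\gC\coons{\gB}R$, where $\gA\leadsto\gB$ is the bare up-rule redex. What I need is therefore not merely the top-level admissibility statement but its closure under arbitrary contexts: from $\proves[\GS]{\gC\coons{\gA}R}$ I must conclude $\proves[\GS]{\gC\coons{\gB}R}$. This contextual strength is precisely what the splitting and context-reduction machinery developed for Theorem~\ref{thm:up} supplies, since it is designed to reach into the context surrounding the redex; so the real obstacle is not in the present corollary but is already discharged by the admissibility proof itself, and here I only need to apply it correctly while tracking the decreasing up-rule count.
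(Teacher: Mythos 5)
Your proposal is correct and is exactly the argument the paper leaves implicit: the corollary is stated as an immediate consequence of Theorem~\ref{thm:up}, with the trivial inclusion $\GS=\SGSd\subseteq\SGS$ in one direction and the elimination of up-rule instances one at a time (your induction on their number) in the other. Your ``delicate point'' is also correctly resolved, since the admissibility statements of Theorems~\ref{thm:ai-up}, \ref{thm:ss-up}, and~\ref{thm:p-up} are already stated and proved for redexes occurring inside an arbitrary context $\gG\coons{\cdot}S$, so the contextual strength you need is indeed supplied there.
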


\begin{cor}\label{cor:deduction}
  For all graphs $G$ and $H$, we have
  \begin{equation*}
    \proves[\GS]{\gG\limp\gH}
    \iff
    \odv{\unit}{}{\cneg\gG\lpar\gH}{\GS}
    \iff
    \odv{\unit}{}{\cneg\gG\lpar\gH}{\SGS}
    \iff
    \odv{\gG}{}{\gH}{\SGS}
  \end{equation*}
\end{cor}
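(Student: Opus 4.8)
The goal is Corollary~\ref{cor:deduction}, a chain of four equivalences. The plan is to prove it as a cycle of implications, treating each equivalence as the link between adjacent display items, and to lean entirely on the cut-admissibility machinery already in place (Theorem~\ref{thm:cut}) together with the definitions of derivation and proof.

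\medskip

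First I would dispatch the two outer equivalences, which are essentially definitional unfoldings. The first, $\proves[\GS]{\gG\limp\gH}$ iff there is a proof of $\cneg\gG\lpar\gH$ in $\GS$, is immediate from Definition~\ref{def:dual}: the implication $\gG\limp\gH$ \emph{is} the graph $\cneg\gG\lpar\gH$, so ``$\gG\limp\gH$ is provable in $\GS$'' and ``there is a proof of $\cneg\gG\lpar\gH$ in $\GS$'' say literally the same thing. The third equivalence, between a proof of $\cneg\gG\lpar\gH$ in $\SGS$ and a derivation $\odv{\gG}{}{\gH}{\SGS}$, is where the extra strength of $\SGS$ (namely the availability of $\iur$ and $\swir$) is used, exactly as in Lemmas~\ref{lem:dual} and~\ref{lem:trans}. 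For the forward direction, given a proof of $\cneg\gG\lpar\gH$ in $\SGS$, I would build a derivation from $\gG$ to $\gH$ by first placing $\gG$ in tensor with the proof, then using $\swir$ (a special case of $\ssur$, hence in $\SGS$) and $\iur\in\SGS$ to contract the $\gG\ltens\cneg\gG$ pair to $\unit$:
\begin{equation*}
  \odn{
    \gG\ltens\ods{\unit}{}{\cneg\gG\lpar\gH}{\SGS}
  }{\swir}{
    \odn{\gG\ltens\cneg\gG}{\iur}{\unit}{}\lpar\gH
  }{}
\end{equation*}
using $\unit\lpar\gH=\gH$. Conversely, from a derivation $\odv{\gG}{}{\gH}{\SGS}$ I would prepend $\idr$ (derivable in $\SGS$ by Corollary~\ref{cor:identity}) to create $\cneg\gG\lpar\gG$ from $\unit$, then run the given derivation on the $\gG$ copy inside the context $\cneg\gG\lpar\coonso{}$ via Remark~\ref{rem:context}, yielding a proof of $\cneg\gG\lpar\gH$ in $\SGS$.

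\medskip

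The middle equivalence, between provability of $\cneg\gG\lpar\gH$ in $\GS$ and in $\SGS$, is the one carrying the real content, and I expect it to be the main (indeed the only non-routine) obstacle. The direction from $\GS$ to $\SGS$ is trivial since $\GS\subseteq\SGS$, so every $\GS$-proof is an $\SGS$-proof. The converse, from $\SGS$ to $\GS$, is precisely cut admissibility: this is the statement of Corollary~\ref{cor:gs-sgs}, which says $\proves[\SGS]{\gF}$ iff $\proves[\GS]{\gF}$ for every graph $\gF$, and which is itself an immediate consequence of Theorem~\ref{thm:cut} (the whole up-fragment $\SGSu=\set{\aiur,\ssur,\pur}$ being admissible for $\GS$ by Theorem~\ref{thm:up}). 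So the plan for this link is simply to invoke Corollary~\ref{cor:gs-sgs} with $\gF=\cneg\gG\lpar\gH$.

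\medskip

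Thus the skeleton is: equivalence~1 by definition of $\limp$; equivalence~2 by Corollary~\ref{cor:gs-sgs} (the deep result); equivalence~3 by the $\idr$/$\iur$/$\swir$ constructions of Lemmas~\ref{lem:dual} and~\ref{lem:trans} specialised to $\SGS$. The only genuine work has already been done upstream in proving Theorem~\ref{thm:up}; everything visible in this corollary is bookkeeping, so the write-up can be kept to a few lines, citing Corollary~\ref{cor:gs-sgs} for the heart of the argument and displaying the two short $\SGS$-derivations above to witness the outer conversions.
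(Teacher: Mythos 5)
Your proposal is correct and follows essentially the same route as the paper's proof: the first equivalence by unfolding the definition of $\limp$ and $\proves{}$, the middle one via cut admissibility (your appeal to Corollary~\ref{cor:gs-sgs} is just the paper's appeal to Theorem~\ref{thm:up} packaged one step later), and the outer one via the same two explicit derivations built from $\idr$, $\iur$, and switch (the paper uses $\ssdr$ where you use $\swir$, an immaterial difference since $\swir$ is a special case of it). The only pedantic slip is writing $\iur\in\SGS$ --- strictly only the atomic $\aiur$ is a rule of $\SGS$, with the general $\iur$ derivable by Corollary~\ref{cor:identity} --- but since you invoke that corollary anyway (as does the paper, which commits the same abuse of notation in the text preceding Theorem~\ref{thm:cut}), this does not affect correctness.
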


\begin{proof}
  The first equivalence is just the definition of $\proves{}$. The
  second equivalence follows from Theorem~\ref{thm:up}, and the last equivalence follows from the two derivations
  \begin{equation*}
    \odn{\unit}{
      \idr}{
      \cneg\gG\lpar
      \odv{\gG}{}{\gH}{}}{}
    \quand
    \odn{\gG\ltens
      \odv{\unit}{}{\cneg\gG\lpar\gH}{}}{
      \ssdr}{
      \odn{\gG\ltens\cneg\gG}{\iur}{\unit}{}
      \lpar
      \gH}{}
  \end{equation*}
  together with Corollary~\ref{cor:identity}.
\end{proof}

\begin{cor}\label{cor:context}
  For all graphs $\gG$ and $\gH$ and all contexts $\gC\coons{\cdot}R$, we have that
  $$
  \text{$
  	\proves[\GS]{\gG\limp\gH}\qquad\Longrightarrow\qquad
    \proves[\GS]{\gC\coons{\gG}R\limp\gC\coons{\gH}R}
    $}\quadfs
  $$
\end{cor}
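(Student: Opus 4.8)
The plan is to reduce the statement to the correspondence between provable implications and derivations that has just been set up, and then to invoke closure of derivations under context. Concretely, I would begin by assuming $\proves[\GS]{\gG\limp\gH}$. By the chain of equivalences in Corollary~\ref{cor:deduction}, this is the same as having a derivation from $\gG$ to $\gH$ in $\SGS$; call it $\dD$. The reason for routing the argument through Corollary~\ref{cor:deduction} is precisely that an implication is a statement about \emph{proofs} from the unit, whereas it is \emph{derivations} (objects with an arbitrary premise and conclusion) that behave well under being plugged into a surrounding graph; the corollary lets us pass from the former to the latter, and it rests in turn on cut admissibility (Theorem~\ref{thm:cut}).

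Next I would plug $\dD$ into the hole of the context $\gC\coonso R$. This is exactly what Remark~\ref{rem:context} provides: given the derivation $\dD$ from $\gG$ to $\gH$ and the context $\gC\coons{\cdot}R$, there is a derivation (written $\gC\coons{\dD}R$ in the notation of that remark) from $\gC\coons{\gG}R$ to $\gC\coons{\gH}R$. Since this construction merely applies the inference steps of $\dD$ deeper inside the fixed surrounding graph and introduces no new rule instances, the resulting derivation lives in the same system $\SGS$ as $\dD$.

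Finally I would translate back. The derivation from $\gC\coons{\gG}R$ to $\gC\coons{\gH}R$ in $\SGS$ is, by the last equivalence of Corollary~\ref{cor:deduction} applied now to the graphs $\gC\coons{\gG}R$ and $\gC\coons{\gH}R$ in place of $\gG$ and $\gH$, exactly a witness that $\proves[\GS]{\gC\coons{\gG}R\limp\gC\coons{\gH}R}$, which is the desired conclusion.

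I do not expect any genuine obstacle here: once Corollary~\ref{cor:deduction} is available, the whole argument is bookkeeping of the form decode--embed--re-encode. The only point worth a moment's care is confirming that the contextual composition of Remark~\ref{rem:context} keeps the derivation inside $\SGS$, i.e.\ that clause~(2) of Definition~\ref{def:derivation} does not force rules outside the system; but this is immediate, since forming $\gC\coons{\dD}R$ reuses precisely the inference instances already occurring in $\dD$.
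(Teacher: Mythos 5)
Your proposal is correct and follows exactly the paper's own route: the paper proves Corollary~\ref{cor:context} as a consequence of Corollary~\ref{cor:deduction} (decode the implication into an $\SGS$-derivation from $\gG$ to $\gH$, and re-encode at the end) combined with Remark~\ref{rem:context} (plug the derivation into the context), just as you do. The paper also notes in passing that a direct proof via Lemma~\ref{lem:g} is possible, but your argument matches its primary proof.
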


\begin{proof}
  This is a consequence of Corollary~\ref{cor:deduction} and
  Remark~\ref{rem:context}. But it can also be proved directly using
  Lemma~\ref{lem:g}.
\end{proof}

\begin{cor}\label{cor:tensor}
  We have\;\; $\proves[\GS]{A\ltens B}$ \;\;$\iff$\/\;\; $\proves[\GS]{A}$ \;and\/\; $\proves[\GS]{B}$.
\end{cor}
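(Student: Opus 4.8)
The plan is to prove the two directions separately. The backward direction is immediate: if $\proves[\GS]{\gA}$ and $\proves[\GS]{\gB}$, then there are derivations $\dD_1$ of $\gA$ and $\dD_2$ of $\gB$, each with premise $\unit$. Since $\ltens\connn{\unit,\unit}=\unit\ltens\unit\isom\unit$, the parallel composition $\ltens\connn{\dD_1,\dD_2}$ from Definition~\ref{def:derivation} is a derivation with premise $\unit$ and conclusion $\ltens\connn{\gA,\gB}=\gA\ltens\gB$; as it uses only the rules occurring in $\dD_1$ and $\dD_2$, it is a $\GS$-proof, so $\proves[\GS]{\gA\ltens\gB}$.

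For the forward direction, which is the substantive one, I would argue by induction on the length of a $\GS$-proof of $\gA\ltens\gB$, after fixing a sequential form (Remark~\ref{rem:DIseq}). If $\gA=\unit$ or $\gB=\unit$ the claim is trivial, since $\unit$ is always provable; and the length-$0$ case forces $\gA\ltens\gB=\unit$, i.e.\ $\gA=\gB=\unit$. So assume $\gA,\gB\neq\unit$, so that by Lemma~\ref{lem:newhope} the graph $\gA\ltens\gB$ is genuinely tensor-topped (case (iii)). Let $\rr$ be the bottommost rule instance of the sequential proof; it rewrites a module $R$ of $\gA\ltens\gB$ into the premise of $\rr$, where $R$ is the conclusion-graph of an instance of $\aidr$, $\ssdr$, or $\pdr$.

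The key observation is that this redex $R$ lies entirely inside $\gA$ or entirely inside $\gB$. Indeed, for each of the three down-rules the conclusion-graph $R$ is, by Lemma~\ref{lem:newhope}, either $\lpar$-topped or prime-topped, but never $\ltens$-topped (for $\aidr$ it is $\cneg a\lpar a$; for $\ssdr$ it is a par of two non-empty graphs; for $\pdr$ it is a par or, in the degenerate case where all the $N_i$ vanish, a prime composition). On the other hand, I would prove the elementary module lemma that any module $M$ of a join $\gA\ltens\gB$ with $M\cap\vA\neq\emptyset\neq M\cap\vB$ is itself $\ltens$-topped: the two slices $M\cap\vA$ and $M\cap\vB$ are each modules of $M$ (a direct check, in the spirit of Lemma~\ref{lem:module}) and are completely joined to one another, so $M$ is exactly their tensor. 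Since $R$ is not $\ltens$-topped, it cannot meet both sides, hence $R\subseteq\vA$ or $R\subseteq\vB$.

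To finish, say $R\subseteq\vA$ (the other case is symmetric). Deleting the last step from the sequential proof yields a shorter $\GS$-proof whose conclusion is $\gA'\ltens\gB$, where $\gA'$ is $\gA$ with $R$ replaced by the premise of $\rr$; by the induction hypothesis there are $\GS$-proofs of $\gA'$ and of $\gB$. Re-appending the instance of $\rr$, now applied inside $\gA'$ (legitimate, since the relevant redex is contained in $\gA'$), to the proof of $\gA'$ produces a $\GS$-proof of $\gA$, and the proof of $\gB$ is already at hand. The main obstacle is precisely this localisation step: one must manage the open-deduction bookkeeping (which is why passing to a sequential form matters) and must establish the module lemma above, the real content being that the redex of a down-rule is never a tensor and therefore respects the top-level tensor boundary of the conclusion. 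It is worth stressing that cut admissibility alone does not give this direction, since neither $(\gA\ltens\gB)\limp\gA$ nor $\cneg\gB$ is provable in general, so a transitivity/deduction argument cannot replace the structural induction.
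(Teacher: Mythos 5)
Your proof is correct and is essentially the paper's argument made explicit: the paper dispatches Corollary~\ref{cor:tensor} with ``this follows immediately by inspecting the inference rules of $\GS$'', and the content of that inspection is exactly your observation that the conclusion-graph of every down-rule ($\cneg a\lpar a$, $\gB\lpar\gA$ with both parts non-empty, and the $\pdr$-conclusion, which is a par or, degenerately, prime-rooted) is never $\ltens$-rooted, so by your module lemma no redex can straddle the top-level tensor and the proof splits. Your induction on the length of a sequentialised proof, the handling of the degenerate $\pdr$ case, and the closing remark that cut admissibility cannot substitute for this structural argument (since $(\gA\ltens\gB)\limp\gA$ is not a theorem in this linear setting) are all accurate elaborations of what the paper leaves implicit.
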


\begin{proof}
  This follows immediately by inspecting the inference rules of $\GS$. 
\end{proof}

\begin{cor}\label{cor:prime}
  We have\; $\proves[\GS]{P\connn{M_1,\ldots,M_n}}$ with $P$ prime and $n\ge 4$ and $M_i\neq\unit$ for all $i=\set{1,\ldots,n}$, if and only if there is at least one $i=\set{1,\ldots,n}$ such that
  \;$\proves[\GS]{M_i}$\; and 
  \;$\proves[\GS]{P\connn{M_1,\ldots,M_{i-1,},\unit,M_{i+1},\ldots,M_n}}$\;.
\end{cor}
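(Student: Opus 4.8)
The plan is to prove the two implications separately, throughout using that $\proves[\GS]{\cdot}$ and $\proves[\SGS]{\cdot}$ coincide (Corollary~\ref{cor:gs-sgs}), so that the whole up-fragment is at my disposal. I read $G\coloneqq P\connn{M_1,\dots,M_n}$ as a filled context $C\coons{M_i}{R}$, where $C\coonso{R}=P\connn{M_1,\dots,M_{i-1},\unit,M_{i+1},\dots,M_n}$ is obtained by deleting the $i$-th module and $R$ is the set of vertices of those $M_j$ with $v_iv_j\in\gEdges[P]$; in particular $C\coons{\unit}{R}=C\coonso{R}$ as a graph. One fact I use repeatedly: since $P$ is prime, no vertex $v_i$ is adjacent to all the others (otherwise $\vertices[P]\setminus\set{v_i}$ would be a non-trivial module of $P$), so $R\neq\emptyset$ and $R\neq\vertices[C\coonso{R}]$.

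For the direction $(\Leftarrow)$, assume $\proves[\GS]{M_i}$ and $\proves[\GS]{C\coons{\unit}{R}}$ for some $i$. By Corollary~\ref{cor:tensor} we obtain $\proves{C\coons{\unit}{R}\ltens M_i}$. A single application of $\ssur$, with $\gB=C\coons{\unit}{R}$, $\gA=M_i$, and $S=R$, reattaches $M_i$ to exactly the set $R$, producing $C\coons{M_i}{R}=G$; its side conditions $M_i\neq\emptyset$ and $R\neq\vertices[\gB]$ hold by the remark above. Hence $\proves[\SGS]{G}$, i.e.\ $\proves[\GS]{G}$.

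For the direction $(\Rightarrow)$, I argue by induction on the length of a proof $\dD$ of $G$ in $\GS$, taken in sequential form (Remark~\ref{rem:DIseq}). The decisive structural observation is that the modular decomposition of $G$ has the prime graph $P$ at its root (by Lemma~\ref{lem:newhope} and Lemma~\ref{lem:module}, using $n\ge4$, $P$ prime, and $M_k\neq\unit$), so every proper module of $G$ is contained in a single $M_i$. Since the conclusion of each down-rule $\aidr$, $\ssdr$, $\pdr$ is $\lpar$-rooted whereas $G$ is not, the redex $W$ of the bottommost rule $\rr$ is a proper module of $G$ and therefore lies inside some fixed $M_i$. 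Undoing this last step yields a shorter proof of $G^-=P\connn{M_1,\dots,\hat M_i,\dots,M_n}$, where $\hat M_i$ is $M_i$ with $W$ replaced by the premise of $\rr$. If $\hat M_i=\unit$, then necessarily $\rr=\aidr$ and $M_i=\cneg a\lpar a$, so $\proves{M_i}$ is immediate and $G^-=C\coons{\unit}{R}$ is provable, giving index $i$. Otherwise $G^-$ is again a prime composition into non-empty modules, so the induction hypothesis yields an index $j$ with $\proves{(\text{$j$-th module of }G^-)}$ and $\proves{(G^-\text{ with its $j$-th entry set to }\unit)}$. If $j\neq i$ the $j$-th module is $M_j$, and I recover $P\connn{M_1,\dots,M_n}$ with its $j$-th entry emptied by replaying $\rr$ inside position $i$, which still holds $\hat M_i$ and hence the premise-redex of $\rr$; if $j=i$, the second provable graph is already $C\coons{\unit}{R}$, and replaying $\rr$ inside $\hat M_i$ turns the provable $\hat M_i$ into $M_i$. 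In every case both required premises are produced.

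The main obstacle is the forward direction, and within it the \emph{locality} claim that the bottommost rule can act only inside a single $M_i$: this is exactly what collapses the problem to one module and sustains the induction. It rests on the modular-decomposition fact that, when the root is the prime graph $P$, no module can cross between two distinct $M_i$, combined with the observation that every down-rule produces a $\lpar$ at the site of its conclusion so that its redex cannot be all of the connected graph $G$. The second delicate point is to justify that a rule instance confined to one module stays applicable verbatim after the remaining modules are modified (in particular emptied); this holds because the $\GS$-rules are context-independent and a submodule of a module remains a module after such substitutions.
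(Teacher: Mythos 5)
Your forward direction contains a genuine gap: the locality claim that the conclusion of every down-rule of $\GS$ is $\lpar$-rooted is false for $\pdr$. In Figure~\ref{fig:SGS} the side conditions of $\pdr$ constrain only the modules $M_1,\dots,M_n$ on the $\cneg P$-side; the $N_i$ may be empty, and when \emph{all} of them are $\unit$ the conclusion-redex degenerates to $\cneg P\connn{M_1,\dots,M_n}$, which is connected (the dual of a prime graph is prime, hence connected when it has at least $4$ vertices). In particular, instantiating the rule's prime graph with $\cneg P$, there is a perfectly legal bottommost instance of $\pdr$ with premise $M_1\ltens\cdots\ltens M_n$ and conclusion $P\connn{M_1,\dots,M_n}=G$, whose redex is all of the connected graph $G$ and lies inside no $M_i$. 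These degenerate instances are not exotic: they are exactly the engine behind Corollary~\ref{cor:multitensTOany}, whose proof invokes Lemma~\ref{lem:g} with $\cneg\gG\connn{\unit,\ldots,\unit}$. Your induction has no case for them, so as written it does not go through.

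The missing case is easily patched, and the rest of your argument is sound. If the bottommost rule is such a degenerate $\pdr$ with redex all of $G$, uniqueness of the modular decomposition (Lemma~\ref{lem:newhope}) forces the rule's prime graph to be isomorphic to $\cneg P$ with modules matching the $M_k$, so the premise is isomorphic to $M_1\ltens\cdots\ltens M_n$; Corollary~\ref{cor:tensor} then yields $\proves[\GS]{M_k}$ for every $k$, and any index $i$ works, since $P\connn{M_1,\dots,M_{i-1},\unit,M_{i+1},\dots,M_n}$ is the composition of the non-empty graphs $M_k$ (for $k\neq i$) via the non-empty graph obtained from $P$ by deleting its $i$-th vertex, hence provable from $M_1\ltens\cdots\ltens M_{i-1}\ltens M_{i+1}\ltens\cdots\ltens M_n$ by Corollary~\ref{cor:multitensTOany}. (Note that once the redex is a \emph{proper} module its connectivity is irrelevant---properness alone places it inside some $M_i$---so a redex equal to $G$ via this degenerate $\pdr$ is the only failure point; your treatment of the $\hat M_i=\unit$ case, which indeed forces $\aidr$, and your replay-in-context step are both correct.) Your backward direction is also correct, including the nice primality argument that $R\neq\vB$, but the detour through $\SGS$, $\ssur$ and Corollary~\ref{cor:gs-sgs} (i.e., through cut elimination) is unnecessary: by Remark~\ref{rem:context} one can simply extend the given $\GS$-proof of $P\connn{M_1,\dots,M_{i-1},\unit,M_{i+1},\dots,M_n}$ by running the proof of $M_i$ inside the hole of the context $C\coonso{R}$, staying entirely within $\GS$---this direct inspection of the rules is all that the paper's one-line proof intends.
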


\begin{proof}
 As before, this follows immediately by inspecting the inference rules
 of $\GS$. In fact, this can be seen as a generalisation of the
 previous corollary.
\end{proof}

\begin{cor}[consistency]\label{cor:consistency}
If $\gG \neq \unit$ and $\proves[\GS]{\gG}$, then $\not\proves[\GS]{\cneg{\gG}}$.
\end{cor}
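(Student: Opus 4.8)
The plan is to prove the statement by strong induction on the number of vertices $\sizeof{\vG}$, in the sharpened form: if $\gG\neq\unit$ then $\gG$ and $\cneg\gG$ are not both provable in $\GS$. Before the induction I would isolate one lemma, a \emph{cut against a module}: if $\gG=\gC\coons{\gM}{R}$ (so that $\gM$ is a module of $\gG$ with neighbourhood $R$) and both $\proves[\GS]{\gG}$ and $\proves[\GS]{\cneg\gM}$ hold, then $\proves[\GS]{\gC\coons{\unit}{R}}$. This is where cut admissibility does its work: since $\cneg\gM=\gM\limp\unit$, Corollary~\ref{cor:deduction} turns $\proves[\GS]{\cneg\gM}$ into a derivation from $\gM$ to $\unit$ in $\SGS$; plugging this derivation into the context $\gC\coons{\cdot}{R}$ as in Remark~\ref{rem:context}, and prefixing it with an $\SGS$-proof of $\gG=\gC\coons{\gM}{R}$ obtained via Corollary~\ref{cor:gs-sgs}, produces an $\SGS$-proof of $\gC\coons{\unit}{R}$; a final appeal to Corollary~\ref{cor:gs-sgs} returns to $\GS$.

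For the base case $\sizeof{\vG}=1$ I would record the trivial invariant that every $\GS$-provable graph has an even number of vertices: a proof starts from the $0$-vertex graph $\unit$, the rule $\aidr$ adds exactly two vertices, and $\ssdr$ and $\pdr$ leave the vertex set unchanged. Hence a single-vertex graph is never provable, and the claim is vacuously true.

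For the inductive step $\sizeof{\vG}\ge 2$ I would assume $\proves[\GS]{\gG}$ and $\proves[\GS]{\cneg\gG}$ and split on the shape of $\gG$ from Lemma~\ref{lem:newhope}. If $\gG=\gA\lpar\gB$ with $\gA,\gB\neq\unit$, then $\cneg\gG=\cneg\gA\ltens\cneg\gB$, so Corollary~\ref{cor:tensor} gives $\proves[\GS]{\cneg\gA}$ and $\proves[\GS]{\cneg\gB}$; since $\gB$ is a module of $\gG$ with empty neighbourhood, the cut lemma applied with $\proves[\GS]{\cneg\gB}$ yields $\proves[\GS]{\gA}$, and now $\gA\neq\unit$ is strictly smaller with both $\gA$ and $\cneg\gA$ provable, contradicting the induction hypothesis. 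The case $\gG=\gA\ltens\gB$ is dual: Corollary~\ref{cor:tensor} gives $\proves[\GS]{\gA}$ and $\proves[\GS]{\gB}$, and cutting $\gB=\cneg{\cneg\gB}$ against the module $\cneg\gB$ of $\cneg\gG=\cneg\gA\lpar\cneg\gB$ yields $\proves[\GS]{\cneg\gA}$, again contradicting the hypothesis via $\gA$. Finally, if $\gG=\gP\connn{\gA_1,\ldots,\gA_n}$ with $\gP$ prime, $n\ge 4$ and all $\gA_i\neq\unit$, then $\cneg\gG=\cneg\gP\connn{\cneg\gA_1,\ldots,\cneg\gA_n}$ with $\cneg\gP$ prime by Observation~\ref{obs:deMorgan}; applying Corollary~\ref{cor:prime} to $\cneg\gG$ produces an index $j$ with $\proves[\GS]{\cneg\gA_j}$ and $\proves[\GS]{\cneg\gP\connn{\cneg\gA_1,\ldots,\unit,\ldots,\cneg\gA_n}}$ (with $\unit$ in position $j$). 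Cutting $\cneg\gA_j$ against the module $\gA_j$ of $\gG$ gives $\proves[\GS]{\gP\connn{\gA_1,\ldots,\unit,\ldots,\gA_n}}$, whose dual is precisely the second provable graph just obtained; this graph is non-empty (it retains $n-1\ge 3$ non-unit components) and strictly smaller than $\gG$, so the induction hypothesis is contradicted.

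The step I expect to be the crux is the $\lpar$-case: provability of $\gA\lpar\gB$ does not, by itself, return provability of $\gA$ or of $\gB$, so there is no direct structural descent as there is for $\ltens$ (Corollary~\ref{cor:tensor}) or for prime graphs (Corollary~\ref{cor:prime}). The cut-against-a-module lemma is exactly what bridges this gap, and it is the only place where the full strength of cut admissibility (Theorem~\ref{thm:up}, through Corollaries~\ref{cor:gs-sgs} and~\ref{cor:deduction}) is invoked. A secondary point to check carefully is that each reduction genuinely lands on a strictly smaller graph that is still different from $\unit$, so that the induction hypothesis applies.
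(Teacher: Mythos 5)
Your proof is correct, but it takes a genuinely different route from the paper's. The paper disposes of consistency in three lines: since $\gG\neq\unit$ is provable, the first step of any (sequentialised) proof of $\gG$ must be an $\aidr$ instance, because $\aidr$ is the only rule applicable to $\unit$; this yields a derivation from $\cneg a\lpar a$ to $\gG$ in $\GS$, hence by Corollary~\ref{cor:deduction} $\proves[\GS]{\cneg\gG\limp a\ltens\cneg a}$, and a single cut (Theorem~\ref{thm:cut}) against a hypothetical proof of $\cneg\gG$ would give $\proves[\GS]{a\ltens\cneg a}$, which is impossible. You instead run a strong induction on $\sizeof\vG$ over the modular decomposition (Lemma~\ref{lem:newhope}), descending via Corollary~\ref{cor:tensor} in the $\ltens$/$\lpar$ cases and Corollary~\ref{cor:prime} in the prime case, with your ``cut against a module'' lemma (correctly assembled from Corollaries~\ref{cor:deduction} and~\ref{cor:gs-sgs} and Remark~\ref{rem:context}) doing the same cut-admissibility work that the paper's single cut does. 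Both arguments are ultimately corollaries of cut elimination, so neither is more elementary; what the paper's version buys is brevity and no case analysis, at the price of the small proof-inspection argument that the first rule is $\aidr$. What yours buys is that it never inspects the given proofs at all, working purely with the structural provability corollaries, and your parity invariant (every $\GS$-provable graph has an even number of vertices, since only $\aidr$ changes the vertex count) is exactly the fact that also underwrites the paper's final step that $a\ltens\cneg a$ is unprovable --- via Corollary~\ref{cor:tensor} that would force provability of a single-vertex graph. Your checks on the descent are also sound: in the prime case the residual graph $\gP\connn{\gA_1,\ldots,\unit,\ldots,\gA_n}$ is indeed non-empty, strictly smaller, and its dual is exactly the second graph supplied by Corollary~\ref{cor:prime}, with no need for it to remain a prime composition since the induction is on vertex count alone.
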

\begin{proof}
Since $\gG \neq \unit$ and $\proves[\GS]{\gG}$, we have for some $a$ that
$\ods{\odn{\unit}{\aidr}{\cneg{a} \lpar a}{}}{}{\gG}{\GS}$.
Hence, by Corollary~\ref{cor:deduction} we have $\proves[\GS]{\cneg{\gG} \multimap a \ltens \cneg{a}}$.
Thus, if we assume by way of contradiction that $\proves[\GS]{\cneg{\gG}}$ holds then by Theorem~\ref{thm:cut}
we have $\proves[\GS]{a \ltens \cneg{a}}$, which is impossible.
\end{proof}

\begin{obs}\label{obs:cook-reckhow}
  The system $\GS$ forms a proof system in the sense of Cook and
  Reckhow~\cite{Cook1979}, as the time complexity of checking the
  correct application of each inference step is polynomial. Note that the modular
  decomposition of graphs can be obtained in linear
  time~\cite{mcc:ros:spi:linear}. Furthermore, the use of the graph isomorphisms
  for the composition of derivations, as in~\eqref{eq:iso}, does not
  increase complexity, as we assume that the isomorphism $f$ is
  explicitly given, such that the correctness of the operation can
  also be checked in linear time.
\end{obs}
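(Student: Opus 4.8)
The plan is to verify directly the Cook--Reckhow condition: to exhibit a polynomial-time predicate which, given a graph $\gG$ together with a candidate derivation $\dD$, decides whether $\dD$ is a valid $\GS$-proof with conclusion $\gG$. By Remark~\ref{rem:DIseq} we may assume $\dD$ is given in sequential form, that is, as a list of graphs $\unit = G_0, G_1, \ldots, G_m = \gG$ in which each step from $G_i$ to $G_{i+1}$ is annotated either by the name of an inference rule of $\GS$ (one of $\aidr$, $\ssdr$, $\pdr$) together with the data witnessing its application --- the context $\gC\coonso R$, the active submodule, the set $S$, or the prime decomposition --- or by an explicitly given isomorphism $f\colon G_i\isom[f] G_{i+1}$. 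Since $\sizeof{\dD}$ counts the rule instances and the data recorded at each step is bounded polynomially in the size of the graphs involved, the whole encoding has size polynomial in $\sizeof{\dD}+\sizeof{\vG}$, and it therefore suffices to check each of the $m\le\sizeof{\dD}$ steps in polynomial time.

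The isomorphism steps and the two simpler rules are immediate. For an isomorphism step, since $f$ is recorded explicitly (as stressed in Definition~\ref{def:derivation}), we merely confirm that $f$ is a bijection on vertices preserving adjacency and labels, which is linear in $\sizeof{\vG}+\sizeof{\eG}$. For an $\aidr$ step we check that the newly created two-vertex induced subgraph is of the form $\cneg a\lpar a$ for an atom $a$. For an $\ssdr$ step, with $G_i=\gC\coons{\gB\coons{\gA}{S}}{R}$ and $G_{i+1}=\gC\coons{\gB\lpar\gA}{R}$, we verify that the indicated vertex set is a module $\gA$ of the active subgraph with neighbourhood exactly $S$ (the modularity test $vx\in\eG\iff vy\in\eG$ for all $x,y\in\vertices[A]$ ranges over $O(\sizeof{\vG}^2)$ pairs), that the side conditions $S\subseteq\vertices[B]$, $S\neq\emptyset$ and $\gA\neq\emptyset$ hold, and that $G_{i+1}$ is obtained from $G_i$ by deleting exactly the edges between $\vertices[A]$ and $S$. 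All of this is polynomial.

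The only genuinely delicate step is $\pdr$, whose side condition requires recognising that a prescribed induced subgraph is of the form $\cneg\gP\connn{M_1,\ldots,M_n}\lpar\gP\connn{N_1,\ldots,N_n}$ with $\gP$ prime and $\sizeof{\vertices[P]}\ge4$. A naive test of primality --- searching over all $2^{\sizeof{\vG}}$ candidate vertex sets for a nontrivial module --- would be exponential. This is exactly where the complexity remark of the observation is needed: I would invoke the linear-time modular decomposition algorithm of~\cite{mcc:ros:spi:linear}, which produces the modular decomposition tree of the relevant subgraph and thereby exposes, at the relevant node, the unique prime quotient $\gP$ together with the modules playing the role of the $M_i$ and $N_i$. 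From this tree one reads off in linear time whether the node is prime, its arity $n$, and the associated modules, after which the remaining checks ($n\ge4$, each $M_i\neq\unit$, and that $G_i$ has the prescribed par/tensor shape relative to this decomposition) are polynomial.

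Summing the per-step costs over the at most $\sizeof{\dD}$ steps gives a bound polynomial in $\sizeof{\dD}+\sizeof{\vG}$; since $\gG$ is provable in $\GS$ precisely when such a $\dD$ exists, this establishes that $\GS$ is a Cook--Reckhow proof system. I expect the main obstacle to be the $\pdr$ step: one must argue carefully that the witness data recorded in the proof (context, $S$, and the $M_i/N_i$ split) is enough to localise a single application of the linear-time decomposition to the active subgraph, so that primality is certified without recomputing global structural information at each step, and that this auxiliary data remains polynomially bounded.
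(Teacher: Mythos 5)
Your proposal is correct and follows essentially the same route as the paper, which states this result as an observation resting on exactly the three ingredients you verify in detail: polynomial-time checking of each rule instance (with the $\pdr$ side condition handled via the linear-time modular decomposition of~\cite{mcc:ros:spi:linear}), and explicitly recorded isomorphisms whose correctness is checkable in linear time. Your additional care about sequential form, polynomially bounded witness data, and the fact that graphs in a $\GS$-proof never exceed the size of the conclusion (cf.\ Observation~\ref{obs:size}) merely fleshes out what the paper leaves implicit.
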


Let us now define the notion of \emph{size}
of a graph that will play a central role in the normalisation proof as 
an induction measure.

\begin{defi}\label{def:size}
  The \emph{size} of a graph $\gG$, denoted by $\gsize{\gG}$, is the
  lexicographic pair $\tuple{\sizeof\vG,\sizeof\enG}$.  That is, if
  $\gG$ and $\gH$ are graphs, then $\gless{\gG}{\gH}$ if
  $\sizeof{\vG}<\sizeof{\vH}$ or if $\sizeof{\vG}=\sizeof{\vH}$ and
  $\sizeof{\enG}<\sizeof{\enH}$.
\end{defi}

\begin{obs}\label{obs:size}
Every
inference rule discussed so far, except for $\aiur$ and $\iur$ do
strictly decrease the size of a graph when going bottom-up in a
derivation. That is, whenever we have an instance
${\vlinf{\rr}{}{\gG}{\gH}}$ of an inference rule
$\rr\in\set{\aidr,\idr,\ssdr,\ssur,\pdr,\pur,\gdr,\gur}$, then
$\gless{\gH}{\gG}$.
Since each such inference step either reduces the number of vertices or the number of edges in the dual graph (or both), and these  numbers are  never increased, the length of a derivation in
$\GS$ with conclusion $\gG$ is bound by $n^2+n$ where $n=\sizeof\vG$ is
the number of vertices in~$\gG$.\footnote{For the case that the derivation contains only cographs, it can in fact be shown that the length
    is bound by $\frac12n^2$ (see~\cite{bru:str:swi-med}), but as the details
    are not needed for this paper, we leave them to the reader.}
\end{obs}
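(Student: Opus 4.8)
The plan is to establish the two assertions of the observation separately: first that every rule other than $\aiur$ and $\iur$ strictly decreases the lexicographic size $\gsize\gG=\tuple{\sizeof\vG,\sizeof\enG}$ when read from conclusion $\gG$ to premise $\gH$, and then to read off the length bound from this. I would organise the rules into two groups according to which component of the size they act on. The first group is the identity rules $\aidr$ and $\idr$: going upward they delete the pair of dual (sub)graphs $a,\cneg a$ (resp.\ $\gA,\cneg\gA$), so $\sizeof\vG$ strictly drops and $\gless\gH\gG$ holds regardless of the edges. The second group is $\ssdr,\ssur,\pdr,\pur$; for these I would first check that premise and conclusion share the same vertex set, so only $\sizeof\enG$ can differ. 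Since the number of dual edges equals $\binom{\sizeof\vG}{2}$ minus the number of primal edges, it then suffices to show the premise has strictly \emph{more} primal edges than the conclusion.

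For $\ssdr$ (conclusion $\gB\lpar\gA$, premise $\gB\coons\gA S$) and $\ssur$ (conclusion $\gB\coons\gA S$, premise $\gB\ltens\gA$) this is immediate: the premise contains all the edges from $\gA$ to the vertices of $S$ (resp.\ to all of $\gB$) that the conclusion lacks, and the side conditions $S\neq\emptyset$ and $S\neq\vB$ guarantee at least one such edge, giving strictness. The rules $\gdr,\gur$ I would dispatch by recalling from Lemma~\ref{lem:g} that they are compositions of $\aidr,\ssdr,\pdr$ steps, each already size-decreasing, so any nontrivial instance decreases the size as well.

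The main work is the prime case. For $\pdr$ I would set $m_i=\sizeof{\vertices[M_i]}$ and $n_i=\sizeof{\vertices[N_i]}$ and compare only the inter-module edges of the premise $(M_1\lpar N_1)\ltens\cdots\ltens(M_n\lpar N_n)$ with those of the conclusion $\cneg P\connn{M_1,\ldots,M_n}\lpar P\connn{N_1,\ldots,N_n}$, the internal edges of the $M_i,N_i$ being identical on both sides. A direct count shows the premise has exactly
\[
\sum_{\substack{i<j\\ v_iv_j\in\gEdges[P]}}\! m_im_j
\;+\;
\sum_{\substack{i<j\\ v_iv_j\in\gEdges[\cneg P]}}\! n_in_j
\;+\;
\sum_{i<j}(m_in_j+n_im_j)
\]
more primal edges than the conclusion (using that $\gEdges[P]$ and $\gEdges[\cneg P]$ partition the pairs of $\vP$). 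All summands are nonnegative, and since $P$ is prime with $\sizeof\vP\ge4$ it cannot be edgeless (an edgeless graph on at least two vertices is a $\lpar$ of singletons, hence decomposable, so not prime), whence $\gEdges[P]\neq\emptyset$; together with $m_i\ge1$ from $M_i\neq\unit$ the first sum is at least $\sizeof{\gEdges[P]}\ge1$. Thus the premise has strictly more primal and strictly fewer dual edges, so $\gless\gH\gG$. The rule $\pur$ is the dual and follows from the symmetric computation, in which the cross term $\bigl(\sum_i m_i\bigr)\bigl(\sum_j n_j\bigr)-\sum_i m_in_i$ plays the role of the last sum and the same lower bound $\sizeof{\gEdges[P]}\ge1$ applies.

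For the length bound I would observe that, along an entire $\GS$-derivation read upward, both components of the size are non-increasing: $\sizeof\vG$ never increases because no $\GS$-rule creates vertices, and $\sizeof\enG$ never increases because $\ssdr$ and $\pdr$ strictly decrease it while $\aidr$ only deletes vertices, which cannot create dual edges. Since the first assertion shows every step strictly decreases at least one of the two components by at least one, the number of steps is at most the total available decrease $\sizeof\vG+\sizeof\enG\le n+\binom n2\le n^2+n$, with $n=\sizeof\vG$. I expect the prime edge-count, and in particular pinning down strictness via $\gEdges[P]\neq\emptyset$, to be the only genuinely delicate point; the remaining rule cases and the final counting are routine once the measure is fixed and the observation that neither component ever increases is in place.
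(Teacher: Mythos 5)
Your proposal follows the same route the paper intends: the paper states this as an observation with only a one-sentence justification, and your case analysis fills in exactly the counts that justification presupposes. The identity cases, the $\ssdr$/$\ssur$ cases (where you tacitly also use the side condition $\gA\neq\unit$ from Figure~\ref{fig:SGS} to get strictness), the $\pdr$/$\pur$ edge count, and the final bound --- both components of the measure are non-increasing along a $\GS$-derivation and each step strictly decreases one of them, so the length is at most $n+\binom{n}{2}\le n^2+n$ --- are all correct. One small slip: by the paper's definition the two-vertex edgeless graph $\lpar$ \emph{is} prime, so ``edgeless on at least two vertices, hence not prime'' is false as stated; what you need, and what holds, is that an edgeless graph on at least three vertices has a nontrivial two-vertex module, and the side condition $\sizeof\vP\ge4$ puts you safely in that range, giving $\gEdges[P]\neq\unit$ (and likewise $\gEdges[\cneg P]\neq\unit$) as required.

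The one genuine gap is your treatment of $\gdr$ and $\gur$. You dispatch them by ``recalling from Lemma~\ref{lem:g} that they are compositions of $\aidr,\ssdr,\pdr$ steps'', but Lemma~\ref{lem:g} only produces such a derivation under the proviso that $\gM_i=\unit$ implies $\gN_i=\unit$, whereas $\gdr$ and $\gur$ carry no side conditions at all. Indeed the paper explicitly shows that the unconstrained $\gdr$ is \emph{not} derivable in $\GS$ --- it is merely admissible (Corollary~\ref{cor:g-adm}, with the counterexample in~\eqref{eq:gdr}) --- so the ``composition of size-decreasing steps'' argument is simply unavailable here. The repair is cheap and already in your toolkit: the inter-module edge count you performed for $\pdr$ applies verbatim with an arbitrary graph $\gG$ in place of the prime $\gP$. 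Premise and conclusion of $\gdr$ (and dually $\gur$) share their vertex set, and every inter-module edge of the conclusion is an edge between distinct blocks, hence present in the premise; so the conclusion's edge set is contained in the premise's, the dual edge count weakly decreases, and it decreases strictly unless premise and conclusion coincide, i.e.\ unless the instance is trivial. (This trivial-instance caveat is unavoidable: for $n=1$, or with a complete $\gG$ and all $\gN_i=\unit$, premise and conclusion of $\gdr$ are equal, so the paper's blanket claim of strict decrease likewise holds only for nontrivial instances --- which is harmless, since the paper counts only non-isomorphism steps towards derivation length, exactly as your ``nontrivial'' hedge anticipates.)
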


Furthermore, to each graph only finitely many different non-trivial inference steps can be applied, and every derivation can be reorganised such that between any two (non-trivial) inference steps (see Remark~\ref{rem:DIseq}) there is at most one isomorphism step.  
Hence, it follows immediately that provability in $\GS$ is decidable and in
{\bf NP}. In Section~\ref{sec:MLL} we are going to show that it is also \NP-hard, and therefore \NP-complete.


\section{Splitting and Context Reduction}\label{sec:splitting}

We are now ready to present the fundamental properties of the system
$\GS$, namely the \emph{splitting lemmas} and \emph{context
  reduction}, that allow to prove cut elimination in the next section.\footnote{Proceeding via splitting and context reduction is not the only method to prove cut elimination for a deep inference proof system (see e.g.~\cite{brunnler:phd,str:MELL,acc:gue:21}), but it is the most organized and most universal method, and at the current state of art, it seems to be the only one that is applicable to $\GS$.}
First, recall that the standard syntactic method for proving cut elimination in the
sequent calculus is to permute the cut rule upwards in the proof,
while decomposing the cut formula along its main connective, and so
inductively reduce the cut rank.  However, in systems formulated using
deep inference this method cannot be applied, as derivations can be
constructed in a more flexible way than in the sequent calculus. For
this reason, the \emph{splitting} technique has been developed in the
literature on deep
inference~\cite{gug:SIS,dissvonlutz,SIS-V,ross:MAV1,tubella:phd}.  In
this section we discuss how to prove the splitting and the context
reduction lemmas for our system and how their proofs differs form the
ones in the literature.  The detailed proofs of the statement in this
section can be found in Appendix~\ref{sec:splittingproofs}.

\subsection*{From sequent calculus to splitting for graphs.}
Consider the typical rule for $\ltens$ in the sequent calculus.
\[
		\vliiinf{\tensrule}{}{\Gamma, \phi \ltens \psi, \Delta}{\Gamma, \phi \;\;}{}{\;\; \Delta, \psi}
\]
The effect of the above rule can be simulated by applying the following splitting lemma for the prime graph $\ltens$.
\begin{lemma}[Splitting Tensor]\label{lem:splitting:tens}
	Let $\gG$, $A$ and $B$ be graphs.
	If $\proves[\GS]{\gG\lpar (A\ltens B)}$ then there is a context $\gC$ and there are graphs $\gK_A$ and $\gK_B$ 
	such that 
	there are derivations
	\begin{equation*}
		\ods{\gC\coons{\gK_A \lpar \gK_B}{R}}{\dD_G}{\gG}{\GS}
		\qomma		
		\ods{\unit}{\dD_A}{\gK_A \lpar A}{\GS}
		\qomma
		\ods{\unit}{\dD_B}{\gK_B \lpar B}{\GS}
		\quand		
		\ods{\unit}{\dD_C}{\gC}{\GS}
		\quadfs
	\end{equation*}	  
\end{lemma}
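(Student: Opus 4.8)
The plan is to prove the Splitting Tensor lemma by induction on the size $\gsize{\gG\lpar(A\ltens B)}$, analysing the bottommost inference rule of a given proof. By Observation~\ref{obs:size}, every rule in $\GS=\SGSd$ strictly decreases the size going bottom-up, so the induction is well-founded. The key structural fact I would exploit is that the conclusion $\gG\lpar(A\ltens B)$ contains the module $A\ltens B$, and crucially the $\ltens$-root at the top of its modular decomposition means that $A$ and $B$ together form a \emph{tensor node}; the splitting must recover how a proof ``distributes'' work between the $A$-side, the $B$-side, and the surrounding context $\gG$.

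First I would set up the base case: if the proof is the trivial proof of $\unit$, then $\gG\lpar(A\ltens B)=\unit$, forcing $A=B=\gG=\unit$, and we take $\gC=\coonso{R}$ with $\gK_A=\gK_B=\unit$. For the inductive step, I would look at the bottommost rule instance $\rr$ producing $\gG\lpar(A\ltens B)$ from some premise $\gG'$, and do a case analysis on $\rr\in\set{\aidr,\ssdr,\pdr}$ together with \emph{where} in the graph $\rr$ acts relative to the three regions $A$, $B$, and $\gG$. The favourable cases are those where $\rr$ acts entirely inside $\gG$, or entirely inside $A$, or entirely inside $B$: then I apply the induction hypothesis to the premise (whose size is strictly smaller) and reassemble the four derivations by prefixing or post-composing the single rule instance $\rr$ in the appropriate region. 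For instance, if $\rr$ acts inside $A$, turning $A'$ into $A$, the induction gives $\dD_A'\colon\unit\to\gK_A\lpar A'$ and I extend it by $\rr$ inside the $A$-component to reach $\gK_A\lpar A$, leaving $\dD_G$, $\dD_B$, $\dD_C$ untouched.

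The delicate cases are where the rule straddles the boundary of the tensor $A\ltens B$ — in particular an $\ssdr$ or $\pdr$ whose principal module overlaps both a part of $A\ltens B$ and a part of $\gG$, or which is itself responsible for creating the $\ltens$ separating $A$ from $B$. Here I expect to use Lemma~\ref{lem:newhope} to control the shape of the graph just below $\rr$, and Lemma~\ref{lem:module} on the intersection/union/difference of modules to argue that the overlapping configurations can be decomposed so that the induction hypothesis still applies to a smaller instance. When $\rr$ is the very rule that splits off $B$ from $A$ (an $\ssdr$ with the right $S$), I would recognize that the premise already gives the two sides as separate modules and read off $\gK_A$, $\gK_B$ directly from the partition of the context that $\ssdr$ induces.

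\textbf{The main obstacle}, and the reason this differs from the standard deep-inference splitting, is precisely the interaction with the context $\gC$ and the derivation $\dD_G\colon\gC\coons{\gK_A\lpar\gK_B}{R}\to\gG$: in the cograph/formula setting one can peel off the principal connective cleanly, but here a rule acting near the tensor boundary can entangle $\gG$ with $A\ltens B$ through shared neighbourhoods (the set $R$), so the existence of a \emph{single} context $\gC$ simultaneously provable and feeding both $\gK_A$ and $\gK_B$ is not immediate. I anticipate this is exactly why the paper states (in the introduction) that splitting and context reduction become \emph{inter-dependent} — so I would expect to prove this lemma \emph{simultaneously} with (or invoking) the Context Reduction Lemma by a shared induction on $\gsize{\cdot}$, using context reduction to handle the case where the bottommost rule lives in the context around the tensor, rather than treating context reduction as a downstream corollary of splitting as is usual.
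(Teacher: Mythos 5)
Your scaffolding largely matches the paper's actual strategy: induction on $\gsize{\gG\lpar(A\ltens B)}$, case analysis on the bottommost rule instance, and---as you correctly anticipated---a \emph{mutual} induction with Context Reduction (Lemma~\ref{lem:conred}), which is invoked precisely when the bottommost $\ssdr$ moves the whole tensor module inside $\gG$. However, there is a genuine gap: your induction hypothesis is stated only for conclusions of the shape $\gG\lpar(A\ltens B)$, and this is too weak to close the induction. The failing case is a bottommost $\ssdr$ that, read bottom-up, inserts a module $\gG'$ of $\gG$ into $A\ltens B$ at a position incompatible with its modular structure: the result can be a \emph{strictly larger prime graph}. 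Concretely, take $A=a_1\lpar a_2$ and $B=b$, and let the $\ssdr$ insert an external vertex $g$ with $S=\set{a_1}$; the premise then contains the induced path on $g,a_1,b,a_2$, i.e.\ a $\pfour$, so the principal structure in the premise is no longer a tensor and your tensor-only induction hypothesis simply does not apply to it. Lemma~\ref{lem:newhope} and Lemma~\ref{lem:module} tell you the premise decomposes via some prime $\gQ$ with $\sizeof{\vertices[\gQ]}>2$, but they provide no splitting statement for that configuration, so your claim that ``the induction hypothesis still applies to a smaller instance'' breaks down exactly here.

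The paper resolves this by proving the stronger statement first: splitting for an arbitrary prime principal graph (Lemma~\ref{lem:splitting:prime}), with its two alternatives (A) and (B), and then obtaining the tensor lemma as the special case in which the two alternatives collapse (since $\cneg\ltens=\lpar$). In that proof, your problematic scenario is case (b.III), whose inductive appeal is to the prime statement for the larger $\gQ$; a related issue arises when the tensor is produced by a $\pdr$ on a bigger prime graph with empty modules, i.e.\ $A\ltens B\isom\cneg\gQ\connn{\unit,\gL_2,\ldots,\gL_k}$ (case (e) in the paper, which additionally needs the multi-tensor Lemma~\ref{lem:splitting:multitens}), plus a dedicated associativity case (b.II) specific to $\ltens$. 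So your plan is salvageable, but only after strengthening the induction to the prime-graph statement---including formulating alternative (B) of Lemma~\ref{lem:splitting:prime}, which has no counterpart in the tensor statement and which you could not discover from the tensor case alone, precisely because for $\ltens$ the two alternatives coincide.
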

The graphs $\gA$ and $\gB$ in Lemma~\ref{lem:splitting:tens}
 play the role of formulas $\phi$ and $\psi$ in the sequent calculus rule $\tensrule$;
while $\gK_A$ and $\gK_B$ play the role of sequents $\Gamma$ and $\Delta$.
Thus, when we say in Lemma~\ref{lem:splitting:tens} that $\proves[\GS]{ \gK_A \lpar A }$ and $\proves[\GS]{ \gK_B \lpar B }$ hold,  this corresponds to the provability of the premises $\Gamma, \phi$ and $\Delta, \psi$ in the sequent calculus $\tensrule$-rule.
This means that Lemma~\ref{lem:splitting:tens} allows us to rewrite a \textit{sequent-like graph} $\gG\lpar (A\ltens B)$ into a form where the effect of the $\tensrule$-rule of the sequent calculus can be simulated.

\begin{rem}
  Note that this splitting is fundamentally different from the
  well-known \emph{splitting tensor lemma} for MLL proof
  nets~\cite{girard:87,danos:regnier:89,retore:96:tokyo}. Whereas the
  MLL splitting tensor lemma says that \emph{there exists} at least
  one tensor that is splitting, our splitting tensor lemma says that
  \emph{every} tensor is splitting.\footnote{In the language of proof nets, our
  splitting lemma says that every tensor is splitting for its empire
  \emph{and} the proof system is expressive enough to reduce
  everything else without touching that empire. However, for our graphs, we do not (yet) have a notion of \emph{empire}.}
\end{rem}

Let us use the derivation in~\eqref{eg:proofA}, recalled below in~\eqref{eg:proofA1} for convenience, as an example to illustrate
this pattern.
Observe that the conclusion is of the form $\gG \lpar ((\cneg{a} \lpar \cneg{c}) \ltens \cneg{b})$ for some graph $\gG$, 
and that after applying $\ssdr$ and $\aidr$ we have split the graph $\gG$ into two graph $\gK_A = a \ltens c$ and $\gK_B = b$,
such that $\gK_A \lpar \cneg{a} \lpar \cneg{c}$ and $\gK_B \lpar \cneg{b}$ are provable.
\begin{equation}
  \label{eg:proofA1}
	\od{\odi{\odi{\odi{\odh{\unit}}{\idr}{
				\begin{array}{c@{\quad\;\;}c@{\quad\;\;}c@{\quad\;\;}c@{\quad\;\;}c}
					\vna1 & \vnb1 & \va1 & \vb1  & \\
					\\
					\vnc1 &  & \vc1 
				\end{array}
				\edges{a1/c1,na1/nb1,nb1/nc1}
			}{}
		}{\aidr}{
			\begin{array}{c@{\quad\;\;}c@{\quad\;\;}c@{\quad\;\;}c@{\quad\;\;}c}
				\vna1 & \vnb1 & \va1 & \vb1  & \\
				\\
				\vnc1 &  & \vc1 & \vmodule{1}{\vnd1 \quad \vd1}
			\end{array}
			\edges{a1/c1,a1/M1,b1/M1,na1/nb1,nb1/nc1}
		}{}
	}{\ssdr}{
		\begin{array}{c@{\quad\;\;}c@{\quad\;\;}c@{\quad\;\;}c@{\quad\;\;}c}
			\vna1 & \vnb1 & \va1 & \vb1 \\
			\\
			\vnc1 & \vnd1 & \vc1 &\vd1
		\end{array}
		\edges{a1/c1,a1/d1,b1/d1,na1/nb1,nb1/nc1}
	}{}}
\end{equation}
Thus applying Lemma~\ref{lem:splitting:tens} to the sub-graph $(\cneg{a} \lpar \cneg{c}) \ltens \cneg{b}$ above leads us to proof \eqref{eg:proofA1}.

There is a crucial difference between the statement of Lemma~\ref{lem:splitting:tens} and other splitting lemmas in the  in the literature on deep inference~\cite{gug:SIS,dissvonlutz,SIS-V,ross:MAV1,tubella:phd}, namely the need of the context $\gC$.
 To see that this context is necessary, consider the following example graph:
\begin{equation}\label{eg:deep-ten}
    \begin{array}{c@{\quad\;\;}c@{\quad\;\;}c@{\quad\;\;}c@{\quad\;\;}c@{\quad\;\;}c@{\quad\;\;}c@{\quad\;\;}c@{\quad\;\;}c@{\quad\;\;}c}
     \vng1& \vnf1 & \vmodule{3}{\begin{array}{c@{\quad\;\;}c} \vna1 &  \vnb1 \; \end{array}}  & \vf1 &     \va1  &  \vb1 \\
     & & &  \vg1   
    \end{array}
    \edges{a1/b1,nf1/ng1,g1/M3,nf1/M3,f1/M3}  
\end{equation}
which is of the form $\gG \lpar (a \ltens b)$ for some $\gG$
and
it is provable in $\GS$.
It is impossible however to apply a derivation to $\gG$ to obtain two disjoint graphs $\gK_A$ and $\gK_B$, such that $\gK_A \lpar a$ and $\gK_B \lpar b$ are provable, unless we define a suitable context for $\gK_A \lpar \gK_B$.
Indeed we do have the following context formed from the provable graph $\gC = \cneg{g} \lpar \cneg{f} \lpar (g \ltens f)$.
{
\[
\gC\coonso{\set{\cneg f, f, g}}=
\begin{array}{c@{\quad\;\;}c@{\quad\;\;}c@{\quad\;\;}c@{\quad\;\;}c@{\quad\;\;}c@{\quad\;\;}c@{\quad\;\;}c@{\quad\;\;}c@{\quad\;\;}c}
	\vng1 &\vnf1& \vmodule1{\quad}   & \vf1 \\
	& & &  \vg1
\end{array}
\edges{nf1/ng1,g1/M1,nf1/M1,f1/M1}
\]
}
Following the naming convention in the statement of Lemma~\ref{lem:splitting:tens},
we have $\gK_A = \cneg{a}$ and $\gK_A = \cneg{b}$ such that $\proves[\GS]{ \gK_A \lpar a }$ and $\proves[\GS]{ \gK_B \lpar b }$ hold, and furthermore $\gG$ is formed from plugging $\gK_A \lpar \gK_B$ insider the above mentioned context formed from $\gC$.

From the above ingredients satisfying the condition of the splitting lemma we can always construct a proof.
We can apply $\ssdr$ twice to bring $\gK_A$ and $a$ together inside a context formed from $\gC$, and similarly for $\gK_B$ and $b$.
After having completed the proofs that consume $a$ and $b$ (by a simple application of $\aidr$) we are left with the provable graph $\gC$.
The important observation here is that it is impossible to apply a derivation that changes the shape of $\gC$ until we have completed the derivation that consumes $\cneg{a}$ and $\cneg{b}$.

The problem illustrated above with the context $\gC$, that cannot be removed until the end of the proof, is a fundamental problem that we have with graphs and that we do not have with formulas. Essentially it is due to the presence of $\pfour$ induced sub-graphs that create indirect dependencies between atoms, such as the indirect dependency between $\cneg{g}$ and $g$ in the above example.

\subsection*{Generalising splitting to prime graphs.}
The general idea of
a splitting lemma is that, in a provable ``sequent-like graph'', consisting of
a number of disjoint connected components, we can select any of these
components as the \emph{principal graph} and apply a derivation to
the other components, such that eventually a rule breaking down the
principal component can be applied.  This allows us to approximate the
effect of applying rules in the sequent calculus, even when no sequent calculus exists, as is the case when the principal graph is formed using a prime graph that is not a tensor.

\begin{restatable}[Splitting Prime]{lemma}{lemSplitPrime}\label{lem:splitting:prime}
	Let $\gG$ be a graph
	and
	$\gP\neq \lpar $ be a prime graph with $\sizeof{\vP}=n$, and $\gM_1,\dots, \gM_n$ be non-empty graphs.
	If $\proves[\GS]{\gG\lpar \gP\connn{\gM_1, \dots, \gM_n}}$, then one of the following holds:

	\begin{enumerate}[(A)]
		\item\label{prime:A}
		either there is a context $\gC\coonso R$ 
		and graphs $\gK_1$, \dots, $\gK_n$, 
		such that  
		there are derivations
		\begin{equation*}
                  \scalebox{.92}{$
			\ods{\gC\coons{\cneg\gP\connn{\gK_1,\ldots,\gK_n}}R}{\dD_\gG}{\gG}{\GS}
			\qomma
			\ods{\unit}{\dD_i}{\gK_i\lpar\gM_i}{\GS}
			\quand
			\ods{\unit}{\dD_C}{\gC}{\GS}
                        $}
			\hskip-1em
		\end{equation*}
		for all $i\in\set{1,\ldots,n}$,

		\item\label{prime:B}
		or there is a context $\gC\coonso R$ and graphs $\gK_X$ and $\gK_Y$,
		such that 
		there are derivations
		\begin{equation*}
                  \scalebox{.92}{$
			\ods{\gC\coons{\gK_X\lpar\gK_Y}R}{\dD_\gG}{\gG}{\GS}
			\qomma
			\ods{\unit}{\dD_X}{\gK_X\lpar\gM_i}{\GS}
			\qomma
			\ods{\unit}{\dD_Y}{\gK_Y\lpar\gP\connn{\gM_1,\ldots,\gM_{i-1},\unit,\gM_{i+1},\ldots,\gM_n}}{\GS}
			\quand
			\ods{\unit}{\dD_C}{\gC}{\GS}
                        $}
			\hskip-1em
		\end{equation*}
		for some $i\in\set{1,\ldots,n}$.
	\end{enumerate}
	
\end{restatable}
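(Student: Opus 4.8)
The plan is to argue by induction on the length of the given proof $\Pi$ of $\gG\lpar\gP\connn{\gM_1,\dots,\gM_n}$ in $\GS$, inspecting the bottommost rule instance $\rr$ of $\Pi$ together with the location of its redex relative to the displayed par-component $\gP\connn{\gM_1,\dots,\gM_n}$. Since $\GS=\SGSd$ contains only $\aidr$, $\ssdr$, and $\pdr$, and each of these strictly increases the graph size going downwards (Observation~\ref{obs:size}), the measure is well-founded; I expect to carry this induction simultaneously with Context Reduction and with the Splitting Tensor Lemma~\ref{lem:splitting:tens}, since the context $\gC$ extracted below is produced precisely by replacing provable modules with holes, which is what context reduction supplies. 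The base case is vacuous: a length-$0$ proof concludes $\unit$, and a length-$1$ proof concludes $\cneg a\lpar a=\lpar\connn{\cneg a,a}$, neither of which exhibits a non-par prime par-summand with non-empty modules, so there is nothing to prove until $\Pi$ is long enough to contain the rule that builds the prime structure.

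For the inductive step, the routine cases are those in which the redex of $\rr$ lies entirely inside $\gG$, or entirely inside a single module $\gM_i$. In the first case the premise of $\rr$ has the form $\gG'\lpar\gP\connn{\gM_1,\dots,\gM_n}$ with a strictly shorter proof; I apply the induction hypothesis and then re-append the single step $\rr$ to the resulting derivation $\dD_\gG$, leaving the other derivations untouched. In the second case the premise is $\gG\lpar\gP\connn{\gM_1,\dots,\gM_i',\dots,\gM_n}$ with $\gM_i'$ rewriting to $\gM_i$; I apply the induction hypothesis and graft $\rr$ onto the derivation ending in $\gK_i\lpar\gM_i'$ (resp.\ $\gK_X\lpar\gM_i'$ in alternative~\ref{prime:B}). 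The one delicate point is when $\rr=\aidr$ introduces both vertices of $\gM_i$, so that $\gM_i'=\unit$ and the induction hypothesis no longer applies to a prime composition (deleting a vertex from $\gP$ need not leave a prime graph, by Lemma~\ref{lem:newhope}); this situation is exactly what forces alternative~\ref{prime:B}, where $\gM_i$ is split off and the reduced graph $\gP\connn{\gM_1,\dots,\unit,\dots,\gM_n}$ is treated separately.

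The heart of the argument is the case where the redex of $\rr$ interacts with the prime structure $\gP$ itself. The principal subcase is $\rr=\pdr$ producing our component, i.e.\ an instance whose conclusion is $\cneg\gP\connn{\gK_1,\dots,\gK_n}\lpar\gP\connn{\gM_1,\dots,\gM_n}$. Because $\gP\connn{\gM_1,\dots,\gM_n}$ is a par-summand of the whole graph and the dual is par-attached to it by the rule, the surrounding context must be empty at that point, so the premise is $\gC_0\lpar\big((\gK_1\lpar\gM_1)\ltens\cdots\ltens(\gK_n\lpar\gM_n)\big)$ with $\gG=\gC_0\lpar\cneg\gP\connn{\gK_1,\dots,\gK_n}$. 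When $\gC_0=\unit$ this gives alternative~\ref{prime:A} at once via Corollary~\ref{cor:tensor}, taking $\gC\coonso R=\coonso{}$ and the $\gK_i$ as above; in general I invoke Lemma~\ref{lem:splitting:tens} $n-1$ times to peel off each tensor factor, obtaining proofs of the individual $\gK_i\lpar\gM_i$ and folding the residual pieces together with $\cneg\gP\connn{\gK_1,\dots,\gK_n}$ into a single provable context $\gC$. The remaining interaction subcases---an $\ssdr$ or $\pdr$ whose moved module coincides with, contains, or regroups the principal component, or acts at the $\gP$-level on one $\gM_i$---are first brought to a shallow position by Context Reduction and then steered into alternative~\ref{prime:B} by splitting off the affected module and recursing on $\gP\connn{\gM_1,\dots,\unit,\dots,\gM_n}$.

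I expect two genuine obstacles. The first, specific to graphs and absent for formulas, is the unavoidable residual context $\gC$: because $\pfour$ induced subgraphs create indirect dependencies between atoms, the factors peeled off by tensor splitting cannot in general be fully consumed, and one must show that the leftovers assemble into a single provable graph $\gC$ within which the whole splitting takes place. Controlling this is what couples the proof to Context Reduction and prevents the usual stratification in which context reduction follows splitting. The second obstacle is the bookkeeping of the non-principal interaction cases, so that every appeal---to the induction hypothesis, to Lemma~\ref{lem:splitting:tens}, or to context reduction---is made on a strictly smaller instance with respect to $\gsize{\gG\lpar\gP\connn{\gM_1,\dots,\gM_n}}$ paired with proof length; verifying simultaneous termination across these mutually recursive calls is where most of the technical care will be required.
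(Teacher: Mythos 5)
Your outline reproduces the paper's overall strategy (mutual induction with context reduction and tensor splitting, case analysis on the bottommost rule, alternative~\ref{prime:B} originating from the $\aidr$ that creates $\gM_i=\cneg a\lpar a$, and the principal $\pdr$ case resolved by peeling the multi-tensor), but it has a genuine gap in the induction measure. You lead with induction on the length of the given proof, and that measure does not support the argument even as you sketch it: in several interaction cases the induction hypothesis must be re-applied to proofs that are \emph{synthesized} during the argument rather than extracted as subproofs. When the bottommost $\ssdr$ buries the whole prime component inside $\gG$, context reduction (Lemma~\ref{lem:conred}) returns a fresh proof of $\gK\lpar\gP\connn{\gM_1,\ldots,\gM_n}$ of uncontrolled length; worse, in the hardest case --- where the bottommost $\ssdr$ moves a module $\gG'$ of $\gG$ \emph{into} the prime component so that the premise exhibits a strictly larger prime graph $\gQ$ with $\gQ\connn{\unit,\gN_2,\ldots,\gN_k}\isom\gP\connn{\gM_1,\ldots,\gM_n}$ --- one first applies the IH with principal graph $\gQ$, then uses Lemma~\ref{lem:g} to \emph{manufacture} a new proof of $\cneg\gQ\connn{\unit,\gK'_2,\ldots,\gK'_k}\lpar\gP\connn{\gM_1,\ldots,\gM_n}$ from the pieces returned, and applies the IH a second time to that constructed proof, whose length bears no relation to the original. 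The recursion terminates only because the conclusion graph strictly shrinks, so the measure must be $\gsize{\gG\lpar\gP\connn{\gM_1,\ldots,\gM_n}}$ from Definition~\ref{def:size} (every rule of $\GS$ strictly decreases it bottom-up, Observation~\ref{obs:size}); your hedge of ``pairing'' size with proof length does not rescue a length-first induction, and size-first makes length redundant --- the paper runs the entire mutual induction on size alone.

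The second gap is that your catchall for the remaining interaction cases --- ``first brought to a shallow position by Context Reduction and then steered into alternative~\ref{prime:B}'' --- both misdescribes and omits the cases that carry the real difficulty. Context reduction is invoked only in the deep-$\ssdr$ case just mentioned, and there the subsequent IH can return \emph{either} alternative, which is then lifted through the extracted context; nothing forces~\ref{prime:B}, and the same is true of the prime-enlarging case and of the $\pdr$ cases. Your outline is missing three case families entirely: (i) the prime-enlarging $\ssdr$ above, which in the paper requires its own nested case analysis and the double IH application via Lemma~\ref{lem:g}; (ii) the associativity subcase when $\gP=\ltens$ and the moved module merges into a tensor factor, $(\gA\ltens\gB)\coons{\gG'}{S}\isom\gA''\ltens(\gA'\ltens\gB)\coons{\gG'}{S'}$ with $\gA\isom\gA''\ltens\gA'$, which is resolved by \emph{two} successive applications of tensor splitting (Lemma~\ref{lem:splitting:tens}), the size decrease justified by $\gA''\neq\unit$ --- this situation cannot arise for prime $\gP\neq\ltens$ and is invisible in your taxonomy; and (iii) a bottommost $\pdr$ applied not to $\gP$ but to a strictly larger prime graph $\gQ$ occurring in $\gG$, where the displayed component is the dual residue, $\gP\connn{\gM_1,\ldots,\gM_n}\isom\cneg\gQ\connn{\unit,\gL_2,\ldots,\gL_k}$, handled by multi-tensor splitting plus Lemma~\ref{lem:g} plus the IH. Without (i)--(iii) the case analysis is not exhaustive, and since (i) is precisely where the context $\gC$ and the two-alternative disjunction earn their keep, the proof as proposed does not go through.
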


\begin{rem}
The immediate question to address at this point is why we need two cases, where splitting lemmas in the literature require only one case.
Notice that Lemma~\ref{lem:splitting:tens} is a special case of Lemma~\ref{lem:splitting:prime}, since $\ltens$ is a prime graph, as indicated
in~\eqref{eq:par-tens} in Section~\ref{sec:modules}.
In the case of the tensor, the two sub-cases of Lemma~\ref{lem:splitting:prime} collapse;
hence for $\ltens$ the above lemma for prime graphs collapses to the form of Lemma~\ref{lem:splitting:tens}.

The existence of these two separate cases is a surprising novelty 
	since all proof systems, on which the splitting technique has been applied so far,
	only employ atoms, binary connectives~\cite{gug:SIS,tubella:phd}, modalities~\cite{dissvonlutz,SIS-V}, and quantifiers~\cite{ross:MAV1}. Here, for the first time we present splitting for $n$-ary connectives with $n>2$.
\end{rem}

When the principal graph is formed using a prime graph with at least four vertices, the two cases of Lemma~\ref{lem:splitting:tens} are properly distinct.
We illustrate first the case \ref{prime:A} of Lemma~\ref{lem:splitting:tens},
by renormalising the proof~\eqref{eg:proofA1} as an example to explain
this idea.

In the conclusion of~\eqref{eg:proofA1} we have the disjoint union of three connected
graphs. 
We can select 
$\Pfour\connn{c,a,d,b}$ (the connected component on the right)
as
the principal component, and apply Case~\ref{prime:A} of 
Lemma~\ref{lem:splitting:prime} to
reorganise the derivation in such a way 
an instance of $\pdr$
involving 
$\Pfour\connn{c,a,d,b}$
can be applied (see Example~\ref{eg:proofN}).
This allows us to conclude as in Case~\ref{prime:A} applies, since all of four graphs $\cneg{a} \lpar a$, $\cneg{b} \lpar b$, $\cneg{c} \lpar c$ and $\cneg{d} \lpar d$ are provable.
That is, we can obtain a derivation of the form
\begin{equation}\label{eg:proof2}
	\od{
		\odi{
			\odi{
				\odi{
					\odh{\unit}
				}{4\times \aidr}{
						\begin{array}{c@{\quad}c@{\qquad}c@{\quad}c}
							\\
							\vmodule1{\begin{array}{c}a\\\cneg a\end{array}}&
							&&
							\vmodule4{\begin{array}{c}d\\\cneg d\end{array}}
							\\
							&
							\vmodule2{\begin{array}{c}b\\\cneg b\end{array}}&
							\vmodule3{\begin{array}{c}c\\\cneg c\end{array}}&
							\\
						\end{array}
						\edges{M1/M2,M2/M3,M3/M4}
						\bentedges{M2/M4/20,M1/M3/20,M1/M4/20}
				}{}
			}{\pdr}{
				\begin{array}{c@{\quad\;\;}c@{\quad\;\;}c@{\quad\;\;}c}
					\vna1 & \vnb1 & \va1 & \vb1 \\
					\\
					\vnc1 & \vnd1 & \vc1 &\vd1
				\end{array}
				\edges{a1/c1,a1/d1,b1/d1,nc1/nd1,na1/nb1,nb1/nc1}
			}{}
		}{\ssdr}{
		\begin{array}{c@{\quad\;\;}c@{\quad\;\;}c@{\quad\;\;}c}
			\vna1 & \vnb1 & \va1 & \vb1 \\
			\\
			\vnc1 & \vnd1 & \vc1 &\vd1
		\end{array}
		\edges{a1/c1,a1/d1,b1/d1,na1/nb1,nb1/nc1}
	}{}}
\end{equation}

To see where Case~{\ref{prime:B}} is required in order to destroy a prime graph by removing one of its modules, consider the graph below:
\begin{equation}\label{struct:deep2}
	\begin{array}{c@{\quad\;\;}c@{\quad\;\;}c@{\quad\;\;}c}
		\vna1 & \vnc1 &  \va1 \\
		\\[-1ex]
		\vc1 &  \vnb1 & \vb1 \\
	\end{array}
	\edges{na1/nb1,na1/c1,nc1/nb1}
\end{equation}
Here Case~\ref{prime:A} cannot be applied since when we select 
$\Pfour\connn{c,\cneg a,\cneg b,\cneg c}$
as the principal component and we try to apply
$\pdr$, then $c$ and $\cneg{c}$ can no longer communicate.  
Therefore, we must first move $a$ or $b$ into the structure and
apply an $\aidr$, in order to destroy the prime graph in such a way that $c$ and $\cneg{c}$ can still communicate.

Applying Case~{\ref{prime:B}} to the graph \ref{struct:deep2},  following  the naming convention in Lemma~\ref{lem:splitting:prime}, 
we have graphs $\gK_X = a$ and $\gK_Y = b$
such that the following two graphs are provable, as required for splitting.
	\begin{equation}
		\cneg a \qquad  a 
		\qquad\;
		\mbox{and}
		\qquad\;
		\begin{array}{c@{\quad\;\;}c}
			\vu1 & \vnc1   \\\\
			\vc1 &  \vnb1 
		\end{array}
		\edges{c1/u1,u1/nb1,nc1/nb1}
		\qquad
		\cneg b
	\end{equation}
Then we can reassemble (using $\ssdr$)  the first steps of proof below
\begin{equation*}
	\od{
		\odi{
			\odi{
				\odi{
					\odi{
						\odh{\unit}
					}{2\times \aidr}{
						\begin{array}{c@{\quad\;\;}c@{\quad\;\;}c@{\quad\;\;}c}
							\vc1 & \vnc1 & \vmodule{1}{\vnb1 \quad\;\; \vb1 }
						\end{array}
						\edges{nc1/M1}
					}{}
				}{\ssdr}{
					\begin{array}{c@{\quad\;\;}c@{\quad\;\;}c@{\quad\;\;}c}
						\vc1 & \vnc1 & \vnb1 & \vb1 
					\end{array}
					\edges{nc1/nb1}
				}{}
			}{\aidr}{
				\begin{array}{c@{\quad\;\;}c@{\quad\;\;}c@{\quad\;\;}c}
					\vmodule1{\vna1 \quad \va1}& \vnc1 \\
					\\
					\vc1 &  \vnb1 & \vb1 \\
				\end{array}
				\edges{M1/nb1,M1/c1,nc1/nb1}
			}{}
		}{\ssdr}{
			\begin{array}{c@{\quad\;\;}c@{\quad\;\;}c@{\quad\;\;}c}
				\vna1 & \vnc1 &  \va1 \\
				\\
				\vc1 &  \vnb1 & \vb1 \\
			\end{array}
			\edges{na1/nb1,na1/c1,nc1/nb1}
		}{}
	}
\end{equation*}


\subsection*{Splitting for Atoms}\label{sec:splitproofs-atomic}

We also need a splitting lemma where the principal graph is just a singleton.

\begin{restatable}[Atomic Splitting]{lemma}{lemAtomSplit}\label{lem:splitting:atom}
	Let $\gG$ be a graph.
	If\/ $\proves[\GS]{G\lpar a}$~ for an atom $a$, then there is a context $\gC\coonso R$ such that 
 there are derivations
	\begin{equation*}
		\ods{\gC\coons{\cneg a}R}{\dD_G}{\gG}{\GS}
		\quand
		\ods{\unit}{\dD_C}{\gC}{\GS}
		\quadfs
	\end{equation*}
\end{restatable}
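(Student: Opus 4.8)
The plan is to argue by induction on the size of the given proof of $\gG\lpar a$ (equivalently, by Observation~\ref{obs:size}, on $\gsize{\gG\lpar a}$, since every rule of $\GS$ strictly decreases the size going bottom-up). I fix a sequential form of the proof and inspect its bottommost rule instance $\rr\in\set{\aidr,\ssdr,\pdr}$, with premise $\gG'$ and conclusion $\gG\lpar a$. The basic observation driving the whole argument is that, because $\lpar$ is disjoint union, the vertex labelled $a$ is isolated in $\gG\lpar a$; moreover $\ssdr$ and $\pdr$ leave the vertex set unchanged and $\aidr$ only ever introduces a fresh dual pair, so the vertex $a$ is either created by this bottommost instance (possible only if $\rr=\aidr$) or is already present in $\gG'$.

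The generic case is when the redex of $\rr$ does not contain the vertex $a$. Then $\rr$ creates no edge incident to $a$, so $a$ is still isolated in $\gG'$, i.e.\ $\gG'=\gG''\lpar a$ where $\gG''$ is $\gG$ with the single step $\rr$ undone. Since $\proves[\GS]{\gG''\lpar a}$ by a strictly smaller proof, the induction hypothesis yields a context $\gC\coonso R$ together with derivations $\ods{\gC\coons{\cneg a}R}{}{\gG''}{\GS}$ and $\ods{\unit}{}{\gC}{\GS}$. As the redex of $\rr$ lies in the $\gG$-part, it is available at the bottom of the first derivation, so I simply append the single step $\rr$ to obtain $\ods{\gC\coons{\cneg a}R}{}{\gG}{\GS}$, while the same $\gC$ remains provable; this settles the case.

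The base case is when $\rr=\aidr$ and its redex is precisely the pair containing $a$, i.e.\ $\rr$ introduces $\cneg a\lpar a$. Since $a$ is isolated, the neighbour set of the introduced module must be empty, so $\gG\lpar a=\gG''\lpar\cneg a\lpar a$ where $\gG''$ is the (provable) premise, whence $\gG=\gG''\lpar\cneg a$. Taking $\gC:=\gG''$ and $R:=\unit$ gives $\gC\coons{\cneg a}R=\gG''\lpar\cneg a=\gG$ via the identity derivation, with $\gC$ provable, exactly as required.

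The main obstacle is the remaining case, where the bottommost rule is an $\ssdr$ or a $\pdr$ whose redex contains the vertex $a$. Here $a$ is isolated in the conclusion but genuinely connected in the premise $\gG'$ (for $\ssdr$ the side condition $S\neq\unit$ forces at least one incident edge to be removed going downward, and for $\pdr$ the prime structure creates indirect dependencies of the kind illustrated by~\eqref{eg:deep-ten}), so $\gG'$ is no longer of the form $\gH\lpar a$ and the induction hypothesis cannot be applied to it directly. Resolving this requires relating $a$ to the dual vertex with which it is ultimately annihilated: one must destroy the module, or the prime component, through which $a$ is connected, which is exactly what Case~\ref{prime:B} of Lemma~\ref{lem:splitting:prime} (and Lemma~\ref{lem:splitting:tens}) deliver, while simultaneously reconstructing the surrounding provable context $\gC$. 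This is why atomic splitting cannot be proved in isolation but is established as part of the simultaneous induction with the prime and tensor splitting lemmas and context reduction carried out in Appendix~\ref{sec:splittingproofs}; the delicate bookkeeping throughout is to track the neighbour set $R$ of $\cneg a$ so that plugging $\cneg a$ back into $\gC\coonso R$ reproduces exactly $\gG$.
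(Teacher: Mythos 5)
Your case analysis on the bottommost rule is the paper's strategy, and two of your cases match it exactly: the case where the redex avoids the vertex $a$ (the paper's case (a), where one appends the rule to the derivation obtained from the induction hypothesis) and the $\aidr$ base case (the paper's case (d); your witness choice $\gC=\gG''$, $R=\emptyset$ with an identity $\dD_G$ differs harmlessly from the paper's $\gC=\unit$ with $\dD_G=\dD_{G'}\lpar\cneg a$). But the remaining cases — precisely the ones you call ``the main obstacle'' — are not proved in your proposal; they are only deferred to the appendix, which is the proof you were asked to reconstruct. Moreover, the one concrete mechanism you name is not the one that works: Case~\ref{prime:B} of Lemma~\ref{lem:splitting:prime} is never invoked in the paper's proof of Atomic Splitting, and it cannot be, because a single atom is not of the form $\gP\connn{\gM_1,\dots,\gM_n}$ for a prime $\gP$, so neither prime nor tensor splitting can be applied with $a$ as the principal graph.

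What the paper actually does in the hard cases is the following. If the bottommost rule is an $\ssdr$ whose premise is $\gG\coons{a}{S}$ (i.e., $a$ was deep in the premise and is surfaced by the last step), it applies Context Reduction (Lemma~\ref{lem:conred}) to the context $\gG\coonso{S}$ around $a$, obtaining a graph $\gK$ with $\proves[\GS]{\gK\lpar a}$ together with a context derivation; since the $\ssdr$ instance is non-trivial the produced context is non-empty, so $\gless{\gK}{\gG}$, and the paper then recurses into \emph{Atomic Splitting itself} on $\gK\lpar a$ and composes the two contexts. (The subcase where a module is moved \emph{into} $a$'s position reduces to this one by the isomorphism $a\coons{\gG'}{\set{a}}\isom a\ltens\gG'$.) If the bottommost rule is a $\pdr$ whose conclusion contains $a$ — which can happen because the $\gN_i$ in $\pdr$ may be empty, so one entire side of the rule collapses to the single vertex $a$ — the premise has shape $\gG'\lpar((\gN_1\lpar a)\ltens\gN_2\ltens\cdots\ltens\gN_m)$, and the paper applies Multi-tensor Splitting (Lemma~\ref{lem:splitting:multitens}) and then the Atomic Splitting induction hypothesis to the strictly smaller $\gL_1\lpar\gN_1\lpar a$, reassembling with $\ssdr$ and $\pdr$. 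Your diagnosis that $a$'s connectivity in the premise blocks a direct induction, and that the lemma lives inside a mutual induction with splitting and context reduction measured by $\gsize{\cdot}$, is correct in spirit; but three of the five cases carry no argument, and the route you sketch for them (destroying the prime component via Case~\ref{prime:B}) is not how the circularity is broken — Context Reduction plus self-recursion, and Multi-tensor Splitting plus self-recursion, are.
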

Atomic splitting simply states that for an atom $a$ in the proof, there is a matching $\cneg{a}$ somewhere else in the proof. Note that there may be more than one $\cneg{a}$, in which case the lemma will pick out exactly the instance that interacts with $a$ via an $\aidr$ instance in the original proof.

As with splitting for prime graphs the novelty compared to the literature is the presence of the context~$C$.

\subsection*{Context reduction.}
Observe that the splitting lemma is not applied to a $\lpar$-graph. The~$\lpar$ plays in $\GS$ a similar role as the comma in the sequent calculus.
Splitting lemmas such as Lemma~\ref{lem:splitting:prime} apply in a sequent-like 
context, sometimes called a \textit{shallow context}. Splitting cannot be applied to proofs where the principal graph (that is not a $\lpar$) appears at any depth in the conclusion of the proof. In order to be able to use splitting to eliminate cuts at arbitrary depth, we need to extend splitting from shallow contexts to deep contexts. For this, we need the context reduction lemma, stated below.

\def\XXX{{\mathcal X}}
\def\YYY{{\mathcal Y}}
\begin{restatable}[Context reduction]{lemma}{lemConRed}\label{lem:conred}
	Let $A$ be a graph and $\gG\coonso{S}$ be a context, such that\/
	$\proves[\GS]{\gG\coons{\gA}S}$. Then there are a
	context $\gC\coonso R$ and a graph $\gK$, such that there are derivations
	\begin{equation*}
		\ods{\gC\coons{\gK\lpar\XXX}R}{\dD_G}{\gG\coons{\XXX}S}{\GS}
		\quand
		\ods{\unit}{\dD_A}{\gK\lpar\gA}{\GS}
		\quand
		\ods{\unit}{\dD_C}{\gC}{\GS}
	\end{equation*}
	for any graph $\XXX$.
\end{restatable}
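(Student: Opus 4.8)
The plan is to prove the statement by a well-founded induction carried out \emph{simultaneously} with the three splitting lemmas (Lemmas~\ref{lem:splitting:tens}, \ref{lem:splitting:prime} and~\ref{lem:splitting:atom}); this is the point where our situation departs from the standard deep-inference development, in which context reduction is a mere corollary of splitting. The measure combines the size $\gsize{\gG\coons{\gA}S}$ of the ambient graph with the \emph{context size} (the number of vertices of $\gG\coons{\gA}S$ outside $\gA$), so that every recursive call below is on a strictly smaller instance. The case analysis is driven by the modular decomposition: I would locate the smallest module $\gM$ of $\gG\coons{\gA}S$ that strictly contains the hole and apply Lemma~\ref{lem:newhope} to $\gM$, obtaining $\gM=\gQ\connn{\gA,\gB_2,\ldots,\gB_k}$ where $\gQ$ is a $\lpar$, a $\ltens$, or a prime graph and the hole occupies one slot. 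For the base case of a trivial context ($\gG\coons{\gA}S=\gA$) I take $\gC=\unit$, $\gK=\unit$, $R=\emptyset$: then $\gC\coons{\gK\lpar\XXX}{R}=\XXX=\gG\coons{\XXX}S$, while $\gK\lpar\gA=\gA$ is provable by hypothesis and $\gC=\unit$ trivially. (A singleton $\gG\coons{\gA}S$ cannot occur, since a lone atom is not provable.)

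For the \textbf{par parent} case $\gM=\gA\lpar\gB$ (grouping all siblings as $\gB$), I would merge the hole with $\gB$: the context $\gG^\flat\coonso{S}$ whose hole is $\gM$ has strictly smaller context size, and $\gG^\flat\coons{\gM}S=\gG\coons{\gA}S$ is provable. The induction hypothesis then yields $\gC^\flat,\gK^\flat,R^\flat$ with $\proves[\GS]{\gK^\flat\lpar(\gA\lpar\gB)}$, $\proves[\GS]{\gC^\flat}$, and a derivation in $\GS$ from $\gC^\flat\coons{\gK^\flat\lpar\YYY}{R^\flat}$ to $\gG^\flat\coons{\YYY}S$ for every $\YYY$. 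Setting $\gK:=\gK^\flat\lpar\gB$, $\gC:=\gC^\flat$, $R:=R^\flat$ closes this case: $\gK\lpar\gA=\gK^\flat\lpar\gB\lpar\gA$ is provable, and instantiating $\YYY:=\gB\lpar\XXX$ gives a derivation from $\gC\coons{\gK\lpar\XXX}{R}$ to $\gG^\flat\coons{\gB\lpar\XXX}S$, which is $\gG\coons{\XXX}S$ up to the isomorphism that re-reads the merged module.

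For the \textbf{tensor or prime parent} case ($\gM=\gA\ltens\gB$, or $\gM=\gP\connn{\gA,\gB_2,\ldots,\gB_k}$ with $\gP$ prime) I would again merge and invoke the induction hypothesis on the smaller context $\gG^\flat\coonso{S}$ with hole $\gM$, producing a \emph{shallow} provable graph $\proves[\GS]{\gK^\flat\lpar\gM}$ together with $\proves[\GS]{\gC^\flat}$ and the recursive derivation from $\gC^\flat\coons{\gK^\flat\lpar\YYY}{R^\flat}$ to $\gG^\flat\coons{\YYY}S$. Now that the principal graph $\gM$ sits shallowly, I can apply the matching splitting lemma: Lemma~\ref{lem:splitting:tens} for $\ltens$, Lemma~\ref{lem:splitting:prime} for prime $\gP$ (iterating on the smaller principal graph in its case~\ref{prime:B} when the removed slot is not the one holding $\gA$), and Lemma~\ref{lem:splitting:atom} for the degenerate case where $\gA$ is a bare atom. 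Splitting returns a context $\gC''$ and graphs $\gK_\gA,\gK_{\gB_2},\ldots$ with $\proves[\GS]{\gK_\gA\lpar\gA}$, $\proves[\GS]{\gK_{\gB_i}\lpar\gB_i}$, $\proves[\GS]{\gC''}$, and a derivation reconstructing $\gK^\flat$ from $\gC''\coons{\cneg\gP\connn{\gK_\gA,\gK_{\gB_2},\ldots}}{R''}$. I then set $\gK:=\gK_\gA$, so $\proves[\GS]{\gK\lpar\gA}$ is immediate, and assemble $\gC$ out of $\gC^\flat$, $\gC''$, the dual prime skeleton $\cneg\gP\connn{[\cdot],\gK_{\gB_2},\ldots}$, and the provable residuals $\gK_{\gB_i}\lpar\gB_i$, using $\ssdr$ and $\pdr$ to re-route edges so that the hole ends up tensored with $\gB$ exactly as demanded by $\gG\coons{\XXX}S$.

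The hard part will be precisely this last assembly, together with the bookkeeping it rests on. Unlike the cograph case, the residual $\gC$ cannot be discarded: the $\pfour$-induced indirect dependencies (the phenomenon of graph~\eqref{eg:deep-ten}) force the killing information for $\gA$ to be packaged as a shallow par-context $\gK\lpar[\cdot]$ \emph{inside} a genuinely provable context $\gC$, rather than as a bare $\gK\lpar[\cdot]$. I must check that $\XXX$'s neighbourhood in the assembled premise is already large enough that reaching $\gG\coons{\XXX}S$ needs only edge-deletions ($\ssdr$, $\pdr$) and $\aidr$ steps and never edge-creations (which are unavailable going downward in $\GS$); this is exactly what dictates that $\gB$ and the $\gK_{\gB_i}$-residuals be absorbed into $\gC$, and where Remark~\ref{rem:context} and Corollary~\ref{cor:context} are used to embed the sub-derivations. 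Finally, soundness of the whole argument hinges on the joint termination measure: each merge strictly shrinks the context while fixing the ambient graph, whereas each appeal to splitting must be on a graph no larger than the original instance. Interlocking these two measures so that the mutual recursion between context reduction and splitting is well-founded is the genuinely delicate global point that separates this proof from the usual splitting-then-context-reduction pattern.
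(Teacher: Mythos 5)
Your proposal is correct in its architecture and reaches the same key insight as the paper --- context reduction cannot be a mere corollary of splitting here, the two must be proved by a mutual induction, and the killing material for $\gA$ must be packaged as a shallow $\gK\lpar\coonso{}$ inside a genuinely provable residual context $\gC$ --- but your recursion runs in the opposite direction from the paper's. The paper works outside-in: it writes $\gG\coons{\gA}S\isom\gG''\lpar\gG'\coons{\gA}S$ with $\gG'\coons{\gA}S$ not a par, decomposes that component at its \emph{outermost} prime node as $\gP\connn{\gM_1\coons{\gA}{S'},\gM_2,\ldots,\gM_n}$, applies Splitting Prime \emph{first} (to the full graph, with the hole still buried inside $\gM_1$), and only then invokes the induction hypothesis on the strictly smaller $\gK_1\lpar\gM_1\coons{\gA}{S'}$, respectively on $\gK_Y\lpar\gP\connn{\gM_1\coons{\gA}{S'},\unit,\gM_3,\ldots,\gM_n}$ in the case-(B) variants. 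You work inside-out: merge the hole into its innermost parent module, recurse first (same ambient graph, strictly smaller context), and apply splitting afterwards to the now-shallow $\gK^\flat\lpar\gM$, where the hole's slot is exactly $\gA$, so splitting case~\ref{prime:A} hands you $\gK:=\gK_\gA$ with $\proves[\GS]{\gK_\gA\lpar\gA}$ directly and no second recursive call is needed. Both routes assemble $\gC$ and $\dD_G$ from the same toolkit ($\ssdr$, $\pdr$, nesting of provable contexts as in Lemma~\ref{lem:context}); note though that in splitting's case~\ref{prime:A} the dual skeleton and the residuals $\gK_{\gB_i}\lpar\gB_i$ are created \emph{inside} $\dD_G$ by $\pdr$ from proofs of $\unit$, not placed in $\gC$, whereas only in case~\ref{prime:B} is the skeleton with residual modules absorbed into the context, exactly as in the paper's $\gC\coonso{R}=\gC'\coons{\gK_Y\lpar\gP\connn{\gC''\coonso{R''},\gM_2,\ldots,\gM_n}}{R'}$ --- your wording slightly conflates these two assemblies.

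The point you flag but do not discharge is indeed where the real care lies, although your situation is no worse than the paper's: after your IH, the splitting call on $\gK^\flat\lpar\gM$ is only bounded by $\gsize{\gK^\flat\lpar\gM}\le\gsize{\gG\coons{\gA}S}$ (with equality possible when $\gC^\flat=\unit$), so the mutual induction cannot be justified by ``every cross-call strictly decreases''. What saves it --- and what implicitly saves the paper too, since its own appeal to Splitting Prime inside context reduction is likewise on a graph of equal size --- is a stratification: at each size, Splitting Prime is established before context reduction, because splitting's only appeal to context reduction (its case (c)) passes through the strictly smaller premise of a non-trivial $\ssdr$. Spelling this out would complete your argument. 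Two minor repairs: ``the smallest module strictly containing the hole'' need not be unique (two minimal modules above $\gA$ can intersect in $\gA$ itself, by Lemma~\ref{lem:module}), so you should take \emph{some} minimal such module and, when the parent is $\lpar$ or $\ltens$, regroup by associativity so that the hole fills a single slot; and your appeal to Lemma~\ref{lem:splitting:atom} is unnecessary --- in the paper the dependency runs the other way, atomic splitting being proved \emph{from} context reduction, and your tensor/prime machinery already covers the case of an atomic $\gA$.
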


The idea of context reduction is that, if we select any module, called $\gA$ in Lemma~\ref{lem:conred} above, we can always rewrite its context, say $\gG\coons{\cdot}S$, such that it is of the form $\gK \lpar \coons{\cdot}{\emptyset}$ inside some provable context. The use of $\XXX$ reflects the fact that, in doing so, you never need to change $\gA$; hence $\gA$ could be replaced by another graph $\XXX$ and the same transformation can be applied to the context.

Similarly to splitting, the novelty compared to context reduction in the literature is that we must keep track of a provable context.
This is required to cope with graphs such as the following:
\begin{equation}\label{struct:deep}
  \begin{array}{c@{\quad\;\;}c@{\quad\;\;}c@{\quad\;\;}c@{\quad\;\;}c}
    & & \vna1 &  \\
    \\
    \va1 &\vnb1 & \vnc1 &  \vc1 &  \vb1 \\
  \end{array}
  \edges{na1/nb1,na1/nc1,na1/c1,b1/c1}
\end{equation}
Suppose that we are interested in the singleton module $\cneg{a}$ in the above graph, and we wish to apply context reduction to it.
When we apply Lemma~\ref{lem:conred}, clearly there is only one choice of graph $\gK$ such that $\gK \lpar \cneg{a}$ is provable, namely $\gK = a$.
Observe, furthermore, that in order to rewrite the rest of the above graph, so it is of the form $\gC\coons{\gK\lpar\cneg{a}}R$
such that $\gC$ is provable, the only choice is to set $\gC = \cneg{b} \lpar \cneg{c} \lpar (b \ltens c)$, as done in the following derivation:
\begin{equation}\label{proof:deep}
	\od{\odi{\odh{
			\begin{array}{cc@{\quad\;\;}c@{\quad\;\;}c@{\quad\;\;}c}
				\vmodule{1}{\va1 \qquad \vna1\;\; } \\
				\\
				\vnb1 & \vnc1 &  \vc1 &  \vb1 \\
			\end{array}
			\edges{M1/nb1,M1/nc1,M1/c1,b1/c1}
		}
	}{\ssdr}{
		\begin{array}{c@{\quad\;\;}c@{\quad\;\;}c@{\quad\;\;}c@{\quad\;\;}c}
			& & \vna1 &  \\
			\\
			\va1 &\vnb1 & \vnc1 &  \vc1 &  \vb1 \\
		\end{array}
		\edges{na1/nb1,na1/nc1,na1/c1,b1/c1}
	}{}}
\end{equation}
which can be completed to a proof, by applying $\aidr$ to the module $a \lpar \cneg{a}$ and then $\idr$ to prove the resulting graph $\gC$.
Examples such as~\eqref{eg:deep-ten}, \eqref{struct:deep2} and~\eqref{proof:deep}, where each conclusion can only be proven if rules can be applied deep inside a graph instead of to a graph at the top level of the modular decomposition of the conclusion, show that deep inference is necessary for developing a proof theory on graphs allowing $\pfour$ as an induced sub-graph.

\begin{figure}[!tb]
\[
{
\od{
  \odi{
    \odI{
      \odi{
	\odI{
   	  \odI{
                 \odh{
	\unit
		}      
	}{3\times \idr }{
\begin{array}{c@{\quad\;\;}c@{\quad\;\;}c@{\quad\;\;}c}
          \begin{array}{c}\vng1 \\\\ \vnf1 \end{array}
          & \vmodule{3}{
			 \begin{array}{c@{\quad\;\;}c@{\quad\;\;}c@{\quad\;\;}c}
				\vmodule{4}{\begin{array}{c@{\quad\;\;}c}  \va1  &  \vna1 \end{array}}   &  \vnc1   \\\\
			 	\vc1	&  \vmodule{5}{\begin{array}{c@{\quad\;\;}c}  \vb1  &  \vnb1 \end{array}}
			 \end{array}
		 }
          & \begin{array}{r}\vf1 \\\\ \vg1 \end{array}
        \end{array}
        \edges{nf1/ng1,g1/M3,nf1/M3,f1/M3,M4/M5,M4/c1,M5/nc1}  
}{}
}{2\times \ssdr }{
        \begin{array}{c@{\quad\;\;}c@{\quad\;\;}c@{\quad\;\;}c}
          \begin{array}{c}\vng1 \\\\ \vnf1 \end{array}
          & \vmodule{3}{
			 \begin{array}{c@{\quad\;\;}c@{\quad\;\;}c@{\quad\;\;}c}
				\vna1  &  \vnc1  & \va1 \\\\
			 	\vc1	&  \vnb1   & \vb1
			 \end{array}
	    }
          & \begin{array}{r}\vf1 \\\\ \vg1 \end{array}
        \end{array}
        \edges{nf1/ng1,g1/M3,nf1/M3,f1/M3,na1/nb1,na1/c1,nb1/nc1}        
	}{}
      }{\ssdr}{
        \begin{array}{c@{\quad\;\;}c@{\quad\;\;}c@{\quad\;\;}c}
          \begin{array}{c}\vng1 \\\\ \vnf1 \end{array}
          & \vmodule{3}{\begin{array}{c@{\quad\;\;}c}
                         \vna1  &  \vnc1 \\\\
			 \vc1	&  \vnb1
                         \end{array}}
          & \begin{array}{r}\vf1 \\\\ \vg1 \end{array}
          &
          \begin{array}{c}
           \va1 \\\\
           \vb1
          \end{array}
        \end{array}
      \edges{nf1/ng1,g1/M3,nf1/M3,f1/M3,na1/nb1,na1/c1,nb1/nc1}
      }{}
      }{4\times \aidr}{
\begin{array}{c@{\quad\;\;}c}
        \vmodule{2}{\begin{array}{c@{\quad\;\;}c@{\quad\;\;}c@{\quad\;\;}c}
     \begin{array}{c}\vng1 \\\\ \vnf1 \end{array}
     & \vmodule{3}{\begin{array}{c@{\quad\;\;}c}
                         \vna1  &  \vnc1 \\\\
			 \vc1	&  \vnb1
                         \end{array}}
     &
     \begin{array}{c@{\quad\;\;}c}\vf1 & \va1 \\\\ \vg1 & \vb1 \end{array}
    \end{array}}
&
\begin{array}{c@{\quad\;\;}c@{\quad\;\;}c@{\quad\;\;}c@{\quad\;\;}c@{\quad\;\;}c}
		\vmodule{4}{\begin{array}{c@{\quad\;\;}c}  \ve1  &  \vne1 \end{array}}
         \\&   \vmodule{6}{\begin{array}{c@{\quad\;\;}c}  \vh1  &  \vnh1 \end{array}} \\
            \vmodule{5}{\begin{array}{c@{\quad\;\;}c}  \vd1  &  \vnd1 \end{array}} 
        \end{array}
\end{array}
      \edges{nf1/ng1,g1/M3,nf1/M3,f1/M3,M2/M4,M2/M5,M2/M6,M4/M5,M4/M6,M5/M6,na1/nb1,na1/c1,nb1/nc1}
      }{}
    }{\pdr}{
    \begin{array}{c@{\quad\;\;}c}
        \vmodule{2}{\begin{array}{c@{\quad\;\;}c@{\quad\;\;}r}
     \begin{array}{c}\vng1 \\\\ \vnf1 \end{array}
     & \vmodule{3}{\begin{array}{c@{\quad\;\;}c}
                         \vna1  &  \vnc1 \\\\
			 \vc1	&  \vnb1
                         \end{array}}
     &
     \begin{array}{r}\vf1 \\\\ \vg1 \end{array}
    \end{array}}
&
    \begin{array}{cc@{\quad\;\;}c@{\quad\;\;}c@{\quad\;\;}c@{\quad\;\;}c@{\quad\;\;}c@{\quad\;\;}c@{\quad\;\;}c@{\quad\;\;}c}
                   &  \vne1 & \vnh1  &            \ve1 &        \vh1  \\\\
                                                    &     &  \vnd1    &   
	\vmodule{1}{\begin{array}{c@{\quad\;\;}c}  \va1  &  \vb1 \end{array}}  & \vd1 
    \end{array}
    \end{array}
    \edges{M1/h1,M1/d1,nh1/nd1,nd1/ne1,e1/h1,nf1/ng1,g1/M3,ne1/M2,nf1/M3,f1/M3,na1/nb1,na1/c1,nb1/nc1} 
    }{}
}
}
\]
\caption{A proof used to illustrate context reduction and splitting. It also shows a non-trivial use of the $\pdr$-rule that cannot be achieved using the $\idr$-rule.
}\label{fig:eg}
\end{figure}

For a more substantial example consider the derivation in Figure~\ref{fig:eg}.
Assume, we aim to apply splitting to the $N$-shaped induced subgraph in the conclusion consisting of vertices labelled $c$, $\cneg{a}$, $\cneg{b}$ and $\cneg{c}$.
Clearly, the splitting lemma cannot be directly applied to that induced subgraph,
so firstly we reduce the context.
The three bottommost inference steps shown in Figure~\ref{fig:eg} correspond to the steps taken by context reduction that do not touch the $N$-shaped subgraph but bring the vertices labelled $a$ and $b$ next to our subgraph.
At that point Case~\ref{prime:B} of splitting for prime graphs can be applied to our $N$-shaped subgraph, taking into account that, together with the vertices labelled $a$ and $b$, they form a sequent-like module of the graph at that point in the derivation. At this point, the splitting lemma provides us with the information needed to complete the proof.

An additional reason why Figure~\ref{fig:eg} is interesting is that it is an example of a proof where the effect of the $\pdr$-rule cannot be simulated by using the $\idr$-rule.
Therefore, this example emphasises the need for $\pdr$ or at least a rule to the effect of $\pdr$ in an analytic proof system on graphs.

\subsection*{Proving splitting and context reduction.}\label{sec:splitproofs-prime}

The detailed proofs of these lemmas can be found in Appendix~\ref{sec:splittingproofs}. Here we present an outline and the basic ideas.
As standard, we proceed by exhausting all possible permutations of rules where the main points of interest are where the rules change the principal connective in some way.
The novelty compared to results in the literature on deep inference is that, due to the important role of contexts we prove splitting and context reduction together by  mutual induction, using the size $\gsize{\gG}$ of a graph (see Definition~\ref{def:size}) as induction measure.

During the proof of splitting and context reduction we must handle cases involving the $\pdr$ rule. 
Observe that when the $\pdr$ rule is applied we obtain at least four graphs connected by tensors.
Since we will need to handle frequently such cases, we find it convenient to state splitting for such multi-tensors as a lemma.
In what follows we generalise the Lemma~\ref{lem:splitting:tens} to the $n$-ary tensor for any $n\geq 2$ (or \emph{multi-tensor}), which is clearly a consequence of splitting for tensor (which, in turn, we have discussed is a special case of splitting for prime graphs).

\begin{restatable}[Splitting Multi-tensor]{lemma}{lemSplitMulti}\label{lem:splitting:multitens}
	Let $\gG$ be a graph,  and $\gA_1,\dots, \gA_n$ be non-empty graphs.
	If $\proves[\GS]{\gG\lpar ({\gA_1\ltens \cdots \ltens \gA_n})}$, 
	then there is a context $\gC\coonso R$ and graphs $\gK_1$, \dots , $\gK_n$, 
	such that  
	there are derivations
	\begin{equation*}
		\ods{\gC\coons{{\gK_1\lpar \cdots\lpar \gK_n}}R}{\dD_\gG}{\gG}{\GS}
		\qomma
		\ods{\unit}{\dD_i}{\gK_i\lpar\gA_i}{\GS}
		\quand
		\ods{\unit}{\dD_C}{\gC}{\GS}
		\hskip-1em
	\end{equation*}
	for all $i\in\set{1,\ldots,n}$.
\end{restatable}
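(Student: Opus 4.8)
The plan is to prove the statement by induction on the arity $n$ of the multi-tensor, reducing each step to the binary case that is already available as Lemma~\ref{lem:splitting:tens}. The base case $n=2$ is literally Lemma~\ref{lem:splitting:tens}: with $A=\gA_1$, $B=\gA_2$ and taking $K_1,K_2$ to be the graphs $\gK_A,\gK_B$ it produces, all four derivations are exactly those required here (the two sub-cases of prime splitting having collapsed for $\ltens$). So the whole content lies in the inductive step, which I expect to be a consequence of threading the binary split through the induction hypothesis, as suggested in the remark preceding the statement.

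For $n>2$, I would first use associativity of $\ltens$ to read the hypothesis $\proves[\GS]{\gG\lpar(\gA_1\ltens\cdots\ltens\gA_n)}$ as $\proves[\GS]{\gG\lpar(\gA_1\ltens\gB)}$, where $\gB=\gA_2\ltens\cdots\ltens\gA_n$ is non-empty. Applying Lemma~\ref{lem:splitting:tens} yields a context $\gC'\coonso{R'}$ and graphs $K_1,K_B$ with $\ods{\gC'\coons{K_1\lpar K_B}{R'}}{}{\gG}{\GS}$, $\ods{\unit}{}{K_1\lpar\gA_1}{\GS}$, $\ods{\unit}{}{K_B\lpar\gB}{\GS}$ and $\ods{\unit}{}{\gC'}{\GS}$. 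Since $K_B\lpar(\gA_2\ltens\cdots\ltens\gA_n)$ is provable, I then apply the induction hypothesis with $K_B$ in the role of the ambient graph, obtaining a context $\gC''\coonso{R''}$ and graphs $K_2,\dots,K_n$ with $\ods{\gC''\coons{K_2\lpar\cdots\lpar K_n}{R''}}{}{K_B}{\GS}$, the proofs $\ods{\unit}{}{K_i\lpar\gA_i}{\GS}$ for $i\ge2$, and $\ods{\unit}{}{\gC''}{\GS}$. This already secures the $n$ conditions $\proves[\GS]{K_i\lpar\gA_i}$, so only the context and the derivation into $\gG$ remain.

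The natural candidate for the combined context is $\gC=\gC'\coons{\gC''}{R'}$, with hole neighbourhood $R=R'\cup R''$. Its provability is immediate: applying the derivation $\ods{\unit}{}{\gC''}{\GS}$ inside $\gC'\coonso{R'}$ (Remark~\ref{rem:context}) gives a derivation whose premise is $\gC'$ (the empty graph fills the hole) and whose conclusion is $\gC$, and prepending $\ods{\unit}{}{\gC'}{\GS}$ yields $\proves[\GS]{\gC}$. Using Remark~\ref{rem:context} once more to replace $K_B$ by $\gC''\coons{K_2\lpar\cdots\lpar K_n}{R''}$ inside the context obtained from $\gC'\coonso{R'}$ by putting $K_1$ in par with the hole, and composing with the derivation into $\gG$, produces a derivation whose premise $\gH$ is the graph in which $K_1$ is joined to $R'$ while $K_2\lpar\cdots\lpar K_n$ is joined to $R'\cup R''$.

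The hard part will be the final reconciliation. In $\gH$ the subgraph $K_1\lpar\cdots\lpar K_n$ is \emph{not} yet a module, because the vertices of $R''$ see $K_2,\dots,K_n$ but not $K_1$; to present it as $\gC\coons{K_1\lpar\cdots\lpar K_n}{R'\cup R''}$ one would need to remove precisely the edges between $K_1$ and $R''$ while keeping those from $K_1$ to $R'$. The obstacle is that $\ssdr$ is blunt — it can only disconnect a whole module from its entire neighbourhood — so no single super-switch (nor a switch, since $R'$ and $R''$ are themselves joined) performs this selective deletion. This is exactly the graph-specific phenomenon that is absent for formulas, where the inner context would be trivial ($R''=\emptyset$) and $K_1\lpar\cdots\lpar K_n$ would be a module outright; it is also the reason the statement must carry a provable context $\gC$ at all. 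I would therefore not discharge the step by naive substitution but by folding it into the simultaneous induction on $\gsize{\gG}$ (Definition~\ref{def:size}) together with context reduction (Lemma~\ref{lem:conred}), exploiting the inter-dependence of splitting and context reduction that this section sets up; once the module is correctly isolated, the remaining bookkeeping (provability of each $K_i\lpar\gA_i$ and of $\gC$) is routine.
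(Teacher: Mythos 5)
Your decomposition is exactly the paper's: peel off $\gA_1$ with Lemma~\ref{lem:splitting:tens}, apply the induction hypothesis to the proof of $\gK'\lpar(\gA_2\ltens\cdots\ltens\gA_n)$ (your $K_B$), and nest the contexts as $\gC\coonso{R}=\gC'\coons{\gC''\coonso{R''}}{R'}$. But the ``obstacle'' you identify at the final step is not real, and your proposed escape --- folding the step into the simultaneous induction with context reduction (Lemma~\ref{lem:conred}) --- is both unnecessary and not actually carried out: you give no argument for how that induction would close. The selective deletion you need (remove the edges between $\gK_1$ and $R''$ while keeping those between $\gK_1$ and $R'$) is performed by a \emph{single deep} instance of $\ssdr$, and this is precisely how the paper concludes. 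Work inside the context $\gC'\coonso{R'}$: the subgraph sitting at its hole in the premise $\gC\coons{\gK_1\lpar\cdots\lpar\gK_n}{R}$ is $\gC''\coons{\gK_1\lpar\gK_2\lpar\cdots\lpar\gK_n}{R''}$, and within \emph{that} subgraph $\gK_1$ is a module whose entire neighbourhood is $R''$. Writing it as $\gB\coons{\gK_1}{R''}$ with $\gB=\gC''\coons{\gK_2\lpar\cdots\lpar\gK_n}{R''}$, one application of $\ssdr$ rewrites it to $\gK_1\lpar\gC''\coons{\gK_2\lpar\cdots\lpar\gK_n}{R''}$. The constraint that $\ssdr$ disconnects a module from its \emph{entire} neighbourhood is relative to the graph being rewritten, not to the whole graph: since both premise and conclusion of this deep step keep $\gK_1$ inside the module at the hole of $\gC'$, the edges from $\gK_1$ to $R'$ are contributed by the outer context and survive untouched. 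This is exactly what the deep-inference formalism (Remark~\ref{rem:context}) buys, and it is why your remark that ``$R'$ and $R''$ are themselves joined'' is a red herring --- you never need to separate them.

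After that one step the derivation $\dD_\gG$ is assembled from the pieces you already have: from the premise $\gC'\coons{\gC''\coons{\gK_1\lpar\cdots\lpar\gK_n}{R''}}{R'}$ the deep $\ssdr$ yields your graph $\gH=\gC'\coons{\gK_1\lpar\gC''\coons{\gK_2\lpar\cdots\lpar\gK_n}{R''}}{R'}$, then the derivation into $K_B$ applied deep gives $\gC'\coons{\gK_1\lpar\gK'}{R'}$, and composing with the derivation into $\gG$ finishes. (Note also that you had the requirement backwards: $\gK_1\lpar\cdots\lpar\gK_n$ need only be a module in the \emph{premise} $\gC\coons{\gK_1\lpar\cdots\lpar\gK_n}{R}$, and the derivation runs downward from there to $\gH$, removing edges --- the legal direction for $\ssdr$. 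In the degenerate cases $R''=\emptyset$ or $\gK_1=\unit$, where the side conditions of $\ssdr$ fail, premise and $\gH$ are already isomorphic and no rule instance is needed.) Everything else in your proposal --- the base case, the two applications of splitting, the nested context with its provability via Remark~\ref{rem:context}, and the $n$ proofs of $\gK_i\lpar\gA_i$ --- matches the paper's proof.
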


All lemmas of this section are proved simultaneously, by case analysis
and an inductive argument. Before we go into the details of the
case analysis, let us draw the attention to
Figure~\ref{fig:complete-roadmap}, which shows in more detail the
dependencies of the proofs of all lemmas in this section.
The main lemma is Lemma~\ref{lem:splitting:prime}. Then
Lemma~\ref{lem:splitting:tens} is just a special case of
Lemma~\ref{lem:splitting:prime}, and
Lemma~\ref{lem:splitting:multitens} follows immediately by iterating
Lemma~\ref{lem:splitting:tens}.
To ensure
that there is no circular reasoning in the other dependencies, we argue that whenever the proof
of one lemma refers to another then this happens with respect to a
strictly smaller graph (according to Definition~\ref{def:size}).
Let us now look more closely at the case analysis for Lemma~\ref{lem:splitting:prime}.

\begin{figure}[t]
	\small
	\begin{tikzpicture}
		\node[draw, ellipse] (prime)  at (0,0)   
		{\begin{tabular}{c}Splitting Prime\\ Lemma~\ref{lem:splitting:prime}\end{tabular}};
		\node[draw, ellipse] (tensor) at (5,-2)   
		{\begin{tabular}{c}Splitting Tensor\\ Lemma~\ref{lem:splitting:tens}\end{tabular}};
		\node[draw, ellipse] (multi)  at (4,-5) 
		{\begin{tabular}{c}Splitting Multi-Tensor\\ Lemma~\ref{lem:splitting:multitens}\end{tabular}};
		\node[draw, ellipse] (atom)  at (-3, -5) 	
		{\begin{tabular}{c}Atomic Splitting\\ Lemma~\ref{lem:splitting:atom}\end{tabular}};
		\node[draw, ellipse] (context)  at (-5, -2) 
		{\begin{tabular}{c}Context Reduction\\ Lemma~\ref{lem:conred}\end{tabular}};
		
		\draw [-latex, thick] (prime) to[out=130, in=160, looseness=4 ] node[midway,above]{(a) and (b.I) } (prime)  ;
		\draw [>=latex,->>, thick] (prime) to[out=50, in=80, looseness=5] node[midway,above]{(b.III)} (prime);
		\draw [-latex, thick] (prime) to  node[midway,above]{(b.II)} (tensor);
		\draw [-latex, thick] (prime) to [out=170, in=90] node[midway,above]{(c) } (context)  ;
		\draw [-latex, thick] (prime) to [out=170, in=200, looseness=4] node[midway,left]{(c) } (prime)  ;
		\draw [-latex, thick] (prime) to  node[midway,right]{(d)} (multi)  ;
		\draw [-latex, thick] (prime) to[out=-70, in=160] node[midway,left]{(e)} (multi)  ;
		\draw [>=latex,->>, thick] (prime) to[out=-70, in=-100,looseness=10] node[midway,below]{(e)} (prime)  ;
		
		\draw [-latex, thick] (atom) to [out=160, in=200,looseness=5] node[midway,left]{(a)} (atom)  ;
		\draw [-latex, thick] (atom) to [out=120, in=-60] node[midway,left]{(b)} (context)  ;
		\draw [-latex, thick] (atom) to [out=120, in=60,looseness=5] node[midway,above]{(b)} (atom)  ;
		\draw [-latex, thick] (atom) to  node[midway,above]{(c)} (multi)  ;
		
		\draw [-latex, thick] (tensor) to [bend right] node[midway,right]{special case of} (prime)  ;
		\draw [-latex, thick] (multi) to node[midway,right]{iterating} (tensor)  ;
		
		\draw [-latex, thick] (context) to [bend right] node[midway,right]{} (prime)  ;
	\end{tikzpicture}
	\caption{Complete roadmap of splitting and context reduction lemmas proofs. Double head arrows make use of Lemma~\ref{lem:g}. 
	}
	\label{fig:complete-roadmap}
\end{figure}

\subsection*{Case Analysis for the proof of Lemma~\ref{lem:splitting:prime} (Splitting Prime)}
There are five cases to consider which we enumerate below.
	\begin{enumerate}[(a)]
		\item 
		The last rule in the given proof acts inside $\gG$ or any $\gM_i$ with $i\in \set{1,\dots, n}$.
        This case is relatively simple, since the structure of the conclusion does not change.
		
		\item 
		The last rule in the given proof is a $\ssdr$ such that another graph, external to the principal prime graph, moves inside the principal prime graph, i.e.,
		$\gG=\gG'\lpar\gG''$ with $\gG'\neq \unit$ and $\dD$ is of shape
		\begin{equation*}
			\scalebox{.95}{%
				\odn{
					\ods{\unit}{\dD''}{\gG''\lpar \gP\connn{\gM_1,\ldots,\gM_n}\coons{\gG'}{R_P}}{\GS}}
				{\ssdr}
					{\gG''\lpar\gG'\lpar\gP\connn{\gM_1,\ldots,\gM_n}}{}
			}\hskip-3em
		\end{equation*}
		This can result in several scenarios, that are named
                (b.I), (b.II), and (b.III) in the appendix, and that
                are not all immediately obvious.
		
		\item The last rule in the given proof is a $\ssdr$ that moves the whole prime graph inside the rest of the graph. 
		In this case  
		$\dD$ is of shape
		\begin{equation*}
			\odn{\ods{\unit}{
					\dD'}{
					\gG\coons{\gP\connn{\gM_1,\ldots,\gM_n}}{S}}{\GS}}{
				\ssdr}{
				\gG \lpar\gP\connn{\gM_1,\ldots,\gM_n}}{}
		\end{equation*}
		We will explain that this case appeals to context reduction.

		\item
		The last rule in the given proof is a $\pdr$ directly applied to the principal connective,
		i.e., $\gG=\gG'\lpar\cneg\gP\connn{\gN_1,\ldots,\gN_n}$ 
		and 
		$\dD$ is of shape
		\begin{equation*}
			\odn{\ods{\unit}{\dD'}{
					[\gG'';([N_1;M_1];\cdots;[N_n;M_n])]}{\GS}}{
				\pdr}{
				\gG''\lpar\cneg\gP\connn{\gN_1,\ldots,\gN_n}\lpar\gP\connn{\gM_1,\ldots,\gM_n}}{}
			\;.\hskip-3em
		\end{equation*}
		In this case we have to refer to the multi-tensor splitting lemma.
		\item 
		The last rule is a $\pdr$ such that not all components of the principal connective are non-empty, and we have
		$\gG=\gG''\lpar\gQ\connn{\gN_1,\ldots,\gN_k}$ where $\gN_1, \dots, \gN_k$ are non-empty graphs, 
		$\gQ$ is a prime graph with $\sizeof\vQ > \sizeof\vP$ 
		and such that 
		w.l.o.g.
		$\gP\connn{\gM_1,\ldots,\gM_n} = \cneg{\gQ}\connn{ \unit, \gL_2, \ldots \gL_k }$ for some (possibly empty) graphs $\gL_2, \ldots \gL_k$, 
		and
		$\dD$ is of shape
		\begin{equation*}
			\small
			\odn{\ods{\unit}{
					\dD'}{
					[\gG'';(\gN_1 ;[\gN_2;\gL_2];\ldots;[\gN_k;\gL_k])]
				}{\GS}}{
				\pdr}{
				[\gG'';\gQ\connn{\gN_1,\ldots,\gN_k}; \gP\connn{\gM_1,\ldots,\gM_n}]
			}{}
			\hskip-2em       
		\end{equation*}
		In this case it is important to ensure the side-conditions on $\pdr$ are respected.
	\end{enumerate}

        \medskip

All technical details of this case analysis are in Appendix~\ref{sec:splittingproofs}. In the remainder of this section, we give some informal explanation why the cases in the proof of Lemma~\ref{lem:splitting:prime} can become complex. For this, consider the following example.
\begin{equation}\label{eg:exotic}
	\begin{array}{c@{\quad\;\;}c@{\quad\;\;}c@{\quad\;\;}c}
		\vmodule{2}{\begin{array}{c@{\quad\;\;}c@{\quad\;\;}r}
				\begin{array}{c}\vng1 \\\\ \vnf1 \end{array}
				& \vmodule{3}{\begin{array}{c@{\quad\;\;}c}
						\vna1  &  \vnc1 \\\\
						\vc1	&  \vnb1
				\end{array}}
				&
				\begin{array}{r}\vf1 \\\\ \vg1 \end{array}
		\end{array}}
		&
		\begin{array}{c@{\quad\;\;}c@{\quad\;\;}c@{\quad\;\;}c@{\quad\;\;}c@{\quad\;\;}c@{\quad\;\;}c@{\quad\;\;}c@{\quad\;\;}c}
			\vne1 & \vnh1  &            \ve1   \\\\
			&  \vnd1    
		\end{array}
		\vmodule{7}{\begin{array}{c} \va1 \\ \vb1 \end{array}}
		&
		\vmodule{8}{\begin{array}{c} \vh1 \\ \vd1 \end{array}}
	\end{array}
	\edges{M7/M8,nh1/nd1,nd1/ne1,nf1/ng1,g1/M3,ne1/M2,nf1/M3,f1/M3,na1/nb1,na1/c1,nb1/nc1}
\end{equation}
One possible way to prove the above graph is to first apply the $\ssdr$-rule to move the vertex labelled $e$ so that it is attached by an edge only to~$h$. This way we obtain exactly the graph in the conclusion of Figure~\ref{fig:eg}, therefore we know already how to complete the proof via that strategy.

Now observe that the above graph~(\ref{eg:exotic}) is of the form $\gG \lpar (( a \lpar b) \ltens ( h \lpar d ))$. Thus it is in a form to which we could apply tensor splitting (Lemma~\ref{lem:splitting:tens}) to this $\ltens$-operator as the principal graph. This  forces the proof to be renormalised such that firstly $e$ moves next to $\cneg{e}$ as shown in the derivation below.
\[
\od{
\odi{
\odi{
 \odh{
    \begin{array}{c@{\quad\;\;}c@{\quad\;\;}c@{\quad\;\;}c}
	\begin{array}{c@{\quad\;\;}c@{\quad\;\;}r}
     \begin{array}{c}\vng1 \\\\ \vnf1 \end{array}
     & \vmodule{3}{\begin{array}{c@{\quad\;\;}c}
                         \vna1  &  \vnc1 \\\\
			 \vc1	&  \vnb1
                         \end{array}}
     &
     \begin{array}{r}\vf1 \\\\ \vg1 \end{array}
    \end{array}
&
    \begin{array}{cc@{\quad\;\;}c@{\quad\;\;}c@{\quad\;\;}c@{\quad\;\;}c@{\quad\;\;}c@{\quad\;\;}c@{\quad\;\;}c@{\quad\;\;}c}
                   &  
                     \vnh1  \\\\
                                                        &  \vnd1  
    \end{array}
&
\vmodule{7}{\begin{array}{c} \va1 \\ \vb1 \end{array}}
&
\vmodule{8}{\begin{array}{c} \vh1 \\ \vd1 \end{array}}
    \end{array}
    \edges{M7/M8,nh1/nd1,nf1/ng1,g1/M3,nf1/M3,f1/M3,na1/nb1,na1/c1,nb1/nc1}
 }
}{\aidr}{
    \begin{array}{c@{\quad\;\;}c@{\quad\;\;}c@{\quad\;\;}c}
        \vmodule{2}{\begin{array}{c@{\quad\;\;}c@{\quad\;\;}r}
     \begin{array}{c}\vng1 \\\\ \vnf1 \end{array}
     & \vmodule{3}{\begin{array}{c@{\quad\;\;}c}
                         \vna1  &  \vnc1 \\\\
			 \vc1	&  \vnb1
                         \end{array}}
     &
     \begin{array}{r}\vf1 \\\\ \vg1 \end{array}
    \end{array}}
&
    \begin{array}{c@{\quad\;\;}c@{\quad\;\;}c@{\quad\;\;}c@{\quad\;\;}c@{\quad\;\;}c@{\quad\;\;}c@{\quad\;\;}c@{\quad\;\;}c@{\quad\;\;}c}
                     \vmodule{5}{\begin{array}{c@{\quad\;\;}c} \vne1 & \ve1 \end{array}}
                    & \vnh1   \\\\
                                                        &  \vnd1    
    \end{array}
\vmodule{7}{\begin{array}{c} \va1 \\ \vb1 \end{array}}
&
\vmodule{8}{\begin{array}{c} \vh1 \\ \vd1 \end{array}}
    \end{array}
    \edges{M7/M8,nh1/nd1,nd1/M5,nf1/ng1,g1/M3,M5/M2,nf1/M3,f1/M3,na1/nb1,na1/c1,nb1/nc1}
 }{}
}{\ssdr}{
    \begin{array}{c@{\quad\;\;}c@{\quad\;\;}c@{\quad\;\;}c}
        \vmodule{2}{\begin{array}{c@{\quad\;\;}c@{\quad\;\;}r}
     \begin{array}{c}\vng1 \\\\ \vnf1 \end{array}
     & \vmodule{3}{\begin{array}{c@{\quad\;\;}c}
                         \vna1  &  \vnc1 \\\\
			 \vc1	&  \vnb1
                         \end{array}}
     &
     \begin{array}{r}\vf1 \\\\ \vg1 \end{array}
    \end{array}}
&
    \begin{array}{c@{\quad\;\;}c@{\quad\;\;}c@{\quad\;\;}c@{\quad\;\;}c@{\quad\;\;}c@{\quad\;\;}c@{\quad\;\;}c@{\quad\;\;}c}
                     \vne1 & \vnh1  &            \ve1   \\\\
                                                      &  \vnd1    
    \end{array}
\vmodule{7}{\begin{array}{c} \va1 \\ \vb1 \end{array}}
&
\vmodule{8}{\begin{array}{c} \vh1 \\ \vd1 \end{array}}
    \end{array}
    \edges{M7/M8,nh1/nd1,nd1/ne1,nf1/ng1,g1/M3,ne1/M2,nf1/M3,f1/M3,na1/nb1,na1/c1,nb1/nc1}
}{}
}
\]
Observe that in the topmost graph in the derivation above we have broken the graph on the left into two disjoint graphs. The first is provable when composed with $a \lpar b$, while the second is provable when composed with $d \lpar h$.

The choices between the different rules to first apply to graph~(\ref{eg:exotic}), where $\ssdr$ is applied to $e$ in order to either create a larger prime graph on the right (leading to the proof in Figure~\ref{fig:eg}) or to break up the structure on the left as shown above, lead to quite different proofs. Moving from the former to the latter proof is discussed in Case~\ref{split:b.III} in the proof of Lemma~\ref{lem:splitting:prime}, which is in fact the most involved sub-case.

Case~\ref{split:b.II} in the proof of Lemma~\ref{lem:splitting:prime} is not immediately obvious but is nonetheless usually present in standard splitting proofs on formulas, where similar cases are induced by associativity. A typical example is given by the following two isomorphic graphs, that are composed differently:
\begin{equation*}
  \label{eq:iso-exa}
  \vmodule{3}{
    \begin{array}{c@{\quad\;\;}c}
      \ve1
      &
      \vmodule{1}{
        \begin{array}{c}
          \va1 \\ \vb1
        \end{array}
      }
    \end{array}
  }
  \quad
  \vmodule{2}{
    \begin{array}{c}
      \vc1 \\ \vd1
    \end{array}
  }
  \edges{e1/M1,M3/M2}
  \quad
  \simeq
  \quad
  \begin{array}{c@{\quad\;\;}c}
    \ve1
    &
    \vmodule{1}{
      \begin{array}{c@{\quad\;\;}c}
        \va1 & \vc1  \\ \vb1 &  \vd1
      \end{array}
    }
  \end{array}
  \edges{e1/M1,a1/d1,b1/d1,a1/c1,b1/c1}
\end{equation*}
This isomorphism is used in the proof below in order to enable the $\ssdr$-rule:
\begin{equation}\label{eg:assoc}
\od{
\odi{
\odi{
 \odh{\unit}
}{\idr}{
\begin{array}{c@{\quad\;\;}c@{\quad\;\;}c@{\quad\;\;}c@{\quad\;\;}c@{\quad\;\;}c@{\quad\;\;}c}
\begin{array}{c@{\quad\;\;}c@{\quad\;\;}c}
& \vnf1
\\
 \vna1 & & \vnc1  \\\\ \vnb1 & & \vnd1
\end{array}
& 
\vne1
&
\ve1
&
\vmodule{1}{
\begin{array}{c@{\quad\;\;}c@{\quad\;\;}c}
& \vf1
\\
 \va1  & & \vc1  \\ \vb1 &  & \vd1
\end{array}
}
\end{array}
\edges{e1/M1,nf1/nb1,nf1/nd1,na1/nb1,nc1/nd1,a1/d1,b1/d1,a1/c1,b1/c1,f1/a1,f1/c1}
}{}
}{\ssdr}{
\begin{array}{c@{\quad\;\;}c@{\quad\;\;}c@{\quad\;\;}c@{\quad\;\;}c}
\begin{array}{c@{\quad\;\;}c@{\quad\;\;}c}
& \vnf1
\\
 \vna1 & & \vnc1  \\\\ \vnb1 & & \vnd1
\end{array}
& 
\vne1
&
\vf1
&
\vmodule{3}{
\begin{array}{c@{\quad\;\;}c}
\ve1
&
\vmodule{1}{
\begin{array}{c}
 \va1 \\ \vb1
\end{array}
}
\end{array}
}
\quad
\vmodule{2}{
\begin{array}{c}
 \vc1 \\ \vd1
\end{array}
}
\end{array}
\edges{e1/M1,M3/M2,nf1/nb1,nf1/nd1,na1/nb1,nc1/nd1}
}{}
}
\end{equation}
We can apply splitting to the above graph in various ways.
One possibility is to apply splitting to the tensor on the right as shown in the conclusion of the above proof.
For that we appeal to Case~\ref{split:b.II} in the proof of Lemma~\ref{lem:splitting:prime}, which firstly observes that there is a proof of $\cneg{e} \lpar e$ 
and a proof of the following induced sub-graph.
\begin{equation}\label{eg:house}
\begin{array}{c@{\quad\;\;}c@{\quad\;\;}c@{\quad\;\;}c@{\quad\;\;}c}
\begin{array}{c@{\quad\;\;}c@{\quad\;\;}c}
 \vna1 & \vnf1 & \vnc1  \\\\ \vnb1 & & \vnd1
\end{array}
& 
\vf1
&
\vmodule{1}{
\begin{array}{c}
 \va1 \\ \vb1
\end{array}
}
\quad
\vmodule{2}{
\begin{array}{c}
 \vc1 \\ \vd1
\end{array}
}
\end{array}
\edges{M1/M2,nf1/nb1,nf1/nd1,na1/nb1,nc1/nd1}
\end{equation}
Case~\ref{split:b.III} is then used to renormalise the proof of~\eqref{eg:house} such that the $P_5$ on the left is split into two parts, from which we can reassemble a proof of the graph in the conclusion of~\eqref{eg:assoc} which satisfies the conditions of splitting for the tensor that we selected.
In particular, if $\gA = e \ltens (a \lpar b)$ and $\gB = c \lpar d$ in Lemma~\ref{lem:splitting:tens}, then we obtain $\gK_A = (\cneg{a} \ltens \cneg{b}) \lpar \cneg{e}$ and $\gK_B = \cneg{c} \lpar \cneg{d}$, as required (the context is empty for this example).

Beyond the above two illustrative examples of non-trivial normalisation steps, observe that the deepest nesting in the proof of splitting for prime graphs is inside Case~\ref{split:b.III}, which has two more levels of nested case analysis to cover all ways in which we can handle situations where the principal graph is turned into a larger prime graph using the $\ssdr$ rule.

\medskip


\section{Elimination of the Up-Fragment}\label{sec:upfrag}

We do now have all ingredients to prove the cut elimination result, i.e., the rule $\iur$ is admissible for $\GS$ (Theorem~\ref{thm:cut}). Recall that this is equivalent to the admissibility of the up-rules $\aiur$, $\ssur$, and $\pur$ (Theorem~\ref{thm:up}), and that this entails the transitivity of the consequence relation (Lemma~\ref{lem:trans}) and consistency of the proof system (Corollary~\ref{cor:consistency}).

In this section we are going to show how  splitting and context reduction
are employed to prove this result.
The procedure is similar to ordinary deep inference systems (see, e.g.,
\cite{dissvonlutz,SIS-V,ross:MAV1,CGS:foccos}). 

\begin{thm}\label{thm:ai-up}
  The rule $\aiur$ is admissible for $\GS$.
\end{thm}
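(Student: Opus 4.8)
The plan is to eliminate a single $\aiur$ redex using the three splitting-style lemmas of Section~\ref{sec:splitting}, following the usual splitting recipe: push the redex to a shallow position with context reduction, take it apart with splitting, and reassemble. Of the three up-rules this is the simplest case, and in fact no extra induction is needed beyond the lemmas already established. So suppose we are given an instance of $\aiur$ with premise $\gG\coons{a\ltens\cneg a}{S}$ and conclusion $\gG\coons{\unit}{S}$, and assume $\proves[\GS]{\gG\coons{a\ltens\cneg a}{S}}$; we must show $\proves[\GS]{\gG\coons{\unit}{S}}$. Throughout I would use the following consequence of Remark~\ref{rem:context}: if $\proves[\GS]{\gC}$ and $\proves[\GS]{\gM}$, then $\proves[\GS]{\gC\coons{\gM}{R}}$ for any $R\subseteq\vC$ --- indeed, placing the proof of $\gM$ in the context $\gC\coonso{R}$ gives a derivation from $\gC\coons{\unit}{R}=\gC$ to $\gC\coons{\gM}{R}$, which composes with the proof of $\gC$.

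First I would apply context reduction (Lemma~\ref{lem:conred}) to the module $a\ltens\cneg a$ inside the context $\gG\coonso{S}$. This returns a provable context $\gC$ and a graph $\gK$ such that $\proves[\GS]{\gK\lpar(a\ltens\cneg a)}$ and, for every $\XXX$, there is a $\GS$-derivation from $\gC\coons{\gK\lpar\XXX}{R}$ to $\gG\coons{\XXX}{S}$. Taking $\XXX=\unit$ and recalling $\gK\lpar\unit=\gK$, it then suffices to prove $\proves[\GS]{\gK}$: plugging the provable $\gK$ into the provable context $\gC$ and composing with that derivation yields $\proves[\GS]{\gG\coons{\unit}{S}}$. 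To prove $\proves[\GS]{\gK}$ I would apply tensor splitting (Lemma~\ref{lem:splitting:tens}) to the proof of $\gK\lpar(a\ltens\cneg a)$ with principal tensor $a\ltens\cneg a$, obtaining a provable context $\gC'$, graphs $\gK_a$ and $\gK_{\cneg a}$, a $\GS$-derivation from $\gC'\coons{\gK_a\lpar\gK_{\cneg a}}{R'}$ to $\gK$, and $\GS$-proofs of $\gK_a\lpar a$ and of $\gK_{\cneg a}\lpar\cneg a$. By the same plug-and-compose argument it now suffices to prove $\proves[\GS]{\gK_a\lpar\gK_{\cneg a}}$.

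This recombination is the heart of the matter; it is itself an atomic cut between the two shallow proofs $\gK_a\lpar a$ and $\gK_{\cneg a}\lpar\cneg a$. Here I would invoke atomic splitting (Lemma~\ref{lem:splitting:atom}) on the proof of $\gK_a\lpar a$: it singles out the partner of the principal $a$, giving a provable context $\gC_a$ together with a $\GS$-derivation from $\gC_a\coons{\cneg a}{R_a}$ to $\gK_a$. The decisive step is to plug the \emph{entire} provable graph $\gK_{\cneg a}\lpar\cneg a$ into the hole of $\gC_a$, which by the plugging principle gives $\proves[\GS]{\gC_a\coons{\gK_{\cneg a}\lpar\cneg a}{R_a}}$. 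In this graph the submodule $\gK_{\cneg a}$ has exactly $R_a$ as neighbourhood, so a single $\ssdr$ detaches it and yields $\proves[\GS]{\gC_a\coons{\cneg a}{R_a}\lpar\gK_{\cneg a}}$. Finally, placing the derivation from $\gC_a\coons{\cneg a}{R_a}$ to $\gK_a$ in the context $\coonso{}\lpar\gK_{\cneg a}$ (Remark~\ref{rem:context}) and composing produces $\proves[\GS]{\gK_a\lpar\gK_{\cneg a}}$, completing the argument.

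I expect the recombination to be the only genuinely delicate point, for a structural reason specific to graphs: the down-rules of $\GS$ can never create an edge between two already present vertices, so one cannot simply introduce a fresh matching $\cneg a$ by $\aidr$ and then wire it to $R_a$. The way around this is precisely the move above --- insert the already-provable $\gK_{\cneg a}\lpar\cneg a$ at the position that atomic splitting reserved for the partner atom, and then use $\ssdr$ only to \emph{discard} the surplus $\gK_{\cneg a}$, never to build anything. The second thing to watch, and the feature that distinguishes this argument from splitting proofs for formula/cograph systems, is the bookkeeping of the provable contexts $\gC$, $\gC'$, and $\gC_a$ returned by the three lemmas: they cannot be taken to be trivial and must be threaded through the construction, each discharged by the plugging principle only at its corresponding reassembly step.
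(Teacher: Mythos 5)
Your proof is correct, and it follows the paper's own argument almost step for step: context reduction (Lemma~\ref{lem:conred}) to bring the redex to a shallow position, tensor splitting (Lemma~\ref{lem:splitting:tens}) applied to $\gK\lpar(a\ltens\cna)$, atomic splitting (Lemma~\ref{lem:splitting:atom}), and a reassembly that threads the provable contexts through via the plugging principle --- which is indeed legitimate, by Remark~\ref{rem:context} together with $\gC\coons{\unit}{R}=\gC$. Like the paper, you correctly note that no auxiliary induction is needed here (in contrast to the $\pur$ case, Theorem~\ref{thm:p-up}). The one genuine deviation is the final recombination. The paper applies atomic splitting \emph{twice}, to both $\gK_a\lpar a$ and $\gK_{\cna}\lpar\cna$, obtaining two contexts $\gC_3\coonso{R_3}$ and $\gC_4\coonso{R_4}$, and rebuilds the proof around a fresh $\aidr$ instance producing $\cna\lpar a$ in the innermost hole, followed by two $\ssdr$ steps that separate $\gC_3\coons{\cna}{R_3}$ from $\gC_4\coons{a}{R_4}$. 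You instead apply atomic splitting only once, to $\gK_a\lpar a$, and then recycle the proof of $\gK_{\cna}\lpar\cna$ wholesale by plugging it into the hole of $\gC_a$; your single $\ssdr$ step checks out, since inside the plugged module $\gK_{\cna}$ and $\cna$ are disjoint, so $\gK_{\cna}$ is a module with external neighbourhood exactly $R_a$, i.e., $\gC_a\coons{\gK_{\cna}\lpar\cna}{R_a}=(\gC_a\coons{\cna}{R_a})\coons{\gK_{\cna}}{R_a}$. Your variant is marginally more economical (one invocation of atomic splitting, no fresh axiom), at the cost of asymmetry; the paper's symmetric double-splitting version makes the uniformity with the other up-rule eliminations more visible. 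One cosmetic caveat: when $R_a=\emptyset$ or $\gK_{\cna}=\unit$ the side conditions forbid your $\ssdr$ instance, but in those degenerate cases the premise and conclusion graphs already coincide, so the step is simply omitted and nothing breaks (the paper's derivation has the same vacuous-instance convention).
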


\begin{proof}
  Assume we have a proof of $\gG\coons{a\ltens\cneg a}S$. By
  Lemma~\ref{lem:conred} we have a graph $\gL$ and a context
  $\gC_1\coonso{R_1}$, such that there are derivations
  \begin{equation*}
    \ods{\unit}{\dD_1}{\gC_1}{\GS}
    \qomma
    \ods{\unit}{\dD_2}{\gL\lpar(a\ltens\cneg a)}{\GS}
    \quand    
    \ods{\gC_1\coons{\gL\lpar\XXX}{R_1}}{\dD_3}{\gG\coons{\XXX}S}{\GS}
  \end{equation*}
   for any graph $\XXX$. We apply Lemma~\ref{lem:splitting:tens} to $\gL\lpar(a\ltens\cneg a)$
   and get $\gK_a$ and $\gK_\cna$ and a context
   $\gC_2\coonso{R_2}$ such that
   \begin{equation*}
     \ods{\gC_2\coons{\gK_a\lpar\gK_\cna}{R_2}}{\dD_4}{\gL}{\GS}
     \qomma
     \ods{\unit}{\dD_6}{\gK_a\lpar a}{\GS}
     \qomma
     \ods{\unit}{\dD_7}{\gK_\cna\lpar\cna}{\GS}
          \quand
     \ods{\unit}{\dD_5}{\gC_2}{\GS}
   \end{equation*}
   Applying Lemma~\ref{lem:splitting:atom} to $\gK_a\lpar a$ and $\gK_\cna\lpar\cna$ gives us
   $\gC_3\coonso{R_3}$ and $\gC_4\coonso{R_4}$ such that
   \begin{equation*}
     \ods{\gC_3\coons{\cna}{R_3}}{\dD_8}{\gK_a}{\GS}
     \qomma
     \ods{\unit}{\dD_9}{\gC_3}{\GS}
     \qomma
     \ods{\gC_4\coons{a}{R_4}}{\dD_{10}}{\gK_\cna}{\GS}
     \quand
     \ods{\unit}{\dD_{11}}{\gC_4}{\GS}
   \end{equation*}
   We can now give the following derivation
   \begin{equation*}
   	\scalebox{.9}
   	{
     \od{\odd{\odd{\odh{\unit}}{\dD_1}{\gC_1\Coons{
         \od{\odd{\odd{\odh{\unit}}{\dD_5}{
             \gC_2\Coons{
               \od{\odi{\odd{\odh{\unit}}{\dD_9}{
                     \gC_3\Coons{
                       \od{\odi{\odd{\odh{\unit}}{\dD_{11}}{\gC_4\Coons{
                               \odn{\unit}{\aidr}{\cna\lpar a}{}
                             }{R_4}}{}}{
                           \ssdr}{
                           \cna\lpar{\gC_4\coons{a}{R_4}}}{}}
                     }{R_3}}{}}{
                   \ssdr}{
                   \ods{\gC_3\coons{\cna}{R_3}}{\dD_8}{\gK_a}{}
                 \lpar
                 \ods{\gC_4\coons{a}{R_4}}{\dD_{10}}{\gK_\cna}{}}{}}
             }{R_2}}{}}{\dD_4}{\gL}{}}
         }{R_1}}{}}{\dD_3}{\gG\coons{\unit}S}{}}
     }
   \end{equation*}
   that proves $\gG=\gG\coons{\unit}S$ in $\GS$.
\end{proof}

\begin{thm}\label{thm:ss-up}
  The rule $\ssur$ is admissible for $\GS$.
\end{thm}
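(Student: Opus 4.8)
We have to show that every instance of $\ssur$, applied inside an arbitrary context $\gG\coonso{T}$ (so with $S\subseteq\vB$, $S\neq\vB$ and $\gA\neq\unit$), can be eliminated: a $\GS$-proof of $\gG\coons{\gB\ltens\gA}{T}$ must be turned into a $\GS$-proof of $\gG\coons{\gB\coons{\gA}S}{T}$. The plan is to follow the template of Theorem~\ref{thm:ai-up}. First I would apply Context Reduction (Lemma~\ref{lem:conred}) to the module $\gB\ltens\gA$, obtaining a provable context $\gC_1\coonso{R_1}$ and a graph $\gL$ with $\proves[\GS]{\gL\lpar(\gB\ltens\gA)}$ and a derivation $\ods{\gC_1\coons{\gL\lpar\XXX}{R_1}}{\dD_3}{\gG\coons{\XXX}{T}}{\GS}$ for every graph $\XXX$. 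Instantiating $\XXX$ with $\gB\coons{\gA}S$ and composing $\dD_3$ with a proof of $\gC_1$ into whose hole the result is plugged, the problem reduces to producing a $\GS$-proof of the shallow graph $\gL\lpar(\gB\coons{\gA}S)$.

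Next I would apply Splitting Tensor (Lemma~\ref{lem:splitting:tens}) to $\gL\lpar(\gB\ltens\gA)$, since $\gB\ltens\gA$ is a tensor. This yields a provable context $\gC_2\coonso{R_2}$ and graphs $\gK_A,\gK_B$ with $\proves[\GS]{\gK_A\lpar\gA}$, $\proves[\GS]{\gK_B\lpar\gB}$ and a derivation $\ods{\gC_2\coons{\gK_B\lpar\gK_A}{R_2}}{\dD_4}{\gL}{\GS}$. The decisive point is that the partner proofs $\gK_A\lpar\gA$ and $\gK_B\lpar\gB$ do not record \emph{how} $\gA$ is glued to $\gB$, so they remain available for the target graph $\gB\coons{\gA}S$ as well. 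Using $\dD_4$ in the context obtained by putting it in par with $\gB\coons{\gA}S$, it then suffices to prove $\gC_2\coons{\gK_B\lpar\gK_A}{R_2}$ in par with $\gB\coons{\gA}S$, and since $\gC_2$, $\gK_A\lpar\gA$ and $\gK_B\lpar\gB$ are all provable (so Corollary~\ref{cor:tensor} makes their tensor a provable starting point), everything reduces to reassembling $\gB\coons{\gA}S$ from these pieces while the $\gK$-partners are routed back into $\gC_2$.

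The reassembly is where I expect the real work, and it is also where the argument departs from the $\aiur$ case. The transition $\gB\ltens\gA\leadsto\gB\coons{\gA}S$ keeps exactly the edges between $\gA$ and $S$ while deleting those between $\gA$ and $\vB\setminus S$; but no down-rule of $\GS$ deletes a proper nonempty subset of a module's edges to its context ($\ssdr$ removes all of them at once, and $\pdr$ only reorganises a prime composition). Hence the surviving $S$-edges cannot be reintroduced by a rule and must instead be generated by the open-deduction composition operator $\gP\connn{\ldots}$, exactly as the edges in Lemma~\ref{lem:g} arise from composing derivations through a graph. Concretely, I would recurse on the modular decomposition of $\gB\coons{\gA}S$ via Lemma~\ref{lem:newhope}, peeling off its top connective: when it is $\lpar$ or $\ltens$, the component containing $\gA$ is handled by splitting for par/tensor and the remaining ones by Corollary~\ref{cor:multitensTOany}; when it is a prime graph, the $S$-attachment is rebuilt through a $\pdr$-step, routing $\gA$ to its partner $\gK_A$ and the vertices of $\gB$ to (suitably re-split pieces of) $\gK_B$ via $\ssdr$-moves; and the base case, in which $\gA$ or $\gB$ is a singleton, is precisely the atomic-splitting situation already solved inside Theorem~\ref{thm:ai-up}. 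The main obstacle is to organise this recursion so that it stays coherent with the fixed residual $\gL$ coming from context reduction; I expect it to run by induction on $\gsize{\gB}$, with a nontrivial case distinction on how $S$ meets the components of $\gB$'s decomposition, each recursive appeal to a splitting lemma being made at a strictly smaller graph.
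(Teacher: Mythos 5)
Your first two reductions coincide exactly with the paper's proof: context reduction (Lemma~\ref{lem:conred}) applied to the module $\gB\ltens\gA$, then tensor splitting (Lemma~\ref{lem:splitting:tens}) applied to $\gL\lpar(\gB\ltens\gA)$, producing the same data $\gC_1\coonso{R_1}$, $\gC_2\coonso{R_2}$, $\gK_A$, $\gK_B$ and the same observation that the partner proofs do not record how $\gA$ was glued to $\gB$. The gap is in your third step, and it is a missing idea rather than a missing detail: the reassembly requires no recursion at all. You correctly note that no down-rule can create the edges between $\gA$ and $S$, but you misdiagnose where they must come from. They are generated by the composition of a derivation into a graph context (Definition~\ref{def:derivation}, Remark~\ref{rem:context}): since $\gB\isom\gB\coons{\unit}{S}$ and the splitting output is a proof of $\gK_A\lpar\gA$ with premise $\unit$, plugging that proof into the hole of $\gB\coonso{S}$ yields a derivation from $\gB$ to $\gB\coons{\gK_A\lpar\gA}{S}$ in which every edge between $\gA$ and $S$ arises at substitution time --- no $\pdr$, no Lemma~\ref{lem:newhope} case analysis, no re-splitting of $\gK_B$, no induction on $\gsize{\gB}$. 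A single $\ssdr$ instance then extracts the module $\gK_A$, giving $\gK_A\lpar\gB\coons{\gA}{S}$. The paper's entire construction is: under the proofs of $\gC_1$ and $\gC_2$, run the proof of $\gK_B\lpar\gB$, perform the insertion-plus-$\ssdr$ just described inside $\gB$, apply one further $\ssdr$ to obtain $\gC_2\coons{\gK_B\lpar\gK_A}{R_2}\lpar\gB\coons{\gA}{S}$, rewrite the first disjunct to $\gL$ via $\dD_4$, and conclude with $\dD_3$ instantiated at $\XXX=\gB\coons{\gA}{S}$. Your base-case remark is also off: atomic splitting is needed in Theorem~\ref{thm:ai-up} only because eliminating $\aiur$ must locate the $\cneg a$ that interacts with $a$; for $\ssur$ the proofs of $\gK_A\lpar\gA$ and $\gK_B\lpar\gB$ are used as black boxes and never decomposed.

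Beyond being unnecessary, the recursion you sketch founders on the obstacle you yourself flag and do not resolve: each re-splitting of $\gK_B$ along a component of $\gB$'s decomposition introduces a fresh provable context, and you provide no mechanism for threading these accumulating contexts coherently with the fixed residual $\gL$ and $\gC_2$. Since the proposal stops at announcing that the recursion ``stays coherent'' without carrying out that bookkeeping, the argument as written is incomplete --- and the incompleteness sits exactly where the context-composition observation makes the whole problem dissolve into two $\ssdr$ instances.
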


\begin{proof}
  Assume we have a proof of $\gG\coons{\gB\ltens\gA}S$ in $\GS$. By
  Lemma~\ref{lem:conred} we have a graph $\gL$ and a context
  $\gC_1\coonso{R_1}$, such that there are derivations
  \begin{equation*}
    \ods{\unit}{\dD_1}{\gC_1}{\GS}
    \qomma
     \ods{\unit}{\dD_2}{\gL\lpar(B\ltens\gA)}{\GS}
    \quand     
    \ods{\gC_1\coons{\gL\lpar\XXX}{R_1}}{\dD_3}{\gG\coons{\XXX}S}{\GS}
  \end{equation*}
   for any graph $\XXX$. We apply Lemma~\ref{lem:splitting:tens}
    to $\gL\lpar(B\ltens\gA)$
   and get $\gK_B$ and $\gK_A$ and a context
   $\gC_2\coonso{R_2}$ such that
   \begin{equation*}
     \ods{\gC_2\coons{\gK_B\lpar\gK_A}{R_2}}{\dD_4}{\gL}{\GS}
     \qomma
     \ods{\unit}{\dD_6}{\gK_B\lpar B}{\GS}
     \qomma
     \ods{\unit}{\dD_7}{\gK_A\lpar A}{\GS}
      \quand
     \ods{\unit}{\dD_5}{\gC_2}{\GS}
     \;.
   \end{equation*}
   We can now give a proof of $\gG\coons{\gB\coons{\gA}T}S$ as follows:
   \begin{equation*}
   	\scalebox{.9}
   	{
     \od{\odd{\odd{\odh{\unit}}{
           \dD_1}{\gC_1\Coons{
             \odn{\ods{\unit}{
                 \dD_5}{\gC_2\Coons{
                   \ods{\unit}{
                     \dD_6}{\gK_B\lpar
                     \odn{\gB\Coons{\ods{\unit}{\dD_7}{\gK_A\lpar A}{}}{T}}{
                       \ssdr}{
                       \gK_A\lpar\gB\coons{\gA}T}{}}{}}{R_2}}{}}{
               \ssdr}{
               \ods{\gC_2\coons{\gK_B\lpar\gK_A}{R_2}}{\dD_4}{\gL}{}
               \lpar\gB\coons{\gA}T}{}
           }{R_1}}{}}{
         \dD_3}{\gG\coons{\gB\coons{\gA}T}S}{}}
     }
   \end{equation*}
for any
   $\unit\subseteq T\subset\sizeof{\vB}$.
\end{proof}

\begin{thm}\label{thm:p-up}
  The rule $\pur$ is admissible for $\GS$.
\end{thm}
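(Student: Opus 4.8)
The plan is to mirror the proofs of Theorems~\ref{thm:ai-up} and~\ref{thm:ss-up}: use context reduction to pass to a shallow situation, then use splitting to break the cut apart and reassemble a cut-free proof; the only new ingredient is that the prime splitting lemma (Lemma~\ref{lem:splitting:prime}) takes the place of atomic splitting. So assume a $\GS$-proof of $\gG\coons{P\connn{M_1,\ldots,M_n}\ltens\cneg P\connn{N_1,\ldots,N_n}}S$ and write $\gD=(M_1\ltens N_1)\lpar\cdots\lpar(M_n\ltens N_n)$ for the target. First I would apply Lemma~\ref{lem:conred} to the module $P\connn{M_1,\ldots,M_n}\ltens\cneg P\connn{N_1,\ldots,N_n}$, obtaining a graph $\gL$ and a provable context $\gC_1\coonso{R_1}$ with $\proves[\GS]{\gL\lpar(P\connn{M_1,\ldots,M_n}\ltens\cneg P\connn{N_1,\ldots,N_n})}$ and a derivation $\gC_1\coons{\gL\lpar\XXX}{R_1}\to\gG\coons{\XXX}S$ for every $\XXX$. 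Taking $\XXX=\gD$, it then suffices to produce a $\GS$-proof of $\gL\lpar\gD$, to plug it into the provable context $\gC_1$, and to append the context-reduction derivation.

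To prove $\gL\lpar\gD$ I would apply the tensor splitting Lemma~\ref{lem:splitting:tens} to the $\ltens$ separating $P\connn{M_1,\ldots,M_n}$ from $\cneg P\connn{N_1,\ldots,N_n}$. This yields a provable context $\gC_2\coonso{R_2}$ and graphs $\gK_A,\gK_B$ with $\proves[\GS]{\gK_A\lpar P\connn{M_1,\ldots,M_n}}$, $\proves[\GS]{\gK_B\lpar\cneg P\connn{N_1,\ldots,N_n}}$ and a derivation $\gC_2\coons{\gK_A\lpar\gK_B}{R_2}\to\gL$. Since $P$ is prime with $\sizeof\vP\ge4$ and, by Observation~\ref{obs:deMorgan}, so is $\cneg P$, both are distinct from $\lpar$; as all $M_i,N_i$ are non-empty, I can now apply Lemma~\ref{lem:splitting:prime} to $\gK_A\lpar P\connn{M_1,\ldots,M_n}$ (principal graph $P$) and to $\gK_B\lpar\cneg P\connn{N_1,\ldots,N_n}$ (principal graph $\cneg P$).

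The principal case is when both applications fall into Case~\ref{prime:A}. Then I obtain partner graphs $\cneg P\connn{\gJ_1,\ldots,\gJ_n}$ and $P\connn{\gK'_1,\ldots,\gK'_n}$, together with derivations realising $\gK_A$ from $\cneg P\connn{\gJ_1,\ldots,\gJ_n}$ and $\gK_B$ from $P\connn{\gK'_1,\ldots,\gK'_n}$ inside provable contexts, and with $\proves[\GS]{\gJ_i\lpar M_i}$ and $\proves[\GS]{\gK'_i\lpar N_i}$ for all $i$. For each $i$ the tensor $(\gJ_i\lpar M_i)\ltens(\gK'_i\lpar N_i)$ is provable (Corollary~\ref{cor:tensor}), and two applications of $\swir$ (derivable in $\GS$ by Lemma~\ref{lem:switch}) turn it into $\gJ_i\lpar\gK'_i\lpar(M_i\ltens N_i)$. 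Tensoring these over all $i$, then using $n$ instances of $\ssdr$ to move each component $M_i\ltens N_i$ out to the top level, I reach a $\GS$-proof of $\gD\lpar\big((\gJ_1\lpar\gK'_1)\ltens\cdots\ltens(\gJ_n\lpar\gK'_n)\big)$; a single instance of $\pdr$ (equivalently, Lemma~\ref{lem:g} applied with $\cneg P$, which also absorbs the case where some $\gJ_i$ or $\gK'_i$ is empty) collapses the trailing multi-tensor into $\cneg P\connn{\gJ_1,\ldots,\gJ_n}\lpar P\connn{\gK'_1,\ldots,\gK'_n}$, yielding $\proves[\GS]{\gD\lpar\cneg P\connn{\gJ_1,\ldots,\gJ_n}\lpar P\connn{\gK'_1,\ldots,\gK'_n}}$. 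It then remains to route $\cneg P\connn{\gJ_1,\ldots,\gJ_n}$ and $P\connn{\gK'_1,\ldots,\gK'_n}$ into the two provable contexts from the prime splittings, turning them into $\gK_A$ and $\gK_B$, then to route $\gK_A\lpar\gK_B$ into $\gC_2$ to obtain $\gL$, pulling the external part $\gD$ out by an $\ssdr$ at each level. This threading is exactly the one in the proof of Theorem~\ref{thm:ai-up}, with $\cneg P\connn{\gJ_1,\ldots,\gJ_n}$ and $P\connn{\gK'_1,\ldots,\gK'_n}$ playing the roles that $\cneg a$ and $a$ played there.

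The hard part will be the remaining cases, where at least one of the two prime splittings returns Case~\ref{prime:B}. There the principal prime graph is not matched against a dual partner but is broken by removing a single module $M_i$ (or $N_i$), leaving $P\connn{M_1,\ldots,\unit,\ldots,M_n}$, which by Lemma~\ref{lem:newhope} is no longer prime and decomposes into smaller pieces. I expect to handle this by induction on the size $\gsize{P\connn{M_1,\ldots,M_n}\ltens\cneg P\connn{N_1,\ldots,N_n}}$ (Definition~\ref{def:size}): the peeled-off pair $M_i\ltens N_i$ is assembled directly from $\proves[\GS]{\gK_X\lpar M_i}$ and the matching provability coming from the other side, while the interaction of the reduced, now decomposable prime graph with the corresponding dual is a strictly smaller cut instance that is eliminated by the induction hypothesis together with the already established Theorems~\ref{thm:ai-up} and~\ref{thm:ss-up} for its tensor and atomic parts. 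The genuinely delicate points, which I expect to require the most care, are forcing the index $i$ removed on one side to be matched on the other, and keeping the bookkeeping of the several nested provable contexts ($\gC_1$, $\gC_2$, and the contexts from the prime splittings) under control while extracting the $M_j\ltens N_j$ components.
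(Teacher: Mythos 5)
Your opening (context reduction on the cut module, then tensor splitting to separate $P\connn{M_1,\ldots,M_n}$ from $\cnP\connn{N_1,\ldots,N_n}$, then an induction on the size of the cut module for the hard case) matches the paper's proof exactly. But your main case diverges from the paper in a way that creates two genuine gaps. First, you apply Lemma~\ref{lem:splitting:prime} to \emph{both} tensor factors, justified by ``as all $M_i,N_i$ are non-empty'' --- this is false: the side condition of $\pur$ (Figure~\ref{fig:SGS}) constrains only the $M_i$, so any or all of the $N_i$ may be $\unit$, in which case $\cnP\connn{N_1,\ldots,N_n}$ is not a composition of non-empty modules via the prime graph $\cnP$ and prime splitting is not applicable to it (the paper even treats $\cnP\connn{N_1,\ldots,N_n}=\unit$ as a separate case). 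The paper avoids this entirely by splitting \emph{only} the $P$-side, $\gL_P\lpar P\connn{M_1,\ldots,M_n}$; in its Case~\ref{prime:A} the proofs of $K_i\lpar M_i$ are plugged directly into the $N_i$-slots of the \emph{unsplit} proof $\dD_6$ of $\gL_{\cnP}\lpar\cnP\connn{N_1,\ldots,N_n}$, and $\ssdr$ extracts each $M_i\ltens N_i$, leaving exactly the graph $\cnP\connn{K_1,\ldots,K_n}$ that the splitting context expects. This asymmetric assembly makes your ``matching'' problem between two independent splittings evaporate.

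Second, even where your double splitting applies, the collapse step is broken: you fold $(\gJ_1\lpar\gK'_1)\ltens\cdots\ltens(\gJ_n\lpar\gK'_n)$ into $\cnP\connn{\gJ_1,\ldots,\gJ_n}\lpar P\connn{\gK'_1,\ldots,\gK'_n}$ by ``a single instance of $\pdr$ (equivalently, Lemma~\ref{lem:g} \ldots which also absorbs the case where some $\gJ_i$ or $\gK'_i$ is empty).'' Lemma~\ref{lem:g} does not absorb arbitrary empties: it requires the directed condition $\gM_i=\unit\Rightarrow\gN_i=\unit$. Splitting can return $\gJ_i=\unit$ with $\gK'_i\neq\unit$ (when $M_i$ is provable) and simultaneously $\gK'_j=\unit$ with $\gJ_j\neq\unit$ for another $j$, and then neither orientation of Lemma~\ref{lem:g} applies; the unconstrained rule $\gdr$ that would do the job is admissible but \emph{not derivable} in $\GS$ --- the paper exhibits a counterexample in~\eqref{eq:gdr} --- and its admissibility (Corollary~\ref{cor:g-adm}) is itself a consequence of the cut elimination you are in the middle of proving, so invoking it is circular. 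Finally, your Case~\ref{prime:B} sketch names the right induction measure but is missing the enabling move: the paper applies context reduction a \emph{second} time, to $\dD_6$ with $N_j$ as the selected module, obtaining $\gK_{N_j}$ with $\proves[\GS]{N_j\lpar\gK_{N_j}}$ and a derivation parametric in the $N_j$-slot; this is what lets the index $j$ chosen by splitting on the $P$-side be accommodated on the $\cnP$-side with no matching argument at all, after which the up-derivation of Lemma~\ref{lem:g} applied to $P\connn{M_1,\ldots,\unit,\ldots,M_n}\ltens\cnP\connn{N_1,\ldots,\unit,\ldots,N_n}$ yields strictly smaller $\pur$-instances handled by the induction hypothesis together with Theorems~\ref{thm:ai-up} and~\ref{thm:ss-up}.
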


\begin{proof}
This proof is slightly different from the previous two and from what usually happens in a deep inference setting. In particular, we
need to invoke an induction on the ``size of the cut formula'', which in our case is the size of the module $P\connn{M_1,\ldots,M_n}\ltens\cneg	P\connn{N_1,\ldots,N_n}$ in the premise of the rule.
In other cut elimination proofs in deep inference, there is no
need for such an induction, as it is outsourced to the splitting
lemma.

Now, assume we have a proof of
	$\gG\coons{P\connn{M_1,\ldots,M_n}\ltens\cneg
		P\connn{N_1,\ldots,N_n}}S$ with $M_1, \dots, M_n $ non-empty graphs. 
	We apply Lemma~\ref{lem:conred} and
	get a graph $\gL$ and a context $\gC_1\coonso{R_1}$, such that there
	are derivations
	\begin{equation*}
		\ods{\unit}{\dD_1}{\gC_1}{\GS}
		\qomma
		\ods{\unit}{\dD_2}{\gL\lpar(P\connn{M_1,\ldots,M_n}\ltens\cneg P\connn{N_1,\ldots,N_n})}{\GS}
		\quand
		\ods{\gC_1\coons{\gL\lpar\XXX}{R_1}}{\dD_3}{\gG\coons{\XXX}S}{\GS}
	\end{equation*}
	for any graph $\XXX$. 
	
	\begin{itemize}
        \item First, consider the case where $\cneg P\connn{N_1,\ldots,N_n}\neq \unit$. We can apply Lemma~\ref{lem:splitting:tens}
		to $\gL\lpar(P\connn{M_1,\ldots,M_n}\ltens\cneg P\connn{N_1,\ldots,N_n})$
		and get graphs $\gL_P$ and $\gL_\cnP$ and a context
		$\gC_2\coonso{R_2}$ such that
		\begin{equation*}
			\ods{\gC_2\coons{\gL_P\lpar\gL_\cnP}{R_2}}{\dD_4}{\gL}{\GS}
			\qomma
			\ods{\unit}{\dD_5}{\gL_P\lpar P\connn{M_1,\ldots,M_n}}{\GS}
			\qomma
			\ods{\unit}{\dD_6}{\gL_\cnP\lpar\cneg P\connn{N_1,\ldots,N_n}}{\GS}
			\quand
			\ods{\unit}{\dD_7}{\gC_2}{\GS}
			\;.
		\end{equation*}

		Applying Lemma~\ref{lem:splitting:prime}
		to $\gL_P\lpar P\connn{M_1,\ldots,M_n}$  gives us two different cases.

		\begin{enumerate}[(A)]
			\item\label{p-up:A} 
			There are graphs $K_{1},\ldots,K_{n}$ 
			and a context $\gC_3\coonso{R_3}$ 
			such that
			\begin{equation*}
				\ods{\gC_3\coons{\cnP\connn{K_{1},\ldots,K_{n}}}{R_3}}{\dD_8}{L_P}{\GS}
				\qomma
				\ods{\unit}{\dD_i'}{K_{i}\lpar M_i}{\GS}
				\qomma
				\ods{\unit}{\dD_9}{\gC_3}{\GS}
				\qomma
			\end{equation*}
			for all $i\in\set{1,\ldots,n}$. 
			Then, our derivation  of  $\gG\coons{(M_1\ltens N_1)\lpar\cdots\lpar(M_n\ltens N_n)}{S}$ 
			is constructed as follows
			\begin{equation*}
				\scalebox{.85}{$\qquad\qquad%
				\odN{
				\ods{\unit}{\dD_1}{\gC_1\Coons{
						\ods{\unit}{\dD_7}{\gC_2\Coons{\ods{\unit}{\dD_9}{\gC_3\Coons{
								\ods{\unit}{\dD_6}{\gL_\cnP\lpar
									\cneg P\Connn{\odn{N_1\ltens \ods{\unit}{\dD_1'}{K_{1}\lpar M_1}{}}{\ssdr}{(M_1\ltens N_1)\lpar  K_1}{},\ldots,
										\odn{N_n\ltens \ods{\unit}{\dD_n'}{K_{n}\lpar M_n}{}}{\ssdr}{(M_n\ltens N_n)\lpar K_n}{}
								}}{}
							}{R_3}
							}{}}{R_2}
						}{}}{R_1}
				}{}				
			}{\ssdr}{
				\ods{\gC_1\Coons{\ods{\gC_2\Coons{			\ods{\gC_3\coons{\cnP\connn{K_{1},\ldots,K_{n}}}{R_3}}{\dD_8}{L_P}{}\lpar\gL_\cnP}{R_2}}{\dD_4}{\gL}{}\lpar(M_1\ltens N_1)\lpar \cdots \lpar (M_n \ltens N_n)}{R_1}
				}{\dD_3}{\gG\coons{(M_1\ltens N_1)\lpar \cdots \lpar (M_n \ltens N_n)}S}{}}{}
			$}
			\end{equation*}
			
	
\begin{figure*}[!t]
  \rotatebox{90}{
	\scalebox{.79}{\hskip-1em
		$
		\ods{
			\ods{\unit}{\dD_1}{
				\gC_1\Coons{
					\odn{
						\ods{\unit}{\dD_7}
						{\gC_2\Coons{
								\odN{\ods{\unit}{\dD_{11}}{\gC_4\Coons{
											\odN{
												\ods{\unit}{\dD_Y}{H_{Y}\lpar\cnP\connn{M_1,\ldots,M_{j-1},\unit,M_{j+1},\ldots M_n}}{}
												\ltens 	
												\odn{
													\ods{\unit}{\dD_{14}}
													{\gC_{5}\Coons{\ods{\unit}{\dD_{13}}{N_j \lpar \gK_{N_j}}{}}{R_3}}{}
												}{\ssdr}{N_j \lpar 
													\ods{\gC_5 \coons{\gK_{N_j}\lpar \unit}{R_3}}
													{\dD^\unit_{12}}{\gL_\cnP \lpar  \cnP\connn{N_1,\ldots, N_{j-1}, \unit, N_{j+1},\ldots, N_n}}{}
												}{}
											}{\ssdr}{
										\left(N_j \ltens \ods{\unit}{\dD_X}{H_{X}\lpar M_j}{}\right) \lpar L_\cnP \lpar H_Y
												\lpar
												\left(\gP\connn{M_1,\ldots,M_{j-1},\unit,M_{j+1},\ldots M_n}
												\ltens 								
												\cnP\connn{N_1,\ldots, N_{j-1}, \unit, N_{j+1},\ldots, N_n}
												\right)
											}{}
										}{R_4}}{}
								}{\ssdr}{
									(M_j \ltens N_j) \lpar L_\cnP \lpar	\ods{\gC_4\coons{H_{X}\lpar H_{Y}}{R_4}}{\dD_{10}}{L_\gP}{}
									\lpar
									\left(P\connn{M_1,\ldots, M_{j-1}, \unit, M_{j+1},\ldots, M_n}\ltens \cnP\connn{N_1,\ldots,N_{j-1},\unit,N_{j+1},\ldots N_n}\right)
								}{}
							}{R_2}}{}
					}
					{\ssdr}{
						\ods{\gC_2\coons{\gL_P\lpar\gL_\cnP}{R_2}}{\dD_4}{\gL}{}
						\lpar(M_j\ltens N_j)
						\lpar 
						\ods{P\connn{M_1,\ldots, M_{j-1}, \unit, M_{j+1},\ldots, M_n}\ltens \cnP\connn{N_1,\ldots,N_{j-1},\unit,N_{j+1},\ldots N_n}  }
						{\dD^*}
						{ (M_1\ltens N_1)\lpar \cdots \lpar (M_{i-1}\ltens N_{i-1}) \lpar  \unit \lpar  (M_{i+1}\ltens N_{i+1}) \lpar \cdots \lpar (M_n \lpar N_n)    }
						{\mbox{Lemma~\ref{lem:g}}}
					}{}
				}{R_1}}{}
		}
		{\dD_3}{\gG\coons{(M_1\ltens N_1) \lpar \cdots \lpar (M_n \ltens N_n)}S}{}
		$
	        \hskip-2em
  }}
	\caption{Derivation for case \ref{p-up:B} in the proof of Theorem~\ref{thm:p-up}}
	\label{fig:p-up(b1)}
\end{figure*}	

			
			\item\label{p-up:B}
			There are graphs
			$H_X$ and $H_Y$ 
			and
			a context $\gC_4\coonso{R_4}$, 
			such that
			\begin{equation*}
                          \scalebox{.9}{$
				\ods{\gC_4\coons{H_{X}\lpar H_{Y}}{R_4}}{\dD_{10}}{L_\gP}{\GS}
				\qomma
				\ods{\unit}{\dD_X}{H_{X}\lpar M_j}{\GS}
				\qomma
				\ods{\unit}{\dD_Y}{H_{Y}\lpar \gP\connn{M_1,\ldots,M_{i-1},\unit,M_{i+1},\ldots M_n}}{\GS}
				\quand
				\ods{\unit}{\dD_{11}}{\gC_4}{\GS}
                                $}
			\end{equation*}
			for some $i\in\set{1,\ldots,n}$. 
			
				We apply context reduction (Lemma~\ref{lem:conred})
				to the derivation $\dD_6$ with conclusion
				$$\gL_{\cneg P}\lpar P\connn{N_1,\ldots, N_{j-1}, N_j, N_{j+1},\ldots, N_n}$$
				to obtain a context 
				$\gC_{5}$ and a graph $\gK_{N_j}$ such that 
				\begin{equation*}
                 \scalebox{.9}{$
			    \ods{\gC_5\coons{\gK_{N_j} \lpar \YYY}{R_5}}{\dD_{12}^\YYY}{\gL_{\cnP} \lpar  \cnP\connn{N_1,\ldots, N_{j-1}, \YYY, N_{j+1},\ldots, N_n}}{\GS}
					\qomma
					\ods{\unit}{\dD_{13}}{\gN_j \lpar \gK_{N_j}}{\GS}
					\quand
					\ods{\unit}{\dD_{14}}{\gC_{5}}{\GS}
                                        \hskip-1em
                                $}
				\end{equation*}
                For every graph $\YYY$.
				Then, our proof of  $\gG\coons{(M_1\ltens N_1)\lpar\cdots\lpar(M_n\ltens N_n)}{S}$ is constructed as shown in Figure~\ref{fig:p-up(b1)}.
				
				Observe that the derivation $\dD^*$ in Figure~\ref{fig:p-up(b1)} is a derivation in $\SGSu$. 
				However,
				all instances of 
				$\aiur$ and $\ssur$ in $\dD^*$ can be eliminated using 
				using Theorems~\ref{thm:ai-up} and \ref{thm:ss-up},
				while,
				in order to eliminate the instances of $\pur$,
				we invoke the induction hypothesis 
				since
				the principal modules in the $\pur$ instances in $\dD^\ast$ 
				have strictly smaller size than the one we started with.
			
		\end{enumerate}
		
    \item Now assume $ \cneg P\connn{N_1,\ldots,N_n}=\unit$. 
	Then by context reduction we already have a derivation $\dD_5=\dD_2$ of 
	$\gL\lpar(P\connn{M_1,\ldots,M_n}\ltens\cneg P\connn{N_1,\ldots,N_n})\isom\gL_P\lpar P\connn{M_1,\ldots,M_n}$, to which we apply Lemma~\ref{lem:splitting:prime} as above. Then case (A) is identical to the one above, and case (B) is simplified to
	\begin{equation*}
                 \scalebox{.9}{$
		\ods{
			\odN{
				\ods{\unit}{\dD_3}
				{\gC_1\Coons{
						\ods{\unit}{\dD_4}{\gC_2\Coons{  
								\ods{\unit}{\dD_X}{\gH_X\lpar\gM_j}{}
								\lpar
								\ods{\unit}{\dD_Y}{\gH_Y\lpar\gP\connn{\gM_1,\ldots,\gM_{j-1},\unit,\gM_{j+1},\ldots,\gM_n}}{}	
							}{R_2}}{} 
					}{R_1}}{}
			}{\ssdr}{
				\gC_1\Coons{	\ods{\gC_2\coons{H_X\lpar H_Y}{R_2}}{\dD_\gL}{\gL}{}  \lpar (M_1 \lpar \cdots \lpar M_n)  }{} }{R_1}
		}
		    {\dD_1}{\gG\coons{M_1\lpar \cdots \lpar M_n }S}{}
                    $}
	\end{equation*}
        \vskip-2ex
        \qedhere
        \end{itemize}
\end{proof}


\subsection*{Alternative formulations of the rules of $\GS$}
Further admissibility results follow immediately  without having to pass via splitting and context reduction, now that we have established cut elimination.
We can in fact considerably strengthen Theorem~\ref{thm:up}, as follows.

\begin{cor}\label{cor:adm}
  Let $\rr$ be an inference rule. If for every instance
  $\smash{~\vlinf{\rr}{}{B}{A}~}$ of that rule we have \;$\proves[\GS]{A\limp B}$\;, then $\rr$ is admissible for $\GS$.
\end{cor}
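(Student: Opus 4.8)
The plan is to read Corollary~\ref{cor:adm} as the statement that \emph{modus ponens is admissible}, and to derive it directly from cut elimination. Unfolding the definition of admissibility, for every instance $\vlinf{\rr}{}{B}{A}$ of the rule I must show that $\proves[\GS]{A}$ entails $\proves[\GS]{B}$; the hypothesis supplies the single extra ingredient $\proves[\GS]{A\limp B}$. So the whole task reduces to combining a proof of $A$ with a proof of the implication $A\limp B$ to manufacture a proof of $B$.

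First I would rewrite the hypothesis in its most convenient form using Corollary~\ref{cor:deduction}: from $\proves[\GS]{A\limp B}$, the last equivalence of that corollary produces a derivation from $A$ to $B$ in $\SGS$. Next, assuming $\proves[\GS]{A}$, I have a proof of $A$, which is also a proof in the larger system $\SGS$. Composing this proof of $A$ on top of the derivation from $A$ to $B$ — using the vertical composition of derivations from Definition~\ref{def:derivation} — yields a proof of $B$ in $\SGS$, that is, $\proves[\SGS]{B}$. Finally, Corollary~\ref{cor:gs-sgs}, which is exactly the payoff of the cut-elimination Theorem~\ref{thm:cut}, lets me pass from $\SGS$ back to $\GS$ and conclude $\proves[\GS]{B}$.

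I expect no real obstacle here, since all the difficulty has already been absorbed into the admissibility of the up-fragment (Theorem~\ref{thm:up}) and the resulting equivalence $\proves[\SGS]{\gG}\iff\proves[\GS]{\gG}$. The only points that need care are bookkeeping: keeping the orientation of the rule straight (premise $A$ above, conclusion $B$ below, matching the implication $A\limp B$), and observing why the composition step has to be carried out in $\SGS$ rather than $\GS$ — it is precisely the up-rules, absent from $\GS$, that are needed to splice a proof through the implication, and cut elimination is what allows us to discard them again at the end. An equivalent and perhaps more transparent phrasing would go through transitivity: once cut is admissible, Lemma~\ref{lem:trans} applies to $\GS$, and since $\proves[\GS]{A}$ is the same as $\proves[\GS]{\unit\limp A}$ (because the empty graph is self-dual, so $\cneg\unit\lpar A\isom A$), chaining $\unit\limp A$ with $A\limp B$ gives $\unit\limp B$, i.e.\ $\proves[\GS]{B}$.
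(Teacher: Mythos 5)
Your proposal is correct and follows exactly the paper's own proof: from $\proves[\GS]{A\limp B}$, Corollary~\ref{cor:deduction} yields a derivation from $A$ to $B$ in $\SGS$, composing with a proof of $A$ gives $\proves[\SGS]{B}$, and Corollary~\ref{cor:gs-sgs} returns this to $\GS$. Your closing remarks on why the composition lives in $\SGS$ and the alternative route via Lemma~\ref{lem:trans} are accurate but supplementary.
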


\begin{proof}
  Assume we have \;$\proves[\GS]{\gA}$. By
  Corollary~\ref{cor:deduction} we have a derivation in $\SGS$ from
  $\gA$ to $\gB$. Hence, we have \;$\proves[\SGS]{\gB}$, and therefore by Corollary~\ref{cor:gs-sgs} also \;$\proves[\GS]{\gB}$.
\end{proof}

Now consider the following two inference rules:
\begin{equation}
  \vlinf{\gdr}{}
        {G\connn{M_1,\ldots,M_n}\lpar\cneg G\connn{N_1,\ldots,N_n}}
        {\vls([M_1;N_1];\cdots;[M_n;N_n])}
        \qquad
        \vlinf{\gur}{}
        {\vls[(M_1;N_1);\cdots;(M_n;N_n)]}
        {G\connn{M_1,\ldots,M_n}\ltens\cneg G\connn{N_1,\ldots,N_n}}        
\end{equation}
Their shape is similar to $\pdr$ and $\pur$, but there are no side
conditions, i.e., $\gG$ can be any graph, and there are no emptiness
or non-emptiness conditions for $\gM_i$ or $\gN_i$.

\begin{cor}\label{cor:g-adm}
  The two rules $\gdr$ and $\gur$ are admissible for $\GS$.
\end{cor}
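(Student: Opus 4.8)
The plan is to obtain both rules from Corollary~\ref{cor:adm}. By that corollary it suffices, for each instance of
\[
  \vlinf{\gdr}{}
        {G\connn{M_1,\ldots,M_n}\lpar\cneg G\connn{N_1,\ldots,N_n}}
        {(M_1\lpar N_1)\ltens\cdots\ltens(M_n\lpar N_n)}
\]
to prove $\proves[\GS]{\gA\limp \gB}$ where $\gA=(M_1\lpar N_1)\ltens\cdots\ltens(M_n\lpar N_n)$ and $\gB=G\connn{M_1,\ldots,M_n}\lpar\cneg G\connn{N_1,\ldots,N_n}$. By Corollary~\ref{cor:deduction} this is in turn equivalent to exhibiting a derivation from $\gA$ to $\gB$ in $\SGS$. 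Since $\gur$ is the dual of $\gdr$ and $\SGS$ is closed under the dual of each of its rules (as in Lemma~\ref{lem:dual}), every such derivation for a $\gdr$-instance dualises to a derivation for the matching $\gur$-instance; so I would treat only $\gdr$ in detail. Note that the only gap between what we must prove and what Lemma~\ref{lem:g} already provides is the side condition ``$M_i=\unit$ implies $N_i=\unit$'', since Lemma~\ref{lem:g} already handles arbitrary (not necessarily prime) $G$ and arbitrary sizes via Lemma~\ref{lem:newhope}.

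If that side condition happens to hold for all $i$, then the derivation~\eqref{eq:gd} of Lemma~\ref{lem:g} is exactly the required derivation from $\gA$ to $\gB$, and it already lies in $\SGSd\subseteq\SGS$. So the real work is to remove the side condition, which I would do by induction on $\sizeof{\vG}$. First, as in the proof of Lemma~\ref{lem:g}, I would delete every index with $M_i=N_i=\unit$, replacing $G$ by the corresponding smaller graph; this changes neither $\gA$ nor $\gB$. After this, if the condition still fails there is an index, say (after reordering) $i=1$, with $M_1=\unit$ but $N_1\neq\unit$.

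For such a \emph{bad} index I would argue as follows. Writing $G'=G\setminus v_1$, we have $G\connn{\unit,M_2,\ldots,M_n}=G'\connn{M_2,\ldots,M_n}$, and since complementation commutes with induced subgraphs, $N_1$ is a module of $\cneg G\connn{N_1,\ldots,N_n}$ whose deletion leaves $\cneg{G'}\connn{N_2,\ldots,N_n}$. As index $1$ contributes the factor $\unit\lpar N_1=N_1$ to $\gA$, we have $\gA\isom N_1\ltens\bigl((M_2\lpar N_2)\ltens\cdots\ltens(M_n\lpar N_n)\bigr)$. Applying the induction hypothesis to $G'$ with the remaining modules, inside the tensor context (Remark~\ref{rem:context}), then one $\ssdr$ (switch) to push $N_1$ into the right $\lpar$-component, and finally one $\ssur$ to reconnect $N_1$, yields a derivation
\[
  \ods{N_1\ltens\bigl((M_2\lpar N_2)\ltens\cdots\ltens(M_n\lpar N_n)\bigr)}
      {}
      {G'\connn{M_2,\ldots,M_n}\lpar\cneg G\connn{N_1,\ldots,N_n}}
      {\SGS}
\]
whose conclusion is $\gB$. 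Each such reduction strictly decreases $\sizeof{\vG}$, so the induction terminates in the base case governed by Lemma~\ref{lem:g}. (The degenerate subcase where all of $N_2,\ldots,N_n$ are empty is handled instead by a single \emph{mix} instance of $\ssdr$.)

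The step I expect to be the main obstacle is the correctness of the concluding $\ssur$: one must verify that after the switch the factor $N_1\ltens\cneg{G'}\connn{N_2,\ldots,N_n}$ is related to $\cneg G\connn{N_1,\ldots,N_n}$ by a single legitimate $\ssur$ instance, i.e.\ that $N_1$ is a module of the target connected to exactly the $\cneg G$-neighbours of $v_1$, with the side condition met by taking $S$ to be that (proper) set of neighbours, and the boundary case $S=$ ``all vertices of $\cneg{G'}\connn{N_2,\ldots,N_n}$'' collapsing the step to an isomorphism. The secondary difficulty is purely the bookkeeping that ensures the recursive call after deleting $v_1$ is an instance of the same generalised statement, so that any further bad indices among $2,\ldots,n$ are absorbed by the induction hypothesis rather than requiring a separate argument.
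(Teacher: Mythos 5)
Your proof is correct, but it takes a genuinely different route from the paper's. You eliminate the side condition of Lemma~\ref{lem:g} by a fresh induction on $\sizeof{\vG}$: after discarding indices with $M_i=N_i=\unit$, each remaining bad index ($M_1=\unit$, $N_1\neq\unit$) is peeled off by applying the induction hypothesis to $G\setminus v_1$ inside the context $N_1\ltens\coonso{}$, then a switch, then one $\ssur$ re-attaching $N_1$ along the $\cneg G$-neighbourhood of $v_1$; the points you flag do check out, since $N_1$ is a module of $\cneg G\connn{N_1,\ldots,N_n}$ by construction of composition-via-graph, and the boundary cases $S=\vB$ and $S=\emptyset$ (and your mix degeneracy) collapse to an isomorphism or a legitimate instance. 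Because your derivation uses $\ssur$ it lives in $\SGS$, so you must pass through Corollary~\ref{cor:deduction} (i.e., through cut elimination) to get $\proves[\GS]{A\limp B}$ before invoking Corollary~\ref{cor:adm}; this is unavoidable on your route, since by the counterexample around~\eqref{eq:gdr} the unconstrained rule is \emph{not} derivable in $\GS$ alone. The paper avoids both the new induction and the detour through $\SGS$ by a padding trick: it proves~\eqref{eq:g-adm}, i.e.\ $\proves[\GS]{(\cneg{M_1}\ltens\cneg{N_1})\lpar\cdots\lpar(\cneg{M_n}\ltens\cneg{N_n})\lpar G\connn{M_1,\ldots,M_n}\lpar\cneg G\connn{N_1,\ldots,N_n}}$, directly in $\GS$ via $n\times\idr$ followed by a \emph{single} application of Lemma~\ref{lem:g} with the modules $(\cneg{M_i}\ltens\cneg{N_i})\lpar M_i$ paired against $N_i$ --- if the former is $\unit$ then $\cneg{N_i}=\unit$, hence $N_i=\unit$, so the side condition of Lemma~\ref{lem:g} holds automatically --- followed by $n\times\ssdr$ to extract the $(\cneg{M_i}\ltens\cneg{N_i})$; moreover~\eqref{eq:g-adm} is symmetric under dualising, so $\gur$ is covered by the same statement, where you instead invoke Lemma~\ref{lem:dual} (legitimately, since $\idr$, $\iur$, and $\swir$ are all derivable in $\SGS$). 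What each approach buys: yours additionally yields the stronger, reusable fact that the unconstrained rules are derivable in $\SGS$, while the paper's one-shot construction keeps the entire witness inside the down-fragment $\GS$ and relies on cut elimination only through Corollary~\ref{cor:adm} itself.
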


\begin{proof}
  We are going to show that for all graphs $\gG$ with $\sizeof{\vG}=n$ and graphs $\gM_1,\gN_1,\ldots,\gM_n,\gN_n$ we have that
  \begin{equation}\label{eq:g-adm}
    \proves[\GS]{(\cneg{\gM_1}\ltens\cneg{\gN_1})\lpar\cdots\lpar(\cneg{\gM_1}\ltens\cneg{\gN_1})\lpar G\connn{M_1,\ldots,M_n}\lpar\cneg
    G\connn{N_1,\ldots,N_n}}\quadfs
  \end{equation}
  Then this corollary follows via Corollaries~\ref{cor:context}
  and~\ref{cor:adm}. To prove~\eqref{eq:g-adm}, we use the following derivation
  \begin{equation*}
    \nvls{
    \od{\odI{\odI{\odh{\unit}}{n\times\idr}{
     \ods{((\cneg{\gM_1}\ltens\cneg{\gN_1})\lpar\gM_1\lpar\gN_1)\ltens\cdots\ltens((\cneg{\gM_n}\ltens\cneg{\gN_n})\lpar\gM_n\lpar\gN_n)}
        {\dD}{\gG\connn{(\cneg{\gM_1}\ltens\cneg{\gN_1})\lpar\gM_1,\ldots,(\cneg{\gM_n}\ltens\cneg{\gN_n})\lpar\gM_n}\lpar\cneg\gG\connn{\gN_1,\ldots,\gN_n}}{\GS}
        }{}}{n\times\ssdr}{
        (\cneg{\gM_1}\ltens\cneg{\gN_1})\lpar\cdots\lpar(\cneg{\gM_n}\ltens\cneg{\gN_n})
        \lpar{\gG\connn{\gM_1,\ldots,\gM_n}\lpar\cneg\gG\connn{\gN_1,\ldots,\gN_n}}}{}
    }}
  \end{equation*}
  where $\dD$ exists by Lemma~\ref{lem:g}. To see this, observe that
  for each $i\in\set{1,\ldots,n}$, there are two cases: either
  $(\cneg{\gM_i}\ltens\cneg{\gN_i})\lpar\gM_i\neq\unit$ or
  $(\cneg{\gM_i}\ltens\cneg{\gN_i})\lpar\gM_i=\unit$. In the second
  case we also have $\cneg{\gN_i}=\unit$, and therefore also
  $\gN_i=\unit$. Hence, the condition for applying Lemma~\ref{lem:g}
  is fulfilled.
\end{proof}

The above corollary shows that a variant of $\GS$ where we use $\gdr$ instead of $\pdr$ is equivalent to the system $\GS$ using only $\pdr$, as it is defined in Figure~\ref{fig:SGS}.
Indeed, as noted previously in Section~\ref{sec:system}, a weaker constraint on the $\pdr$ and $\pur$ was used in the conference version of this paper~\cite{Acclavio2020},
which did not insist that all of the components of one graph are non-empty when applying these rules. Thus we have tightened the proof search space of $\GS$, but have not changed the expressive power of the logic.

The reason why it is not immediately obvious whether the stronger side conditions in this version of paper can be used, is that Corollary~\ref{cor:g-adm} is a proper admissibility result rather than a derivability result. As shown in Lemma~\ref{lem:g}, if we drop the condition in $\pdr$ that $\gP$ is prime then the rule for general graphs is derivable.
However, the condition that all components of one graph are non-empty is more subtle, since the resulting variant of $\pdr$ without that side-condition is not derivable using the rules of $\GS$ with the side-conditions fixed in Figure~\ref{fig:SGS}. 
For a counterexample to the derivability of the unconstrained $\gdr$ rule in $\GS$, consider the proof on the left below and observe that the instance of $\gdr$ cannot be derived using the rules of $\GS$.
\begin{equation}\label{eq:gdr}
\adjustbox{max width=.9\textwidth}{
\def\diameter{32pt}
\newcommand{\pentagon}[5]{
\begin{array}{c}
	\begin{tikzpicture}
		\draw (90+0*360/5:\diameter) node {$#1$};
		\draw (90+1*360/5:\diameter) node {$#2$};
		\draw (90+2*360/5:\diameter) node {$#3$};
		\draw (90+3*360/5:\diameter) node {$#4$};
		\draw (90+4*360/5:\diameter) node {$#5$};
	\end{tikzpicture}
\end{array}
}
\odn{
\odN{
\unit
}{\aidr \times 5}{
\pentagon{\vmodule{3}{\va1 \quad \vna1~}}{\vmodule{4}{\vb1 \quad \vnb1~}}{\vmodule{1}{\vc1 \quad \vnc1~}}{\vmodule{5}{\vd1 \quad \vnd1~}}{\vmodule{2}{\ve1 \quad \vne1~}}
\edges{M2/M1,M3/M1,M4/M1,M5/M1,M3/M2,M4/M2,M5/M2,M4/M3,M5/M3,M5/M4}
}{}
    }{\gdr}{
    \pentagon{\va1}{\vb1}{\vmodule{1}{\vc1 \quad \vnc1~}}{\vd1}{\vu1}
\quad
\pentagon{\vna1}{\vnb1}{\vu2}{\vnd1}{\vmodule{2}{\vne1 \quad \ve1 }}
\edges{a1/M1,a1/d1,b1/u1,b1/d1,M1/u1}
\edges{na1/nb1,nb1/u2,u2/nd1,nd1/M2,M2/na1}
 }{}
\qquad
\odN{
\odn{
\unit
}{\idr}{
\begin{array}{ccc}
	& \va1\\
	\vb1 \\
	&&\vd1
\end{array}
\quad
\begin{array}{ccc}
	& \vna1\\
	\vnb1\\
	&&\vnd1
\end{array}
\edges{a1/d1,b1/d1}
\edges{na1/nb1}
}{}
}{\aidr\times 2}{
	\vlderivation{\vlid{}{}{
		\pentagon{\va1}{\vb1}{\vmodule{1}{\vc1 \quad \vnc1~}}{\vd1}{\vu1}
		\quad
		\pentagon{\vna1}{\vnb1}{\vu2}{\vnd1}{\vmodule{2}{\vne1 \quad \ve1 }}
		\edges{a1/M1,a1/d1,b1/u1,b1/d1,M1/u1}
		\edges{na1/nb1,nb1/u2,u2/nd1,nd1/M2,M2/na1}
	}{
		\vlhy{
			\pentagon{\va1}{\vb1}{\vmodule{1}{\vc1 \quad \vnc1~}}{\vd1}{}
			\quad
			\pentagon{\vna1}{\vnb1}{}{\vnd1}{\vmodule{2}{\vne1 \quad \ve1 }}
			\edges{a1/M1,a1/d1,b1/d1}
			\edges{na1/nb1,nd1/M2,M2/na1}
		}
	}
}
}{}
}
\end{equation}
However, observe that there is a proof of the same conclusion on the right above, using the rules of $\GS$ with the correct side conditions.

The reason why we use the rules $\pdr$ and $\pur$ instead of the rules $\gdr$ and
$\gur$ is to have more control over the structure of a derivation, such
that we are able to prove cut elimination. In fact, one can say that
the main purpose of using the modular decomposition of graphs into prime
graphs (instead of arbitrary quotient graphs) is to be able to prove
the splitting and context reduction lemmas of Section~\ref{sec:splitting}.



\section{Analyticity and Conservativity}\label{sec:MLL}

In sequent calculi, analyticity is often associated to the subformula
property, demanding that each subformula in the conclusion of a rule
also occurs in its premise. This, in turn, entails that every formula
that occurs in a proof is a subformula of the conclusion of the proof.\looseness=-1

This notion of analyticity cannot be applied to deep inference, where
there is no subformula property, and where in general, analyticity is
associated to the possibility of eliminating the up-rules from the
system~\cite{Bruscoli2016}, such that at any time in the proof search
only a finite number of rules can be applied.
	
In this sense, the rule $\gdr$ discussed above cannot be considered
analytic. The absence of constraints on the non-emptiness of the
modules in the conclusion allows us to apply the rule in infinitely many
ways.  By means of example, any application of the $\pdr$-rule to a prime
graph $\pfour$ could be considered as an instance of
$\gdr$ on any prime graph, since every prime graph contains an
induced $\pfour$. 

However, the step from formulas/cographs to general graphs allows us
to formulate a notion of analyticity for $\GS$, that is more similar
to the well-known notion of analyticity for sequent proofs than what
is usually present for deep inference systems.  For this we introduce
the notions of \emph{connector} and \emph{subconnector}, which are for our
notion of analyticity what formulas and subformulas are for the
standard notion.

\begin{defi}
A prime graph $\gP$ is a \defn{connector} of a graph $\gG$ 
if $\gP$ occurs in the modular decomposition of $\gG$.
A \defn{subconnector} of $\gG$ is a prime graph that is an induced subgraph of a connector of~$\gG$.
\end{defi}

\begin{exa}
  The graph from Example~\ref{ex:modDec} has three occurrences of the connector $\pfour$,
  one occurrence of the connector $\ltens$ and one occurrence of the connector $\lpar$.  Note that $\ltens$ is a
  subconnector of any graph containing at least one edge, and $\lpar$ is a
  subconnector of any graph containing two distinct vertices $u$ and $v$
  that are not connected by an edge in $\gG$. The graph $\pfour$ is a
  subconnector of any graph that has a prime connector $\gP$ with
  $\sizeof\vP\geq 4$.
\end{exa}
Intuitively, if we consider prime graphs as generalised connectives (see Section~\ref{sec:generalised}), a connector of $\gG$ is an occurrence of such a connective in the modular decomposition tree of~$\gG$.
With the help of connectors and subconnectors, we can now formulate our analyticity result.

\def\GSplus{\GS\cup\set{\ssur}}

\begin{thm}[Analyticity]\label{thm:anaCon}
  Let $\gG$ be a graph that is provable in $\GSplus$.
  Then $\gG$ admits a derivation $\tdD$ in $\GSplus$ 
  such that every connector of a graph occurring in $\tdD$ is a subconnector of $\gG$.
\end{thm}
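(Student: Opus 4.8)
The plan is to prove the statement by induction on the size $\gsize{\gG}$ (Definition~\ref{def:size}), constructing the required proof directly from the modular decomposition of $\gG$. Since $\ssur$ is admissible for $\GS$ (Theorem~\ref{thm:up}), provability in $\GSplus$ coincides with provability in $\GS$, so the hypothesis gives $\proves[\GS]{\gG}$ and it suffices to \emph{build} a proof in $\GSplus$ all of whose intermediate graphs have only subconnectors of $\gG$ as connectors. Throughout I would use the three facts recorded after the definition of subconnector: $\ltens$ is a subconnector of any graph with an edge, $\lpar$ of any graph with two non-adjacent vertices, and every prime induced subgraph of a connector of $\gG$ is a subconnector of $\gG$. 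The case split on $\gG$ follows Lemma~\ref{lem:newhope}.

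The manageable cases are as follows. If $\gG=\unit$ the empty proof works, and if $\gG$ is a singleton it is not provable, so the statement is vacuous. If $\gG=\gA\ltens\gB$ with $\gA,\gB\neq\unit$, then by Corollary~\ref{cor:tensor} both $\gA$ and $\gB$ are provable and strictly smaller; composing the two proofs obtained by induction with $\ltens$ yields a proof of $\gG$ whose intermediate graphs have the form $\gG'\ltens\gH'$, so their connectors are $\ltens$ together with connectors of $\gG'$ and $\gH'$, all subconnectors of $\gG$. If $\gG=\gP\connn{\gA_1,\dots,\gA_n}$ with $\gP$ prime and $n\geq 4$, then by Corollary~\ref{cor:prime} there is an $i$ for which $\gA_i$ and $\gB\coloneqq\gP\connn{\gA_1,\dots,\unit,\dots,\gA_n}$ are both provable and strictly smaller; I would take analytic proofs of both by induction, compose them with $\ltens$ to prove $\gB\ltens\gA_i$, and then apply a single $\ssur$ step (with $S$ the set of neighbours of the $i$-th slot, a proper subset of the vertices of $\gB$ since a prime graph with at least four vertices has no universal vertex) to obtain $\gG=\gB\coons{\gA_i}{S}$. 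The connectors of $\gB$ are prime induced subgraphs of $\gP$ and connectors of the remaining $\gA_j$, hence subconnectors of $\gG$. This case is exactly where the rule $\ssur$ of $\GSplus$ is needed, as no down-rule can reinsert $\gA_i$ into a single slot.

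The remaining, and genuinely hard, case is $\gG=\gA\lpar\gB$. The plan is to write $\gG=\gB_1\lpar\gG_1$ with $\gB_1$ a $\lpar$-indecomposable component (an atom, a tensor, or a prime composition), and to apply the corresponding splitting lemma — atomic splitting (Lemma~\ref{lem:splitting:atom}), splitting tensor (Lemma~\ref{lem:splitting:tens}), or splitting prime (Lemma~\ref{lem:splitting:prime}) — to $\proves[\GS]{\gG_1\lpar\gB_1}$. Splitting returns a provable context $\gC$ together with auxiliary graphs (the $\gK_i$ and the various $\gK_i\lpar\gM_i$) that are provable and strictly smaller, plus a $\GS$-derivation reconstructing $\gG_1$; I would replace the proofs of the small provable graphs by analytic ones via the induction hypothesis and reassemble everything, using $\pdr$, $\ssdr$ and $\aidr$, into a proof of $\gG$.

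The main obstacle is precisely the bookkeeping of connectors across this splitting step, which has no analogue in the literature proofs of analyticity. Two points are delicate. First, in Case~\ref{prime:A} of Lemma~\ref{lem:splitting:prime} the dual prime graph $\cneg\gP$ occurs as a connector of the reconstructed graph $\gC\coons{\cneg\gP\connn{\gK_1,\dots,\gK_n}}{R}$, and one must ensure that $\cneg\gP$ is a subconnector of $\gG$; for $\gP=\pfour$ this is automatic since $\cneg{\pfour}\isom\pfour$, but in general it forces us either to route through Case~\ref{prime:B}, whose only new connectors are prime induced subgraphs of $\gP$, or to argue separately that the relevant duals are themselves subconnectors. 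Second, the context $\gC$ and the graphs $\gK_i$ delivered by splitting are not a priori built from subconnectors of $\gG$. The cleanest way to close both gaps is to strengthen the statements of Lemmas~\ref{lem:splitting:prime}, \ref{lem:splitting:tens}, \ref{lem:splitting:atom} and~\ref{lem:conred} so that all graphs they produce — and all graphs appearing in the derivations they return — have only subconnectors of the input as connectors, and then to carry this extra invariant through the mutual induction of Section~\ref{sec:splitting} summarised in Figure~\ref{fig:complete-roadmap}. I expect the real effort to lie in verifying that this invariant survives the most involved sub-case, where an $\ssdr$ turns the principal graph into a larger prime graph, since that is where fresh prime structure is genuinely created.
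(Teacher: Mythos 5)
Your case analysis on the modular decomposition is fine as far as it goes: the tensor case via Corollary~\ref{cor:tensor}, and the prime case via Corollary~\ref{cor:prime} plus one controlled $\ssur$ step (your observation that a prime graph with at least four vertices has no universal vertex, so $S\neq\vB$, is correct, and the connectors of $\gP\connn{\gA_1,\dots,\unit,\dots,\gA_n}$ are indeed prime induced subgraphs of $\gP$ or connectors of the $\gA_j$). But the entire weight of the theorem sits in the $\lpar$ case, and there your proposal has a genuine gap that you yourself half-identify: it defers everything to an unproven strengthening of Lemmas~\ref{lem:splitting:prime}, \ref{lem:splitting:tens}, \ref{lem:splitting:atom} and~\ref{lem:conred}, and that strengthening, as you state it, is not just laborious but false. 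Concretely: you cannot ``route through Case~\ref{prime:B}'' --- splitting hands you whichever case it hands you --- and in Case~\ref{prime:A} the reconstruction derivation $\dD_\gG$ passes through $\gC\coons{\cneg\gP\connn{\gK_1,\dots,\gK_n}}{R}$, where $\cneg\gP$ is a subconnector of $\gG$ only when $\cneg\gP$ is an induced subgraph of some connector of $\gG$; since $\cneg\gP$ and $\gP$ have the same number of vertices, this essentially requires $\cneg\gP\isom\gP$, which holds for $\pfour$ but already fails for $\mathsf{P_5}$ (whose complement is the house graph), and there is no reason the matching $\cneg\gP$-structure survives as a connector of $\gG$ itself --- it can be dismantled by $\ssdr$ steps inside $\dD_\gG$. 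Worse, in case (b.III) of the splitting proof the induction passes through $\gG''\lpar\gQ\connn{\gG',\gN_2,\dots,\gN_k}$ with $\gQ$ a \emph{strictly larger} prime graph that is not an induced subgraph of any connector of the conclusion, so the invariant ``every graph produced has only subconnectors of the input as connectors'' is violated at the recursive calls; repairing it would require eliminating those $\gQ$-occurrences, i.e.\ would circle back to exactly the rewriting argument you were trying to avoid.

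The paper's proof takes that other route directly, and never touches splitting. It starts from an \emph{arbitrary} derivation $\dD$ of $\gG$ in $\GSplus$ and observes that a connector $\gP$ that is not a subconnector of $\gG$ can only be \emph{introduced} (reading bottom-up) by an $\ssdr$ instance that moves a module into a slot of $\gP$, and must later be \emph{destroyed} by an instance of $\ssur$, $\pdr$, or $\aidr$. Using the rule permutations of Figure~\ref{fig:rules-perm} the creating and destroying instances are brought adjacent, and each of the three destroyer cases is rewritten locally so that the offending occurrence of $\gP$ never appears, at the cost of possibly creating connectors of strictly smaller size (e.g.\ $\gP\connn{\unit,\unit,\gM_3,\dots,\gM_n}$, or the connectors produced by Lemma~\ref{lem:g} in the $\pdr$ case); the induction is on the multiset of offending connector occurrences ordered by $\sizeof{\vQ}<\sizeof{\vP}$. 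This is why the theorem is stated for $\GSplus$ rather than $\GS$: the proof transforms given derivations rather than reconstructing them, so it needs $\ssur$ among the rules it permutes, but it needs no control whatsoever over contexts or the graphs $\gK_i$ --- precisely the control your approach cannot obtain from the splitting lemmas as proved.
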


\begin{figure}[t]
  \begin{adjustbox}{max width = \textwidth}$
    \vlnostructuressyntax
    \begin{array}{c}
      \gP\Connn{\ods{\gM_1}{\dD_1}{\gM_1'}{}, \dots, \ods{\gM_n}{\dD_n}{\gM_n'}{}}
      \isom
      \od{\odo{\odh{\gP\Connn{\gM_1, \dots, \gM_{i-1},\ods{\gM_1}{\dD_1}{\gM_1'}{}, \gM_{i+1}, \dots,  \gM_n}}
	}{}{\gP\Connn{\ods{\gM_i}{\dD_i}{\gM_i'}{},\ldots, \ods{\gM_{i-1}}{\dD_{i-1}}{\gM_{i-1}'}{},\gM',\ods{\gM_{i+1}}{\dD_{i+1}}{\gM_{i+1}'}{}, \ldots, \ods{\gM_n}{\dD_n}{\gM_n'}{}}}{}}
      \\\\[1em]
      \odn{\gC\Coons{\ods{\gG}{\dD}{\gG'}{}}{R}}{\ssdr}{
	\gC\Coons{\unit}{R} \lpar \gG'
      }{}
      \isom
      \odn{\gC\Coons{\gG}{R}}{\ssdr}{
	\gC\Coons{\unit}{R} \lpar \ods{\gG}{\dD}{\gG'}{}
      }{}		
      \qquad\qquad
      \odn{\gC\Coons{\gG\lpar \gH}{R}}{\ssdr}{
	\odn{\gC\Coons{\gG}{R}}{\ssdr}{\gC\Coons{\unit}{R}\lpar \gG}{} \lpar \gH
      }{}
      \isom
      \odn{\gC\Coons{\gG\lpar\gH}{R}}{\ssdr}{
	\gC\Coons{\unit}{R} \lpar \gG \lpar \gH
      }{}		
      \\\\[1em]
      \odn{
	(\gM_1 \lpar \gN_1)\ltens \cdots \ltens (\gM_{i-1} \lpar \gN_{i-1})
	\ltens 
	\left(\ods{\gM_i}{\dD_1}{\gM_i'}{} \lpar \ods{\gN_i}{\dD_2}{\gN_i'}{}\right)
	\ltens 
	(\gM_{i+1} \lpar \gN_{i+1})\ltens \cdots \ltens (\gM_n \lpar \gN_n)
      }{\pdr}{
	\gP\connn{\gM_1,\ldots, \gM_n}\lpar \cneg\gP\connn{\gN_1, \ldots, \gN_n}
      }{}
      \hskip4em
      \\\\[-1ex]
      \hskip10em
      \isom
      \odn{
	(\gM_1 \lpar \gN_1)\ltens \cdots  \ltens (\gM_n \lpar \gN_n)
      }{\pdr}{
	\gP\Connn{\gM_1,\ldots,\gM_{i-1},\ods{\gM_i}{\dD_1}{\gM_i'}{}  , \gM_{i+1} \gM_n}
	\lpar 
	\cneg\gP\Connn{\gN_1, \ldots,\gN_{i-1} ,\ods{\gN_i}{\dD_2}{\gN_i'}{} , \gN_{i+1}, \ldots, \gN_n}
      }{}						
    \end{array}
    $\end{adjustbox}
  \caption{Some of the rule permutations captured by the open deduction syntax.}
  \label{fig:rules-perm}
\end{figure}

\begin{proof}
  Let $\gG$ be given, and let $\dD$ be a derivation of
  $\gG$ in $\GSplus$. We are now going to 
  transform $\dD$ into a derivation $\tdD$ that does not contain any
  occurrences of a connector that is not a subconnector of $\gG$. Let $\gP$
  be such a connector occuring in $\dD$.  While going up in the
  derivation $\dD$, only an instance of $\ssdr$ can introduce $\gP$ as
  connector, i.e., the premise of the rule instance contains the prime
  graph $\gP$ as connector, but the conclusion does not. This happens by
  moving a module $\gM_i$ inside a context $\gC\Coons{\gP\connn{\gM_1,
      \ldots, \gM_{i-1}, \unit, \gM_{i+1}, \ldots, \gM_n}}R$ where
  $\gM_1, \dots,\gM_{i-1},\gM_{i+1}, \dots, \gM_n$ are non-empty. As
  the premise of the derivation $\dD$ is empty, this occurrence of
  $\gP$ must be destroyed eventually, i.e., $\dD$ contains a rule
  instance whose conclusion contains $\gP$ as connector, but the premise
  does not. This can be done via an instance of $\ssur$, $\pdr$, or
  $\aidr$. Below, we will list all cases how this can happen, and show
  how the derivation can be rewritten without introducing and
  destroying this occurrence of $\gP$.  For this, we proceed by
  induction on the multiset $\multiset{P_1,\ldots,P_n}$, where
  $P_1,\ldots,P_n$ are the occurrences of connectors in $\dD$ that are
  not subconnectors of $\gG$, and where we use 
  the multiset ordering:
  $\gQ<\gP$ iff $\sizeof\vQ<\sizeof\vP$.

  In the discussion of the cases below, we make use of the open
  deduction notation of derivations (as introduced in
  Section~\ref{sec:system}) and the possibility of rule permutations,
  which are shown in Figure~\ref{fig:rules-perm}. More precisely, we
  assume we have a connector $\gP$ introduced by an instance of $\ssdr$
  as described above, and because of the rule permutations, we can
  without loss of generality assume that the rule that destroys this
  connector $\gP$ is directly above, i.e., we have the following three
  cases:

  \begin{enumerate}
			
  \item 
    An instance of $\ssur$
    destroys the connector $\gP$ by extracting a module $\gM_j$ from $\gP\connn{\gM_1, \ldots, \gM_n}$.
    There are two cases to consider: $j=i$ and $j\neq i$.
    
    \begin{enumerate}
    \item If $j=i$, then we can without loss of generality assume that
      $i=j=1$, and with the rule permutations of
      Figure~\ref{fig:rules-perm}, we can move the instances of
      $\ssdr$ and $\ssur$ close to each other, such that we have one
      of the following two subderivations:
      $$
      	\odn{\gM_1\ltens \gC_1\Coons{\gC_2\Coons{
	      \gP\connn{\unit, \gM_2, \ldots, \gM_n}
	    }{R_2}}{R_1}
	}{\ssur}{
	  \gC_1\Coons{\odn{\gC_2\Coons{\gP\connn{\gM_1,\gM_2, \ldots, \gM_n}}{R_2}}{\ssdr}{
	      \gM_1\lpar \gC_2\Coons{\gP\connn{\unit, \gM_2, \ldots, \gM_n}}{R_2}}{}}{R_1}
	}{}
	$$
        or
        $$
	\odn{\gC_1\Coons{\odn{\gM_1\ltens \gC_2\Coons{
		\gP\connn{\unit, \gM_2, \ldots, \gM_n}
	      }{R_2}
	    }{\ssur}{
	      \gC_2\Coons{\gP\connn{\gM_1,\gM_2, \ldots, \gM_n}}{R_2}
	    }{}
	  }{R_1}}
	    {\ssdr}{
	      \gM_1\lpar \gC_1\Coons{\gC_2\Coons{\gP\connn{\unit, \gM_2, \ldots, \gM_n}}{R_2}}{R_1}
	    }{}
      $$
      where the graphs $\gM_1, \ldots, \gM_n$ are all non-empty and 
      $\gC_1\coonso{R_1}$ and $\gC_2\coonso{R_2}$ are arbitrary, possibly empty contexts.
      Now, we can remove the occurrence of the connector $\gP$ from the derivation $\dD$
      by replacing the subderivations above by
      $$
	\odn{\gM_1\ltens \gC_1\Coons{\gC_2\Coons{
	      \gP\connn{\unit, \gM_2, \ldots, \gM_n}
	    }{R_2}}{R_1}
	}{\ssur}{
	  \gC_2\Coons{\gM_1\lpar \gC_2\Coons{\gP\connn{\unit, \gM_2, \ldots, \gM_n}}{R_2}}{R_1}
	}{}
        $$
        or
        $$
	\odn{\gC_1\Coons{
	    \gM_1\ltens \gC_2\Coons{\gP\connn{\unit, \gM_2, \ldots, \gM_n}}{R_2}
	  }{R_1}}
	    {\ssdr}{
	      \gM_1\lpar \gC_1\Coons{\gC_2\Coons{\gP\connn{\unit, \gM_2, \ldots, \gM_n}}{R_2}}{R_1}
	    }{}
            $$
	    respectively.

    \item if $j\neq i$, then we can assume without loss of generality that $j=1$ and $i=2$, and we can (via the rule permutations in Figure~\ref{fig:rules-perm}) isolate a subderivation of one of the following two shapes	
      $$
      \odn{\gM_1\ltens \gC_1\Coons{\gC_2\Coons{
	    \gP\connn{\unit, \gM_2, \ldots, \gM_n}
	  }{R_2}}{R_1}
      }{\ssur}{
	\gC_1\Coons{\odn{\gC_2\Coons{\gP\connn{\gM_1, \ldots, \gM_n}}{R_2}}{\ssdr}{
	    \gM_2\lpar \gC_2\Coons{\gP\connn{\gM_1, \unit, \gM_3, \ldots, \gM_n}}{R_2}}{}}{R_1}
      }{}
      $$
      or
      $$
      \odn{
        \gC_1\Coons{\odn{\gM_1\ltens \gC_2\Coons{
	      \gP\connn{\unit, \gM_2, \ldots, \gM_n}
	    }{R_2}
	  }{\ssur}{
	    \gC_2\Coons{\gP\connn{\gM_1, \ldots, \gM_n}}{R_2}
	  }{}
	}{R_1}
      }{\ssdr
      }{
	\gM_2\lpar \gC_1\Coons{\gC_2\Coons{\gP\connn{\gM_1,\unit, \gM_3, \ldots, \gM_n}}{R_2}}{R_1}
      }{}
      $$
      where the graphs $\gM_1, \ldots, \gM_n$ are non-empty and 
      $\gC_1\coonso{R_1}$ and $\gC_2\coonso{R_2}$ are (possibly empty) contexts.
      We can replace these subderivations by 
      $$
      \odn{\gM_1\ltens \gC_1\Coons{
	  \odn{\gC_2\Coons{\gP\connn{\unit, \gM_2, \ldots, \gM_n}}{R_2}}{\ssdr}{
	    \gM_2\lpar \gC_2\Coons{\gP\connn{\unit, \unit, \gM_3, \ldots, \gM_n}}{R_2}
	  }{}
	}{R_1}
      }{\ssur}{
	\gC_1\Coons{\gM_2\lpar \gC_2\Coons{\gP\connn{\gM_1,\unit, \gM_3, \ldots, \gM_n}}{R_2}}{R_1}
      }{}
      $$
      or
      $$
      \odn{\gC_1\Coons{ \gM_1 \ltens \gC_2\Coons{
	    \gP\connn{\unit,\gM_2,\ldots, \gM_n}
	  }{R_2}}{R_1}
      }{\ssdr}{
	\gM_2\lpar \gC_1\Coons{\odn{\gM_1\ltens \gC_2\Coons{
	      \gP\connn{\unit,\unit, \gM_3, \ldots, \gM_n}
	    }{R_2}
	  }{\ssdr}{
	    \gC_2\Coons{\gP\connn{\gM_1,\unit\gM_3, \ldots, \gM_n}}{R_2}
	  }{}
	}{R_1}
      }{}
      $$ respectively.  In both cases, the connector $\gP$ disappears,
      but can potentially be replaced by the a smaller connector,
      determined by $\gP\connn{\unit,\unit, \gM_3, \ldots,
        \gM_n}$. However, in either case we can apply the induction
      hypothesis.
    \end{enumerate}
  \item An instance of $\pdr$ destroys the connector $\gP$ that has been
    introduced by a $\ssdr$. This means that modulo rule permutations,
    we have a subderivation of the following shape (where we assume
    without loss of generality that $i=1$):
    $$
    \odn{
      \gC\Coons{
	\odn{
	  {((\gM_1 \lpar \gN_1)) \ltens \cdots \ltens ((\gM_n \lpar \gN_n)) }
	}{\pdr}{
	  \gP\connn{\gM_1, \ldots , \gM_n}\lpar \cneg \gP\connn{\gN_1, \ldots, \gN_n}
	}{}
      }{R}
    }{\ssdr}{\gM_1 \lpar
      \gC\Coons{\gP \connn{\unit, \gM_2,\dots, \gM_n} \lpar
        \cneg \gP\connn{\gN_1, \ldots, \gN_n}}{R}}{} 
    $$
    where $\gM_1\neq\unit$. We can replace this subderivation by
    $$
    \vlnostructuressyntax
    \odn{
      \gC\Coons{
	\odn{
	  (\gM_1 \lpar \gN_1) \ltens
	  \ods{(\gM_2 \lpar \gN_2) \ltens \cdots \ltens (\gM_n \lpar \gN_n) }
	      {\dD^*}
	      {
		\gP\connn{\unit, \gM_2, \ldots , \gM_n}
		\lpar 
		\cneg \gP\connn{\unit, \gN_2, \ldots, \gN_n}}
	      {\mbox{Lemma~\ref{lem:g}}}
	}
	    {\ssur}{\gP\connn{\unit, \gM_2, \ldots , \gM_n}\lpar \cneg \gP\connn{\gM_1 \lpar \gN_1, \gN_2, \ldots, \gN_n}
	    }{}}R}{\ssdr}{\gM_1 \lpar \gC \Coons{\gP\connn{\unit, \gM_2, \dots,\gM_n} \lpar \cneg \gP\connn{\gN_1,\gN_2, \ldots, \gN_n}}{R}}{}
    $$ in which the connector $\gP$ is no longer present. The
    derivation $\dD^*$ exists by Lemma~\ref{lem:g}. Note that the
    construction of that lemma entails that for all connectors $\gQ$
    that occur in $\dD^*$ and that do not already occur in $\gM_j$
    (for $j\in\intset{1}{n}$) or the context, we have
    $\sizeof\vQ<\sizeof\vP$.  We can therefore proceed by induction
    hypothesis. Note that we did not demand that $\gM_2, \ldots,
    \gM_n$ are non-empty, and therefore this case also applies when
    the roles of $\gP$ and $\cneg\gP$ are exchanged.

  \item 
    Finally, we have the case where an instance of $\aidr$ destroys the
    connector $\gP$. This happens when one of the modules $\gM_j$ is
    $a\lpar\cneg a$ which is removed by the $\aidr$ instance. Again,
    we have two cases: $i=j$ and $i\neq j$.
    \begin{enumerate}
    \item If $i=j$, then we can without loss of generality assume that
      $i=j=1$. Modulo rule permutations, we have subderivation
      $$
      \odn{
	\gC\Coons{\gP\Connn{\odn{\unit}{\aidr}{a\lpar\cneg a}{},
	    \gM_2, \ldots, \gM_n
	  }
	}{R}
      }{\ssdr}{(a\lpar \cneg a) \lpar \gC\Coons{\gP\connn{\unit,
	    \gM_2, \ldots, \gM_n
	  }
	}{R}}{}
      $$
      that can be replaced by
      $$
      \od{\odh{
          \odn{\unit}{\aidr}{(a\lpar \cneg a)}{}
          \lpar \gC\Coons{\gP\connn{\unit,\gM_2, \ldots, \gM_n}}{R}}}
      $$
      where the connector $\gP$ does not occur anymore.
    \item If $i\neq j$, we can assume without loss of generality that
      $i=2$ and $j=1$, i.e., we have a subderivation
      $$
      \odn{
	\gC\Coons{\gP\Connn{\odn{\unit}{\aidr}{a\lpar \cneg a}{},
	    \gM_2, \ldots, \gM_n
	  }
	}{R}
      }{\ssdr}{\gM_2 \lpar \gC\Coons{\gP\connn{(a\lpar \cneg a),
	    \unit, \ldots, \gM_n
	  }
	}{R}}{}
      $$
      which can be replaced by
      $$
      \odn{\gC\Coons{\gP\connn{\unit, \gM_2, \ldots, \gM_n}}{R}
      }{\ssdr}{
	\gM_2 \lpar \gC\Coons{\gP\Connn{\odn{\unit}{\aidr}{a\lpar \cneg a}{}, \unit,\gM_3, \ldots, \gM_n}}{R}
      }{}
      $$ in which the connector $\gP$ is no longer present (but which
      might contain a new connector $\gQ$ that is a subconnector of $\gP$ and has therefore smaller size). \qedhere
    \end{enumerate}
  \end{enumerate}
  \end{proof}

\begin{cor}\label{cor:cocons}
	Let $A$ be a cograph. If $\proves[\GS\cup\set{\ssur}]A$, then there is a derivation 
	\begin{equation}
		\small
		\ods{\unit}{\dD}{A}{\GS\cup\set{\ssur}}
	\end{equation}
	such that every graph occurring in $\dD$ is a cograph.
\end{cor}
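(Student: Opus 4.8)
The plan is to obtain Corollary~\ref{cor:cocons} almost immediately from the Analyticity Theorem (Theorem~\ref{thm:anaCon}), once we pin down exactly what the subconnectors of a cograph are. The single observation driving everything is that a graph is a cograph precisely when every connector occurring in its modular decomposition is one of the two two-vertex prime graphs $\lpar$ or $\ltens$. So the first thing I would do is establish this characterisation in both directions.

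For the characterisation, I would argue as follows. By Theorem~\ref{thm:cograph}, saying that $A$ is a cograph is the same as saying $A$ is $\pfour$-free. If every connector of $A$ has at most three vertices, then by Lemma~\ref{lem:newhope} (which records that the only prime graphs with fewer than four vertices are $\lpar$ and $\ltens$, there being none with three vertices) every connector is $\lpar$ or $\ltens$; rebuilding $A$ from single vertices via just these two operations exhibits $A$ as $\graphof{\phi}$ for a formula $\phi$, hence as $\pfour$-free. Conversely, if $A$ had a connector $\gP$ with $\sizeof\vP\ge 4$, then, as noted in the example following the definition of connector, $\gP$ contains an induced $\pfour$; since $\gP$ is a quotient appearing in the modular decomposition of $A$, this induced $\pfour$ lifts to an induced $\pfour$ of $A$, contradicting $\pfour$-freeness. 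Thus for a cograph $A$ every connector lies in $\set{\lpar,\ltens}$, and conversely any graph all of whose connectors lie in $\set{\lpar,\ltens}$ is a cograph.

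Next I would determine the subconnectors of $A$. A subconnector is a prime induced subgraph of a connector; since each connector of the cograph $A$ is a two-vertex graph and a prime graph needs at least two vertices, the only prime induced subgraph of such a connector is the connector itself. Hence the subconnectors of $A$ coincide with its connectors, and therefore form a subset of $\set{\lpar,\ltens}$. This collapse is what makes the analyticity statement so strong in the cograph case.

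Finally I would invoke Theorem~\ref{thm:anaCon}: since $A$ is provable in $\GS\cup\set{\ssur}$, there is a derivation $\tdD$ in $\GS\cup\set{\ssur}$ in which every connector of every graph is a subconnector of $A$, hence lies in $\set{\lpar,\ltens}$. Applying the characterisation above in the backward direction to each graph occurring in $\tdD$, every such graph has all its connectors in $\set{\lpar,\ltens}$ and is therefore $\pfour$-free, i.e.\ a cograph. Taking $\dD=\tdD$ completes the argument. I do not expect a genuine obstacle here: all the heavy lifting is done inside Theorem~\ref{thm:anaCon}, and the only point requiring care is correctly identifying the connectors and subconnectors of a cograph as exactly $\set{\lpar,\ltens}$, which is precisely the characterisation established in the first two paragraphs.
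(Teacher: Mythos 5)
Your proof is correct and follows essentially the same route as the paper: the paper's own proof is just the two-line observation that all subconnectors of a cograph are $\ltens$ or $\lpar$, so Theorem~\ref{thm:anaCon} yields a derivation in which every graph is a cograph. Your additional care in verifying the characterisation of connectors via $\pfour$-freeness (including lifting an induced $\pfour$ from a large prime connector back to $A$) fills in details the paper leaves implicit, but it is the same argument.
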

\begin{proof}
	All subconnectors of a cograph are $\ltens$ or $\lpar$.
	By Theorem~\ref{thm:anaCon} 
	there is a derivation $\dD$ of $A$ in $\GS\cup\set\ssur$ 
	in which only the connectors $\ltens$ and $\lpar$ occur, i.e., every graph in $\dD$ is a cograph.
\end{proof}

Using this result, we can now show that $\GS$ is a conservative extension of
unit-free multiplicative linear logic with mix (denoted $\MLLm$)
\cite{girard:87}. The formulas of $\MLLm$ are as in~\eqref{eq:gram},
but without the unit, and the inference rules of $\MLLm$ 
are given in 
Figure~\ref{fig:mll}, where $\Gamma$ and $\Delta$ are \defn{sequents},
i.e., multisets of formulas, separated by commas.  We write $\proves[\MLLm]
\Gamma$ if the sequent $\Gamma$ is provable in $\MLLm$, i.e.~if there
is a finite proof tree in $\MLLm$ with conclusion~$\Gamma$, where every leaf of the tree is an axiom.
\begin{figure}[!t]
	\begin{tabular}{c@{\hskip3em}c@{\hskip3em}c@{\hskip3em}c}
		$\vlinf{\axrule}{}{a, \cneg a}{}$
		&
		$\vliiinf{\tensrule}{}{\Gamma, \phi \ltens \psi, \Delta}{\Gamma, \phi}{\quad}{\Delta, \psi}$
		&
		$\vlinf{\parrule}{}{\Gamma, \phi \lpar \psi}{\Gamma, \phi , \psi}$
		&
		$\vliiinf{\mixr}{}{\Gamma, \Delta}{\Gamma}{\quad}{\Delta}$
	\end{tabular}
	\caption{The inference rules of the system $\MLLm$}
	\label{fig:mll}
\end{figure}

\begin{lemma}\label{lem:switch-deri}
  Let $A$ and $B$ be cographs. Then
  \begin{equation}
    \odn{A}{\ssdr}{B}{}
    \quad\implies\quad
    \odv{A}{}{B}{\set{\swir}}
    \qquand
    \odn{A}{\ssur}{B}{}
    \quad\implies\quad
    \odv{A}{}{B}{\set{\swir}}
  \end{equation}
\end{lemma}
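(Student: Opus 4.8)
The plan is to prove the statement for $\ssdr$ first and then recover the $\ssur$-case by duality. Since $\swir$ is its own dual and the class of cographs is closed under the dual operation of Definition~\ref{def:dual} (because $\Pfour$ is self-complementary, so by Theorem~\ref{thm:cograph} $\cneg{A}$ is a cograph iff $A$ is), a single $\ssur$-step with premise $A$ and conclusion $B$ between cographs dualises to a single $\ssdr$-step with premise $\cneg{B}$ and conclusion $\cneg{A}$, again between cographs; a switch derivation of the latter then dualises back, step by step, to a switch derivation from $A$ to $B$. Thus it suffices to treat $\ssdr$.

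For the $\ssdr$-case I would first localise. By~\eqref{eq:rule-context} the given step has the form $\gC\coons{\gB\coons{\gA}S}R \to \gC\coons{\gB\lpar\gA}R$ for some context $\gC\coonso R$. As $\swir$ is itself a deep-inference rule, Remark~\ref{rem:context} lets me build a switch derivation from the module $\gB\coons{\gA}S$ to $\gB\lpar\gA$ and plug it into $\gC\coonso R$; every switch I use only rearranges edges internal to that module, so its external connections to $R$ are preserved along the way and the embedding is legitimate. Moreover $\gB\coons{\gA}S$ is an induced subgraph of the cograph $A$, hence $\Pfour$-free, i.e.\ a cograph, and $\gB\lpar\gA$ is a cograph as well. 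The problem therefore reduces to a purely formula-level claim: contracting the module $\gA$ to a single fresh vertex $x$ turns $\gB\coons{\gA}S$ into a cograph $\gB\coons{x}S$, which by Theorem~\ref{thm:cograph} equals $\graphof{\phi}$ for a formula $\phi$ in which $x$ occurs exactly once; writing $\gB=\graphof{\phi[\unit/x]}$, we have $\gB\coons{\gA}S=\phi[\gA/x]$ and the target $\gB\lpar\gA=\phi[\unit/x]\lpar\gA$.

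I would then prove, by induction on $\phi$, that there is a switch derivation from $\phi[\gA/x]$ to $\phi[\unit/x]\lpar\gA$. The base $\phi=x$ is the empty derivation. If $\phi=\psi\lpar\chi$ with $x$ occurring in $\psi$, the induction hypothesis applied in the context $\coonso{}\lpar\chi$ gives the result directly, using associativity and commutativity of $\lpar$ from~\eqref{eq:fequiv}. If $\phi=\psi\ltens\chi$ with $x$ in $\psi$, the induction hypothesis in the context $\coonso{}\ltens\chi$ first produces $(\psi[\unit/x]\lpar\gA)\ltens\chi$, and then a single instance of~\eqref{eq:switch}, namely $(\gA\lpar\psi[\unit/x])\ltens\chi \to \gA\lpar(\psi[\unit/x]\ltens\chi)$, yields exactly $\phi[\unit/x]\lpar\gA$.

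The main obstacle is precisely this tensor case, where one must pick the switch \emph{association} that detaches $\gA$ from $\chi$ while keeping $\psi[\unit/x]$ tensored with $\chi$; the naive association would wrongly sever $\psi[\unit/x]$ from $\chi$. The degenerate situation $\psi[\unit/x]=\unit$, in which $\gA$ is merely tensored with $\chi$ and has to become par'd with it (a mix), is absorbed into the very same switch by reading $\gA$ as $\gA\lpar\unit$ and $\unit\ltens\chi$ as $\chi$ via the unit laws of~\eqref{eq:fequiv}; this is the cograph-level manifestation of the fact that a common unit for $\lpar$ and $\ltens$ lets switch simulate mix, and it is the only reason the restriction to cographs (rather than needing the converse of Lemma~\ref{lem:switch} in full) is harmless here. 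Apart from this, the remaining work is the routine bookkeeping that each intermediate graph is a cograph and that the constructed derivation embeds into the original context $\gC\coonso R$.
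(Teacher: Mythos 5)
Your proposal is correct and follows the same route as the paper: the paper's proof also reduces the statement to formulas via Theorem~\ref{thm:cograph}, but then simply cites the corresponding formula-level result (Lemma~4.3.20 in~\cite{dissvonlutz}) rather than proving it. What you have done is correctly reconstruct the standard proof of that cited lemma: the $\ssur$-case by rule-by-rule dualisation (legitimate since $\swir$ is self-dual and cographs are closed under complement, $\pfour$ being self-complementary), localisation to the active module via Remark~\ref{rem:context}, and the single-occurrence induction that spends one $\swir$ per tensor node on the path from the hole to the root --- including the observation that the degenerate mix case is absorbed by a $\swir$ instance with an empty component, which the paper itself relies on (see the footnote to Corollary~\ref{cor:MLL}, where $B=\emptyset$ is allowed in~\eqref{eq:switch}). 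So your argument is a sound, self-contained version of what the paper outsources; there is no gap.
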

\begin{proof}
  By Theorem~\ref{thm:cograph}, the graphs $A$ and $B$ are cographs iff there are
  formulas $\phi$ and $\psi$ with $\graphof\phi\isom A$ and $\graphof\psi\isom B$. Now the statement follows from
  the corresponding statement for formulas (see e.g., Lemma~4.3.20
  in~\cite{dissvonlutz}).
\end{proof}

\begin{thm}\label{thm:conservativity}
  Let $A$ be a cograph. Then  $\proves[\GSplus]A$ \;iff\; $\proves[\set{\aidr,\swir}]{A}$.
\end{thm}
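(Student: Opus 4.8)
The plan is to prove the two directions separately, with essentially all of the work residing in the forward direction, for which the heavy lifting has already been carried out by the analyticity result.

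For $\proves[\set{\aidr,\swir}]{A}\implies\proves[\GSplus]{A}$ I would simply observe that $\aidr$ is a rule of $\GS$, hence of $\GSplus$, and that every instance of $\swir$ is a special instance of $\ssdr$ (as remarked directly after~\eqref{eq:switch}), which is again a rule of $\GS$. Thus any proof of $A$ from $\unit$ built from $\aidr$ and $\swir$ is, verbatim, a proof in $\GSplus$.

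For the converse I would start from a proof of the cograph $A$ in $\GSplus$ and apply Corollary~\ref{cor:cocons} to obtain a derivation $\dD$ of $A$ from $\unit$ in $\GSplus$ in which every graph is a cograph, i.e.\ in which only the connectors $\ltens$ and $\lpar$ occur. The key observation is then that $\pdr$ cannot appear in $\dD$: by its side condition together with $M_1\neq\unit,\dots,M_n\neq\unit$, an instance of $\pdr$ would exhibit a prime connector $\cneg P$ of size $\sizeof\vP\ge 4$ in its conclusion (with $\cneg P$ prime by Observation~\ref{obs:deMorgan}), contradicting that only $\ltens$ and $\lpar$ occur as connectors in $\dD$. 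Hence $\dD$ is built solely from $\aidr$, $\ssdr$, and $\ssur$. Since every graph occurring in $\dD$ is a cograph, each individual $\ssdr$- or $\ssur$-step rewrites one cograph into another, so Lemma~\ref{lem:switch-deri} lets me replace it by a derivation using only $\swir$. Performing this replacement on every such step, while leaving the $\aidr$-steps untouched, yields a derivation of $A$ from $\unit$ using only $\aidr$ and $\swir$, that is $\proves[\set{\aidr,\swir}]{A}$.

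The proof is short precisely because the genuine difficulty has been discharged earlier: the hard step is Corollary~\ref{cor:cocons}, which rests on the analyticity Theorem~\ref{thm:anaCon} and lets us confine the entire derivation to cographs. Once that is in hand, the only points to verify are the bookkeeping facts that $\pdr$ cannot occur in a cograph-only derivation and that the surviving $\ssdr$ and $\ssur$ steps collapse to $\swir$ via Lemma~\ref{lem:switch-deri}; neither constitutes a serious obstacle.
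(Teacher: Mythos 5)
Your proof is correct and takes essentially the same route as the paper's: right-to-left because $\swir$ is a special case of $\ssdr$ and $\ssur$ (Lemma~\ref{lem:switch}), and left-to-right by invoking Corollary~\ref{cor:cocons} to confine the derivation to cographs, whence $\pdr$ cannot occur, and then Lemma~\ref{lem:switch-deri} to replace each $\ssdr$/$\ssur$ step by $\swir$-steps. Your explicit justification that $\pdr$ is excluded --- the side conditions $M_1\neq\unit,\dots,M_n\neq\unit$ force a prime connector of size at least $4$ in the conclusion --- simply spells out what the paper's proof states without elaboration.
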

\begin{proof}
  The implication from right to left follows immediately from the fact
  that $\swir$ is a special case of both $\ssdr$ and $\ssur$
  (see Lemma~\ref{lem:switch}). 
  For the implication from left to right,
  apply Corollary~\ref{cor:cocons} to get a derivation $\dD$ that only
  uses cographs. Hence the rule $\pdr$ is not used in
  $\dD$. Therefore, by Lemma~\ref{lem:switch-deri}, we can get a
  derivation $\dD'$ that only uses the rules $\aidr$ and~$\swir$.
\end{proof}

\begin{cor}\label{cor:MLL}
  For any unit-free formula $\phi$,
  \begin{equation*}
    \proves[\MLLm] \phi \quiff \proves[\GS] {\graphof{\phi}}
  \end{equation*}
\end{cor}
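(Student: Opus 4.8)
The plan is to reduce the statement, via the results already established for $\GS$, to the well-known correspondence between the switch-based deep inference presentation of multiplicative linear logic and its sequent calculus. First I would observe that since $\phi$ is a formula, its associated graph $\graphof\phi$ is $\pfour$-free, hence a cograph (Theorem~\ref{thm:cograph}); being unit-free, $\phi$ moreover guarantees that $\graphof\phi$ is non-empty. This is what allows me to apply the cograph-specific results of the previous section to $\graphof\phi$.

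Next I would collapse the various systems on $\graphof\phi$. Because of the rule inclusions $\GS\subseteq\GSplus\subseteq\SGS$, provability in $\GS$ implies provability in $\GSplus$, which in turn implies provability in $\SGS$; and Corollary~\ref{cor:gs-sgs} closes the loop, so that $\proves[\GS]{\graphof\phi}$ iff $\proves[\GSplus]{\graphof\phi}$. Since $\graphof\phi$ is a cograph, Theorem~\ref{thm:conservativity} then yields $\proves[\GSplus]{\graphof\phi}$ iff $\proves[\set{\aidr,\swir}]{\graphof\phi}$. Thus everything is reduced to proving $\proves[\set{\aidr,\swir}]{\graphof\phi}$ iff $\proves[\MLLm]\phi$.

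For this last equivalence, between the deep inference system $\set{\aidr,\swir}$ on cographs and the sequent system $\MLLm$, I would argue both directions. For completeness ($\proves[\MLLm]\phi$ implies $\proves[\set{\aidr,\swir}]{\graphof\phi}$) I would induct on a sequent derivation, representing a sequent $\psi_1,\dots,\psi_k$ by the cograph $\graphof{\psi_1\lpar\cdots\lpar\psi_k}$: the rule $\axrule$ is simulated by a single $\aidr$ producing $\cneg a\lpar a$; the rule $\parrule$ needs no step, as $\lpar$ is disjoint union and associativity/commutativity are absorbed into graph isomorphism; the rule $\mixr$ likewise needs no step, which is precisely why the shared unit makes $\mixr$ admissible; and $\tensrule$ is the only rule requiring work, where the two sub-derivations are joined by $\lpar$ and a sequence of $\swir$ instances brings the two active formulas together into $\phi\ltens\psi$. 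For soundness ($\proves[\set{\aidr,\swir}]{\graphof\phi}$ implies $\proves[\MLLm]\phi$) I would show that, reading a proof from $\unit$ downward, every rule preserves $\MLLm$-provability of the current formula: an $\aidr$ step appends a par-component $\cneg a\lpar a$, whose corresponding sequent is the axiom, and a $\swir$ step is sound because the switch implication $((A\lpar B)\ltens C)\limp(A\lpar(B\ltens C))$ is a theorem of $\MLLm$.

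The main obstacle, and the only genuinely non-bookkeeping point, is the treatment of the unit in the soundness direction, since $\MLLm$ is unit-free whereas $\GS$-proofs start from the empty graph $\unit$. I would resolve this by noting that any $\set{\aidr,\swir}$-proof of a non-empty cograph must begin with an $\aidr$ step (the only non-trivial rule applicable at $\unit$), so that the first genuine formula produced is the axiom $a,\cneg a$, and thereafter every intermediate graph is a non-empty cograph corresponding to a unit-free formula; the identification of the two units of $\lpar$ and $\ltens$ is exactly what makes $\mixr$, rather than two separate unit rules, the correct sequent counterpart. This correspondence between switch/atomic-identity deep inference and $\MLLm$ is standard in the deep inference literature (see, e.g.,~\cite{dissvonlutz}), and combining it with the reduction chain above establishes the corollary.
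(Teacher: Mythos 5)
Your proposal is correct and follows essentially the same route as the paper: both reduce the claim via Theorem~\ref{thm:conservativity} (together with admissibility of the up-fragment, which you access through Corollary~\ref{cor:gs-sgs} and the paper through Theorem~\ref{thm:ss-up}) to the equivalence between $\proves[\set{\aidr,\swir}]{\graphof{\phi}}$ and $\proves[\MLLm]{\phi}$, which the paper simply cites from the deep inference literature. The only difference is that you additionally sketch a proof of that cited correspondence (including the key point, also flagged in the paper's footnote, that a $\swir$ instance with an empty component is exactly $\mixr$), which is a harmless elaboration rather than a different argument.
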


\begin{proof}
  It has been shown before (see, e.g., \cite{gug:str:01,dissvonlutz})
  that a formula $\phi$ is provable in $\MLLm$ iff it is derivable
  using only the rules $\aidr$ and $\swir$ (when interpreted for
  formulas) modulo the equivalence $\fequiv$, defined
  in~\eqref{eq:fequiv}. By Theorem~\ref{thm:conservativity}, this is
  equivalent to having $\graphof{\phi}$ provable in
  $\set{\aidr,\swir}$.\footnote{Note that even though $\phi$ is
    unit-free, we allow units to occur in the derivation of $\phi$,
    in particular, in~\eqref{eq:switch} we can have $B=\emptyset$.}
  By Theorem~\ref{thm:conservativity} and Theorem~\ref{thm:ss-up}
  This is equivalent to $\proves[\GS] {\graphof{\phi}}$.
\end{proof}

\begin{cor}\label{cor:NP}
  Provability in $\GS$ is {\bf NP}-complete. 
\end{cor}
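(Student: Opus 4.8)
The plan is to assemble NP-completeness from its two halves---membership and hardness---each of which is essentially already prepared by the preceding development, so the work is in correctly stitching together facts rather than proving anything new from scratch.

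For membership in \NP, I would recall the length bound from Observation~\ref{obs:size}: any proof of a graph $\gG$ in $\GS$ has length at most $n^2+n$, where $n=\sizeof\vG$, because every rule of $\GS$ strictly decreases the size $\gsize{\gG}$ when read bottom-up and the size components are bounded by $n$ and $\binom{n}{2}$. Combined with Observation~\ref{obs:cook-reckhow}, which guarantees that the correct application of each inference step can be verified in polynomial time (using that modular decomposition is linear-time computable and that the composing isomorphisms are supplied explicitly), a full proof is a certificate of polynomial size that is checkable in polynomial time. Hence $\GS$-provability lies in \NP. This was already remarked at the end of Section~\ref{sec:properties}, so here it need only be cited.

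For \NP-hardness, I would give a polynomial-time many-one reduction from provability in $\MLLm$. The map $\phi\mapsto\graphof\phi$ sending a unit-free formula to its associated graph is computable in polynomial time, since it follows the inductive clauses defining $\graphof{\cdot}$, each adding a single vertex or forming a par or tensor of already-constructed graphs. By Corollary~\ref{cor:MLL} we have $\proves[\MLLm]\phi$ if and only if $\proves[\GS]{\graphof\phi}$, so this map is a correct reduction of the provability problem for $\MLLm$ to that for $\GS$. It then suffices to invoke the \NP-hardness of provability in unit-free multiplicative linear logic with mix.

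The single delicate point---and the step I expect to be the main obstacle---is the appeal to \NP-hardness for $\MLLm$ rather than for plain $\MLL$. The \NP-completeness of the multiplicative fragment of linear logic is classical, but the presence of the $\mixr$ rule must be accounted for: one should verify that the standard hardness reduction still distinguishes provable from unprovable sequents once mix is available (equivalently, that the correctness criterion weakened to mere acyclicity does not collapse the reduction), or else cite a version of the result established directly for $\MLLm$. Once that is secured, combining membership in \NP with \NP-hardness yields \NP-completeness of $\GS$-provability immediately.
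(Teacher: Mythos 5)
Your proposal is correct and follows the paper's proof exactly: membership in \textbf{NP} via the quadratic length bound of Observation~\ref{obs:size} (with polynomial-time rule checking per Observation~\ref{obs:cook-reckhow}), and \textbf{NP}-hardness by reducing $\MLLm$-provability through the conservativity result of Corollary~\ref{cor:MLL}. Your extra caution about verifying \textbf{NP}-hardness for $\MLLm$ specifically (rather than plain $\MLL$) is well placed, since the paper simply asserts that $\MLLm$ is \textbf{NP}-complete without citation at that point, but this does not change the argument's structure.
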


\begin{proof}
  Since $\MLLm$ is {\bf NP}-complete, we can conclude from
  Corollary~\ref{cor:MLL} that $\GS$ is {\bf NP}-hard. Containment in
  {\bf NP} has been shown in Observation~\ref{obs:size}.
\end{proof}



\section{Graphs as  Generalised Connectives}\label{sec:generalised}

In the previous \Cref{sec:MLL} we have shown that our system is a conservative  extension of multiplicative linear logic with mix ($\MLLm$): the logic $\MLLm$ is the restriction of $\GS$ to $\pfour$-free graphs.
Lemma \ref{lem:newhope} allows us to associate a modular decomposition-tree to a graph, in the same way as we associate a formula-tree to a cograph.
This suggests the use of graphs as connectives in order to define \emph{generalised formulas} that are encoding graphs.
Our choices to name the two graphs on two vertices $\lpar$ and $\ltens$ (see \ref{eq:par-tens})
and to denote the graph operations of union and  join  with the same symbols (Definition \ref{def:par})
are coherent with this intent.
In fact, according with the notation for the composition-via-graph (Definition \ref{def:graphCompisitionVia}) we have 
$$\gG \ltens \gH=\ltens\connn{\gG, \gH} \quad\mbox{ and  }\quad \gG \lpar \gH=\lpar\connn{\gG, \gH}.$$

\begin{defi}[Connective-as-graph]
  \label{def:con-as-graph}
  Any graph $\gG$ with $\sizeof{\vertices[\gG]}=n$ \emph{describes} an $n$-ary connective $\gG$.
\end{defi}

It should be clear that this notion of connective goes beyond the one of \emph{synthetic connective}, which is a general connective definable as composition of standard binary connectives (see e.g.~\cite{andreoli:92,girard2000meaning,mil:pim:13}).
In this section we discuss the exact relation between our \emph{connectives-as-graphs} notion  with the notion of \emph{(multiplicative) generalised connectives} from \cite{girard:87:b,danos:regnier:89,mai:19,acc:mai:20}.
In particular we show that our notion fits the definition of multiplicative connective, 
but it describes  different mathematical objects from the ones known in the literature. 
For this, we first recall the notion of \emph{multiplicative generalised connective} 
from the early work in linear logic
\cite{girard:87:b,danos:regnier:89}, and their description as sets of
partitions. 
Then we show that the two notions are different. 
More precisely, we show that
\begin{enumerate}
	\item our system proves different generalised
	formulas than $\MLLm$ extended with generalised connectives,
	
	\item the symmetries induced by our generalised connectives-as-graphs
	are different from the generalised connectives-as-partitions
	from~\cite{girard:87:b,danos:regnier:89,mai:19,acc:mai:20},
	
	\item the notion of \emph{decomposable
	  connective} differs in the two settings, and 
	
	\item our generalised connectives-as-graphs do not suffer from the so called \emph{packaging
	  problem}~\cite{danos:regnier:89}.
\end{enumerate}

\begin{defi}[multiplicative generalised connective \cite{girard:87:b,danos:regnier:89}]
  \label{def:mult-gen-con}
  A \emph{multiplicative generalised connective} is an
$n$-ary connective $\Ccon$ which admits a \emph{linear} and
\emph{context-free} (introduction) rule in the sequent calculus, that
is, a rule of the form
\begin{equation}\label{eq:multRule}
\vlderivation{
\vliiin{}{}{\vdash\Gamma, \Ccon(\phi_{\sigma(1)}, \dots, \phi_{\sigma(n)})}
	{\vlhy{\vdash\Gamma_1,\phi_{1},\dots,\phi_{k_1}}}
	{\vlhy{\cdots}}
	{\vlhy{\vdash\Gamma_k, \phi_{1+i_{k-1}}, \dots, \phi_{n}}}
}
\quad
\begin{array}{l}
	\mbox{with }
	\Gamma=\Gamma_1\uplus\cdots\uplus \Gamma_k
	\\
	\mbox{and }
	\sigma \mbox{ a permutation of }\intset1n
\end{array}
\end{equation}
\end{defi}
The linearity condition demands that each occurrence of a formula in $\Gamma$ or sub-formula of the \emph{active formula} 
$\Ccon(\phi_{\sigma(1)}, \dots, \phi_{\sigma(n)})$
occurs exactly once in the premise, while the context-freeness condition demands that the application of the rule does not depend on the shape of the premises.

By means of example, consider the connectives of linear logic. The multiplicative conjunction $\ltens $ and $\lpar$ disjunction are multiplicative connectives, while the additive conjunction $\lwith$ and disjunction $\lplus$ of linear logic are not.
	$$
	\vliinf{\ltens}{\mbox{where $\Gamma=\Gamma_1\uplus\Gamma_2$}}{\vdash\Gamma, \phi\ltens \psi}{\vdash\Gamma_1, \phi}{\vdash\Gamma_2, \psi} 
	\quad
	\vlinf{\lpar}{}{\vdash\Gamma, \vls[\phi; \psi]}{\vdash\Gamma, \phi,\psi}
	\quad
	\vliinf{\lwith}{}{\vdash\Gamma, \phi\lwith \psi}{\vdash\Gamma, \phi}{\vdash\Gamma, \psi}
	\quad
	\vlinf{\lplus}{}{\vdash\Gamma, \phi\lplus \psi}{\vdash\Gamma, \phi}
	$$
In fact, each formula in $\Gamma$ and the two sub-formulas  $\phi$ and $\psi$ of the active formula occur exactly once in the premises of the rules for $\ltens$ and $\lpar$.
On the contrary, the rule for $\lwith$ is not context-free since it can be applied only if the premises are of the form $\Gamma, \phi$ and $\Delta,\psi$ with $\Gamma=\Delta$, that is, the rule depends on the context.
Moreover this rule is not linear since each occurrence of a formula in $\Gamma$ occurs twice in the premises.
The rule for $\lplus$ is not linear since it produces a sub-formula $\psi$ which does not occur in the premise.

Following Definition~\ref{def:mult-gen-con}, our connectives-as-graphs (Definition~\ref{def:con-as-graph}) are multiplicative connectives.
In fact, the $n$-ary connective described by a graph $\gG$ with $\sizeof{\vertices[\gG]}=n$ admits a unique sequent calculus (introduction) rule of the form
$$
\vlderivation{
	\vliiin{\gG}{}{\vdash\Gamma, \gG\connn{\phi_1, \dots, \phi_n}}
	{\vlhy{\vdash\Gamma_1,\phi_1}}
	{\vlhy{\cdots}}
	{\vlhy{\vdash\Gamma_n,\phi_n}}
}
\quad\mbox{ with }\Gamma=\Gamma_1\uplus\cdots\uplus \Gamma_n
$$
as it is derivable in $\GS$ if we interpret formulas as graphs and sequents of formulas as their disjoint union, and ``premises concatenation'' as the joint of the sequents in the premises, that is
$$
\vlinf{\gG}{}{\graphof{\Gamma_1}\lpar  \cdots\lpar \graphof{\Gamma_n} \lpar \gG\connn{ \graphof{\phi_1}, \dots, \graphof{\phi_n}}}{
  (\graphof{\Gamma_1}\lpar \graphof{\phi_1})\ltens \cdots \ltens(\graphof{\Gamma_n}\lpar \graphof{\phi_n})}
~\sim~ 
\od{
\odI
	{\odd
		{\odh
			{((\graphof{\Gamma_1}\lpar\graphof{ \phi_1}))\ltens \cdots \ltens((\graphof{\Gamma_n}\lpar \graphof{\phi_n}))}
		}
		{\dD}
		{ \cneg\gG\connn{\graphof{\Gamma_1}, \dots, \graphof{\Gamma_n}}\lpar \gG\connn{\graphof{\phi_1}, \dots,\graphof{\phi_n}}}
		{\GS}
	}
{\ssdr}
{\graphof{\Gamma_1}\lpar  \cdots\lpar \graphof{\Gamma_n}, \lpar \gG\connn{\graphof{\phi_1}, \dots,\graphof{\phi_n}}}
{}
}
$$
where $\dD$ exists by Lemma~\ref{lem:g}.

Let us now recall the description of generalised connectives by means of \emph{sets of partitions} from \cite{danos:regnier:89,mai:19,acc:mai:20}.
We denote by $\partofn$ the set of partitions of the set $\set{1,\dots, n}$ and we say that an $n$-ary connective is described by a non-empty subset of $\partofn$. 
To improve readability, we follow the following policy for parenthesis: $\set{~}$ for sets (of partitions), $\partition{~}$ for partitions (i.e. family of disjointed covering sets of subsets of $\set{1,\dots, n}$) and $\block{~}$ for the elements of each partition (i.e. subsets of $\set{1,\dots, n}$).
Each of these partitions can be interpreted intuitively as a way in which it is possible to gather the sub-formulas of the active formula into the premises of a multiplicative rule.
By means of example, the binary rules for $\ltens$ and $\lpar $ are described respectively by the (singleton) partition sets $\set{\partition{\block{1},\block 2}}$ and  $\set{\partition{\block{1,2}}}$. 
If we define a generalised (synthetic) connective $\Ccon(a,b,c)=(a\ltens b) \lpar c$, then we can associate to it the multiple rules corresponding to all possible ways to produce this formula in multiplicative linear logic:
\begin{equation*}\label{eq:simpleCon-as-Part}
	\small
\begin{array}{c@{\; \rightsquigarrow\; }c@{\; \rightsquigarrow\; }c@{\;\;\;}|@{\;\;\;}c@{\; \rightsquigarrow\; }c@{\; \rightsquigarrow\; }c}
	\vlderivation{\vlin{\lpar}{}{\vdash(a\ltens b)\lpar c}{\vliin{\ltens}{}{\vdash(a\ltens b),c}{\vlhy{\vdash a,c}}{\vlhy{\vdash b}}}}
	&
	\vliinf{\Ccon^{\block{1,3},\block 2}}{}{\Ccon(1,2,3)}{1,3}{2}
	&
	\partition{\block{1,3},\block 2}
	&
	\vlderivation{\vlin{\lpar}{}{\vdash(a\ltens b)\lpar c}{\vliin{\ltens}{}{\vdash(a\ltens b),c}{\vlhy{\vdash a}}{\vlhy{\vdash b,c}}}}
	&
	\vliinf{\Ccon^{\block{1},\block{2,3}}}{}{\Ccon(1,2,3)}{1}{2,3}
	&
	\partition{\block 1,\block{2,3}}
\end{array}
\end{equation*}

Because of De Morgan duality, we need for every connective a dual connective.
The interactions of dual connectives by means of the $\cutr$-rule, together with the resulting cut-elimination procedure, enforces some additional structure between the sets of partitions.
Given two partitions $p,q\in \partofn$, we define their \emph{incidence graph} as the multi-graph whose vertices are the elements of $p$ and $q$, such that there is an edge between two different vertices for every element in their intersection.
For example, we have the following incidence graphs between $p=\partition{\block{1,3},\block 2}$ and each of $q_1=\partition{\block 1, \block 2, \block{3}}$, $q_2=\partition{\block{1},\block{2,3}}$ and $q_3=\partition{\block{1,2,3}}$:
$$
\begin{array}{cccccc}
\partition{{\block{1,3}} ,&{\block{2}}}
\\
\vbul{1}& \vbul{2}
\\[1em]
\vbul{4}&\vbul{5}&\vbul{6}
\\
\partition{{\block{1}} ,&{\block{2}}, & {\block{3}}}
\end{array}
\blackedges{n1/n4,n2/n5,n1/n6}
\qquad\qquad
\begin{array}{cccccc}
	\partition{{\block{1,3}} ,&{\block{2}}}
	\\
	\vbul{1}& \vbul{2}
	\\[1em]
	\vbul{4}& \vbul{5}
	\\
	\partition{{\block{1}} ,& {\block{2,3}}}
\end{array}
\blackedges{n1/n4,n2/n5,n1/n5}
\qquad\qquad
\begin{array}{cccccc}
	\partition{{\block{1,3}} ,&{\block{2}}}
	\\
	\vbul{1}& \vbul{2}
	\\[1em]
	\vbul{4}
	\\
	\partition{\block{1,2,3}}
\end{array}
\blackedges{n2/n4}
\blackbendedges{n1/n4,n4/n1}
$$
We say that two partitions $p,q\in\partofn$ are \emph{orthogonal} if their incidence graph is connected and acyclic.
In the example above, $p$ and $q_2$ are orthogonal, but $p$ and $q_1$ are not (because the incidence graph is not connected) and $p$ and $q_3$ are not (because the incidence graph is not acyclic).
We say that $P,Q\subseteq \partofn$ are \emph{orthogonal} if every partition in $P$ is orthogonal to every partition in $Q$.
\begin{defi}[Connective-as-partitions]
	Let $P$ be a set of partitions in $\partofn$ such that there is a set of partitions $Q\subseteq\partofn$ that is orthogonal to $P$.
        An $n$-ary connective $\Ccon$ \emph{is described} by $P$  
if $\Ccon$ admits exactly one multiplicative inference rule $\Ccon^p$ of shape~\eqref{eq:multRule} for each $p\in P$, such that any
two subformulas $\phi_i$ and $\phi_j$ of the principal formula $\Ccon(\phi_1, \dots, \phi_n)$ belong to the same premise iff $i$ and $j$ belong to a same element in $p$.
\end{defi}
 
 For the purpose of this section, we consider the pair of dual generalised $4$-ary connectives first introduced in \cite{girard:87:b} by means of permutations and later reformulated in \cite{danos:regnier:89} by means of partitions.
 Following the formalism in \cite{acc:mai:20}, we denote by $\gcon$ and $\ngcon$ the connectives respectively {described} by the following two sets of partitions in $\partof4$
\begin{equation}\label{eq:girCon}
  \gcon\colon
  \set{
 	\partition{\block{1,2},\block{3,4}}
 	,
 	\partition{\block{1,4},\block{2,3}}
 }
\quad \mbox{ and }\quad
\ngcon\colon
\set{
 	\partition{\block{1,3},\block{2},\block{4}}
 	,
 	\partition{\block{2,3},\block{1},\block{3}}
 }
 \end{equation}
That is, for the connective $\gcon$ we have the following two 
inference rules
\begin{equation}\label{eq:G4rules}
\vliiinf{\gcon^{\block{1,2},\block{3,4}}}{}{\Gamma, \Delta, \gcon(\phi,\psi,\rho,\chi)}{\Gamma, \phi, \psi}{\quad}{\Delta,\rho,\chi}
\qquad
\vliiinf{\gcon^{\block{1,4},\block{2,3}}}{}{\Gamma, \Delta, \gcon(\phi,\psi,\rho,\chi)}{\Gamma, \phi, \chi}{\quad}{\Delta,\psi,\rho}
\end{equation}
Let $\mathcal C$ be the set of non-decomposable (multiplicative) connectives-as-partitions.
We call a \emph{generalised formula} a formula that is generated by a countable set of positive or negative propositional atoms $\set{a,\cneg{a},b,\cneg b,\ldots}$ via the following grammar:
$$
\phi , \psi \Coloneqq a \mid \cneg a\mid \phi \lpar \psi \mid \phi \ltens \psi \mid \Ccon(\phi_1,\dots, \phi_n)
$$
where $\Ccon\in \mathcal C$ is any $n$-ary connective.
We denote by $\MLLgen$ the proof system over generalised formulas extending $\MLLm$ with, for each $\Ccon \in \mathcal C$, all the sequent rules of $\Ccon$.

\subsection*{Comparing Theorems}
We can now prove that $\MLLgen$ and $\GS$ deal with different objects by showing that  the connective $\gcon$ does not correspond to any instance of the connective $\pfour$, which is described by the only prime graph with 4 vertices.
From now on, we refer to $\pfour\connn{a,b,c,d}$ as the graph $\va1 \quad \vb1 \quad \vc1 \quad \vd1 \edges{a1/b1,b1/c1,c1/d1}$.

\begin{prop}
In $\MLLgen$, none of the following formulas or their equivalent sequents are provable:
\begin{equation}
	\label{eq:non-provableMLLG}
\begin{array}{c@{\quad}|@{\quad }c}
c\ltens(d\lpar (a\ltens b))
\limp
\gcon(a,b,c,d)
&
\gcon(a,b,c,d),  \cneg c\lpar (\cneg d\ltens (\cneg a \lpar \cneg b))
\\	\hline
d\ltens(c\lpar (a\ltens b))
\limp
\gcon(a,b,c,d)
&
\gcon(a,b,c,d), \cneg d\lpar (\cneg c\ltens (\cneg a \lpar \cneg b))
\\	\hline
a\ltens(c\lpar (b\ltens d))
\limp
\gcon(a,b,c,d)
&
\gcon(a,b,c,d),  \cneg a\lpar (\cneg c\ltens (\cneg b \lpar \cneg d))
\\	\hline
c\ltens(a\lpar (b\ltens d))
\limp
\gcon(a,b,c,d)
&
\gcon(a,b,c,d),  \cneg c\lpar (\cneg a\ltens (\cneg b \lpar \cneg d))
\\	\hline
a\ltens(d\lpar (b\ltens c))
\limp
\gcon(a,b,c,d)
&
\gcon(a,b,c,d), \cneg a\lpar (\cneg d\ltens (\cneg b \lpar \cneg c))
\\	\hline
d\ltens(a\lpar (b\ltens c))
\limp
\gcon(a,b,c,d)
&
\gcon(a,b,c,d),  \cneg d \lpar (\cneg a\ltens (\cneg b \lpar \cneg c))
\end{array}
\end{equation}
\end{prop}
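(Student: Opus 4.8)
The plan is to first reduce $\MLLgen$-provability of each listed sequent to a pair of $\MLLm$-provability questions, and then dispatch these using the edge-inclusion property of provable implications recorded in Section~\ref{sec:system}. Throughout I write $T_1 = (a\lpar b)\ltens(c\lpar d)$ and $T_2 = (a\lpar d)\ltens(b\lpar c)$, and I note that the right-hand sequent $\gcon(a,b,c,d),\cneg{X}$ is just the sequent form of the left-hand implication $X\limp\gcon(a,b,c,d)$, so it suffices to treat the right column.

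First I would prove a \emph{reduction lemma}: for any $\MLLm$-formula $F$ (no generalised connectives) one has $\proves[\MLLgen]{\gcon(a,b,c,d),F}$ iff $\proves[\MLLm]{T_1,F}$ or $\proves[\MLLm]{T_2,F}$. Since $\MLLgen$ has no cut rule, every proof is cut-free and enjoys the subformula property, so the only generalised connective ever principal is the single top-level $\gcon$ of the endsequent (its dual $\ngcon$ never appears, and $a,b,c,d$ are atoms). Hence $\gcon$ is introduced in exactly one rule instance, which is either $\gcon^{\block{1,2},\block{3,4}}$ or $\gcon^{\block{1,4},\block{2,3}}$. On atomic arguments the premises of $\gcon^{\block{1,2},\block{3,4}}$, namely $\vdash\Gamma_1,a,b$ and $\vdash\Gamma_2,c,d$ with $\Gamma_1\uplus\Gamma_2=F$, coincide with those obtained by making $T_1$ principal: one $\ltens$-rule followed, after permuting the invertible $\lpar$-rules downward to sit directly above it, by the two $\lpar$-rules. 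Likewise $\gcon^{\block{1,4},\block{2,3}}$ matches $T_2$. Replacing this one rule instance while leaving the rest of the derivation (which uses only $\MLLm$-rules acting on $F$) untouched gives a bijection of proofs, hence the claimed equivalence.

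Next I would turn the $\MLLm$-questions into edge comparisons. By Corollary~\ref{cor:MLL}, $\proves[\MLLm]{T_j,\cneg{X}}$ is equivalent to $\proves[\GS]{X\limp T_j}$, and since $X$, $T_1$, $T_2$ are all graphs on the same labelled vertex set $\{a,b,c,d\}$, the property noted in Section~\ref{sec:system} (if $\gG\limp\gH$ is provable and $\vG=\vH$ then $\eH\subseteq\eG$) yields that $\proves[\MLLm]{T_j,\cneg{X}}$ forces $E_{T_j}\subseteq E_{X}$. Combining with the reduction lemma, for each left-column formula $X$ it therefore suffices to check that \emph{neither} $E_{T_1}$ \emph{nor} $E_{T_2}$ is contained in $E_{X}$, since failure of either inclusion rules out the corresponding disjunct.

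Finally I would carry out the purely combinatorial check, where an edge records a $\ltens$-separation. Here $E_{T_1}=\{ac,ad,bc,bd\}$ and $E_{T_2}=\{ab,ac,bd,cd\}$, whereas e.g.\ $X_1=c\ltens(d\lpar(a\ltens b))$ gives $E_{X_1}=\{ab,ac,bc,cd\}$, so $ad\in E_{T_1}\setminus E_{X_1}$ and $bd\in E_{T_2}\setminus E_{X_1}$; the other five formulas are identical in spirit, each time producing one missing $T_1$-edge and one missing $T_2$-edge (for instance $ac$ is absent from $X_2$, witnessing both; $bc$ and $cd$ are absent from $X_3$; $ad$ and $ab$ from $X_4$; $bd$ from $X_5$ for both; $ac$ and $ab$ from $X_6$). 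This settles all twelve inclusions and hence all six non-provabilities. The only genuinely delicate part is the reduction lemma: one must verify that the absence of a cut rule really confines attention to cut-free proofs, that the subformula property excludes any interference by $\ngcon$ or other generalised connectives, and that permuting the $\lpar$-rules above the $\ltens$ is sound; once these are in place the remaining argument is routine edge bookkeeping.
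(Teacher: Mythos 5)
Your proposal is correct, but it takes a genuinely different route from the paper. The paper's proof is a direct, bare-hands proof-search argument in the sequent calculus: for the first sequent it enumerates the possible bottom-up rule applications (the $\lpar$, then either a $\ltens$ or one of the two $\gcon$-rules from~\eqref{eq:G4rules}) and observes that every branch strands dual atoms in different premises, so no axiom matching is possible; the remaining five sequents are dismissed with ``the same reasoning applies''. You instead factor the argument into two reusable pieces: (i) a reduction lemma showing that, because $\MLLgen$ is cut-free and enjoys the subformula property, the single occurrence of $\gcon$ with atomic arguments is principal exactly once, so $\proves[\MLLgen]{\gcon(a,b,c,d),F}$ iff $\proves[\MLLm]{T_1,F}$ or $\proves[\MLLm]{T_2,F}$ with $T_1=(a\lpar b)\ltens(c\lpar d)$ and $T_2=(a\lpar d)\ltens(b\lpar c)$; and (ii) a uniform graph-theoretic refutation via Corollary~\ref{cor:MLL} and the edge-preservation property ($\proves{\gG\limp\gH}$ with equal vertex sets forces $\eH\subseteq\eG$), reducing all six cases to twelve finite edge-inclusion checks, all of which you compute correctly. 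Your route buys uniformity and a transparent criterion (it makes visible \emph{why} each sequent fails, as a missing $\ltens$-separation), at the cost of a longer dependency chain; the paper's route is shorter and self-contained within the sequent calculus but is carried out explicitly only for one of the six sequents. One caveat you should address: the edge-inclusion property you invoke is only \emph{asserted} in Section~\ref{sec:system} (``we will see later that\ldots'') and is never formalised as a lemma in the paper, so as written your step (ii) leans on an unproven remark. This is easily patched --- either by noting that edge-preservation of provable implications between cographs on the same atoms is a standard fact about $\MLLm$ (provable by induction on cut-free proofs), or by discharging the twelve small sequents $T_j,\cneg{X_i}$ directly with the same mismatched-atoms analysis the paper uses, which is routine once your reduction lemma is in place. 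Also, strictly speaking your rule-replacement yields a provability-preserving correspondence up to $\lpar$-permutations rather than a literal bijection of proofs, but only the equivalence of provability is needed.
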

\begin{proof}
We show that the first sequent is not provable. The same reasoning applies to the other sequents.
In a derivation of $\gcon(a,b,c,d),  \cneg c\lpar (\cneg d\ltens (\cneg a \lpar \cneg b))$ we can apply as first rule a $\lpar$.
Any derivation continuing with a $\ltens$ rule must have in one branch $\gcon(a,b,c,d)$  and either $\cneg d$ or $\cneg a \lpar \cneg b$. This makes it impossible to match all pairs of dual atoms in axioms. 
Similarly,  starting with one of the two $\gcon$ rules in \ref{eq:G4rules}, 
or starting with the $\lpar$ rule followed by one of these two rules, 
we encounter the same problem
of mismatched  atoms. 
\end{proof}

\begin{prop}
In $\GS$, the graphs corresponding to the following formulas are provable:
\begin{equation}
	\label{eq:provableGS}
\begin{array}{c@{\quad}|@{\quad }c}
	c\ltens(d\lpar (a\ltens b))
	\limp
	\pfour\connn{a,b,c,d}
	&
	\pfour\connn{a,b,c,d}\lpar   \cneg c\lpar (\cneg d\ltens (\cneg a \lpar \cneg b))
	\\
	c\ltens(d\lpar (a\ltens b))
	\limp
	\pfour\connn{b,a,c,d}
	&
	\pfour\connn{b,a,c,d}\lpar   \cneg c\lpar (\cneg d\ltens (\cneg a \lpar \cneg b))
	\\	\hline
	d\ltens(c\lpar (a\ltens b))
	\limp
	\pfour\connn{b,a,d,c}
	&
	\pfour\connn{b,a,d,c}\lpar   \cneg d\lpar (\cneg c\ltens (\cneg a \lpar \cneg b))
	\\
	d\ltens(c\lpar (a\ltens b))
	\limp
	\pfour\connn{a,b,d,c}
	&
	\pfour\connn{a,b,d,c}\lpar  \cneg d\lpar (\cneg c\ltens (\cneg a \lpar \cneg b))
	\\	\hline
	a\ltens(c\lpar (b\ltens d))
	\limp
	\pfour\connn{c,a,d,b}
	&
	\pfour\connn{c,a,d,b}\lpar   \cneg a\lpar (\cneg c\ltens (\cneg b \lpar \cneg d))
	\\
	a\ltens(c\lpar (b\ltens d))
	\limp
	\pfour\connn{c,a,b,d}
	&
	\pfour\connn{c,a,b,d}\lpar  \cneg a\lpar (\cneg c\ltens (\cneg b \lpar \cneg d))
	\\	\hline
	c\ltens(a\lpar (b\ltens d))
	\limp
	\pfour\connn{a,c,b,d}
	&
	\pfour\connn{a,c,b,d}\lpar   \cneg c\lpar (\cneg a\ltens (\cneg b \lpar \cneg d))
	\\
	c\ltens(a\lpar (b\ltens d))
	\limp
	\pfour\connn{a,c,d,b}
	&
	\pfour\connn{a,c,d,b}\lpar   \cneg c\lpar (\cneg a\ltens (\cneg b \lpar \cneg d))
	\\	\hline
	a\ltens(d\lpar (b\ltens c))
	\limp
	\pfour\connn{c,b,a,d}
	&
	\pfour\connn{c,b,a,d}\lpar   \cneg a\lpar (\cneg d\ltens (\cneg b \lpar \cneg c))
	\\
	a\ltens(d\lpar (b\ltens c))
	\limp
	\pfour\connn{b,c,a,d}
	&
	\pfour\connn{b,c,a,d}\lpar  \cneg a\lpar (\cneg d\ltens (\cneg b \lpar \cneg c))
	\\	\hline
	d\ltens(a\lpar (b\ltens c))
	\limp
	\pfour\connn{a,d,c,b}
	&
	\pfour\connn{a,d,c,b}\lpar   \cneg d \lpar (\cneg a\ltens (\cneg b \lpar \cneg c))
	\\
	d\ltens(a\lpar (b\ltens c))
	\limp
	\pfour\connn{a,d,b,c}
	&
	\pfour\connn{a,d,b,c}\lpar   \cneg d \lpar (\cneg a\ltens (\cneg b \lpar \cneg c))
\end{array}
\end{equation}
\end{prop}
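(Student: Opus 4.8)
The plan is to invoke the deduction theorem for $\GS$ (Corollary~\ref{cor:deduction}), which says that $\proves[\GS]{\gG\limp\gH}$ holds if and only if there is a derivation $\odv{\gG}{}{\gH}{\SGS}$. Note first that the two columns of each row assert the same fact: the sequent in the right column is exactly the graph $\cneg\gG\lpar\gH$ obtained by unfolding the definition of the implication $\gG\limp\gH$ in the left column, so it suffices to treat the left column. I would then observe that every left-hand formula has the uniform shape $X\ltens(Y\lpar(Z\ltens W))$, where $\set{X,Y,Z,W}=\set{a,b,c,d}$. Its associated graph $\gG$ has $X$ adjacent to each of $Y$, $Z$, $W$, together with the single additional edge $ZW$ coming from the inner $\ltens$; it has four edges in total.

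The key observation is that each target $\pfour\connn{\dots}$ is obtained from this $\gG$ by deleting exactly one edge incident to $X$. Indeed, deleting $XY$ from $\gG$ leaves the triangle on $\set{X,Z,W}$ plus an isolated $Y$, which is not a path; but deleting $XZ$ yields the path $Y\text{-}X\text{-}W\text{-}Z$, and deleting $XW$ yields the path $Y\text{-}X\text{-}Z\text{-}W$. These are the only two ways to turn $\gG$ into a $\pfour$, which is precisely why each group in the table consists of two rows. A direct check, reading each resulting path in the appropriate orientation, matches these two $\pfour$s with the two right-hand sides of the corresponding group.

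To produce the required $\SGS$ derivation I would apply a single instance of the super switch up rule $\ssur$ (Figure~\ref{fig:SGS}) to $\gG=X\ltens(Y\lpar(Z\ltens W))$, taking the singleton module $\gA=\set{X}$ and $\gB=Y\lpar(Z\ltens W)$. Choosing $S=\set{Y,W}$ removes the edge $XZ$, while $S=\set{Y,Z}$ removes the edge $XW$; in both cases $S\subseteq\vertices[B]$ with $\sizeof S=2\neq 3=\sizeof{\vertices[B]}$ and $\gA\neq\emptyset$, so the side conditions of $\ssur$ are satisfied and the conclusion of the step is exactly the intended path. This one-step derivation from $\gG$ to $\pfour\connn{\dots}$ in $\SGS$, via Corollary~\ref{cor:deduction}, witnesses $\proves[\GS]{\gG\limp\pfour\connn{\dots}}$ for every row, and dually gives the sequent form in the right column.

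The argument has no real obstacle: the only work is the routine bookkeeping of unfolding the twelve negations, identifying in each case the deleted $X$-edge, and confirming that the path produced by $\ssur$ coincides (up to reversing its orientation) with the $\pfour$ listed in the table. The point worth emphasising, by contrast with the preceding negative result for $\MLLgen$, is structural rather than computational: a single $\ssur$ step suffices precisely because in $\GS$ we are permitted to delete one edge of a graph while keeping the remaining vertex a module, an operation that has no counterpart among the context-free multiplicative rules describing the partition-connective $\gcon$.
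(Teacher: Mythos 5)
Your proof is correct, but it takes a genuinely different route from the paper's. The paper proves the proposition by exhibiting, for one representative implication, an explicit proof inside the down-fragment $\GS$ itself---namely the derivation in~\eqref{eg:proofA}, which builds the graph $\cneg\gG\lpar\gH$ from $\unit$ using two instances of the derivable rule $\idr$ followed by one $\ssdr$---and then declares the remaining cases similar. You instead work in the symmetric system $\SGS$: your uniform analysis (every left-hand side is $X\ltens(Y\lpar(Z\ltens W))$, and each target $\pfour$ arises by deleting exactly one of the edges $XZ$ or $XW$, a single legal $\ssur$ instance with $\gA=X$, $\gB=Y\lpar(Z\ltens W)$ and $S=\set{Y,W}$ or $S=\set{Y,Z}$) is accurate---I checked all twelve rows---and the passage from the $\SGS$ derivation $\gG\to\gH$ to $\proves[\GS]{\gG\limp\gH}$ via Corollary~\ref{cor:deduction} is legitimate, since that corollary rests on Theorem~\ref{thm:up}, established in Section~\ref{sec:upfrag} well before this proposition; indeed your strategy is essentially the one the paper itself illustrates in Example~\ref{ex:ssw} (see \eqref{eq:exup} and~\eqref{eq:exup2}) for an implication of exactly this family. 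What your route buys is uniformity and an explanation of why each group has exactly two rows (deleting $XY$ leaves a triangle plus an isolated vertex); what the paper's route buys is a cut-free witness that does not invoke the admissibility of $\ssur$. Two cosmetic points: the two columns are related not ``dually'' but by literal equality of graphs, since $\gG\limp\gH$ is by definition $\cneg\gG\lpar\gH$ (as you yourself note at the outset); and strictly speaking the $\ssur$ step applies to $(Y\lpar(Z\ltens W))\ltens X$, so an isomorphism step reordering the tensor precedes it, which the conventions of Definition~\ref{def:derivation} absorb silently.
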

\begin{proof}
	The proof of $(d\lpar(b\ltens(a\lpar c)))\limp\pfour\connn{c,a,d,b}$ is shown in \eqref{eg:proofA}.
	The other implications are proven similarly.
\end{proof}

Both $\MLLgen$ (by definition) and $\GS$ (by Corollary~\ref{cor:MLL}) are {conservative} extensions of $\MLL$.
The non $\ltens/\lpar$-decomposable $4$-ary connective $\gcon$  should  be translated as a non-decomposable graph with $4$ vertices, that is, a $\pfour$.
Thus, the translation of $\gcon(a,b,c,d)$ should be of the shape $\pfour\connn{\sigma(a),\sigma(b),\sigma(c),\sigma(d)}$ for some permutation $\sigma$ over the set $\set{a,b,c,d}$.
However for each possible translation of $\gcon$, one of the  implications in  \eqref{eq:non-provableMLLG}, which are not derivable in $\MLLgen$ and contain $\gcon$,
is translated into a $\GS$-derivable implication.
We can conclude that the systems $\MLLgen$ and $\GS$ are not equivalent.

\subsection*{Symmetries of a connective}

\newcommand{\fourraryConGate}[5]{
\!\raisebox{-25pt}{	\begin{tikzpicture}[x=7, y=7,font=\scriptsize]
		\node (gateout) at (0,0) {};
		\node (out) at (0,-1) {};
		\node at (0,1) {$#5\strut$};
		\node (v2) at (-2.5,2) {};	
		\node (v3) at (2.5,2) {};
		\draw  (gateout.center) edge (out);
		\draw (gateout.center) -- (v2.center) -- (v3.center) -- (gateout.center);
		\node (gatein1) at (-1.5,2) {};
		\node (gatein2) at (-0.5,2) {};
		\node (gatein3) at (0.5,2) {};
		\node (gatein4) at (1.5,2) {};
		\node[inner sep=0] (in1) at (-1.5,4) {$a\strut$};
		\node[inner sep=0] (in2) at (-0.5,4) {$b\strut$};
		\node[inner sep=0] (in3) at (0.5,4) {$c\strut$};
		\node[inner sep=0] (in4) at (1.5,4) {$d\strut$};
		\draw  (in1) edge[looseness=1, in=90,out=-90] (gatein#1.center);
		\draw  (in2) edge[looseness=1, in=90,out=-90] (gatein#2.center);
		\draw  (in3) edge[looseness=1, in=90,out=-90] (gatein#3.center);
		\draw  (in4) edge[looseness=1, in=90,out=-90] (gatein#4.center);
	\end{tikzpicture}
}\!
}
\newcommand{\gcongate}[4]{\fourraryConGate{#1}{#2}{#3}{#4}{\gcon}}
\newcommand{\gcongatep}[4]{\fourraryConGate{#1}{#2}{#3}{#4}{\gcon'}}
\newcommand{\gcongates}[4]{\fourraryConGate{#1}{#2}{#3}{#4}{\gcon''}}
\newcommand{\Ccongate}[4]{\fourraryConGate{#1}{#2}{#3}{#4}{\Ccon}}
\newcommand{\nCcongate}[4]{\fourraryConGate{#1}{#2}{#3}{#4}{\cneg\Ccon}}

\def\symgroup{\mathfrak S}
\def\symgroupof#1{\symgroup(#1)}

We now calculate the symmetries of the connectives $\gcon$ and $\pfour$ and show that they are different in a way such that $\GS$ and $\MLLgen$ are not comparable as logical systems. To understand what we mean by symmetries of a connective, consider for example the $\ltens $, which is commutative---as the formula $A\ltens B$ is logically equivalent to $B\ltens A$---that is, we can permute $A$ and $B$ without changing the formula.
On the other hand, the connective $\limp$ is not commutative.
But we can define a connective $\lcoimp$ distinct from $\limp$ but definable from $\limp$ as $A\lcoimp B=B\limp A$.

Once we start to analyse $n$-ary connectives with $n>2$, we can no more categorise  connectives only as {commutative} and {non-commutative}.
Consider the  $4$-ary connectives $\gcon$  and $\ngcon$ and their describing sets of partitions in \ref{eq:girCon}.
We have
$$
\begin{array}{r@{\;=\;}c@{\;=\;}c@{\;=\;}l}
	\gcon(a,b,c,d)&
	\gcon(b,a,c,d)&
	\gcon(a,b,d,c)&
	\gcon(b,c,d,a) =
	\\
	\gcon(c,d,a,b)&
	\gcon(d,a,b,c)&
	\gcon(b,a,d,c)&
	\gcon(d,c,b,a)
\end{array}
$$
which, using interaction net  syntax \cite{lafont:95},  could be represented as follows
$$
\begin{array}{c@{=}c@{=}c@{=}c@{=}c@{=}c@{=}c@{=}c}
\gcongate{1}{2}{3}{4}&
\gcongate{2}{1}{3}{4}&
\gcongate{1}{2}{4}{3}&
\gcongate{2}{3}{4}{1}&
\gcongate{3}{4}{1}{2}&
\gcongate{4}{1}{2}{3}&
\gcongate{2}{1}{4}{3}&
\gcongate{4}{3}{2}{1}
\end{array}
$$
Hence, the sets of partitions describing $\gcon$ and $\ngcon$ are \emph{stable} under the permutation\footnote{We write permutations using the \emph{cycle notation}, that is, the permutation is written as a product of cycles, and each element is mapped to the following element in the same cycle.} in the set $\symgroupof{\gcon}$, shown below:
$$
\symgroupof{\gcon}=\symgroupof{\ngcon}=\set{
	(1),
	(1,2),
	(3,4),
	(1,2,3,4), (1,3)(2,4),(1,4,2,3),
	(1,2)(3,4),
	(1,4)(2,3)
}
$$

We conclude that there are $\frac{\sizeof{\symgroup_4}}{\sizeof{\symgroupof{\gcon}}}=3$  non-isomorphic instances of $\gcon$, defining as many $4$-ary connectives:
$$
\begin{array}{c@{\quad}|@{\quad}c@{\quad}|@{\quad}c}
\gcon(a,b,c,d)
&
\gconp(a,b,c,d)=\gcon(c,b,a,d)
&
\gconpp(a,b,c,d)=\gcon(a,c,b,d)
\\
\hline
\gcongate{1}{2}{3}{4}
&
\gcongatep{1}{2}{3}{4}
=
\gcongate{3}{2}{1}{4}
&
\gcongates{1}{2}{3}{4}
=
\gcongate{1}{3}{2}{4}
\end{array}
$$
Moreover, we observe that $\ngcon$ cannot be expressed as a function of $\gcon$ since the two describing sets are made of partitions of different cardinality.
The connective $\gcon$ is a non-decomposable $4$-ary connective \cite{mai:19,acc:mai:20}. We can conclude that there are only three pairs of dual non-decomposable $4$-ary connectives-as-partitions.
In fact, a properly defined duality forces the dual of a set $P$ of partitions to be exactly the set all partitions that are orthogonal to the ones in $P$. This restrains the subsets of $\partof{4}$ which can be used to describe connectives, leaving only $6$ non-decomposable connectives.

At the same time,  $\pfour$ is the unique prime graph on four vertices. Hence, any non-decomposable $4$-ary connective-as-graph has to be an instance of $\pfour$.
Since the symmetry group of $\pfour$ (i.e., the group of its  isomorphisms) is the following
$$ 
\symgroupof{\pfour}=
\set{
(1),
(1,4)(2,3)
}
$$
we conclude that there are $\frac{\sizeof{\symgroup_4}}{\sizeof{\symgroupof{\pfour}}}=12$ different instances of $\pfour$ defining as many $4$-ary connectives:
$$
\begin{array}{c@{\quad }c@{\quad }c@{\quad }c@{\quad }c@{\quad }c}
\begin{array}{c@{\quad\;\;}c}
	\va1 & \vb1\\
	\\
	\vc1 & \vd1
\end{array}
\edges{a1/b1,b1/c1,c1/d1}
&
\begin{array}{c@{\quad\;\;}c}
	\va1 & \vb1\\
	\\
	\vc1 & \vd1
\end{array}
\edges{a1/b1,a1/c1,c1/d1}
&
\begin{array}{c@{\quad\;\;}c}
	\va1 & \vb1\\
	\\
	\vc1 & \vd1
\end{array}
\edges{a1/b1,d1/c1,a1/d1}
&
\begin{array}{c@{\quad\;\;}c}
	\va1 & \vb1\\
	\\
	\vc1 & \vd1
\end{array}
\edges{a1/b1,b1/d1,c1/d1}
&
\begin{array}{c@{\quad\;\;}c}
	\va1 & \vb1\\
	\\
	\vc1 & \vd1
\end{array}
\edges{a1/c1,a1/d1,b1/d1}
&
\begin{array}{c@{\quad\;\;}c}
	\va1 & \vb1\\
	\\
	\vc1 & \vd1
\end{array}
\edges{a1/b1,b1/d1,c1/a1}
\\
%
\begin{array}{c@{\quad\;\;}c}
	\va1 & \vb1\\
	\\
	\vc1 & \vd1
\end{array}
\edges{a1/c1,b1/c1,b1/d1}
&
\begin{array}{c@{\quad\;\;}c}
	\va1 & \vb1\\
	\\
	\vc1 & \vd1
\end{array}
\edges{a1/c1,b1/d1,c1/d1}
&
\begin{array}{c@{\quad\;\;}c}
	\va1 & \vb1\\
	\\
	\vc1 & \vd1
\end{array}
\edges{a1/b1,b1/c1,a1/d1}
&
\begin{array}{c@{\quad\;\;}c}
	\va1 & \vb1\\
	\\
	\vc1 & \vd1
\end{array}
\edges{a1/d1,b1/c1,c1/a1}
&
\begin{array}{c@{\quad\;\;}c}
	\va1 & \vb1\\
	\\
	\vc1 & \vd1
\end{array}
\edges{a1/d1,d1/c1,c1/b1}
&
\begin{array}{c@{\quad\;\;}c}
	\va1 & \vb1\\
	\\
	\vc1 & \vd1
\end{array}
\edges{a1/d1,b1/c1,b1/d1}
\end{array}
$$
If we denote $\Zcon\connn{a,b,c,d}=\pfour\connn{a,b,c,d}$ and $\Ncon\connn{a,b,c,d}=\pfour\connn{c,a,d,b}$ (the first and second-to-last graph in the first line in the above figure), we have that $\cneg \Zcon=\Ncon$.
More generally, if $\Ccon\connn{a,b,c,d}=\pfour\connn{\sigma(a),\sigma(b),\sigma(c),\sigma(d)}$ for a permutation $\sigma$ of the set $\set{a,b,c,d}$, then 
$$
\begin{array}{c@{\qquad}|@{\qquad}c}
\Ccon\connn{a,b,c,d}=\Ccon\connn{d,c,b,a}
&
\cneg \Ccon\connn{a,b,c,d}=\Ccon\connn{c,a,d,b}	
\\
\hline
\Ccongate{1}{2}{3}{4}
=
\Ccongate{4}{3}{2}{1}
&
\nCcongate{1}{2}{3}{4}
=
\Ccongate{2}{4}{1}{3}
\end{array}
$$
We conclude that there are $6$ pairs of dual non-decomposable $4$-ary connectives-as-graphs.

Summing up, the symmetry group of $\gcon$ is different from the one of $\pfour$. Hence, the two connectives have to be distinct.
Moreover $\ngcon( a,  b, c, d)$ cannot be expressed as $\gcon(\sigma(a),\sigma(b),\sigma(c),\sigma(d))$ for any $\sigma $ permutation over $\set{a,b,c,d}$, while $\pfour\connn{a,b, c, d}=\cneg\pfour\connn{ b,  d ,  a, c}$.
In $\MLLgen$ there are $3$ pairs of dual non-decomposable $4$-ary connectives, in $\GS$ there are $6$ of such pairs.

\subsection*{Decomposable and non-decomposable connectives.}

In the connectives-as-partitions setting, a generalised $n$-ary connective described by a partition $P\subset\partofn$ is \emph{decomposable} if there is a $\ltens/\lpar$-formula $F$ such that $P$ is 
the set of partitions associated to all possible derivations of $F$ \cite{danos:regnier:89,acc:mai:20}.
For example, the connective described by $P=\set{\partition{\block{1,2},\block 2}, \partition{\block 1, \block{2,3}}}$ 
is decomposable since it is associated to the formula $F=(a\ltens b) \lpar c$ as shown above. 
However, the generalised connective which corresponds  to the formula $\gcon(1,2,3,\gcon(4,5,6,7))$ is considered to be non-decomposable, even if we define it by using other connectives.
A possible motivation for such a choice may be the lack of an efficient decomposition algorithm in the literature.
In fact, even from the set of partitions defined by a $\ltens/\lpar$-formula, it is not trivial to reconstruct the original formula.

However, for our connectives-as-graphs setting,
Lemma~\ref{lem:newhope} provides a finer definition of connective decomposition. 
In particular, every prime graph defines a non-decomposable connective, and every connective is either non-decomposable or admits a decomposition into non-decomposable ones.
That is, while  the notion of decomposition for connectives-as-partitions is still rough, the one for connectives-as-graph is well studied and comes with a linear decomposition algorithm~\cite{LinearModularDecomposition}.

\subsection*{The packaging problem}

The so called \emph{packaging problem}~\cite{danos:regnier:89} is the impossibility of deriving the axiom $\gcon \limp \gcon$ in the sequent calculus. This is due to the lack of the \emph{initial coherence} property  \cite{avr:canonical:01,mil:pim:13} for the $\MLLgen$ sequent calculus.
This can easily be shown in the case of $\gcon$ by remarking that the arities of the rules for $\gcon$ and $\ngcon$ make it impossible to gather together the premises as four smaller proofs.
The initial coherence property can be recovered using the proof net syntax of $\MLLgen$ \cite{acc:mai:20}.

As we have seen in Corollary~\ref{cor:identity}, the proof system $\GS$ has the initial coherence property, that is, there is no packaging problem in $\GS$.


\section{Related and Future Work}\label{sec:relatedWork}

In this section we would like to draw attention to certain challenges and open problems surrounding $\GS$.
Using examples, such as~(\ref{struct:deep}) and (\ref{struct:deep2}),
we have already explained why $\GS$ necessarily demands deep inference.
Let us now explain that 
simply taking an established semantics for $\MLLm$ based on graphs and dropping the restriction to cographs does not immediately yield a semantics for $\GS$.

\subsection{Linear inferences in classical logic.}

The switch rule has the property that it reflects edges and maximal cliques.
That is: if there is an edge in the conclusion it will also appear in the premise and every maximal clique in the premise is a superset of some maximal clique in the conclusion.
Indeed, mappings reflecting maximal cliques and preserving stable sets 
(mutually independent vertices) have a long history in
program semantics~\cite{Berry1978} which led to coherence spaces and the discovery of linear logic~\cite{girard:87}.
Therefore it is a reasonable starting point to try generalising switch by using such maximal clique reflecting homomorphisms, instead of $\ssdr$.
Indeed this is how we discovered $\ssdr$, which is sound with respect to such homomorphisms.

Unfortunately, replacing $\ssdr$ with maximal clique reflecting homomorphisms yields a system distinct from our graphical system, for example the following would be provable, but is not provable in $\GS$.
\begin{equation}\label{eg:eg}
  \begin{array}{c@{\quad\;\;}c@{\quad\;\;}c@{\quad\;\;}c}
    & \va1 & \vb1 &  \\
    \\[-1ex]
    \vna1 & & & \vnb1  \\
    \\[-1ex]
    & \vc1 & \vnc1 &
  \end{array}
  \edges{a1/b1,na1/c1,nc1/nb1,a1/c1,b1/nc1,na1/nb1}
\end{equation}
We may try replacing both $\ssdr$ and $\ssur$ using a stronger symmetric notion of homomorphism where, in addition, every maximal stable set in the conclusion is a superset of some maximal stable set in the premise.
Homomorphisms that are both maximal clique reflecting and stable set preserving
are special \textit{linear inferences}~\cite{das:str:RTA15,das:str:LMCS16} that preserve edges from the conclusion in the premise of a rule, which we refer to as \textit{linear homomorphisms} for convenience.
Using linear homomorphisms, the above example is not provable.
To see why, observe that at some point either $a$ and $\cneg a$ or $b$ and $\cneg{b}$ 
must be brought together into a module where they can interact, but this cannot be achieved while preserving the maximal stable set $\left\{a,\cneg{b}\right\}$.

Notice, however, that if we made the alternative design decision of replacing $\ssdr$ and $\ssur$ by linear homomorphisms described above, the implications below would be provable.
\begin{equation}\label{eg:Eiffel}
  \begin{array}{c@{\quad}c@{\quad}c}
\va1 && \ve1 \\[9pt]
\vb1 && \vd1 \\[9pt]
 &\vc1 &
  \end{array}
  \edges{a1/b1,b1/c1,c1/d1,d1/e1,b1/d1}
  \multimap 
  \begin{array}{c@{\quad}c@{\quad}c}
\va1 && \ve1 \\[9pt]
\vb1 && \vd1 \\[9pt]
 &\vc1 &
  \end{array}
  \edges{a1/b1,c1/d1,d1/e1,b1/d1}
\qquad
\qquad\qquad
\qquad
  \begin{array}{c@{\qquad}c}
 & \va1        \\[4pt]
      \vb1 \\[4pt]
 & \vc1        \\[4pt]
      \vd1 \\[4pt]
 & \ve1 
  \end{array}
  \edges{a1/b1,b1/c1,c1/d1,d1/e1,e1/c1}
  \multimap 
  \begin{array}{c@{\qquad}c}
 & \va1        \\[4pt]
     \vb1 \\[4pt]
 & \vc1        \\[4pt]
      \vd1 \\[4pt]
 & \ve1 
  \end{array}
  \edges{a1/b1,b1/c1,c1/d1,d1/e1}
\end{equation}
In contrast, the above examples are not provable in $\GS$, since both sides are
distinct prime graphs; and there is no suitable way to apply $\ssdr$.
Thus, we would obtain a distinct system from $\GS$ by using such
linear homomorphisms.

The above examples separating linear homomorphisms from $\GS$ made use of graphs with at least 5 nodes that are not cographs.
The smallest examples involving cographs only that separate $\MLLm$ from logics based on linear homomorphisms involve formulas with~8 nodes.\footnote{A new linear inference of size 8: \url{https://prooftheory.blog/2020/06/25/new-linear-inference/}
and
Linear inferences of size 7: \url{https://prooftheory.blog/2020/10/01/linear-inferences-of-size-7/}}
Indeed, studying logics defined using linear inferences that reflect maximal
cliques and preserve maximal stable sets is currently a topic of
active research and leads to possible extensions of Boolean logic to
graphs~\cite{%
	calk:graph, waring:master, CDW:ext-bool, das:19, das:rice:FSCD2021%
}.

Logics resulting from directly generalising linear inferences inspired by Boolean logic are incomparable to $\GS$. In one direction, the above examples show there are graphs that are not provable in $\GS$ that are provable using linear inferences; while in the converse direction the $\pdr$ rule is not admissible in a system with linear inferences. To see why, observe that in Example~\ref{eg:proofN} there is a clique $\left\{\cneg{a},\cneg{d},c,b\right\}$ in the premise of the $\pdr$ that is not reflected in the conclusion. Since $\pdr$ is necessary for examples such as in Figure~\ref{fig:eg}, we know that the logics are incomparable. Hence, although these proposed logics on graphs appear to be close to~$\GS$, they are founded on logical principles that are different from the ones we assume, as explained in detail in Appendix~\ref{sec:phil}.

\subsection{Criteria for proof nets.}
Graphical approaches to proof nets such as $R\&B$-graphs \cite{retore:03} have valid definitions when we drop the restriction to cographs.
However, we show that (at least without strengthening established criteria),
such definitions do not yield a semantics for a logic over graphs, since logical principles laid out 
in the introduction are violated.

Consider again graph~(\ref{eq:exa4}),
which is not provable in $\GS$.
In an $R\&B$-graph we draw blue edges representing the axiom links of proof nets, as shown below for graph~(\ref{eq:exa4}).
\begin{equation}\label{struct:5}
  \begin{array}{c@{\quad\;\;}c@{\quad\;\;}c@{\quad\;\;}c}
    \vna1 & \va1  \\
    \\[-1ex]
    \vb1 & \vnb1 
  \end{array}
  \edges{na1/b1,a1/nb1,na1/nb1}
  \intedges{na1/a1,nb1/b1}
\end{equation}
If we remove the restriction to cographs when we apply the established correctness criterion for $R\&B$-graphs the above graph would be accepted, which is something we aim to avoid for a semantics of $\GS$.
The reason is the cycle of 4 vertices alternating between red and blue edges has a chord.
This observation is independent of the rules of the system $\GS$, as we observed in Section~\ref{sec:system} that graph~(\ref{eq:exa4}) cannot be provable in a system subject to the logical principle of consistency.
Future work includes defining a stronger criterion for $R\&B$-graphs that works for graphs that are not cographs, with the aim of obtaining a sound and complete semantics for $\GS$.

\subsection{Beyond formulas-as-processes}

As mentioned in the introduction, this study originates from the limited expressive power of formulas in the formulas-as-processes interpretation.
In order to study the challenge of moving from formulas to graphs we have restricted ourselves to a minimal logic, whereas applications in concurrency typically require a logic with more features such as 
non-commutative operators for modelling the causal order of events~\cite{bruscoli:02,gug:SIS,retore:99}.
In the setting of graphical logics extending $\GS$, non-commutative operators
generalise to graphs incorporating directed edges which capture more general patterns of logical time,
as explained in~\cite{FSCD2022}.
To make such models useful we also require features such as:
additives for choices~\cite{miller:tiu:pi,ross:MAV}; 
quantifiers for dealing with message passing and private data~\cite{ross:MAV1,horne:19};
and even co-inductive proofs~\cite{Horne2020}.
Lifting such features to the setting of graphs seem feasible, but requires a certain amount of future work.

To hint at further possibilities enabled by $\GS$ in the direction of concurrency theory,
we return briefly to graph~\eqref{eq:ex2} mentioned in the introduction.
We explained there that an edge models separation, the absence of an edge models the ability to communicate,
and that we can express intransitive information flows using such graphs.
We elaborate a little more on such a scenario below.

Suppose that the processes $a$, $b$, $c$, and $d$ 
operate on three shared resources $x$, $y$, and $z$ such that
$c$ and $a$ operate (read and write) on  $x$;
	$a$ and $d$ operate on  $y$; and
	$d$ and $b$ operate on  $z$.
In addition, in this concurrency scenario, we assume that the information flow is strictly managed
such that, although information can flow via $x$ from $c$ to $a$, and from $a$ to $d$ via $y$,
those information flows must be strictly separated, and therefore $a$ must be trusted or designed such that the information from $c$ does not flow to $d$ indirectly via $a$. More concretely, an illegal indirect information flow
caused by $a$ failing to manage correctly the separation between $c$ and $d$ 
 may be achieved by the following series of read and write operations:
assuming $\mathsf{secret}$  is known initially by $c$ only, $c$ writes $\mathsf{secret}$ to $x$, $a$ reads $\mathsf{secret}$ from $x$ and then writes $\mathsf{secret}$ to $y$, and then $d$ reads $\mathsf{secret}$ from $y$.
Similarly, $d$ must be careful to ensure no information is leaked between $b$ and $a$, despite having a shared channel with both.
Notice that since we should also avoid information flows between $b$ and $c$,
both $d$ and $a$ must trust each other to adhere to policy in order to manage their own policy obligations.

This is a realistic information flow scenario. For example $a$ and $d$, may represent actions of service providers (e.g., a reviewer and editor) that make decisions based on confidential information from multiple parties (authors, reviewers, editors, and publishers) without disclosing confidential information to the other parties in a chain of conflicts of interest, and also without mutually sharing all information.
From this small example, we see already that intransitive information flows arise naturally in networks where privacy is important; that they model complex decentralised trust scenarios; and hence, we anticipate further work in this direction.

\def\Sfive{\mathsf{S5}}

\section{Conclusion}\label{sec:conclusion}

Guided by logical principles, we have devised a minimal proof system (called $\GS$ and shown in Figure~\ref{fig:SGS}) 
that operates directly over graphs, rather than formulas.
Negation is given by graph duality, while disjunction is disjoint union of graphs, allowing us to define the implication ``$\gG$~implies~$\gH$'' as the standard ``not $\gG$ or $\gH$'' (see Definition~\ref{def:dual}). All other design decisions are then fixed by our guiding logical principles (Appendix~\ref{sec:phil}).
Most of these principles follow from cut elimination (Theorem~\ref{thm:cut}), to which the majority of this paper is dedicated.
We also confirm that $\GS$ conservatively extends $\MLLm$ (Corollary~\ref{cor:MLL}) --- a logic at the core of many proof systems.

Surprisingly, even for such a minimal generalisation of logic to graphs, deep inference is necessary.
Proof systems for classical logic, intuitionistic logic, linear logic, and many other logics \textit{may} be expressed using deep inference, but deep inference is generally not necessary, as presentations in the sequent calculus do exist.
In contrast, for some logics (e.g., $\BV$~\cite{gug:SIS,tiu:SIS-II} and modal logic $\Sfive$~\cite{Stouppa2007,Poggiolesi2008}),
deep inference is necessary in order to define a proof system satisfying cut elimination.
System $\GS$ goes further than the aforementioned systems in that all intermediate lemmas such as splitting (Lemmas~\ref{lem:splitting:prime} and \ref{lem:splitting:atom}) and context reduction (Lemma~\ref{lem:conred})
demand a deep formulation that is more context aware than standard. As such we were required to generalise the basic mechanisms of deep inference itself in order to establish cut elimination (Theorem~\ref{thm:cut}) for a logic over graphs.
This need for deep inference for $\GS$ is due to a property of general graphs that is forbidden for graphs corresponding to formulas --- that the shortest path between any two connected vertices may be greater than two; and hence, when we apply reasoning inside a module (i.e., a context), there may exist dependencies that indirectly constrain the module.

\subsection*{Acknowledgements}
We are very grateful for insightful discussions with Anupam Das, particularly when exploring relationships between $\GS$ and linear inferences in 
classical logic. We would also like to thank the anonymous referees who made valuable suggestions for improving the content and quality of this article.

\renewcommand{\addcontentsline}[3]{}

\bibliographystyle{alphaurl}
\bibliography{LBFref}

\begin{thebibliography}{AHMS22}

\bibitem[AG21]{acc:gue:21}
Matteo Acclavio and Giulio Guerrieri.
\newblock A deep inference system for differential linear logic.
\newblock {\em Electronic Proceedings in Theoretical Computer Science},
  353:26--49, dec 2021.
\newblock URL: \url{https://doi.org/10.4204%2Feptcs.353.2}, \href
  {https://doi.org/10.4204/eptcs.353.2} {\path{doi:10.4204/eptcs.353.2}}.

\bibitem[AHMS22]{FSCD2022}
Matteo Acclavio, Ross Horne, Sjouke Mauw, and Lutz Stra{\ss}burger.
\newblock A graphical proof theory of logical time.
\newblock In {\em 7th International Conference on Formal Structures for
  Computation and Deduction (FSCD 2022), August 2-5, Haifa, Israel}, volume
  228. LIPIcs, 2022.

\bibitem[AHS20]{Acclavio2020}
Matteo Acclavio, Ross Horne, and Lutz Stra{\ss}burger.
\newblock Logic beyond formulas: {A} proof system on graphs.
\newblock In Holger Hermanns, Lijun Zhang, Naoki Kobayashi, and Dale Miller,
  editors, {\em {LICS} '20: 35th Annual {ACM/IEEE} Symposium on Logic in
  Computer Science, Saarbr{\"{u}}cken, Germany, July 8-11, 2020}, pages 38--52.
  {ACM}, 2020.
\newblock Full version with technical appendices available
  at~\url{https://hal.inria.fr/hal-02560105}.
\newblock \href {https://doi.org/10.1145/3373718.3394763}
  {\path{doi:10.1145/3373718.3394763}}.

\bibitem[AL01]{avr:canonical:01}
Arnon Avron and Iddo Lev.
\newblock Canonical propositional {Gentzen}-type systems.
\newblock In Rajeev Gor{\'e}, Alexander Leitsch, and Tobias Nipkow, editors,
  {\em Automated Reasoning}, pages 529--544, Berlin, Heidelberg, 2001. Springer
  Berlin Heidelberg.

\bibitem[AM20]{acc:mai:20}
Matteo Acclavio and Roberto Maieli.
\newblock Generalized connectives for multiplicative linear logic.
\newblock In Maribel Fern{\'a}ndez and Anca Muscholl, editors, {\em 28th EACSL
  Annual Conference on Computer Science Logic (CSL 2020)}, volume 152 of {\em
  LIPIcs}, pages 6:1--6:16, Dagstuhl, Germany, 2020. Schloss
  Dagstuhl--Leibniz-Zentrum fuer Informatik.
\newblock URL: \url{https://drops.dagstuhl.de/opus/volltexte/2020/11649}, \href
  {https://doi.org/10.4230/LIPIcs.CSL.2020.6}
  {\path{doi:10.4230/LIPIcs.CSL.2020.6}}.

\bibitem[And92]{andreoli:92}
Jean-Marc Andreoli.
\newblock Logic programming with focusing proofs in linear logic.
\newblock {\em Journal of Logic and Computation}, 2(3):297--347, 1992.

\bibitem[AT17]{tubella:phd}
Andrea Aler~Tubella.
\newblock {\em A study of normalisation through subatomic logic}.
\newblock PhD thesis, University of Bath, 2017.

\bibitem[ATG18]{gug:tub:split}
Andrea Aler~Tubella and Alessio Guglielmi.
\newblock Subatomic proof systems: Splittable systems.
\newblock {\em ACM Trans. Comput. Logic}, 19(1), January 2018.
\newblock \href {https://doi.org/10.1145/3173544} {\path{doi:10.1145/3173544}}.

\bibitem[BCST96]{blute:etal:96}
Richard Blute, Robin Cockett, Robert Seely, and Todd Trimble.
\newblock Natural deduction and coherence for weakly distributive categories.
\newblock {\em J.\ of Pure and Applied Algebra}, 113(3):229--296, 1996.

\bibitem[Bel97]{bellin:mix}
Gianluigi Bellin.
\newblock Subnets of proof-nets in multiplicative linear logic with mix.
\newblock {\em Mathematical Structures in Computer Science}, 7(6):663--669,
  1997.
\newblock \href {https://doi.org/10.1017/S0960129597002326}
  {\path{doi:10.1017/S0960129597002326}}.

\bibitem[Ber78]{Berry1978}
G{\'e}rard Berry.
\newblock Stable models of typed $\lambda$-calculi.
\newblock In Giorgio Ausiello and Corrado B{\"o}hm, editors, {\em Automata,
  Languages and Programming}, pages 72--89, Berlin, Heidelberg, 1978. Springer.

\bibitem[BG09]{Bruscoli2009}
Paola Bruscoli and Alessio Guglielmi.
\newblock On the proof complexity of deep inference.
\newblock {\em ACM Trans. Comput. Logic}, 10(2), March 2009.
\newblock \href {https://doi.org/10.1145/1462179.1462186}
  {\path{doi:10.1145/1462179.1462186}}.

\bibitem[BG16]{Bruscoli2016}
Paola Bruscoli and Alessio Guglielmi.
\newblock On analyticity in deep inference.
\newblock note, 2016.
\newblock URL: \url{http://cs.bath.ac.uk/ag/p/ADI.pdf}.

\bibitem[BM90]{Bonner1990}
Anthony~J. Bonner and L.~Thorne McCarty.
\newblock Adding negation-as-failure to intuitionistic logic programming.
\newblock In {\em Proceedings of the 1990 North American Conference on Logic
  Programming}, page 681–703, Cambridge, MA, USA, 1990. MIT Press.

\bibitem[Bru02]{bruscoli:02}
Paola Bruscoli.
\newblock A purely logical account of sequentiality in proof search.
\newblock In Peter~J. Stuckey, editor, {\em Logic Programming}, pages 302--316,
  Berlin, Heidelberg, 2002. Springer.
\newblock \href {https://doi.org/10.1007/3-540-45619-8\_21}
  {\path{doi:10.1007/3-540-45619-8\_21}}.

\bibitem[Br{\"u}03]{brunnler:phd}
Kai Br{\"u}nnler.
\newblock {\em Deep Inference and Symmetry for Classical Proofs}.
\newblock PhD thesis, Tech\-ni\-sche Uni\-ver\-si\-t{\"a}t Dres\-den, 2003.

\bibitem[BS17]{bru:str:swi-med}
Paola Bruscoli and Lutz Stra{\ss}burger.
\newblock On the length of medial-switch-mix derivations.
\newblock In Juliette Kennedy and Ruy J. G.~B. de~Queiroz, editors, {\em Logic,
  Language, Information, and Computation - 24th International Workshop, WoLLIC
  2017, London, UK, July 18-21, 2017, Proceedings}, volume 10388 of {\em
  Lecture Notes in Computer Science}, pages 68--79. Springer, 2017.
\newblock \href {https://doi.org/10.1007/978-3-662-55386-2\_5}
  {\path{doi:10.1007/978-3-662-55386-2\_5}}.

\bibitem[BT01]{brunnler:tiu:01}
Kai Br{\"u}nnler and Alwen~Fernanto Tiu.
\newblock A local system for classical logic.
\newblock In Robert Nieuwenhuis and Andrei Voronkov, editors, {\em Logic for
  Programming, Artificial Intelligence, and Reasoning}, pages 347--361, Berlin,
  Heidelberg, 2001. Springer.
\newblock \href {https://doi.org/10.1007/3-540-45653-8\_24}
  {\path{doi:10.1007/3-540-45653-8\_24}}.

\bibitem[Cal16]{calk:graph}
Cameron Calk.
\newblock A graph theoretical extension of boolean logic.
\newblock Bachelor's thesis, 2016.
\newblock URL: \url{http://www.anupamdas.com/graph-bool.pdf}.

\bibitem[CD08]{cou:del:modular}
Bruno Courcelle and Christian Delhommé.
\newblock The modular decomposition of countable graphs. definition and
  construction in monadic second-order logic.
\newblock {\em Theoretical Computer Science}, 394(1):1--38, 2008.
\newblock URL:
  \url{https://www.sciencedirect.com/science/article/pii/S0304397507008171},
  \href {https://doi.org/https://doi.org/10.1016/j.tcs.2007.10.046}
  {\path{doi:https://doi.org/10.1016/j.tcs.2007.10.046}}.

\bibitem[CDW20]{CDW:ext-bool}
Cameron Calk, Anupam Das, and Tim Waring.
\newblock Beyond formulas-as-cographs: an extension of boolean logic to
  arbitrary graphs, 2020.
\newblock \href {http://arxiv.org/abs/2004.12941} {\path{arXiv:2004.12941}}.

\bibitem[CGS11]{CGS:foccos}
Kaustuv Chaudhuri, Nicolas Guenot, and Lutz Stra{\ss}burger.
\newblock The focused calculus of structures.
\newblock In Marc Bezem, editor, {\em CSL'11}, volume~12 of {\em LIPIcs}, pages
  159--173, Dagstuhl, Germany, 2011. Leibniz-Zentrum fuer Informatik.
\newblock \href {https://doi.org/10.4230/LIPIcs.CSL.2011.159}
  {\path{doi:10.4230/LIPIcs.CSL.2011.159}}.

\bibitem[CR79]{Cook1979}
Stephen~A. Cook and Robert~A. Reckhow.
\newblock The relative efficiency of propositional proof systems.
\newblock {\em J. Symb. Log.}, 44(1):36--50, 1979.
\newblock \href {https://doi.org/10.2307/2273702} {\path{doi:10.2307/2273702}}.

\bibitem[Das19]{das:19}
Anupam Das.
\newblock Complexity of evaluation and entailment in boolean graph logic.
\newblock preprint, 2019.
\newblock URL: \url{http://www.anupamdas.com/complexity-graph-bool-note.pdf}.

\bibitem[DP04]{dosen:petric:coherence-book}
Kosta Do{\v{s}}en and Zoran Petri{\'c}.
\newblock {\em Proof-Theoretical Coherence}.
\newblock KCL Publ., London, 2004.

\bibitem[DR89]{danos:regnier:89}
Vincent Danos and Laurent Regnier.
\newblock The structure of the multiplicatives.
\newblock {\em Arch. Math. Log.}, 28(3):181--203, 1989.
\newblock \href {https://doi.org/10.1007/BF01622878}
  {\path{doi:10.1007/BF01622878}}.

\bibitem[DR21]{das:rice:FSCD2021}
Anupam Das and Alex~A. Rice.
\newblock New minimal linear inferences in boolean logic independent of switch
  and medial.
\newblock In Naoki Kobayashi, editor, {\em 6th International Conference on
  Formal Structures for Computation and Deduction, {FSCD} 2021, July 17-24,
  2021, Buenos Aires, Argentina (Virtual Conference)}, volume 195 of {\em
  LIPIcs}, pages 14:1--14:19. Schloss Dagstuhl - Leibniz-Zentrum f{\"{u}}r
  Informatik, 2021.
\newblock \href {https://doi.org/10.4230/LIPIcs.FSCD.2021.14}
  {\path{doi:10.4230/LIPIcs.FSCD.2021.14}}.

\bibitem[DS15]{das:str:RTA15}
Anupam Das and Lutz Stra{\ss}burger.
\newblock {No complete linear term rewriting system for propositional logic}.
\newblock In Maribel Fern{\'a}ndez, editor, {\em 26th International Conference
  on Rewriting Techniques and Applications (RTA 2015)}, volume~36 of {\em
  LIPIcs}, pages 127--142, Dagstuhl, Germany, 2015. Schloss
  Dagstuhl--Leibniz-Zentrum fuer Informatik.
\newblock URL: \url{http://drops.dagstuhl.de/opus/volltexte/2015/5193}, \href
  {https://doi.org/10.4230/LIPIcs.RTA.2015.127}
  {\path{doi:10.4230/LIPIcs.RTA.2015.127}}.

\bibitem[DS16]{das:str:LMCS16}
Anupam Das and Lutz Stra{\ss}burger.
\newblock On linear rewriting systems for boolean logic and some applications
  to proof theory.
\newblock {\em Logical Methods in Computer Science}, 12(4):1--27, 2016.
\newblock \href {https://doi.org/10.2168/LMCS-12(4:9)2016}
  {\path{doi:10.2168/LMCS-12(4:9)2016}}.

\bibitem[DSC16]{deng_simmons_cervesato_2016}
Yuxin Deng, Robert~J. Simmons, and Iliano Cervesato.
\newblock Relating reasoning methodologies in linear logic and process algebra.
\newblock {\em Mathematical Structures in Computer Science}, 26(5):868–906,
  2016.
\newblock \href {https://doi.org/10.1017/S0960129514000413}
  {\path{doi:10.1017/S0960129514000413}}.

\bibitem[Duf65]{duffin:65}
R.J Duffin.
\newblock Topology of series-parallel networks.
\newblock {\em Journal of Mathematical Analysis and Applications}, 10(2):303 --
  318, 1965.

\bibitem[EHR99]{ehr:har:roz:theory}
A~Ehrenfeucht, T~Harju, and G~Rozenberg.
\newblock {\em The Theory of 2-Structures}.
\newblock WORLD SCIENTIFIC, 1999.
\newblock URL: \url{https://www.worldscientific.com/doi/abs/10.1142/4197},
  \href
  {http://arxiv.org/abs/https://www.worldscientific.com/doi/pdf/10.1142/4197}
  {\path{arXiv:https://www.worldscientific.com/doi/pdf/10.1142/4197}}, \href
  {https://doi.org/10.1142/4197} {\path{doi:10.1142/4197}}.

\bibitem[FR94]{fleury:retore:94}
Arnaud Fleury and Christian Retor{\'e}.
\newblock The mix rule.
\newblock {\em Mathematical Structures in Computer Science}, 4(2):273--285,
  1994.
\newblock \href {https://doi.org/10.1017/S0960129500000451}
  {\path{doi:10.1017/S0960129500000451}}.

\bibitem[FRS01]{FAGES200114}
François Fages, Paul Ruet, and Sylvain Soliman.
\newblock Linear concurrent constraint programming: Operational and phase
  semantics.
\newblock {\em Information and Computation}, 165(1):14--41, 2001.
\newblock URL:
  \url{https://www.sciencedirect.com/science/article/pii/S0890540100930025},
  \href {https://doi.org/10.1006/inco.2000.3002}
  {\path{doi:10.1006/inco.2000.3002}}.

\bibitem[Gab98]{Gabbay1998}
Dov~M. Gabbay.
\newblock {\em Fibring logics}.
\newblock Oxford Logic Guides. Clarendon Press, 1998.

\bibitem[Gal67]{gallai:67}
Tibor Gallai.
\newblock Transitiv orientierbare {G}raphen.
\newblock {\em Acta Mathematica Academiae Scientiarum Hungarica},
  18(1--2):25--66, 1967.

\bibitem[Gen35a]{gentzen:35:I}
Gerhard Gentzen.
\newblock Untersuchungen {\"u}ber das logische {S}chlie{\ss}en. {I}.
\newblock {\em Mathematische Zeitschrift}, 39:176--210, 1935.

\bibitem[Gen35b]{gentzen:35:II}
Gerhard Gentzen.
\newblock Untersuchungen {\"u}ber das logische {S}chlie{\ss}en. {II}.
\newblock {\em Mathematische Zeitschrift}, 39:405--431, 1935.

\bibitem[GGP10]{gug:gun:par:2010}
Alessio Guglielmi, Tom Gundersen, and Michel Parigot.
\newblock A proof calculus which reduces syntactic bureaucracy.
\newblock In Christopher Lynch, editor, {\em Proceedings of the 21st
  International Conference on Rewriting Techniques and Applications}, volume~6
  of {\em LIPIcs}, pages 135--150, Dagstuhl, Germany, 2010. Schloss
  Dagstuhl--Leibniz-Zentrum fuer Informatik.
\newblock URL: \url{http://drops.dagstuhl.de/opus/volltexte/2010/2649}, \href
  {https://doi.org/10.4230/LIPIcs.RTA.2010.135}
  {\path{doi:10.4230/LIPIcs.RTA.2010.135}}.

\bibitem[Gir87a]{girard:87}
Jean-Yves Girard.
\newblock Linear logic.
\newblock {\em Theoretical Computer Science}, 50:1--102, 1987.
\newblock \href {https://doi.org/10.1016/0304-3975(87)90045-4}
  {\path{doi:10.1016/0304-3975(87)90045-4}}.

\bibitem[Gir87b]{girard:87:b}
Jean-Yves Girard.
\newblock Multiplicatives.
\newblock In Gabriele Lolli, editor, {\em Logic and Computer Science: New
  Trends and Applications}, pages 11--34. Rosenberg {\&} Sellier, 1987.

\bibitem[Gir00]{girard2000meaning}
Jean-Yves Girard.
\newblock On the meaning of logical rules {II}: multiplicatives and additives.
\newblock {\em NATO ASI Series F: Computer and Systems Sciences}, 175:183--212,
  2000.

\bibitem[GLT89]{girard:etal:89}
Jean-Yves Girard, Yves Lafont, and Paul Taylor.
\newblock {\em Proofs and Types}.
\newblock Cambridge Tracts in Theoretical Computer Science. Cambridge
  University Press, 1989.

\bibitem[GS01]{gug:str:01}
Alessio Guglielmi and Lutz Stra{\ss}burger.
\newblock Non-commutativity and {MELL} in the calculus of structures.
\newblock In Laurent Fribourg, editor, {\em Computer Science Logic}, pages
  54--68, Berlin, Heidelberg, 2001. Springer.
\newblock \href {https://doi.org/10.1007/3-540-44802-0_5}
  {\path{doi:10.1007/3-540-44802-0_5}}.

\bibitem[GS02]{gug:str:02}
Alessio Guglielmi and Lutz Stra{\ss}burger.
\newblock A non-commutative extension of {MELL}.
\newblock In Matthias Baaz and Andrei Voronkov, editors, {\em Logic for
  Programming, Artificial Intelligence, and Reasoning}, pages 231--246, Berlin,
  Heidelberg, 2002. Springer.
\newblock \href {https://doi.org/10.1007/3-540-36078-6_16}
  {\path{doi:10.1007/3-540-36078-6_16}}.

\bibitem[GS11]{SIS-V}
Alessio Guglielmi and Lutz Stra{\ss}burger.
\newblock A system of interaction and structure {V}: the exponentials and
  splitting.
\newblock {\em Mathematical Structures in Computer Science}, 21(3):563--584,
  2011.
\newblock \href {https://doi.org/https://doi.org/10.1017/S096012951100003X}
  {\path{doi:https://doi.org/10.1017/S096012951100003X}}.

\bibitem[Gug07]{gug:SIS}
Alessio Guglielmi.
\newblock A system of interaction and structure.
\newblock {\em ACM Transactions on Computational Logic}, 8(1):1--64, 2007.
\newblock \href {https://doi.org/10.1145/1182613.1182614}
  {\path{doi:10.1145/1182613.1182614}}.

\bibitem[HdMP04]{LinearModularDecomposition}
Michel Habib, Fabien de~Montgolfier, and Christophe Paul.
\newblock A simple linear-time modular decomposition algorithm for graphs,
  using order extension.
\newblock In Torben Hagerup and Jyrki Katajainen, editors, {\em Algorithm
  Theory - SWAT 2004}, pages 187--198, Berlin, Heidelberg, 2004. Springer
  Berlin Heidelberg.

\bibitem[Hor15]{ross:MAV}
R.~Horne.
\newblock The consistency and complexity of multiplicative additive system
  virtual.
\newblock {\em Scientific Annals of Computer Science}, 25(2):245--316, 2015.
\newblock \href {https://doi.org/10.7561/SACS.2015.2.245}
  {\path{doi:10.7561/SACS.2015.2.245}}.

\bibitem[Hor19]{Horne2019b}
Ross Horne.
\newblock The sub-additives: {A} proof theory for probabilistic choice
  extending linear logic.
\newblock In Herman Geuvers, editor, {\em 4th International Conference on
  Formal Structures for Computation and Deduction, {FSCD} 2019, June 24-30,
  2019, Dortmund, Germany}, volume 131 of {\em LIPIcs}, pages 23:1--23:16,
  Dagstuhl, Germany, 2019. Leibniz-Zentrum f{\"{u}}r Informatik.
\newblock URL: \url{http://www.dagstuhl.de/dagpub/978-3-95977-107-8}, \href
  {https://doi.org/10.4230/LIPIcs.FSCD.2019.23}
  {\path{doi:10.4230/LIPIcs.FSCD.2019.23}}.

\bibitem[Hor20]{Horne2020}
Ross Horne.
\newblock Session subtyping and multiparty compatibility using circular
  sequents.
\newblock In Igor Konnov and Laura Kov{\'a}cs, editors, {\em 31st International
  Conference on Concurrency Theory (CONCUR 2020)}, volume 171 of {\em Leibniz
  International Proceedings in Informatics (LIPIcs)}, pages 12:1--12:22,
  Dagstuhl, Germany, 2020. Schloss Dagstuhl--Leibniz-Zentrum f{\"u}r
  Informatik.
\newblock URL: \url{https://drops.dagstuhl.de/opus/volltexte/2020/12824}, \href
  {https://doi.org/10.4230/LIPIcs.CONCUR.2020.12}
  {\path{doi:10.4230/LIPIcs.CONCUR.2020.12}}.

\bibitem[How80]{howard:80}
William~Alvin Howard.
\newblock The formulae-as-types notion of construction.
\newblock In J.~P. Seldin and J.~R. Hindley, editors, {\em To H.\ B.\ Curry:
  Essays on Combinatory Logic, Lambda Calculus and Formalism}, pages 479--490.
  Academic Press, London, 1980.

\bibitem[HP10]{hab:paul:survey}
Michel Habib and Christophe Paul.
\newblock A survey of the algorithmic aspects of modular decomposition.
\newblock {\em Computer Science Review}, 4(1):41--59, 2010.
\newblock URL:
  \url{https://www.sciencedirect.com/science/article/pii/S157401371000002X},
  \href {https://doi.org/https://doi.org/10.1016/j.cosrev.2010.01.001}
  {\path{doi:https://doi.org/10.1016/j.cosrev.2010.01.001}}.

\bibitem[HT19]{horne:19}
Ross Horne and Alwen Tiu.
\newblock Constructing weak simulations from linear implications for processes
  with private names.
\newblock {\em Mathematical Structures in Computer Science}, 29(8):1275--1308,
  2019.
\newblock \href {https://doi.org/10.1017/S0960129518000452}
  {\path{doi:10.1017/S0960129518000452}}.

\bibitem[HTAC19]{ross:MAV1}
Ross Horne, Alwen Tiu, Bogdan Aman, and Gabriel Ciobanu.
\newblock De {Morgan} dual nominal quantifiers modelling private names in
  non-commutative logic.
\newblock {\em {ACM} Trans. Comput. Log.}, 20(4):22:1--22:44, 2019.
\newblock \href {https://doi.org/10.1145/3325821} {\path{doi:10.1145/3325821}}.

\bibitem[Hug06]{hughes:pws}
Dominic Hughes.
\newblock Proofs {W}ithout {S}yntax.
\newblock {\em Annals of Mathematics}, 164(3):1065--1076, 2006.
\newblock \href {https://doi.org/10.4007/annals.2006.164.1065}
  {\path{doi:10.4007/annals.2006.164.1065}}.

\bibitem[KY93]{Kobayashi1993}
Naoki Kobayashi and Akinori Yonezawa.
\newblock {ACL} --a concurrent linear logic programming paradigm.
\newblock In {\em Proceedings of the 1993 International Symposium on Logic
  Programming}, ILPS '93, pages 279--294, Cambridge, MA, USA, 1993. MIT Press.

\bibitem[Laf95]{lafont:95}
Yves Lafont.
\newblock From proof nets to interaction nets.
\newblock In J.-Y. Girard, Y.~Lafont, and L.~Regnier, editors, {\em Advances in
  Linear Logic}, volume 222 of {\em London Mathematical Society Lecture Notes},
  pages 225--247. Cambridge University Press, 1995.

\bibitem[Lam61]{Lambek1961}
Jim Lambek.
\newblock On the calculus of syntactic types.
\newblock In R.~Jacobson, editor, {\em Proceedings of the Twelfth Symposium in
  Applied Mathematics}, volume XII, pages 166--178, 1961.

\bibitem[LW00]{Lodaya2000}
Kamal Lodaya and Pascal Weil.
\newblock Series--parallel languages and the bounded-width property.
\newblock {\em Theoretical Computer Science}, 237(1):347 -- 380, 2000.
\newblock URL:
  \url{http://www.sciencedirect.com/science/article/pii/S0304397500000311},
  \href {https://doi.org/https://doi.org/10.1016/S0304-3975(00)00031-1}
  {\path{doi:https://doi.org/10.1016/S0304-3975(00)00031-1}}.

\bibitem[Mai19]{mai:19}
Roberto Maieli.
\newblock Non decomposable connectives of linear logic.
\newblock {\em Annals of Pure and Applied Logic}, 170(11):102709, 2019.
\newblock URL:
  \url{http://www.sciencedirect.com/science/article/pii/S0168007219300600},
  \href {https://doi.org/https://doi.org/10.1016/j.apal.2019.05.006}
  {\path{doi:https://doi.org/10.1016/j.apal.2019.05.006}}.

\bibitem[Mat68]{Mates1968}
Benson Mates.
\newblock Leibniz on possible worlds.
\newblock In B.~{Van Rootselaar} and J.F. Staal, editors, {\em Logic,
  Methodology and Philosophy of Science III}, volume~52 of {\em Studies in
  Logic and the Foundations of Mathematics}, pages 507--529. Elsevier, 1968.
\newblock URL:
  \url{http://www.sciencedirect.com/science/article/pii/S0049237X0871214X},
  \href {https://doi.org/10.1016/S0049-237X(08)71214-X}
  {\path{doi:10.1016/S0049-237X(08)71214-X}}.

\bibitem[Mil93]{miller:pi}
Dale Miller.
\newblock The $\pi$-calculus as a theory in linear logic: Preliminary results.
\newblock In E.~Lamma and P.~Mello, editors, {\em Extensions of Logic
  Programming}, pages 242--264, Berlin, Heidelberg, 1993. Springer.
\newblock \href {https://doi.org/10.1007/3-540-56454-3_13}
  {\path{doi:10.1007/3-540-56454-3_13}}.

\bibitem[MNPS91]{miller:uniform}
Dale Miller, Gopalan Nadathur, Frank Pfenning, and Andre Scedrov.
\newblock Uniform proofs as a foundation for logic programming.
\newblock {\em Annals of Pure and Applied logic}, 51(1-2):125--157, 1991.
\newblock \href {https://doi.org/10.1016/0168-0072(91)90068-W}
  {\path{doi:10.1016/0168-0072(91)90068-W}}.

\bibitem[M{\"o}h89]{moh:89}
Rolf~H. M{\"o}hring.
\newblock Computationally tractable classes of ordered sets.
\newblock In Ivan Rival, editor, {\em Algorithms and Order}, pages 105--193,
  Dordrecht, 1989. Springer Netherlands.
\newblock \href {https://doi.org/10.1007/978-94-009-2639-4_4}
  {\path{doi:10.1007/978-94-009-2639-4_4}}.

\bibitem[MP13]{mil:pim:13}
Dale Miller and Elaine Pimentel.
\newblock A formal framework for specifying sequent calculus proof systems.
\newblock {\em Theoretical Computer Science}, 474:98--116, 2013.

\bibitem[MR84]{moh:rad}
R.H. Möhring and F.J. Radermacher.
\newblock Substitution decomposition for discrete structures and connections
  with combinatorial optimization.
\newblock In R.E. Burkard, R.A. Cuninghame-Green, and U.~Zimmermann, editors,
  {\em Algebraic and Combinatorial Methods in Operations Research}, volume~95
  of {\em North-Holland Mathematics Studies}, pages 257--355. North-Holland,
  1984.
\newblock URL:
  \url{https://www.sciencedirect.com/science/article/pii/S0304020808729669},
  \href {https://doi.org/https://doi.org/10.1016/S0304-0208(08)72966-9}
  {\path{doi:https://doi.org/10.1016/S0304-0208(08)72966-9}}.

\bibitem[MS94]{mcc:ros:spi:linear}
Ross~M. McConnell and Jeremy~P. Spinrad.
\newblock Linear-time modular decomposition and efficient transitive
  orientation of comparability graphs.
\newblock In {\em Proceedings of the Fifth Annual ACM-SIAM Symposium on
  Discrete Algorithms}, SODA '94, pages 536--545, USA, 1994. Society for
  Industrial and Applied Mathematics.

\bibitem[MS14]{marin:str:aiml14}
Sonia Marin and Lutz Stra{\ss}burger.
\newblock {Label-free Modular Systems for Classical and Intuitionistic Modal
  Logics}.
\newblock In {\em {Advances in Modal Logic 10}}, Groningen, Netherlands, August
  2014.

\bibitem[NOP17]{NIGAM201735}
Vivek Nigam, Carlos Olarte, and Elaine Pimentel.
\newblock On subexponentials, focusing and modalities in concurrent systems.
\newblock {\em Theoretical Computer Science}, 693:35--58, 2017.
\newblock URL:
  \url{https://www.sciencedirect.com/science/article/pii/S0304397517305212},
  \href {https://doi.org/10.1016/j.tcs.2017.06.009}
  {\path{doi:10.1016/j.tcs.2017.06.009}}.

\bibitem[NPW81]{Nielsen1985}
Mogens Nielsen, Gordon Plotkin, and Glynn Winskel.
\newblock Petri nets, event structures and domains, part {I}.
\newblock {\em Theoretical Computer Science}, 13(1):85 -- 108, 1981.
\newblock URL:
  \url{http://www.sciencedirect.com/science/article/pii/0304397581901122},
  \href {https://doi.org/https://doi.org/10.1016/0304-3975(81)90112-2}
  {\path{doi:https://doi.org/10.1016/0304-3975(81)90112-2}}.

\bibitem[NS22a]{tito:lutz:csl22}
Lê Thành~Dũng Nguyên and Lutz Straßburger.
\newblock {BV and Pomset Logic Are Not the Same}.
\newblock In Florin Manea and Alex Simpson, editors, {\em 30th EACSL Annual
  Conference on Computer Science Logic (CSL 2022)}, volume 216 of {\em Leibniz
  International Proceedings in Informatics (LIPIcs)}, pages 32:1--32:17,
  Dagstuhl, Germany, 2022. Schloss Dagstuhl -- Leibniz-Zentrum f{\"u}r
  Informatik.
\newblock URL: \url{https://drops.dagstuhl.de/opus/volltexte/2022/15752}, \href
  {https://doi.org/10.4230/LIPIcs.CSL.2022.32}
  {\path{doi:10.4230/LIPIcs.CSL.2022.32}}.

\bibitem[NS22b]{SIS-III}
Lê Thành~Dũng Nguyên and Lutz Straßburger.
\newblock A system of interaction and structure iii: The complexity of bv and
  pomset logic, 2022.
\newblock URL: \url{https://arxiv.org/abs/2209.07825}, \href
  {https://doi.org/10.48550/ARXIV.2209.07825}
  {\path{doi:10.48550/ARXIV.2209.07825}}.

\bibitem[OP17]{OLARTE201746}
Carlos Olarte and Elaine Pimentel.
\newblock On concurrent behaviors and focusing in linear logic.
\newblock {\em Theoretical Computer Science}, 685:46--64, 2017.
\newblock Logical and Semantic Frameworks with Applications.
\newblock URL:
  \url{https://www.sciencedirect.com/science/article/pii/S0304397516304832},
  \href {https://doi.org/10.1016/j.tcs.2016.08.026}
  {\path{doi:10.1016/j.tcs.2016.08.026}}.

\bibitem[Pet77]{Petri1976}
Carl~Adam Petri.
\newblock Interpretations of net theory.
\newblock {\em ACM Computing Surveys}, 9(3):223--252, 1977.

\bibitem[Pog08]{Poggiolesi2008}
Francesca Poggiolesi.
\newblock A cut-free simple sequent calculus for modal logic {S5}.
\newblock {\em The Review of Symbolic Logic}, 1(1):3--15, 2008.
\newblock \href {https://doi.org/10.1017/S1755020308080040}
  {\path{doi:10.1017/S1755020308080040}}.

\bibitem[Pra86]{Pratt1986}
Vaughan Pratt.
\newblock Modeling concurrency with partial orders.
\newblock {\em International Journal of Parallel Programming}, 15(1):33--71,
  1986.
\newblock \href {https://doi.org/10.1007/BF01379149}
  {\path{doi:10.1007/BF01379149}}.

\bibitem[Ret93]{retore:phd}
Christian Retor{\'e}.
\newblock {\em R{\'e}seaux et S{\'e}quents Ordonn{\'e}s}.
\newblock PhD thesis, Universit{\'e} Paris VII, 1993.

\bibitem[Ret96]{retore:96:tokyo}
Christian Retor{\'e}.
\newblock Perfect matchings and series-parallel graphs: multiplicatives proof
  nets as r{\&}b-graphs.
\newblock {\em Electronic Notes in Theoretical Computer Science}, 3, 1996.

\bibitem[Ret97]{Retore1997}
Christian Retor{\'e}.
\newblock Pomset logic: A non-commutative extension of classical linear logic.
\newblock In Philippe de~Groote and J.~Roger~Hindley, editors, {\em Typed
  Lambda Calculi and Applications}, pages 300--318, Berlin, Heidelberg, 1997.
  Springer.
\newblock \href {https://doi.org/10.1007/3-540-62688-3_43}
  {\path{doi:10.1007/3-540-62688-3_43}}.

\bibitem[Ret99]{retore:99}
Christian Retor\'e.
\newblock Pomset logic as a calculus of directed cographs.
\newblock In V.~M. Abrusci and C.~Casadio, editors, {\em Dynamic Perspectives
  in Logic and Linguistics}, pages 221--247. Bulzoni, Roma, 1999.
\newblock Also available as INRIA Rapport de Recherche RR-3714.

\bibitem[Ret03]{retore:03}
Christian Retor\'e.
\newblock Handsome proof-nets: perfect matchings and cographs.
\newblock {\em Theoretical Computer Science}, 294(3):473--488, 2003.
\newblock \href {https://doi.org/10.1016/S0304-3975(01)00175-X}
  {\path{doi:10.1016/S0304-3975(01)00175-X}}.

\bibitem[Ret21]{retore:21}
Christian Retor{\'e}.
\newblock Pomset logic.
\newblock In {\em Joachim Lambek: The Interplay of Mathematics, Logic, and
  Linguistics}, pages 299--345. Springer, 2021.

\bibitem[Sto07]{Stouppa2007}
Phiniki Stouppa.
\newblock A deep inference system for the modal logic {S5}.
\newblock {\em Studia Logica}, 85(2):199--214, 2007.
\newblock \href {https://doi.org/10.1007/s11225-007-9028-y}
  {\path{doi:10.1007/s11225-007-9028-y}}.

\bibitem[Str02]{str:02}
Lutz Stra{\ss}burger.
\newblock A local system for linear logic.
\newblock In Matthias Baaz and Andrei Voronkov, editors, {\em Logic for
  Programming, Artificial Intelligence, and Reasoning}, pages 388--402, Berlin,
  Heidelberg, 2002. Springer.
\newblock \href {https://doi.org/10.1007/3-540-36078-6_26}
  {\path{doi:10.1007/3-540-36078-6_26}}.

\bibitem[Str03a]{dissvonlutz}
Lutz Stra{\ss}burger.
\newblock {\em Linear Logic and Noncommutativity in the Calculus of
  Structures}.
\newblock PhD thesis, Tech\-ni\-sche Uni\-ver\-si\-t{\"a}t Dres\-den, 2003.

\bibitem[Str03b]{str:MELL}
Lutz Stra{\ss}burger.
\newblock {MELL} in the {C}alculus of {S}tructures.
\newblock {\em Theoretical Computer Science}, 309(1--3):213--285, 2003.

\bibitem[Str17]{str:FSCD17}
Lutz Stra{\ss}burger.
\newblock {Combinatorial Flows and Their Normalisation}.
\newblock In Dale Miller, editor, {\em 2nd International Conference on Formal
  Structures for Computation and Deduction (FSCD 2017)}, volume~84 of {\em
  LIPIcs}, pages 31:1--31:17, Dagstuhl, Germany, 2017. Schloss
  Dagstuhl--Leibniz-Zentrum fuer Informatik.
\newblock URL: \url{http://drops.dagstuhl.de/opus/volltexte/2017/7720}, \href
  {https://doi.org/10.4230/LIPIcs.FSCD.2017.31}
  {\path{doi:10.4230/LIPIcs.FSCD.2017.31}}.

\bibitem[Tiu06]{tiu:SIS-II}
Alwen~Fernanto Tiu.
\newblock A system of interaction and structure {II}: {T}he need for deep
  inference.
\newblock {\em Logical Methods in Computer Science}, 2(2):1--24, 2006.
\newblock \href {https://doi.org/10.2168/LMCS-2(2:4)2006}
  {\path{doi:10.2168/LMCS-2(2:4)2006}}.

\bibitem[TM05]{miller:tiu:pi}
Alwen Tiu and Dale Miller.
\newblock A proof search specification of the $\pi$-calculus.
\newblock {\em Electronic Notes in Theoretical Computer Science},
  138(1):79--101, 2005.
\newblock Proceedings of the Workshop on the Foundations of Global Ubiquitous
  Computing (FGUC 2004).
\newblock URL:
  \url{https://www.sciencedirect.com/science/article/pii/S1571066105051121},
  \href {https://doi.org/https://doi.org/10.1016/j.entcs.2005.05.006}
  {\path{doi:https://doi.org/10.1016/j.entcs.2005.05.006}}.

\bibitem[TS19]{esslli19}
Andrea~Aler Tubella and Lutz Strassburger.
\newblock {Introduction to Deep Inference}.
\newblock Lecture, August 2019.
\newblock URL: \url{https://hal.inria.fr/hal-02390267}.

\bibitem[War19]{waring:master}
Timothy Waring.
\newblock A graph theoretic extension of boolean logic.
\newblock Master's thesis, 2019.
\newblock URL: \url{http://anupamdas.com/thesis_tim-waring.pdf}.

\end{thebibliography}

\appendix


\section{Proofs of Splitting and Context Reduction}\label{sec:splittingproofs}

The following simple lemma is used in several places of the main proof:

\begin{lemma}\label{lem:context}
	Let $C_1\coonso{R_1},\ldots,C_n\coonso{R_n}$ be contexts. If
	$\proves[\GS]{C_i}$ for all $i\in\set{1,\ldots,n}$, then
	$\proves[\GS]{C_1\coons{ C_{2}\coons{\ldots C_n\coonso{R_n}}{R_{2}}}{R_1}}$.
\end{lemma}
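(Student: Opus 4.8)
The plan is to reduce the statement to its single-hole core, namely Property~(v) of Section~\ref{sec:system} in the form: \emph{whenever $\proves[\GS]{\gC}$, $\proves[\GS]{\gA}$, and $R\subseteq\vertices[\gC]$, then $\proves[\GS]{\gC\coons{\gA}{R}}$}, and then to iterate this core by induction on $n$. Here I read $\proves[\GS]{C_i}$ as provability of the graph obtained by filling the hole of the context $C_i\coonso{R_i}$ with the empty graph, in accordance with Notation~\ref{not:context}, so that $\gC\coons{\unit}{R}=\gC$.

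To establish the core claim I would start from a proof $\ods{\unit}{\dD_A}{\gA}{\GS}$ and a proof $\ods{\unit}{\dD_C}{\gC}{\GS}$. By Remark~\ref{rem:context} I may place $\dD_A$ inside the context $\gC\coonso{R}$, obtaining the derivation $\gC\Coons{\dD_A}{R}$, which is a derivation in $\GS$ from $\gC\coons{\unit}{R}=\gC$ to $\gC\coons{\gA}{R}$. Vertically composing this derivation with $\dD_C$ (clause~3 of Definition~\ref{def:derivation}, along the identity isomorphism on $\gC$) yields a proof of $\gC\coons{\gA}{R}$ in $\GS$. The point is that this argument uses only the open-deduction composition machinery of Definition~\ref{def:derivation} together with Remark~\ref{rem:context}, both of which are purely structural.

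The lemma then follows by induction on $n$. For $n=1$ the claim is exactly $\proves[\GS]{C_1}$, which is the hypothesis. For the inductive step, let $D$ denote the nested context $C_2\coons{\ldots C_n\coonso{R_n}\ldots}{R_2}$. Applying the induction hypothesis to the $n-1$ contexts $C_2\coonso{R_2},\ldots,C_n\coonso{R_n}$ gives that the graph $D'$ obtained by filling the innermost hole of $D$ with $\unit$ is provable in $\GS$. Now the core claim with $\gC=C_1$, $\gA=D'$, and $R=R_1\subseteq\vertices[C_1]$ yields $\proves[\GS]{C_1\coons{D'}{R_1}}$, and this is exactly the required provability, since filling the innermost hole commutes with the outer substitution along $R_1$.

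The main obstacle is avoiding circularity. Since Lemma~\ref{lem:context} is invoked inside the proofs of splitting and context reduction of Appendix~\ref{sec:splittingproofs}, and these precede the cut elimination results (Theorems~\ref{thm:up} and~\ref{thm:cut}), the argument may not appeal to the admissibility of $\ssur$. The tempting shortcut---first proving $\gC\ltens\gA$ via Corollary~\ref{cor:tensor} and then moving $\gA$ into $\gC$ with a single $\ssur$ step, which would indeed reduce the edge set from $\gC\ltens\gA$ to $\gC\coons{\gA}{R}$---is precisely such an illegitimate appeal to an up-rule whose admissibility is only available after cut elimination. The structural composition argument above sidesteps this entirely, relying only on the inductive definition of derivations, which is why I would phrase the core claim and its proof in terms of $\dD_A$, $\dD_C$, and their composition inside a context rather than in terms of $\ssur$.
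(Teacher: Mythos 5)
Your proof is correct and follows essentially the same route as the paper: the paper also proceeds by induction on $n$, with the inductive step given by exactly your ``core claim'' in the form of the composite derivation $\ods{\unit}{\dD_1}{\gC_1\Coons{\ods{\unit}{\dD'}{C_2\coons{\ldots C_n\coonso{R_n}}{R_2}}{}}{R_1}}{}$, i.e., plugging the inductively obtained proof into the hole of $C_1$ via Remark~\ref{rem:context} and composing vertically with the proof of $C_1$. Your closing observation that the $\gC\ltens\gA$-then-$\ssur$ shortcut would be circular (since $\ssur$-admissibility rests on splitting, which uses this lemma) is accurate, though the paper does not spell it out.
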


\begin{proof}
	We proceed by induction on $n$. The base case for $n=1$ is trivial, and the inductive case for $n>1$ is this derivation:
	\begin{equation*}
		\small
		\ods{\unit}{\dD_1}{\gC_1\Coons{\ods{\unit}{\dD'}{C_{2}\coons{ \ldots C_n\coonso{R_n}}{R_{2}}}{}}{R_1}}{}
		\;,
	\end{equation*}
	where $\dD_n$ is the derivation for $C_n$ and $\dD'$ exists by
	induction hypothesis.
\end{proof}

Let us now restate and prove the main lemmas of Section~\ref{sec:splitting}.

\lemSplitPrime*
\begin{proof}
	
	We prove the two cases simultaneously, by induction on the size of the graph $\gG\lpar \gP\connn{\gM_1, \dots, \gM_n}$, as defined in  Definition~\ref{def:size}. Let $\dD$ be the given derivation of $\gG\lpar \gP\connn{\gM_1, \dots, \gM_n}$.

	We aim to construct 
	$\gC\coonso{R}$, $\dD_\gG$, $\dD_\gC$, 
	and 
	either $\gK_i$, $\dD_i$  for all $i\in \set{1,\dots, n}$, 
	or $\gK_X$, $\gK_Y$, $\dD_X$, $\dD_Y$, 
	satisfying the condition of the statement.
	We make a case analysis on the bottommost rule instance 
	in the derivation $\dD$.
	
	We now proceed, systematically exhausting the cases (a) to (e) described in Section~\ref{sec:splitting}.
	To avoid redundancy in the proof,  we assume that whenever we define the context of the shape
	$\gC={C_n\coons{\ldots C_2\coons{C_1\coons{\cdot}{R_1}}{R_2}\ldots}{R_n}}$ 
	using some derivable contexts  $C_1\coonso{R_1},\ldots,C_n\coonso{R_n}$, 
	then we have a derivation of $\gC$ defined by Lemma~\ref{lem:context}.
	
	\begin{enumerate}[(a)]      
		
		\item
		If rule $\rr$ acts inside $\gG$ or any $\gM_i$ with $i\in \set{1,\dots, n}$, then the derivation $\dD$ is of shape 
		\begin{equation*}
			\odv{\unit}
			{\dD'}{[\odn{\gG'}{\rr}{\gG}{};\gP\connn{\gM_1,\ldots,\gM_n}]}{\GS}
			\quor
			\ods{\unit}{\dD'}{\gG\lpar\gP\Connn{\gM_1,\ldots,\gM_{i-1},\odn{\gM_i'}{\rr}{\gM_i}{},\gM_{i+1},\ldots,\gM_n}}{\GS}
			\hskip-1em
		\end{equation*}
		for some $1\le i\le n$,
		and
		the size of the conclusion of $\dD'$ is smaller than the one of $\dD$.
		We apply the induction hypothesis 
		and conclude immediately by adding the corresponding application of $\rr$ to $\dD_G$ or $\dD_i$ respectively.

		{
			We have a special case for $\rr=\aidr$ where $\gM_i'=\unit$ and 
			\begin{equation*}
				\small
				\ods{\unit}{\dD'}{\gG\lpar\gP\Connn{\gM_1,\ldots,\gM_{i-1},\odn{\unit}{\aidr}{\cneg a\lpar a}{},\gM_{i+1},\ldots,\gM_n}}{\GS}
				\hskip-2em
			\end{equation*}
			This is a base case of the induction,
			and it is of the form of case \ref{prime:B} where $\gC\coonso{R}=\gK_X=\unit$, and $\gK_Y=\gG$.
		}	
		
		%
		

		\item\label{split:b}
		
		If $\gG=\gG'\lpar\gG''$ with $\gG'\neq \unit $ and $\dD$ is of shape
		\begin{equation*}
			\scalebox{.95}{%
				\odn{
					\ods{\unit
					}{\dD'}{
						\gG''\lpar \gP\connn{\gM_1,\ldots,\gM_n}\coons{\gG'}{R_P}
					}{\GS}}{
					\ssdr}{
					\gG''\lpar\gG'\lpar\gP\connn{\gM_1,\ldots,\gM_n}}{}
			}\hskip-3em
		\end{equation*}
		The graph 
		$\gP\connn{\gM_1,\ldots,\gM_n}\coons{\gG'}{R_P}$ cannot be an atom since $\gG'$ and $\gM_i$ are non-empty, or a par of two graphs, due to conditions on $\ssdr$.
		Then by  Lemma~\ref{lem:newhope}, 
		$\gP\connn{\gM_1,\ldots,\gM_n}\coons{\gG'}{R_P}$ is composed via a prime graph.
		We apply the inductive hypothesis.
		We have the three following possibilities: 
		\begin{enumerate}[(b.I)]
			
			\item\label{split:b.I}
			
			In this case $\gG$ moves inside some $\gM_i$. W.l.o.g., we can assume $i=1$,  
			and $\dD$ is of  the shape
			\begin{equation*}
				\odn{\ods{\unit}{
						\dD'}{
						\gG''\lpar\gP\connn{ \gM_1\coons{\gG'}{S},\gM_{2},\dots,\gM_n}}{\GS}}{
					\ssdr}{
					\gG''\lpar\gG'\lpar\gP\connn{\gM_1,\ldots,\gM_n}}{}
				\hskip-3em
			\end{equation*}
			
			We apply the induction hypothesis to $\dD'$ and get one of the following three sub-cases:

			\begin{enumerate}[(b.{I}.A), wide=0pt]
				
				
				\item 
				there is a context  $\gC'\coonso {R' }$ and there are graphs $\gL_1,\gK_{2},\dots,\gK_n$ such that there are the following derivations 
				\begin{equation*}
					\ods{\gC'\coons{\cneg\gP\connn{\gL_1,\gK_2,\dots,\gK_n}}{R'}}{\dD''_{G}}{\gG''}{\GS}
					\qomma
					\ods{\unit}{\dD_1'}{\gL_1\lpar\gM_1\coons{\gG'}{S}}{\GS}
					\qomma
					\ods{\unit}{\dD_i}{\gK_i\lpar\gM_i}{\GS}
					\quand
					\ods{\unit}{\dD_C'}{\gC'}{\GS}
					\hskip-3em
				\end{equation*}
				for all $i\in\set{2,\dots,n}$.
				We let $\gK_1=\gL_1\lpar\gG'$ and $\gC\coonso{R}=\gC'\coonso{R'}$. 
				Then we conclude since  $\dD_G$ and $\dD_1$ are the following derivations
				\begin{equation*}
					\small
					\odn{\gC\coons{\cneg\gP\connn{\gL_1\lpar\gG',\gK_2,\dots,\gK_n}}{R}}{
						\ssdr}{
						\ods{\gC\coons{\cneg\gP\connn{\gL_1,\dots,\gK_2,\dots,\gK_n}}{R}}{
							\dD''_{G}}{\gG''}{}
						\lpar\gG'}{}
					\quand
					\odn{
						\ods{\unit}{\dD_1}{\gL_1\lpar\gM_1\coons{\gG'}{S}}{}}{
						\ssdr}{
						\gL_1\lpar\gG'\lpar\gM_1}{}
					\hskip-4em
				\end{equation*}
				
				\item there is a context $\gC\coonso R$ and there are two graphs $\gL_X$ and $\gK_Y$ such that
				there are the following derivations 
				\begin{equation*}
					\small
					\ods{\gC\coons{\gL_X\lpar\gK_Y}{R}}{\dD_{G}'}{\gG''}{\GS}
					\qomma
					\ods{\unit}{\dD_X'}{\gL_X\lpar\gM_1\coons{\gG'}{S}}{\GS}
					\qomma
					\ods{\unit}{\dD_Y}{\gK_Y\lpar\gP\connn{\unit,\gM_2,\ldots,\gM_n}}{\GS}
					\quand
					\ods{\unit}{\dD_C}{\gC}{\GS}
					\hskip-5em
				\end{equation*}
				We let $\gK_X=\gL_X\lpar\gG'$.
				Then we conclude since  $\dD_G$ and $\dD_X$ are the following derivations
				\begin{equation*}
					\small
					\odn{\gC\coons{\gL_X\lpar\gG'\lpar\gK_Y}{R}}{
						\ssdr}{
						\ods{\gC\coons{\gL_X\lpar\gK_Y}{R}}{\dD_{G}'}{\gG''}{}
						\lpar\gG'}{}
					\quand
					\odn{
						\ods{\unit}{\dD_X'}{\gL_X\lpar\gM_1\coons{\gG'}{S}}{}}{
						\ssdr}{
						\gL_X\lpar\gG'\lpar\gM_1}{}
					\hskip-5em
				\end{equation*}

				\item there is a context  $\gC\coonso R$ and there are graphs $\gK_X$ and $\gL_Y$ such that
				\begin{equation*}
					\small
					\ods{\gC\coons{\gK_X\lpar\gL_Y}{R}}{\dD_{G}'}{\gG''}{\GS}
					\qomma
					\ods{\unit}{\dD_X}{\gK_X\lpar\gM_i}{\GS}
					\qomma
					\ods{\unit}{\dD_Y'}{\gL_Y\lpar\gP\connn{\gM_1\coons{\gG'},\gM_1,\ldots,\gM_{i-1},\unit,\gM_{i+1},\ldots,\gM_n}}{\GS}
					\quand
					\ods{\unit}{\dD_C}{\gC}{\GS}
				\end{equation*}
				for some $i\in\set{2,\dots, n}$. 
				We let $\gK_Y=\gG'\lpar\gL_Y$.
				Then we conclude since  $\dD_G$ and $\dD_Y$ are the following derivations
				\begin{equation*}
					\small
					\vlnostructuressyntax
					\odn{\gC\coons{\gL_X\lpar\gG'\lpar\gK_Y}{R}}{
						\ssdr}{
						\ods{\gC\coons{\gL_X\lpar\gK_Y}{R}}{\dD_{G}'}{\gG''}{}
						\lpar\gG'}{}
					\quand
					\odn{
						\ods{\unit}{\dD_Y'}{\gL_Y\lpar\gP\connn{\gM_1,\ldots,\gM_{i-1},\unit,\gM_{i+1},\ldots,\gM_j\coons{\gG'},\ldots,\gM_n}}{}}{
						\ssdr}{
						\gG'\lpar\gL_Y\lpar
						\gP\connn{\gM_1,\ldots,\gM_{i-1},\unit,\gM_{i+1},\ldots,\gM_n}}{}
					\;.
					\hskip-3em
				\end{equation*}
			\end{enumerate}

			\item \label{split:b.II}
			There is a special case when the prime graph $\gP$ is $\ltens$ induced by graph isomorphism, which has the effect of making tensor associative. 
			Assume $\gP=\ltens$ and  $\dD$ is of the following shape, where $\gM_1 = \gA$ and $\gM_2 = \gB$.
			\begin{equation*}
				\scalebox{.95}{%
					\odn{\ods{\unit}{
							\dD'}{
							\gG''\lpar (\gA\ltens\gB)\coons{\gG'}{S}}{\GS}}{
						\ssdr}{
						\gG''\lpar\gG'\lpar( A\ltens B)}{}
				}\hskip-3em
			\end{equation*}			
			with $(\gA\ltens\gB)\coons{\gG'}S\isom\gA''\ltens(\gA'\ltens\gB)\coons{\gG'}{S'}$
			for some $S'\subseteq S$, where $\gA\isom\gA''\ltens\gA'$ and
			$\gA''\neq\unit\neq\gA'$. 
			
			Notice that this is neither the case that $\gG'$ moves entirely inside $\gA$ or $\gB$ (hence Case~\ref{split:b.I} cannot be applied), nor is it the case that $\gG'$ is entirely a module outside the prime graph $\ltens$ connecting $\gA''$ and $(\gA'\ltens\gB)\coons{\gG'}{S'}$, (in which we could move forwards to Case~\ref{split:b.III}). Observe furthermore that such a situation can never occur when $\gP$ is not $\ltens$.

			Since $\gless{\gG''\lpar (\gA\ltens\gB)\coons{\gG'}{S}}{\gG''\lpar\gG'\lpar (A\ltens B)}$ we can apply the induction hypothesis in the form of Lemma~\ref{lem:splitting:tens}.
			Then there is a context $\gC'$ and there are graphs $\gK_A''$ and $\gK_Y'$ such that there are derivations
			\begin{equation*}
				\small
				\ods{\gC'\coons{\gK''_A\lpar\gK_Y}{R_1}}{\dD''_{G}}{\gG''}{\GS}
				\qomma
				\ods{\unit}{\dD_{A}''}{\gK''_A\lpar\gA''}{\GS}
				\qomma
				\ods{\unit}{\dD_Y}{\gK_Y\lpar(\gA'\ltens\gB)\coons{\gG'}{S'}}{\GS}
				\quand
				\ods{\unit}{\dD'_C}{\gC'}{\GS}
				\;.\hskip-3em
			\end{equation*}
			From $\dD_Y$ we get  that
			$\proves[\GS]{\gK_{Y} \lpar\gG'\lpar(\gA'\ltens\gB)}$ (via the rule $\ssdr$).
			
			It is important to observe at this point that we have the following inequality.
			\[
			\gsize{ \gK_{Y} \lpar\gG'\lpar(\gA'\ltens\gB) } < \gsize{ \gG \lpar (\gA\ltens\gB) }
			\]
			This is because, since $\gA''$ is non-empty and hence $\gsize{ \gA' } < \gsize{ \gA }$ since $\gA'$ has strictly less vertices, and also $\gsize{\gK_{Y} \lpar\gG'} \leq \gsize{\gG}$, by Observation \ref{obs:size}.
			Therefore, to the proof of $\proves[\GS]{\gK_{Y} \lpar\gG'\lpar(\gA'\ltens\gB)}$ we
			can apply the induction hypothesis to get a context
			$\gC''\coonso{R''}$ and $\gK'_{A}$ and $\gK_B$ such that there are derivations as follows.
			
			\begin{equation*}
				\small
				\ods{\gC''\coons{\gK_A\lpar\gK_B}{R''}}{\dD'_{G}}{\gK_Y\lpar\gG'}{\GS}
				\qomma
				\ods{\unit}{\dD'_{A}}{\gK'_{A}\lpar\gA'}{\GS}
				\qomma
				\ods{\unit}{\dD_B}{\gK_B\lpar\gB}{\GS}
				\quand
				\ods{\unit}{\dD''_{C}}{\gC''}{\GS}
				\;.\hskip-5em
			\end{equation*}
			We conclude by letting $\gC\coonso{R}=\gC'\coons{\gC''\coonso{R''}}{R'}$,
			$\gK_A=\gK_{A}''\lpar\gK_{A}'$, and  $\dD_G$ and $\dD_A$ be the derivations defined  as 
			\begin{equation*}
				\small
				\vlnostructuressyntax
				\odn{\gC'\Coons{
						\odn{\gC''\coons{\gK''_A\lpar\gK'_A\lpar\gK_B}{R''}}{
							\ssdr}{\gK''_A\lpar
							\ods{\gC''\coons{\gK'_A\lpar\gK_B}{R''}}{\dD'_{G}}{\gK_Y\lpar\gG'}{}}{}}{R'}}{
					\ssdr}{
					\ods{\gC'\coons{\gK''_A\lpar\gK_Y}{R'}}{\dD''_{G}}{\gG''}{}\lpar G'}{}
				\quand
				\odn{
					\ods{\unit}{\dD_{A}''}{\gK''_A\lpar\left(\gA''\ltens
						\ods{\unit}{\dD_{A'}}{\gK_{A'}\lpar\gA'}{}
						\right)}{}}{
					\ssdr}{
					\gK''_A\lpar\gK'_A\lpar(\gA''\ltens\gA')}{}
				\;.
				\hskip-7em
			\end{equation*}


			\item\label{split:b.III}
			
			This is the most involved sub-case of the proof, induced when the movement of $\gG'$ via the $\ssdr$ rule results in a larger prime graph, where the $\gG'$ does not overlap with any existing module and possibly adds edges that break up some existing modules of $\gP$ into smaller modules in the resulting prime graph.  
			Assume  $\dD$ is of shape
			\begin{equation*}
				\scalebox{.95}{%
					\odn{\ods{\unit}{
							\dD'}{
							\gG''\lpar\gQ\connn{\gG',\gN_2,\ldots,\gN_{k}}}{\GS}}{
						\ssdr}{
						\gG''\lpar\gG'\lpar\gP\connn{\gM_1,\ldots,\gM_n}}{}
				}\hskip-3em
			\end{equation*}
			for a unique prime graph $\gQ$ (up to permutation of modules) such that $k=\sizeof\vQ > \sizeof{\vP}$ (hence $k \geq 4$) 
			such that
			\begin{equation}\label{eq:ross2}
				\gQ\connn{\unit,\gN_2,\ldots,\gN_{k}} \isom \gP\connn{\gM_1,\ldots,\gM_n}
			\end{equation}
			where for each $i \in \set{2, \ldots k}$ we have that $\gN_i \neq \unit$ and $\gN_i$ is a module of some $\gM_j$ where $j \in \set{1, \ldots n}$.
			We apply the induction hypothesis to $\gG''\lpar\gQ\connn{\gG',\gN_2,\ldots,\gN_{k}}$ 
			and get one of the following three sub-cases:

			\begin{enumerate}[(b.{III}.A), wide=0pt]
				\item \label{split:b.III.A}
				there is a context $\gC'\coonso{R' }$ and $\gK'_1, \dots, \gK'_k$ such that there are derivations
				\begin{equation*}
					\ods{\gC'\coons{\cneg \gQ\connn{\gK_1', \dots,  \gK'_k}}{R'}}{\dD_\gG''}{\gG''}{\GS}
					\qomma
					\ods{\unit}{\dD_1'}{\gK'_1\lpar \gG'}{\GS}
					\qomma
					\ods{\unit}{\dD_i'}{\gK'_i\lpar \gN_i}{\GS}
					\quand
					\ods{\unit}{\dD_C'}{\gC'}{\GS}
					\hskip-1em
				\end{equation*}
				for all $i\in\intset2k$.
				Since the graphs $\gN_2, \dots, \gN_n$  are non-empty by hypothesis, we can apply Lemma~\ref{lem:g} and obtain a derivation
				\begin{equation*}
					\ods{
						\left(
						\ods{\unit}{\dD'_2}{\gK'_2\lpar\gN_2}{}
						\ltens\cdots \ltens
						\ods{\unit}{\dD'_k}{\gK'_k\lpar\gN_k}{}
						\right)
					}{}{
						\cneg{\gQ} \connn{\unit, \gK'_2, \dots, \gK'_k}
						\lpar 
						\gQ\connn{\unit,\gN_2,\ldots,\gN_{k}} 
					}{\mbox{\small Lemma \ref{lem:g}}}
				\end{equation*}	
				By (\ref{eq:ross2}), 
				such a derivation is a derivation of 
				$\cneg \gQ\connn{\unit,\gK_2,\ldots,\gK_{k}} \lpar  \gP\connn{\gM_1,\ldots,\gM_n}$
				on which we can now apply the inductive hypothesis.
				This gives the two following cases
				
				\begin{itemize}
					\item 
					
					either there is a context $\gC''\coonso{R''}$ and graphs $\gK_1, \dots, \gK_k $ such that there are derivations	
					\begin{equation*}
						\ods{\gC''\coons{\cneg \gP\connn{\gK_1, \dots, \gK_k}}{R''}}{\dD_\gQ}{\cneg{\gQ} \connn{\unit, \gK'_2, \dots, \gK'_k}}{\GS}
						\qomma
						\ods{\unit}{\dD_i}{\gK_i\lpar \gM_i}{\GS}
						\quand
						\ods{\unit}{\dD_C''}{\gC''}{\GS}
					\end{equation*}
					for all $i \in\intset1n$. Then we conclude by letting $\gC\coonso{R}=\gC'\coons{\gC''\coonso{R''}}{R'}$ and $\dD_G$ be the derivation
					\begin{equation*}
						\odn{
							\gC'\Coons{	\ods{\gC''\coons{\cneg \gP\connn{\gK_1, \dots, \gK_k}}{R''}}{\dD_\gQ}{
									\cneg{\gQ} \Connn{\ods{\unit}{\dD_1'}{\gK'_1\lpar \gG'}{}, \gK'_2, \dots, \gK'_k}
								}{}}{R'}
						}{\ssdr}{
							\ods{\gC'\coons{\cneg \gQ\connn{\gK_1', \dots,  \gK'_k}}{R'}}{\dD_\gG''}{\gG''}{}
							\lpar \gG'}{}
					\end{equation*}
					
					\item 
					or  there is a context $\gC''\coonso{R''}$ and graphs $\gK_X$ and $ \gK_Y $ such that, w.l.o.g. there are derivations	
					\begin{equation*}
						\ods{\gC''\coons{\gK_X \lpar \gK_Y }{R''}}{\dD_\gQ}{\cneg{\gQ} \connn{\unit, \gK'_2, \dots, \gK'_k}}{\GS}
						\qomma
						\ods{\unit}{\dD_X}{\gK_X\lpar \gM_1}{\GS}
						\qomma
						\ods{\unit}{\dD_Y}{\gK_Y\lpar \gP \connn{\unit, \gM_2, \dots, \gM_n}}{\GS}
						\quand
						\ods{\unit}{\dD_C''}{\gC''}{\GS}
						\hskip-6em
					\end{equation*}
					for an $i \in\intset1n$. 
					Then we conclude by letting $\gC\coonso{R}=\gC'\coonso{R'}$ and $\dD_G$ be the derivation
					\begin{equation*}
						\odn{
							\gC'\Coons{
								\ods{\gC''\Coons{\gK_X \lpar \gK_Y }{R''}}{\dD_\gQ}{
									\cneg{\gQ} \Connn{\ods{\unit}{\dD_1'}{\gK'_1\lpar \gG'}{}, \gK'_2, \dots, \gK'_k}}{}
							}{R'}
						}{\ssdr}{
							\ods{\gC'\coons{\gK'\lpar \gK_2 \lpar  \cdots\lpar \gK_n}{R'}}{\dD_\gG''}{\gG''}{}
							\lpar \gG'}{}
					\end{equation*}
					
				\end{itemize}

				\item 
				there is a context $\gC'\coonso{R' }$ and $\gK'_X$ and $\gK'_Y$ such that (after \ref{eq:ross2}) there are derivations
				\begin{equation*}
					\ods{\gC'\coons{{\gK_X'\lpar  \gK'_Y}}R}{\dD_\gG''}{\gG''}{\GS}
					\qomma
					\ods{\unit}{\dD_X'}{\gK'_X\lpar \gG'}{\GS}
					\qomma
					\ods{\unit}{\dD_Y'}{\gK'_Y\lpar \gP\connn{\gM_1, \dots, \gM_n}}{\GS}
					\quand
					\ods{\unit}{\dD_C'}{\gC'}{\GS}
					\hskip-6em
				\end{equation*}
				We can now apply inductive hypothesis on $\gK'_Y\lpar \gP\connn{\gM_1, \dots, \gM_n}$ and we have the two following cases:
				\begin{itemize}
					\item there is a context $\gC''\coonso{R''}$ and graphs $\gK_1, \dots, \gK_n$ such that there are derivations
					\begin{equation*}
						\ods{\gC''\coons{\cneg \gP\connn{\gK_1, \dots, \gK_n}}{R''}}{\dD_\gK}{\gK_Y'}{\GS}
						\qomma
						\ods{\unit}{\dD_i}{\gK_i\lpar \gM_i}{\GS}
						\quand
						\ods{\unit}{\dD_C''}{\gC''}{\GS}
						\hskip-1em
					\end{equation*}
					In this case we can conclude similarly to the first case of \ref{split:b.III.A}, that is, 
					by letting $\gC\coonso{R}=\gC'\coons{\gC''\coonso{R''}}{R'}$ 
					and $\dD_G$ be the derivation
					\begin{equation*}
						\odn{
							\gC'\Coons{	\ods{\gC''\coons{\cneg \gP\connn{\gK_1, \dots, \gK_n}}{R''}}{\dD_\gK}{\gK_Y'}{}
								\lpar 					
								\ods{\unit}{\dD_X'}{\gK'_X\lpar \gG'}{}
							}{R'} 
						}{\ssdr}{
							\ods{\gC'\coons{\gK_X' \lpar \gK_Y'}{R'}}{\dD_\gG''}{\gG''}{}
							\lpar \gG'}{}
					\end{equation*}
					
					\item
					or  there is a context $\gC''\coonso{R''}$ and graphs $\gK_X$ and $ \gK_Y $ such that, w.l.o.g. there are derivations	
					\begin{equation*}
						\ods{\gC''\coons{\gK_X \lpar \gK_Y }{R''}}{\dD_\gK}{\gK_Y'}{\GS}
						\qomma
						\ods{\unit}{\dD_X}{\gK_X\lpar \gM_1}{\GS}
						\qomma
						\ods{\unit}{\dD_Y}{\gK_Y\lpar \gP \connn{\unit, \gM_2, \dots, \gM_n}}{\GS}
						\quand
						\ods{\unit}{\dD_C''}{\gC''}{\GS}
						\hskip-6em
					\end{equation*}
					we conclude	by letting $\gC\coonso{R}=\gC'\coons{\gC''\coonso{R''}}{R'}$  and $\dD_G$ be the derivation
					\begin{equation*}
						\odn{
							\gC'\Coons{	\ods{\gC''\coons{\gK_X\lpar \gK_Y}{R''}}{\dD_\gK}{\gK_Y'}{}
								\lpar 					
								\ods{\unit}{\dD_X'}{\gK'_X\lpar \gG'}{}
							}{R'} 
						}{\ssdr}{
							\ods{\gC'\coons{\gK_X' \lpar \gK_Y'}{R'}}{\dD_\gG''}{\gG''}{}
							\lpar \gG'}{}
					\end{equation*}	
					
				\end{itemize}

				\item 
				There is a context  $\gC'\coonso{R'}$ and there are graphs $ \gK_X' $ and $ \gK_Y'$ such that for have the following derivations for some $\ell \in \set{2, \dots k}$.
				\begin{equation*}
					\small
					\ods{\gC'\coons{ \gK_X' \lpar \gK_Y'}{R'}}{\dD''_{G}}{\gG''}{\GS}
					\qomma
					\ods{\unit}{\dD_X'}{\gK_X' \lpar \gN_\ell}{\GS}
					\qomma
					\ods{\unit}{\dD_Y'}{\gK_Y' \lpar \gQ\connn{\gG', \gN_2, \ldots, \gN_{\ell-1}, \unit, \gN_{\ell+1}, \ldots \gN_k }}{\GS}
					\quand
					\ods{\unit}{\dD_\gC'}{\gC'}{\GS}
                                        \hskip-4em
				\end{equation*}
				There are two cases to consider: either $N_\ell = M_m$ for some $m \in \intset1k$ or we have $\gM_m = \gM'\coons{N_\ell}{R_m}$ for some non-empty $\gM'$.
				
				We consider first the former case where $N_\ell = M_m$ for some $m \in \intset1k$,
				the derivation $\dD_Y$ is defined, recalling that by~(\ref{eq:ross2}) we have $\gQ\connn{\unit,\gN_2,\ldots,\gN_{k}} \isom \gP\connn{\gM_1,\ldots,\gM_n}$
				and hence\\ $\gQ\connn{\unit, \gN_2, \ldots, \gN_{\ell-1}, \unit, \gN_{\ell+1}, \ldots \gN_k } \isom\gP\connn{\gM_1, \ldots, \gM_{m-1}, \unit, \gM_{m+1}, \ldots \gM_n }$, i.e.,
				\begin{equation*}
					\vlnostructuressyntax
					\odn{
						\ods{
							\unit
						}{\dD_Y'}{
							\gK_Y' \lpar \gQ\connn{\gG', \gN_2, \ldots, \gN_{\ell-1}, \unit, \gN_{\ell+1}, \ldots \gN_k }
						}{}
					}{\ssdr}{
						\gK_Y' \lpar \gG' \lpar \gP\connn{\gM_1, \ldots, \gM_{m-1}, \unit, \gM_{m+1}, \ldots \gM_n }
					}{}
				\end{equation*}
				We can conclude almost immediately by letting $\gC\coonso{R} = \gC'\coonso{R'}$, $\gK_Y = \gK_Y' \lpar \gG'$, $\gK_X = \gK_X'$, $\dD_X = \dD_X'$, $\dD_Y = \dD_Y'$, and $\dD_G$ be the following derivation.
				\begin{equation*}
					\odn{
						\gC'\coons{
							\gK_Y \lpar \gK_X
						}{R'} 
					}{\ssdr}{
						\ods{\gC'\coons{\gK_X' \lpar \gK_Y'}{R'}}{\dD_\gG''}{\gG''}{}
						\lpar \gG'}{}
					\quad .
				\end{equation*}

				Otherwise, 
				we pursue the case where $\gM_m = \gM'\coons{N_\ell}{R_m}$ for some non-empty $\gM'$, which is more involved than the case above.
				By (\ref{eq:ross2}) we have 
				\begin{equation*}
					\gQ\connn{\unit, \gN_2, \ldots, \gN_{\ell-1}, \unit, \gN_{\ell+1}, \ldots \gN_k } 
					\isom
					\gP\connn{\gM_1, \ldots, \gM_{m-1}, \gM', \gM_{m+1}, \ldots \gM_n }
                                        \hskip-6em
				\end{equation*}
				Hence, we have a proof of $\gK_Y' \lpar \gG' \lpar \gP\connn{\gM_1, \ldots, \gM_{m-1}, \gM', \gM_{m+1}, \ldots \gM_n }$ obtained by applying the rule $\ssdr$ to the conclusion of $\dD_Y'$ above, as follows.
				\begin{equation*}
					\vlnostructuressyntax
					\odn{
						\ods{
							\unit
						}{\dD_Y'}{
							\gK_Y' \lpar \gQ\connn{\gG', \gN_2, \ldots, \gN_{\ell-1}, \unit, \gN_{\ell+1}, \ldots \gN_k }
						}{}
					}{\ssdr}{
						\gK_Y' \lpar \gG' \lpar \gP\connn{\gM_1, \ldots, \gM_{m-1}, \gM', \gM_{m+1}, \ldots \gM_n }
					}{}
				\end{equation*}

				By Observation \ref{obs:size} $\gsize{\gK_Y' \lpar \gG'} \leq \gsize{\gG}$ and also $\gN_\ell \neq \unit$, hence we have the following inequality.
				\[
				\gsize{\gK_Y' \lpar \gG' \lpar \gP\connn{\gM_1, \ldots, \gM_{m-1}, \gM', \gM_{m+1}, \ldots \gM_n }} < \gsize{ \gG \lpar \gP\connn{\gM_1, \ldots, \gM_{n} } }
                                \hskip-6em
				\]
				Therefore,  we can apply the induction hypothesis to the above mentioned proof of $\gK_Y\lpar\gG' \lpar \gP\connn{\gM_1, \ldots, \gM_{m-1}, \gM', \gM_{m+1}, \ldots \gM_n }$
				to obtain one of the following three sub-cases:

				\begin{enumerate}[(b.{III}.{C}.A), wide=0pt]
					
					\item\label{split:b:II:C:A}
					There is a context  $\gC''$ and there are graphs $\gK_1$, $\dots$, $\gK_{m-1}$, $\gL$, $\gK_{m+1}$, $\dots$, $\gK_n$ such that there are derivations
					\begin{equation*}
						\ods{\gC''\coons{\cneg{\gP}\connn{\gK_1, \dots,\gK_{m-1}, \gL , \gK_{m+1}, \dots, \gK_n}}{R''}}{\dD_G'''}{	\gK_Y'\lpar \gG'}{\GS}
						\qomma
						\ods{\unit}{\dD'_m}{\gL \lpar \gM'}{\GS}
						\qomma
						\ods{\unit}{\dD_i}{\gK_i \lpar \gM_i}{\GS}
						\quand
						\ods{\unit}{\dD_\gC''}{\gC''}{\GS}
                                                \hskip-4em
					\end{equation*}
					for $i \in \intset1n$ such that $i\neq m$.
					
					To conclude we let $\gC\coonso{R} = \gC'\coons{\gC''\coonso{R''}}{R'}$ 
					(see Lemma~\ref{lem:context}), 
					and $\gK_m= \gK_X'\lpar  \gL$ 
					Then let $\dD_G$ 
					be defined as 
					\begin{equation*}
						\vlnostructuressyntax
						\small
						\odn{
							\gC'\Coons{
								\odn{
									\gC''\coons{\cneg{\gP}\connn{\gK_1, \hdots,
											\gK_{m-1}, \gK_X' \lpar \gL , \gK_{m+1}, \ldots, \gK_n}}{R''}
								}{\ssdr}{
									\gK_X'
									\lpar
									\ods{
										\gC''\coons{\cneg{\gP}\connn{\gK_1, \hdots, \gK_{m-1},\gL , \gK_{m+1}, \ldots, \gK_n}}{R''}
									}{\dD_G'''}{
										\gK_Y' \lpar \gG'
									}{}
									]
								}{}
							}{R'}
						}{\ssdr}{
							\ods{
								\gC'\coons{
									\gK_X'\lpar \gK_Y'
								}{R'}
							}{\dD_G''}{
								\gG''
							}{}
							\lpar \gG'
						}{}
					\end{equation*}
					and $\dD_m$ be defined as
					\begin{equation*}
						\odn{
							\ods{\unit
							}{\dD'_m}{
								[\gL ; \gM'\Coons{
									\ods{\unit}{\dD'_X}{\gK_X' \lpar N_\ell}{}
								}{R_m}]
							}{}
						}{\ssdr}{
							[\gK_X' ; \gL ; M'\coons{N_\ell}{R_m}]
						}{}
					\end{equation*}

					\item\label{split:b:II:C:B}
					there is a context  $\gC''\coonso{R''}$ and there are graphs $\gK_X $ and $\gK_Y''$ such that there are derivations
					\begin{equation*}
						\ods{\gC''\coons{\gK_X \lpar \gK_Y''}{R''}}{\dD_G''' }{\gK_Y'\lpar \gG'}{\GS}
						\qomma
						\ods{\unit}{\dD_X}{\gK_X \lpar M_1}{\GS}
						\qomma
						\ods{\unit}{\dD''_Y}{\gK_Y'' \lpar \gP\connn{\unit,\gM_2,  \ldots, \gM_{m-1}, \gM', \gM_{m+1}, \ldots \gM_n }}{\GS}
						\quand
						\ods{\unit}{\dD_\gC''}{\gC''}{\GS}
					\end{equation*}
					We conclude by setting $\gK_Y = \gK_X' \lpar \gK_Y''$  and $\gC\coonso{R} = \gC'\coons{\gC''\coonso{R''}}{R'}$, where
					the derivations $\dD_G$ is defined as 
					\begin{equation*}
						\odn{
							\gC'\Coons{[\gK_X' ;
								\ods{[\gK_X ; \gK_Y'']}{\dD_G'''}{[\gK_Y' ; \gG']}{}
								]}{R'}
						}{\ssdr}{[
							\ods{\gC'\coons{[\gK_X'; \gK_Y']}{R'}}{\dD_G''}{\gG''}{}
							; \gG'
							]}{}
					\end{equation*}
					and the derivation $\dD_Y$ is defined as (recall that $\gM'\coons{\gN_\ell}{R_m}\isom\gM_m$):
					\begin{equation*}
						\vlnostructuressyntax
						\odn{
							\ods{\unit}{\dD''_Y}{
								\gK_Y'' 
								\lpar  
								\gP\Connn{
									\unit, \gM_2\ldots, \gM_{m-1},
									\gM'\Coons{
										\ods{\unit}{\dD_X'}{\gK_X' \lpar \gN_\ell}{}
									}{R_m},
									\gM_{m+1}, \ldots \gM_n 
								}
							}{}
						}{\ssdr}{
							\gK_X' \lpar  \gK_Y'' \lpar \gP\connn{\unit, \gM_2, \ldots, \gM_{m-1},\gM'\Coons{\gN_\ell}{R_m}, \gM_{m+1}, 
								\ldots \gM_n }
						}{}
					\end{equation*}
					
					\item\label{split:b:II:C:C}
					there is a context  $\gC'$ and there are graphs $\gK_X'' $ and $ \gK_Y $ such that,  there are derivations 
					\begin{equation*}
                                          \scalebox{.9}{$
						\ods{\gC''\coons{\gK_X'' \lpar \gK_Y}{R''}}{\dD_G'''}{\gK_Y'\lpar \gG'}{}
						\qomma
						\ods{\unit}{\dD_X''}{\gK_X''  \lpar \gM'}{\GS}
						\qomma
						\ods{	\unit	}{\dD_Y}{\gK_Y \lpar \gP\connn{\gM_1, \gM_2,\dots, \gM_{m-1}, \unit, \gM_{m+1}, \ldots \gM_n }}{\GS}
						\quand
						\ods{\unit}{\dD_\gC''}{\gC''}{\GS}
                                                $}
                                          \hskip-4em
					\end{equation*}
					We conclude by letting $\gK_X = \gK_X' \lpar \gK_X''$  and $\gC\coonso{R} = \gC'\coons{\gC''\coonso{R''}}{R'}$, where the  derivations $\dD_\gG$ and $\dD_X$ are defined as follows
					\begin{equation*}
                                          \scalebox{.9}{$
						\odn{
							\gC'\Coons{
								\odn{\gC''\coons{\gK_X'\lpar \gK_X''\lpar \gK_Y}{R''}}
								{\ssdr}
								{\gK_X'\lpar 
									\ods{\gC''\coons{\gK_X'' \lpar \gK_Y}{R'}}{\dD_G'''}{\gK_Y'\lpar \gG'}{}}{}
							}{R''}
						}{\ssdr}{
							\ods{
								\gC'\coons{
									\gK_X'\lpar \gK_Y'
								}{R'}
							}{\dD_G'' }{
								\gG''
							}{}
							\lpar \gG'
						}{}
						\quand
						\odn{
							\ods{
								\unit
							}{{\dD_X''}}{
								\gK_X'' \lpar \gM'\Coons{\ods{\unit}{\dD_X'}{\gK_X' \lpar N_\ell}{}
								}{R_m}
							}{}
						}{\ssdr}{
							\gK_X'' \lpar \gK_X' \lpar \gM'\coons{\gN_\ell}{R_m}
						}{}
                                                $}
                                          \hskip-6em
					\end{equation*}
					
				\end{enumerate}

			\end{enumerate}

		\end{enumerate}

		\item\label{split:c}
		This is  the case where the $\ssdr$-rule is applied and $\dD$ is of shape
		\begin{equation}
                  \label{eq:split:c}
			\odn{\ods{\unit}{
					\dD'}{
					\gG\coons{\gP\connn{\gM_1,\ldots,\gM_n}}{S}}{}}{
				\ssdr}{
				\gG\lpar\gP\connn{\gM_1,\ldots,\gM_n}}{}
		\end{equation}
		We apply the induction hypthesis in the form of Lemma~\ref{lem:conred} 
		and get a graph $\gK$ and a
		context $\gC'\coonso R$ such that there are derivations
		\begin{equation*}
			\ods{\unit}{\dD^\ast}{\gK\lpar\gP\connn{\gM_1,\ldots,\gM_n}}{\GS}
			\qomma
			\ods{\gC'\coons{\gK\lpar\XXX}R}{\dD^{\XXX}_{\gG}}{\gG\coons{\XXX}S}{\GS}
			\quand
			\ods{\unit}{\dD_C'}{\gC'}{\GS}
		\end{equation*}
		for any graph $\XXX$. Since the application of $\ssdr$ in~\eqref{eq:split:c} is not trivial, the context $\gC'$ cannot be empty, and therefore we have $\gless{\gK}{\gG}$. 
		Hence we can apply the induction hypothesis to the proof $\dD^\ast$ of $\gK\lpar\gP\connn{\gM_1,\ldots,\gM_n}$, and obtain one of the following two cases:
		\begin{itemize}
			\item
			either there is a context $\gC''\coonso R$ 
			and graphs $\gK_1$, \dots, $\gK_n$, 
			such that 
			there are derivations
			\begin{equation*}
				\ods{\gC''\coons{\cneg\gP\connn{\gK_1,\ldots,\gK_n}}R}{\dD_\gK}{\gK}{\GS}
				\qomma
				\ods{\unit}{\dD_i}{\gK_i\lpar\gM_i}{\GS}
				\quand
				\ods{\unit}{\dD_C''}{\gC''}{\GS}
				\hskip-1em
			\end{equation*}
			for all $i\in\set{1,\ldots,n}$.
			Then we can let $\gC\coonso{R} = \gC'\coons{\gC''\coonso{R''}}{R'}$ and the derivation $\dD_\gG$ is: 
			\begin{equation*}
				\ods{\gC'\Coons{\ods{\gC''\coons{\cneg\gP\connn{\gK_1,\ldots,\gK_n}}{R''}}{\dD_\gK}{\gK}{}}{R'}}{\dD_\gG^\unit}{\gG}{}
			\end{equation*}

			\item
			or there is a context $\gC''\coonso R$ and graphs $\gK_X$ and $\gK_Y$ 
			and derivations
			\begin{equation*}
				\ods{\gC''\coons{\gK_X\lpar\gK_Y}{R''}}{\dD_\gK}{\gK}{\GS}
				\qomma
				\ods{\unit}{\dD_X}{\gK_X\lpar\gM_i}{\GS}
				\qomma
				\ods{\unit}{\dD_Y}{\gK_Y\lpar\gP\connn{\gM_1,\ldots,\gM_{i-1},\unit,\gM_{i+1},\ldots,\gM_n}}{\GS}
				\quand
				\ods{\unit}{\dD_C''}{\gC''}{\GS}
				\hskip-1em
			\end{equation*}
			for some $i\in\set{1,\ldots,n}$.
			Then we let $\gC\coonso{R} = \gC'\coons{\gC''\coonso{R''}}{R'}$ and $\dD_\gG$ is 
			\begin{equation*}
				\ods{\gC'\Coons{\ods{\gC''\coons{\gK_X \lpar \gK_Y}{R''}}{\dD'_\gK}{\gK}{}}{R'}}{\dD_\gG^\unit}{\gG}{}
			\end{equation*}

		\end{itemize}

		

		\item\label{split:d}
		Here we have
		$\gG=\gG'\lpar\cneg\gP\connn{\gN_1,\ldots,\gN_n}$ and $\dD$ is
		of shape
		\begin{equation*}
			\odn{\ods{\unit}{\dD'}{
					[\gG';([N_1;M_1];\cdots;[N_n;M_n])]}{}}{
				\pdr}{
				\gG'\lpar\cneg\gP\connn{\gN_1,\ldots,\gN_n}\lpar\gP\connn{\gM_1,\ldots,\gM_n}}{}
			\;.\hskip-3em
		\end{equation*}
		We apply the induction hypothesis in the form of Lemma \ref{lem:splitting:multitens}, which gives us 
		a context $\gC\coonso R$ and
		graphs $\gL_1,\ldots,\gL_n$ such that
		\begin{equation*}
			\ods{\gC\coons{\gL_1\lpar\cdots\lpar\gL_n}{R}}{\dD_{G}'}{\gG'}{}
			\qomma
			\ods{\unit}{\dD_i}{\gL_i\lpar\gN_i\lpar\gM_i}{}
			\quand
			\ods{\unit}{\dD_C}{\gC}{}
			\hskip-2em
		\end{equation*}
		for all $i\in\set{1,\ldots,n}$. 
		We let $\gK_i=\gL_i\lpar\gN_i$. 
		Then there is a derivation $\dD_G$ defined as
		\begin{equation*}
			\small
			\odn{\gC\Coons{
					\ods{\cneg\gP\connn{\gL_1\lpar\gN_1,\ldots,\gL_n\lpar\gN_n}}{
					}{
						\gL_1\lpar\cdots\lpar\gL_n\lpar\cneg\gP\connn{\gN_1,\ldots,\gN_n}}{\set{\ssdr}}}{R}}{
				\ssdr}{
				\ods{\gC\coons{\gL_1\lpar\cdots\lpar\gL_n}R}{\dD_{G}'}{\gG'}{}\lpar\cneg\gP\connn{\gN_1,\ldots,\gN_n}}{}
			\;.\hskip-3em
		\end{equation*}

		
		\item\label{split:e}
		In this case, the $\pdr$ rule is applied to a larger prime graph $\gQ$ in the context of the principal prime graph $\gP$.
		We have in this case that 
		$\dD$ is of shape
		\begin{equation*}
			\small
			\odn{\ods{\unit}{
					\dD'}{
					[\gG'';(\gN_1 ;[\gN_2;\gL_2];\cdots;[\gN_k;\gL_k])]
				}{}}{
				\pdr}{
				[\gG'';\gQ\connn{\gN_1,\ldots,\gN_k}; \gP\connn{\gM_1,\ldots,\gM_n}]
			}{}
			\hskip-2em       
		\end{equation*}
		An essential assumption is that all modules of $\gQ$ must be non-empty. 
		Hence, in this case we assume
		$\gG=\gG''\lpar\gQ\connn{\gN_1,\ldots,\gN_k}$ where $\gN_1, \dots, \gN_k$ are non-empty graphs, 
		$\gQ$ is a prime graph with $k=\sizeof\vQ > \sizeof\vP$ 
		such that w.l.o.g.
		$\gP\connn{\gM_1,\ldots,\gM_n} \isom \cneg{\gQ}\connn{ \unit, \gL_2, \ldots \gL_k }$ 
		for some possibly empty graphs $\gL_2, \ldots \gL_k$. 
		Observe at least one module of the prime connective $\cneg{\gQ}$ must be empty, and we let that w.l.o.g.\ to be the first module, otherwise $\cneg\gP$ and $\gQ$ are isomorphic, contradicting $\sizeof\vQ > \sizeof\vP$.
		
		We can apply the induction hypothesis in the form of Lemma~\ref{lem:splitting:multitens}, and we get a context $\gC'\coonso{R'}$ and graphs $\gK'_1, \dots, \gK'_k$, such that
		\begin{equation*}
			\ods{\gC'\coons{{\gK_1'\lpar \cdots\lpar \gK'_k}}R}{\dD_\gG''}{\gG''}{\GS}
			\qomma
			\ods{\unit}{\dD_1'}{\gK'_1\lpar \gN_1}{\GS}
			\qomma
			\ods{\unit}{\dD_j'}{\gK'_j\lpar(\gN_j\lpar \gL_j)}{\GS}
			\quand
			\ods{\unit}{\dD_C'}{\gC'}{\GS}
			\hskip-1em
		\end{equation*}
		for all $j\in\intset{2}{k}$.
		Now observe, that there is a graph $\gH$ (that is not necessarily prime), such that $\gQ\connn{\unit, \gK'_2 \lpar \gN_2,\ldots,\gK'_k \lpar \gN_k} \isom {\gH}\connn{ \gK'_2 \lpar \gN_2,\ldots,\gK'_k \lpar \gN_k }$ and $\cneg{\gQ}\connn{ \unit, \gL_2, \ldots \gL_k } \isom \cneg{\gH}\connn{ \gL_2, \ldots \gL_k }$. Hence, since for all $i \in \intset{2}{k}$ we have $\gN_i \neq \unit$, we can apply Lemma~\ref{lem:g}  to construct the following proof.
		\begin{equation}
                  \label{eq:split:e}
			\ods{
				\left(
				\ods{\unit}{\dD'_2}{\gK'_2 \lpar \gN_2 \lpar \gL_2}{}
				\ltens \ldots \ltens 
				\ods{\unit}{\dD'_k}{\gK'_k \lpar \gN_k \lpar \gL_k}{}\right)
			}{\dD_{\gH}}{
				{\gH}\connn{\gK'_2 \lpar \gN_2,\ldots,\gK'_k \lpar \gN_k} \lpar \cneg{\gH}\connn{\gL_2,\ldots,\gL_k}
			}{\mbox{\small Lemma \ref{lem:g}}}
		\end{equation}
		Now, observe that we have $\cneg{\gH}\connn{\gL_2,\ldots,\gL_k} \isom \gP\connn{\gM_1, \ldots , \gM_n }$.
		Furthermore, $\gsize{ \gK'_1 \lpar \cdots \lpar \gK'_k } \leq \gsize{ \gG''}$, 
		and hence we have $\gsize{ {\gH}\connn{ \gK'_1 \lpar \gN_1 , \dots , \gK'_k  \lpar \gN_k } } \leq \gsize{ \gG }$.
		Also
		$\gN_1 \neq \unit$ and $\gN_1$ does not appear in the conclusion of the proof~\eqref{eq:split:e} above; hence we have:
		\[
		\gsize{ {\gH}\connn{\gK'_2 \lpar \gN_2,\ldots,\gK'_k \lpar \gN_k} \lpar \gP\connn{\gM_1, \ldots , \gM_n } } < \gsize{ \gG \lpar \gP\connn{\gM_1,\ldots,\gM_n} }
		\]
		Therefore, we can apply the induction hypothesis to
                the proof in~\eqref{eq:split:e}, giving us one of the
                following two cases.
		\begin{itemize}
			\item 
			there is a context $\gC''\coonso{R'' }$ and graphs $\gK_1 , \ldots, \gK_n$ such that there are derivations 
			\begin{equation*}
				\ods{
					\gC''\coons{\cneg\gP\connn{\gK_1 \lpar \ldots \gK_n}}{R''}}{\dD_\gG'''}{\gQ\connn{\unit,\gK_2' \lpar \gN_2,\ldots,\gK_k' \lpar \gN_k}}{\GS}
				\qomma
				\ods{\unit}{\dD_i}{\gK_i \lpar \gM_i }{\GS} 
				\quand
				\ods{\unit}{\dD''_C}{\gC''}{\GS}
			\end{equation*}
			for all $i \in\intset1n$.
			We conclude by letting $\gC\coonso{R}= \gC'\coons{\gC''\coonso{R''}}{R'}$ and the derivation $\dD_G $ be defined as
			\begin{equation*}
				\odn{
					\gC'\Coons{
						\odn{
							\ods{\gC''\coons{\cneg\gP\connn{\gK_1 \lpar \ldots \gK_n}}{R''}}{\dD_G'''}{
								\gQ\Connn{\ods{\unit}{\dD'_1}{\gK'_1 \lpar \gN_1}{},
									\gK'_2 \lpar \gN_2,\ldots,\gK'_k \lpar \gN_k}
							}{}
						}{\ssdr}{
							[\gK'_1; \gK'_2; \ldots \gK'_k;
							\gQ\connn{\gN_1,\ldots,\gN_k}]
						}{}
					}{R'}
				}{\ssdr}{
					[\ods{
						\gC'\coons{
							[\gK'_1; \gK'_2; \ldots \gK'_k]
						}{R'}
					}{\dD''}{
						\gG''
					}{}
					;\gQ\connn{\gN_1,\ldots,\gN_k}]
				}{}
			\end{equation*}

			\item or we have a context $\gC''\coonso{R''}$ and graphs $ \gK_X$ and $ \gK_Y$ such that, w.l.o.g. there are derivations
			\begin{equation*}
				\ods{\gC''\coons{\gK_X \lpar \gK_Y}{R''}}{\dD'''_G}{\gQ\connn{\unit,\gK_2' \lpar \gN_2,\ldots,\gK_k' \lpar \gN_k}}{\GS}
				\qomma
				\ods{\unit}{\dD_X}{\gK_X\lpar\gM_1}{\GS}
				\qomma
				\ods{\unit}{\dD_Y}{\gK_Y\lpar\gP\connn{\unit, \gM_2,\dots,\gM_n}}{\GS}
				\quand
				\ods{\unit}{\dD''_C}{\gC''}{\GS}
			\end{equation*}
			We conclude by letting $\gC\coonso{R}= \gC'\coons{\gC''\coonso{R''}}{R'}$ and the derivation $\dD_G $ be defined as
			\begin{equation*}
				\odn{
					\gC'\Coons{
						\odn{
							\ods{\gC''\coons{\gK_X \lpar \gK_Y}{R''}}{\dD_G'''}{
								\gQ\Connn{\ods{\unit}{\dD'_1}{\gK'_1 \lpar \gN_1}{},
									\gK'_2 \lpar \gN_2,\ldots,\gK'_k \lpar \gN_k}
							}{}
						}{\ssdr}{
							[\gK'_1; \gK'_2; \ldots \gK'_k;
							\gQ\connn{\gN_1,\ldots,\gN_k}]
						}{}
					}{R'}
				}{\ssdr}{
					[\ods{
						\gC'\coons{
							[\gK'_1; \gK'_2; \ldots \gK'_k]
						}{R'}
					}{\dD''}{
						\gG''
					}{}
					;\gQ\connn{\gN_1,\ldots,\gN_k}]
				}{}
			\end{equation*}
			
		\end{itemize}

	\end{enumerate}

\end{proof}


\lemSplitMulti*
\begin{proof} 
	By induction on $n$.
	If $n=2$, then we conclude by Lemma \ref{lem:splitting:tens}.
	If $n\geq3$, then $A_1\ltens \cdots \ltens  A_n=A_1\ltens (A_2 \ltens A_3 \ltens \cdots \ltens A_n)$.
	Then, by Lemma \ref{lem:splitting:tens}, there are
	a context $\gC'\coonso{R'}$, and graphs $\gK_1$ and $\gK'$ 
	such that
	there are derivations
	\begin{equation*}
		\ods{\gC'\coons{\gK_1 \lpar \gK_Y}{R'}}{\dD_\gG'}{\gG}{\GS}
		\qomma		
		\ods{\unit}{\dD_1}{\gK_1 \lpar M_1}{\GS}
		\qomma
		\ods{\unit}{\dD'}{\gK' \lpar (M_2 \ltens M_3\ltens \cdots \ltens M_n)}{\GS}
		\quand		
		\ods{\unit}{\dD_C'}{\gC' }{\GS}
		\quadfs
	\end{equation*}	 
	By applying the same lemma inductively again to the proof $\dD'$ of $\gK' \lpar (A_2 \ltens A_3\ltens \cdots \ltens A_n)$, 
	we obtain a context $\gC''\coonso{R''}$ and graphs $K_2, \dots, K_n$ 
	such that 
	there are derivations
	\begin{equation*}
		\ods{\gC''\coons{{\gK_2\lpar \cdots\lpar \gK_n}}{R''}}{\dD_K'}{\gK'}{\GS}
		\qomma
		\ods{\unit}{\dD_i}{\gK_i\lpar\gA_i}{\GS}
		\quand
		\ods{\unit}{\dD_\gC''}{\gC''}{\GS}
		\hskip-1em
	\end{equation*}
	for all $i\in \intset2n$.
	We conclude by letting $\gC\coonso{R}=\gC'\coons{\gC''\coonso{R''}}{R'}$ and $\dD_G$ be the derivation
	\begin{equation*}
		\ods{\gC'\Coons{
				\odn{\gC''\coons{\gK_1 \lpar \cdots\lpar \gK_n}{R''}}
				{\ssdr}{{\gK_1 \lpar \ods{\gC''\coons{{\gK_2\lpar \cdots\lpar \gK_n}}{R''}}{\dD_K'}{\gK'}{}}}{}
			}{R'}}{\dD_\gG'}{\gG}{}
	\end{equation*}
\end{proof}

\lemConRed*
\begin{proof}
	
	The case when $\gA = \unit$ is trivial, since we can take $\gC = \gG$ and $R = S$ and $\gK = \unit$.

	Otherwise, without loss of generality 
	$\gG\coons{\gA}S\isom\gG''\lpar\gG'\coons{\gA}S$ for a graph $\gG'\coons{\gA}S$ which is neither a par nor empty.
	The base case is where $\gG' = \unit$ and $S = \emptyset$, in which case we can set $\gK = \gG''$ and $\gC = \unit$, and the derivations $\dD_C$ and  $\dD_G$ to be trivial. 
	The derivation $\dD_A$ is given by the proof of $\gG\coons{\gA}S$, since under these assumptions  $\gK \lpar A\isom\gG''\lpar A\isom \gG\coons{\gA}S$.
	
	If $\gG'$ is non-empty, we proceed by induction on the size of $\gG'\coons{\gA}S$ as follows.
	$\gG'\coons{\gA}S$ must be composed via a prime graph
	$\gP$ with $\gP \neq \lpar$.  Then, without loss of generality
	we can assume that
	$\gG\coons{\gA}S=G''\lpar\gP\connn{\gM_1\coons{A}{S'},\gM_2,\ldots,\gM_n}$. 
	Applying Lemma~\ref{lem:splitting:prime} gives us one of the following
	three cases:
	\begin{enumerate}[(A)]
		\item We have $\gC'\coonso{R'}$ and $\gK_1,\ldots,\gK_n$, such that
		\begin{equation*}
			\ods{\gC'\coons{\cneg\gP\connn{\gK_1,\ldots,\gK_n}}{R'}}{ \dD''_{G}}{\gG''}{\GS}
			\qomma
			\ods{\unit}{\dD_1}{\gK_1\lpar\gM_1\coons{A}{S'}}{\GS}
			\qomma
			\ods{\unit}{\dD_i}{\gK_i\lpar\gM_i}{\GS}
			\quand
			\ods{\unit}{\dD'_{C}}{\gC'}{\GS}
		\end{equation*}
		for $2\le i\le n$. We can apply the induction hypothesis to $\gK_1\lpar\gM_1\coons{A}{S'}$ and obtain
		$\gK$ and
		$\gC''\coonso{R''}$, such that
		\begin{equation*}
			\ods{\unit}{\dD_{C}''}{\gC''}{\GS}
			\qomma
			\ods{\unit}{\dD_A}{\gK\lpar\gA}{\GS}
			\quand          
			\ods{\gC''\coons{\gK\lpar\XXX}{R''}}{\dD'_{G}}{\gK_1\lpar\gM_1\coons{\XXX}{S'}}{\GS}
			\hskip-3em
		\end{equation*}
		for any $\XXX$. 
		We let $\gC\coonso R=\gC'\coons{\gC''\coonso{R''}}{R'}$, 
		the derivation $\dD_C$ is defined by Lemma~\ref{lem:context}, and the derivation $\dD_G$ is defined as 
		\begin{equation*}
			\small\footnotesize
			\odn{
				\gC'\Coons{
					\odn{
						\ods{\gC''\coons{\gK\lpar\XXX}{R''}}{ \dD'_{G}}{\gK_1\lpar\gM_1\coons{\XXX}{S'}}{}
						\ltens
						\ods{\unit}{\dD_2}{\gK_2\lpar\gM_2}{}
						\ltens
						\cdots
						\ltens
						\ods{\unit}{\dD_n}{\gK_n\lpar\gM_m}{}
					}{\pdr}{
						\cneg\gP\connn{\gK_1,\ldots,\gK_n}\lpar\gP\connn{\gM_1\coons{\XXX}{S'},\gM_2,\ldots,\gM_n}}{}}{R'}}{
				\ssdr}{
				\ods{\gC'\coons{\cneg\gP\connn{\gK_1,\ldots,\gK_n}}{R'}}{\dD''_{G}}{\gG''}{}
				\lpar\gP\connn{\gM_1\coons{\XXX}{S'},\gM_2,\ldots,\gM_n}}{}
			\qomma
			\hskip-2em
		\end{equation*}
		where $\gG\coons{\XXX}{S}=\gG''\lpar\gP\connn{\gM_1\coons{\XXX}{S'},\gM_2,\ldots,\gM_n}$.

		\item  We have $\gC'\coonso{R'}$ and $\gK_X$ and $\gK_Y$, such that
		\begin{equation*}
			\ods{\gC'\coons{\gK_X\lpar\gK_Y}{R'}}{\dD''_{G}}{\gG''}{\GS}
			\qomma
			\ods{\unit}{\dD_X}{\gK_X\lpar\gM_1\coons{A}{S'}}{\GS}
			\qomma
			\ods{\unit}{\dD_Y}{\gK_Y\lpar\gP\connn{\unit,\gM_2,\ldots,\gM_n}}{\GS}
			\quand
			\ods{\unit}{\dD'_{C}}{\gC'}{}
			\;.
			\hskip-3em
		\end{equation*}
		We apply the induction hypothesis to $\gK_X\lpar\gM_1\coons{A}{S'}$ and get $\gK$ and
		$\gC''\coonso{R''}$, such that
		\begin{equation*}
			\ods{\unit}{\dD''_{C}}{\gC''}{\GS}
			\qomma
			\ods{\gC''\coons{\gK\lpar\XXX}{R''}}{\dD'_{G}}{\gK_X\lpar\gM_1\coons{\XXX}{S'}}{\GS}
			\quand          
			\ods{\unit}{\dD_A}{\gK\lpar\gA}{\GS}
			\hskip-3em
		\end{equation*}
		for any $\XXX$. 
		We let $\gC\coonso R=\gC'\coons{\gK_Y\lpar\gP\connn{\gC''\coonso{R''},\gM_2,\ldots,\gM_n}}{R'}$
		and obtain $\dD_C$ from Lemma~\ref{lem:context}, and $\dD_G$ is as follows:
		\begin{equation*}
			\small
			\odn{
				\gC'\Coons{
					\odn{\gK_Y\lpar
						\gP\Connn{
							\ods{\gC''\coons{\gK\lpar\XXX}{R''}}{\dD'_{G}}{\gK_X\lpar\gM_1\coons{\XXX}{S'}}{}
							,\gM_2,\ldots,\gM_n}}{
						\ssdr}{
						\gK_X\lpar\gK_Y \lpar\gP\connn{\gM_1\coons{\XXX}{S'},\gM_2,\ldots,\gM_n}}{}
				}{R'}}{
				\ssdr}{
				\ods{\gC'\coons{\gK_X\lpar\gK_Y}{R'}}{\dD''_{G}}{\gG''}{}
				\lpar\gP\connn{\gM_1\coons{\XXX}{S'},\gM_2,\ldots,\gM_n}}{}
			\;.
			\hskip-5em
		\end{equation*}

		\item We have $\gC'\coonso{R'}$ and $\gK_X$ and $\gK_Y$, such that
		\begin{equation*}
			\ods{\gC'\coons{\gK_X\lpar\gK_Y}{R'}}{\dD''_{G}}{\gG''}{\GS}
			\qomma
			\ods{\unit}{\dD_X}{\gK_X\lpar\gM_2}{\GS}
			\qomma
			\ods{\unit}{\dD_Y}{\gK_Y\lpar\gP\connn{\gM_1\coons{A}{S'},\unit,\gM_3,\ldots,\gM_n}}{\GS}
			\quand
			\ods{\unit}{\dD_{C}'}{\gC'}{\GS}
			\;.
			\hskip-3em
		\end{equation*}
		We apply the induction hypothesis to $\gK_Y\lpar\gP\connn{\gM_1\coons{A}{S'},\unit,\gM_3,\ldots,\gM_n}$ and get $\gK$ and
		$\gC''\coonso{R''}$, such that
		\begin{equation*}
			\ods{\unit}{\dD''_{C}}{\gC''}{\GS}
			\qomma
			\ods{\unit}{\dD_A}{\gK\lpar\gA}{\GS}
			\quand
			\ods{\gC''\coons{\gK\lpar\XXX}{R''}}{\dD_{G}''}{\gK_Y\lpar\gP\connn{\gM_1\coons{\XXX}{S'},\unit,\gM_3,\ldots,\gM_n}}{\GS}
			\hskip-3em
		\end{equation*}
		for any $\XXX$. We let $\gC\coonso R=\gC'\coons{\gC''\coonso{R''}}{R'}$, the derivation $\dD_C$ be defined by Lemma~\ref{lem:context} and the derivation $\dD_G$ defined as follows
		\begin{equation*}
			\small\footnotesize
			\odn{
				\gC'\Coons{
					\odn{
						\ods{\gC''\coons{\gK\lpar\gX}{R''}}{
							\dD'_{G}}{
							\gK_Y\lpar
							\gP\Connn{
								\gM_1\coons{\XXX}{S'},
								\ods{\unit}{\dD_X}{\gK_X\lpar\gM_2}{},
								\gM_3,\ldots,\gM_n}}{}}{
						\ssdr}{
						\gK_X\lpar\gK_Y \lpar\gP\connn{\gM_1\coons{\XXX}{S'},\gM_2,\gM_3,\ldots,\gM_n}}{}
				}{R'}}{
				\ssdr}{
				\ods{\gC'\coons{\gK_X\lpar\gK_Y}{R'}}{\dD''_{G}}{\gG''}{}
				\lpar\gP\connn{\gM_1\coons{\XXX}{S'},\gM_2,\gM_3,\ldots,\gM_n}}{}
			\;.
			\hskip-5em
		\end{equation*}        
	\end{enumerate}
\end{proof}

Finally, it remains to give the proof of Atomic Splitting, which follows from context reduction (Lemma~\ref{lem:conred}) and splitting for prime graphs (Lemma~\ref{lem:splitting:prime}), as indicated in Figure~\ref{fig:complete-roadmap}.

\lemAtomSplit*
\begin{proof}
	The proof is similar to the one of Lemma~\ref{lem:splitting:prime}.
	%
	We assume  $\proves[\GS]{G\lpar a}$
	and aim to construct 
	$\gC\coonso{R}$, $\dD_G$, $\dD_C$
	as in the statement of the lemma. 
	Observe that $\gG\neq\unit$, otherwise $\gG\lpar a$ would not
	be provable in $\GS$. 
	We make a case analysis on the bottommost rule instance $\rr$ in~$\dD$, and follow the same pattern as in the case analysis in the proof of Lemma~\ref{lem:splitting:prime}. 
	\begin{enumerate}[(a)]

		\item 
		\label{proof:atomCase:inside}
		
		If rule $\rr$ acts inside $\gG$, then the derivation $\dD$ is of shape 
		\begin{equation*}
			\small
			\odv{\unit}
			{\dD'}{[\odn{\gG'}{\rr}{\gG}{};a]}{\GS}
			\hskip-2em
		\end{equation*}
		for some $\dD'$. 
		By Observation~\ref{obs:size} we know that  $\gless{\gG'}{\gG}$ ,
		then we apply the induction hypothesis on $\gG'\lpar a$.
		This gives us a context $\gC'\coonso {R'}$ and two derivations
		\begin{equation*}
			\ods{\gC'\coons{\cneg a}R}{\dD_{G}'}{\gG'}{\GS}
			\quand
			\ods{\unit}{\dD_{\gC}'}{\gC' }{\GS}
			\quadfs
		\end{equation*}
		We conclude by applying the rule 
		$\rr$ to the conclusion $\gG'$ of $\dD_{G'}$.

	      \item The last rule in $\dD$ is a $\ssdr$, such that $\dD$ is of the following shape (where $\gG=\gG''\lpar\gG'$ and $\gG'\neq\unit$):
		\begin{equation*}
			\odn{\ods{\unit}{
					\dD'}{[\gG'';a\coons{\gG'}S]}{\GS}}{
				\ssdr}{[\gG'';\gG';a]}{}
		\end{equation*}
		If the $\ssdr$ instance is not trivial, then $\gG''\lpar a\coons{\gG'}S\isom \gG''\lpar(a\ltens \gG')\isom\gG\coons{a}{V_{\gG'}}$, which makes this a case of Case~(c) below.
		
		\item 
		The last rule in $\dD$ is a $\ssdr$, such that $\dD$ is of the shape
		\begin{equation*}
			\odn{\ods{\unit}{
					\dD'}{[\gG\coons{a}S]}{\GS}}{
				\ssdr}{[\gG;a]}{}
		\end{equation*}
		By Lemma~\ref{lem:conred}
		there is a graph $\gK$ and a
		context $\gC\coonso R$ such that there are derivations
		\begin{equation*}
			\ods{\unit}{\dD_C'}{\gC'}{\GS}
			\quand
			\ods{\unit}{\dD^\ast}{\gK \lpar a}{\GS}
			\quand
			\ods{\gC'\coons{\gK\lpar\XXX}R}{\dD_G^X}{\gG\coons{\XXX}S}{\GS}
		\end{equation*}
		for any graph $\XXX$. Now we apply the induction hypothesis to $\dD^\ast$ to obtain
                a context $\gC''\coonso{R''}$ such that we have
                \begin{equation*}
                  \ods{\gC''\coons{\cneg a}{R''}}{\dD_\gK}{\gK}{\GS}
                  \quand
                  \ods{\unit}{\dD_C''}{\gC''}{\GS}
                \end{equation*}
		We let $\gC\coonso{R}= \gC'\coons{\gC''\coonso{R''}}{R'}$  and construct $\dD_\gG$ as
		\begin{equation*}
		  \ods{\gC'\Coons{
                      \ods{\gC''\coons{\cneg a}{R''}}{\dD_\gK}{\gK}{}
                    }{R'}}{\dD_G^\unit}{\gG\coons{\unit}S}{}
		\end{equation*}
		(Note that $\gK\isom\gK\lpar\unit$ and $\gG\isom\gG\coons\unit S$.)

		\item The last rule in $\dD$ is an $\aidr$ such that $\dD$ is of shape
		\begin{equation*}
		  \ods{\unit}{\dD_{G}'}{\gG'\lpar\odn{\unit}{\aidr}{\cneg a\lpar a}{}}{\GS}
		\end{equation*}
                i.e., $\gG\isom\gG'\lpar\cneg a$.
		We can conclude by letting $\gC=\unit$ (hence $\dD_C$ is trivial) and $\dD_\gG=\dD_\gG'\lpar\cneg a$.


		\item  \label{splitting:multitensor}
		If the last rule is a $\pdr$ which does not act inside $\gG$, such that $\gG\isom\gG'\lpar\gP\connn{\gN_1, \dots, \gN_m}$, then w.l.o.g. $\dD$ is of the shape
		\begin{equation*}
			\odv{\unit}
			    {\dD'}{[\gG';\odn{[([\gN_1 ; a];\gN_2 ; \cdots ;\gN_m)]}{\pdr}{[\gP\connn{\gN_1, \dots, \gN_m};a]}{}]}{\GS}
                            \quad.
			\hskip-2em
		\end{equation*}
		By Lemma~\ref{lem:splitting:multitens} there is a context $\gC'\coonso{R'}$ and  graphs $\gL_1, \dots , \gL_m$ such that there are derivations
		\begin{equation*}
			\ods{\gC'\coons{\gL_1\lpar\cdots \lpar \gL_m}{R'}}{\dD'_{G}}{\gG'}{\GS}
			\qomma
			\ods{\unit}{\dD_1}{\gL_1\lpar (a \lpar \gN_1) }{\GS}
			\qomma
			\ods{\unit}{\dD_i}{\gL_i\lpar\gN_i }{\GS}
			\quand
			\ods{\unit}{\dD'_{C}}{\gC'}{\GS}
			\;.
			\hskip-5em
		\end{equation*}
		for all $i\in \intset2m$.
		By inductive hypothesis on $\gL_1\lpar (a \lpar \gN_1)$ we have
		\begin{equation*}
			\ods{\gC''\coons{\cneg a}{R''}}{\dD'_{1}}{\gL_1\lpar \gN_1}{\GS}
			\quand
			\ods{\unit}{\dD''_{C}}{\gC''}{\GS}
			\;.
			\hskip-5em
		\end{equation*}
                for some context $\gC''\coonso{R''}$.
		We conclude by setting $\gC=\gC'\coons{ \gC''\coonso{R''}}{R'}$  since
		\begin{equation*}
			\odn{\gC'\Coons{
					\odn
					{(
						\ods{\gC''\coons{\cneg a}{R''}}{\dD'_1}{[\gL_1;\gN_1]}{};
						\ods{\unit}{\dD'_2}{[\gL_2;\gN_2]}{};
						\cdots;
						\ods{\unit}{\dD'_m}{[\gL_m;\gN_m]}{}
						)}
					{\ssdr}
					{\gL_1\lpar \cdots \lpar \gL_m\lpar \odn{(\gN_1; \gN_2;\cdots;\gN_m)}{\pdr}{\gP\connn{\gN_1, \dots, \gN_m}}{}}
					{}
				}{R'}}{
				\ssdr}{
				[\ods{\gC'\coons{\gL_1\lpar \cdots \lpar \gL_m}{R'}}{\dD'_{G}}{\gG'}{};
				\gP\connn{\gN_1, \dots, \gN_m}]
			}{}
			\hskip-4em
		\end{equation*}

		\qedhere

	\end{enumerate}

\end{proof}


\section{Requirements of an Analytic Proof System on Graphs}\label{sec:phil}

\newcommand{\requirement}[2]{\begin{center}\textbf{Requirement: #1}\\ \textit{#2}\end{center}}
\newcommand{\design}[2]{\begin{center}\textbf{Design choice: #1}\\ \textit{#2}\end{center}}
\newcommand{\consequence}[2]{\begin{center}\textbf{Consequence: #1}\\ \textit{#2}\end{center}}
\def\mp{modus ponens\xspace}

In this section, we reflect on design decisions in order to support our claim that we have defined a logic where we reason about graphs rather than formulas.
We present here an argument that is independent from the proof system that we developed, thus it is not necessary for the reader to accept deep inference a priori in order to accept the set of theorems proven by the logic $\GS$.
We reinforce the message that, for this logic without precedent (that is without a pre-existing semantics or proof system), we had success designing and justifying our system when we started from logical principles. Our approach of designing from principles is credible, since, as highlighted in the discussion on related work in Section~\ref{sec:relatedWork}, there does not appear to be an immediate generalisation of the semantics for some established logic on formulas to graph that yields a system with the logical properties we desire.

From the title, we desire:
\begin{quote}
	\textit{An {analytic} {propositional proof system} on {graphs}. }
\end{quote}
To understand fully the above statement we explain its two aspects. 
Firstly, we make precise what we mean by a \emph{propositional proof system}, by means of logical principles that such a proof system should conform to.
Secondly, we must explain what it means for such a propositional proof system to be \emph{analytic}, 
particularly since traditional definitions of analyticity do not lift immediately to our setting.
In the discussion that follows we show that all design decisions are widely accepted in logic, making is difficult to argue that $\GS$ is not a logic.

\subsection*{Graph isomorphism for propositional proofs}

In this study, we have restricted ourselves to simple undirected graphs (see Definition~\ref{def:graph}),
where vertices are labelled with positive or negative propositional atoms, such as $a$ and $\cneg{a}$.
This allows us to align with existing graphical representations of formulas if we restrict to \emph{cographs}, which are exactly those graphs generated by propositional formulas using the operations illustrated in (\ref{eq:TensPar}).

The first logical assumption we make is that isomorphic graphs (see Definition~\ref{def:iso}) are logically equivalent, which we expect to hold for all graphical logics.   
\begin{quote}
	\textbf{Extensionality requirement:} For pairs of isomorphic graphs $\gG$ and $\gH$, we have that $\vdash \gG \multimap \gH$ holds.
\end{quote}
Graph isomorphisms allows us to rename the underlying vertices of a graph, while preserving labels. Indeed in our diagrams of graphs, extensionality is implicitly appealed to, since we only show the labels of vertices, i.e., we quotient graphs by label-preserving isomorphisms.
The term ``{extensionality}'' is consistent with the idea that objects are equivalent if their externally visible properties, i.e., their labels and edges shown in diagrams, are the same.

This is one of the few principles we expect to be common to all logics on graphs.
To reinforce this belief, we acknowledge there are schools of philosophy that maintain that it is possible that $A\neq A$~\cite{Mates1968}. The essence of such arguments is that if $A$ is not well typed or does not exist then $A \neq A$ is a reasonable conclusion. But notice that, firstly, this is different from saying that $A = A$ does not hold, and, secondly, a metaphysical discussion on types or existence is perpendicular to the logic in this work.

Even if we assume extensionality as our sole logical principle,
since we aim to define a propositional proof system for our logic on graphs, we must take additional care to ensure that we satisfy the following principle.
\begin{quote}
	\textbf{Cook-Reckhow requirement:} Every rule is checkable in polynomial time.
\end{quote}
The above is a fundamental property of proof systems for propositional logic~\cite{Cook1979}.
As mentioned in Observation~\ref{obs:cook-reckhow}, since checking an explicit isomorphism is in $\textsf{P}$ but
currently 
there are no algorithms for finding graph isomorphisms in $\textsf{P}$, a formulation of a propositional proof system on graphs must make the isomorphisms explicit in the proof system.

\subsection*{Involutive negation and consistency}
The first proper design decision we make is that we insist on having a logic featuring an involutive negation, as found in most classical and linear logics (but not intuitionistic logic of course).
\begin{quote}
	\textbf{De Morgan requirement:} Negation should be involutive.
\end{quote}
Formally, an involution on graphs is a unary operator $\cneg{\left(\cdot\right)}$ that satisfies the property $\cneg{(\cneg{\gG})} = \gG$ for all graphs $\gG$. 
The assumption that we have an involutive negation means that De Morgan dualities hold for pairs of connectives on graphs we define (Observation~\ref{obs:deMorgan}).
For a proof system on  graphs, there are only two possible choices for an involution, namely the identity function and the graph complement function.

The use of the identity function to define an involutive negation is ruled out by the logical principle of consistency, which is that not all graphs are provable.
The formulation of consistency that we achieve for $\GS$ is as follows.
\begin{quote}
	\textbf{Consistency requirement:} For non-empty graphs $\gG$, if $\vdash \gG$ then $\not\vdash \cneg{\gG}$.
\end{quote}
To see why, the above principle rules out the identity function as negation, observe that extensionality ensures that there are some provable graphs say $\vdash \gG$. 
If the negation is defined by the identity, then $\vdash \cneg\gG$ holds, which violates the above consistency requirement.
Thereby, the assumptions thus far fix negation as \textit{graph complement} (Definition~\ref{def:dual}).
Observe also that the consistency of $\GS$ is indeed a corollary of cut elimination (Corollary~\ref{cor:consistency}).

\subsection*{Implication.}

Our next proper design decision is to materialise implication in our logic. We materialise implication ``$\gG$ implies $\gH$''
as ``not $\gG$ or $\gH$'', for some notion of disjunction, as in logics such as classical and linear logics.
We know already that negation is graph complement, but disjunction we have not yet defined.
In order for there to be a single implication, the disjunction used must be commutative (implications materialised using non-commutative  disjunction lead to a distinct left and right implication~\cite{Lambek1961}).
\begin{quote}
	\textbf{AC requirement:} Disjunction is commutative and associative.
\end{quote}

While there may be several elaborate choices for defining disjunction,\footnote{For an example, consider the way multiplicative disjunction is defined on coherent spaces \cite{girard:etal:89}.} we make the design decision that disjunction is defined as the disjoint union of graphs.
This design decision aligns with an established culture of using cographs to represent formulas~\cite{duffin:65,retore:03}.
Note that, by symmetry, we could have selected graph join as disjunction, which would simply interchange edges and non-edges throughout this paper without changing the meaning of the logic.

We are now able to materialise implication by means of negation and disjunction, as in classical and linear logic. More precisely, we have the following principle.
\begin{quote}
	\textbf{Material requirement:}
	Implication $\gG \multimap \gH$ is defined as 
	$\cneg \gG \lpar \gH$.
\end{quote}
This assumption should not be taken for granted, since various logics such as intuitionistic logic cannot materialise implication in this way. 
Also, this is not a guaranteed property of a logic satisfying De Morgan properties, since even if a disjunction is present it may not be the right disjunction to internalise negation. 
For example, linear implication cannot be materialised using additive disjunction and linear negation, hence the additive fragment of linear logic~\cite{girard:87} has no material implication.

We can now, by using only the principals we have laid out above, prove important examples independently of the proof system developed in the body of this paper.
By extensionality we have implication $\vdash a \multimap a$, where $a$ is the singleton graph.
Hence we know that $\vdash \cneg{a} \lpar a$ should hold in our logic.
This of course, agrees with the majority of proof systems.
Moving beyond graphs that correspond to traditional formulas, we also have theorems such as the following, by the same reasoning, i.e.,~the two disjoint sub-graphs are dual.
$$
\def\diameter{20pt}
\vdash
\begin{array}{c}
	\begin{tikzpicture}
		\draw (90+0*360/5:\diameter) node {$\va1$};
		\draw (90+1*360/5:\diameter) node {$\vb1$};
		\draw (90+2*360/5:\diameter) node {$\vc1$};
		\draw (90+3*360/5:\diameter) node {$\vd1$};
		\draw (90+4*360/5:\diameter) node {$\ve1$};
	\end{tikzpicture}
	\quad
	\begin{tikzpicture}
		\draw (90+0*360/5:\diameter) node {$\vna1$};
		\draw (90+1*360/5:\diameter) node {$\vnb1$};
		\draw (90+2*360/5:\diameter) node {$\vnc1$};
		\draw (90+3*360/5:\diameter) node {$\vnd1$};
		\draw (90+4*360/5:\diameter) node {$\vne1$};
	\end{tikzpicture}
\end{array}
\edges{a1/c1,a1/d1,b1/e1,b1/d1,c1/e1}
\edges{na1/nb1,nb1/nc1,nc1/nd1,nd1/ne1,ne1/na1}
$$

For almost all logical systems, we require that implication is transitive,
since transitivity, known since antiquity as the \textit{hypothetical syllogism},  enables us to perform deduction.
\begin{quote}
	\textbf{Transitivity requirement:} If $\vdash F \multimap G$ and $\vdash G \multimap H$, then $\vdash F \multimap H$.
\end{quote}
Notice that transitivity allows us to apply 
\textit{modus ponens}. 
To be explicit, observe that the empty graph (denoted $\unit$) is the unit of the disjoint union operation on graphs. Furthermore $\unit$ is self-dual.
By the definition of implication, we have that $\vdash \gG$ holds whenever $\vdash \unit \multimap \gG$ does. 
Thereby, if we assume $\vdash \gG$ and $\vdash \gG \multimap \gH$ hold,
then by the definition of implication $\vdash \unit \multimap \gG$ holds and by transitivity 
$\vdash \unit \multimap \gH$ holds. 
By the same argument we can conclude that $\vdash \gH$ holds. 
This is exactly modus ponens.

Like consistency, transitivity of implication is a default decision, although we acknowledge that transitivity can fail for some logics featuring negation-as-failure~\cite{Bonner1990}.

The final feature we add is the context-free assumption.
For a logic with formulas this is the assumption that implication is preserved in all contexts, where a context is a formula with a hole in which another formula can be plugged, such as $\phi \ltens \coonso{}$.
For example, if $\psi \multimap \chi$, then $\phi \ltens \psi \multimap \phi \ltens \chi$.
More specifically, we mean all positive contexts, since negations reverse the direction of implication. However, in a logic satisfying the assumptions we have made, with De Morgan dualities and material implication, we can always reduce to a negation-normal-form, where all negations are pushed to the atoms, i.e., the labels on the vertices; thereby allowing us to range over all contexts.
Similarly, in modal logics extending $\mathsf K$ we have that if $\psi \multimap \chi$ then $\Box \psi \multimap \Box \chi$. The same holds for the diamond modality. Indeed, this holds even for non-classical modal logics~\cite{marin:str:aiml14}.

When we move from formulas to graphs the notion of a context must be generalised, where the obvious notion is introduced in Notation~\ref{not:context}.
In the broader philosophy of language (programming languages and natural languages), context freedom is not guaranteed, since the meaning of something may change in different contexts. 
However, logics are usually designed such that if there is some context in which an implication does not hold, then the implication itself does not hold; situations in knowledge representation where this fails is usually due to using moving between systems~\cite{Gabbay1998}. 
\begin{quote}
	\textbf{Context-free requirement}: For all positive contexts $\gC\coonso{R}$, if $\vdash G \multimap H$,
		then we have $\vdash \gC\coons{ G }{R} \multimap \gC\coons{ H}{R}$.
\end{quote}
The above property ensures that if we prove a theorem then it holds in any context. It is exploited explicitly in some proof systems to allow rules to be applied deep inside any module of a graph, which is the technique called \textit{deep inference} employed in this paper. However, the above principle is not specific to deep inference.

Thus we expect the following to be a theorem of a logic satisfying the above principles.
\begin{equation}
  \label{eq:context}
\vdash \left( G \multimap H \right) \multimap \left( \context{ G } \multimap \context{ H} \right)
\end{equation}
From this, we can establish non-trivial facts that we expect to hold for a logic on graphs, that are beyond the scope of formulas.
Consider the following example where we instantiate~\eqref{eq:context} with
\[
\gG \triangleq a 
\qquad
\gH \triangleq \emptyset
\qquad
\gC\coonso{\set{a}}
\triangleq
\begin{array}{c@{\quad\;\;}c}
	\va1 & \vb1 
	\\\\[-1ex]
	\vmodule1{\quad} & \vd1 
\end{array}
\edges{M1/a1,a1/d1,b1/d1}
\]
From the above instantiation we obtain the following theorem:
\[
\vdash
\left(
c  \limp \emptyset
\right)
\limp
\left(
\begin{array}{c@{\quad\;\;}c}
	\va1 & \vb1 
	\\\\[-1ex]
	\vc1 & \vd1 
\end{array}
\edges{a1/d1,b1/d1,c1/a1}
\multimap
\begin{array}{c@{\quad\;\;}c}
	\va1 & \vb1 
	\\\\[-1ex]
	\vmodule1{\unit}& \vd1 
\end{array}
\edges{b1/d1,M1/a1,a1/d1}
\right)
\]
Reorganising, according to the definitions of implication and negation that we have fixed, we obtain one of our running examples (see~\eqref{eq:exup}, \eqref{eg:proofA} and \eqref{eg:proof2}).
\[
\vdash
\begin{array}{c@{\quad\;\;}c}
	\va1 & \vb1 
	\\\\[-1ex]
	\vc1 & \vd1 
\end{array}
\edges{a1/c1,b1/d1,a1/d1}
\multimap
\begin{array}{c@{\quad\;\;}c}
	\va1 & \vb1 
	\\\\[-1ex]
	\vc1 & \vd1 
\end{array}
\edges{b1/d1,a1/d1}
\]
The above proposition would hold for any logic satisfying the principles laid out in this section.
There may be more propositions that hold other than those that are enforced by these principles; for instance, if we induce more principles from classical logic, we could prove strictly more theorems.
However, in this work, we make the design decision of aiming for the minimal system, which we state as a concluding principle that fixes our target logic.
\begin{quote}
	\textbf{Minimality requirement:}
	No further propositions hold, other than those forced by the other requirements. 
\end{quote}
We have remarked throughout this section that $\GS$ satisfies all the principles laid out, mainly as a consequence of cut elimination.
Minimality then follows from observing that all rules of $\GS$ are sound with respect to the same principles.

\subsection*{On analyticity}
Throughout this paper we provide evidence that we can design \textit{propositional proof systems} that achieve the above stated requirements.
However, 
designing a proof system is not the main challenge.
The main challenge is to design an \textit{analytic} propositional proof system.

Analyticity, the idea that propositions contains all the information required to in order to judge their validity with respect to some logical system, has long been debated by philosophers.
The design of modern analytic proof calculi is widely considered to begin with the \textit{sequent calculus}, as developed in 1935~\cite{gentzen:35:I,gentzen:35:II},
with improvements incorporated by Girard~\cite{girard:87}.

The rules of the sequent calculus, such as those for $\MLLm$ 
presented in Figure~\ref{fig:mll}, are considered to be analytic since they satisfy the \textit{subformula property}. 
The subformula property states that 
in each rule, every formula that occurs in one premise sequent, occurs as a subformula in the conclusion sequent.
The subformula property facilitates proof search, since there are only finitely many subformulas to consider during proof search.
The rules in Figure~\ref{fig:mll} clearly satisfy the subformula property; whereas the $\cutr$ rule, below, which generalises transitivity, and hence also \textit{modus ponens}, does not in general satisfy the subformula property.
\[
\vliiinf{\cutr}{}{\Gamma, \Delta}{\Gamma, \phi}\quad{\cneg{\phi}, \Delta}
\]
If we start with some conclusion $\Gamma, \Delta$ and try to apply the $\cutr$ rule above, there are infinitely many formulas $\phi$ to choose from.
Similarly, the transitivity assumption requires insight external to the system to know which formulas to introduce.
Thus the subformula property is effectively avoiding infinite branching in the proof search space.
This is a fundamental reason for proving cut elimination when designing a proof system: on one hand, we aim to ensure that the basic principles of deduction such as modus ponens may be applied; on the other hand, we also want to show that deductive rules such as $\cutr$ and modus ponens, which are not well-behaved with respect to proof search are \textit{admissible}.

The sub-formula property does not lift immediately to the setting of deep inference.
Since deep inference is necessary for $\GS$, we make use of an alternative definition of analyticity~\cite{Bruscoli2009,Bruscoli2016}.
\begin{quote}
	\textbf{Analyticity requirement:} For graph $\gG$ and rule $\rr$, there is an $n$ such that, for all contexts $\gC\coonso{R}$, there are at most $n$ graphs $\gH$ such that $\odn{\gC\coons{\gH}{R}}{\rr}{\gC\coons{\gG}{R}}{}$.
\end{quote}
This ensures a rule is guaranteed to be finitely branching regardless of the context in which the rule is applied.
This property follows immediately for $\GS$ from inspecting the rules.

Thus, in this work, we have shown that we can design a proof system on graphs satisfying a notion of analyticity.
The fact that traditional methods (such as the sequent calculus) for developing analytic proof calculi fail here is a surprise, particularly because we have targeted the minimal logic on graphs satisfying some widely accepted logical principles.
The logical principles we have based our design decisions on we consider to be difficult to argue against --- which does not prevent one from exploring alternative design decisions that may lead to further logics on graphs.
The only assumption that may be strongly argued about is the use of the empty graph as a self-dual unit; which can be regarded as a simplifying assumption in our initial design.
That design decision can be justified by \textit{conservativity} (Theorem~\ref{thm:conservativity}), in the sense that there are established formula-based logics featuring such a self-dual unit.

\end{document}